
\documentclass[a4paper,11pt,reqno]{amsart}
\usepackage[margin=2.6cm]{geometry}

\makeatletter

\usepackage{graphicx} 
\usepackage{braket}
\usepackage{amsmath,amssymb,amstext}
\usepackage{amsthm}
\usepackage{dsfont}

\newtheorem{lem}{Lemma}[section]
\newtheorem{rem}[lem]{Remark}
\newtheorem{thm}[lem]{Theorem}
\newtheorem{cor}[lem]{Corollary}

\numberwithin{equation}{section}

\title[Proof of the Landau-Pekar Formula for the effective Mass of the Polaron]{Proof of the Landau-Pekar Formula for the effective Mass of the Polaron at strong coupling}
\author{Morris Brooks}

\begin{document}

\maketitle
\begin{abstract} 
We study the Fröhlich polaron in the regime of strong coupling and prove the asymptotically sharp lower bound on the effective mass $m_{\mathrm{eff}}(\alpha)\geq \alpha^4 m_{\mathrm{LP}}-C\alpha^{4-\epsilon}$, where $m_{\mathrm{LP}}$ is an explicit constant. Together with the corresponding upper bound, which has been verified recently in \cite{BS2}, we confirm the validity of the celebrated Landau-Pekar formula \cite{Lpekar} from 1948 for the effective mass $\underset{\alpha\rightarrow \infty}{\lim}\alpha^{-4}m_{\mathrm{eff}}(\alpha)=m_{\mathrm{LP}}$ as conjectured by Spohn \cite{Sp} in 1987. 
\end{abstract}


\section{Introduction and Main Result}
\label{Sec:Introduction_and_Main_Result}
In this manuscript we study the Fröhlich polaron, which is a model describing the interactions of a charged particle, e.g. an electron, with a polarizable medium \cite{F37}. From a mathematical point of view, the Fröhlich polaron is a popular toy model of a quantum field theory, as it is simple enough to allow for rigorous mathematical proofs while still giving rise to complex and non-trivial phenomena, such as the effective increase of the electrons mass due to its interactions with the quantized excitations of the medium. It will be the objective of this article to study the dependence of the effective mass $m_\mathrm{eff}(\alpha)$ on the coupling strength $\alpha$ as $\alpha\rightarrow \infty$ goes to infinity. Due to a conjecture by Spohn \cite{Sp}, it is expected that $m_\mathrm{eff}(\alpha)$ grows with a quartic power in $\alpha$ according to the Landau-Pekar formula \cite{Lpekar}, i.e. there exists a (rather explicit) constant $m_{\mathrm{LP}}\in (0,\infty)$ such that
\begin{align}
\label{Eq:Landau_Pekar_Conjecture}
    \underset{\alpha\rightarrow \infty}{\lim}\alpha^{-4}m_{\mathrm{eff}}(\alpha)=m_{\mathrm{LP}}.
\end{align}
Making use of the creation and annihilation operators $a^*$ and $a$ acting on the Fock space $\mathcal{F}$ over $L^2\! \left(\mathbb R^3\right)$, which satisfy for $f,g\in L^2\! \left(\mathbb R^3\right)$ the re-scaled canonical commutation relations
\begin{align*}
    [a(f),a^*(g)]=\alpha^{-2}\langle f,g\rangle,
\end{align*}
we introduce the Fröhlich Hamiltonian $\mathbb H$ as the self-adjoint operator
\begin{align}
\label{Eq:First_Definition_Hamiltonian}
    \mathbb H:=-\Delta_x+\mathcal{N}-a^*(v_x)-a(v_x)
\end{align}
acting on the Hilbert space $L^2\! \left(\mathbb R^3\right)\otimes \mathcal{F}$, where $v_x(y):=\pi^{-\frac{3}{2}}|y-x|^{-2}$ and $\mathcal{N}$ is the (re-scaled) particle number operator defined in terms of an orthonormal basis $\{u_j:j\in \mathbb N\}$ of $L^2\! \left(\mathbb R^3\right)$ as
\begin{align}
\label{Eq:Definition_Particle_Number_II}
    \mathcal{N}:=\sum_{j=0}^\infty a^*(u_j)a(u_j).
\end{align}
Note that $\Delta_x$ refers to the Laplace operator acting only on the $L^2\! \left(\mathbb R^3\right)$ factor in $L^2\! \left(\mathbb R^3\right)\otimes \mathcal{F}$ and $x$ refers to the position of the electron, i.e. the position operator in $L^2\! \left(\mathbb R^3\right)$. A detailed introduction to Fock space formalism and the objects appearing in the definition of $\mathbb H$ in Eq.~(\ref{Eq:First_Definition_Hamiltonian}) is given in Section \ref{Sec:Fock_Space_Formalism}. It is a central observation that $\mathbb H$ is invariant under a joint translation of the electron $x\mapsto x-z$ and the polarization field $a(f)\mapsto a(f_z)$, where we define the shifted function $f_z(y):=f(y-z)$, which is generated by a family of self-adjoint operators
\begin{align*}
    \mathbb P=(\mathbb P_1,\mathbb P_2,\mathbb P_3).
\end{align*}
Consequently, the joint spectrum $\sigma(\mathbb P,\mathbb H)$ is well-defined and we can introduce the ground state energy as a function of the total momentum $p\in \mathbb R^3$ as
\begin{align*}
    E_\alpha(p):=\inf\{E\in \mathbb R: (p,E)\in \sigma(\mathbb P,\mathbb H)\}.
\end{align*}
It is known that $E_\alpha(p)$ obtains its global minimum at $p=0$, see \cite{DS}, and therefore the ground state energy $E_\alpha:=\inf \sigma(\mathbb H)$ is given by $E_\alpha(0)$. In the past decades numerous results have been obtained on the asymptotic behaviour of $E_\alpha$ and $E_\alpha(p)$ in the regime of strong coupling, such as the seminal work \cite{DV}, which established the convergence of the ground state energy 
\begin{align}
\label{Eq:Introduction_Convergence_Semi_Classical_Energy}
    \underset{\alpha\rightarrow \infty}{\lim}E_\alpha=e^\mathrm{Pek}
\end{align}
to the minimal value $e^\mathrm{Pek}$ of the corresponding semi-classical functional  
\begin{align}
\label{Eq:Pekar_Functional}
    \mathcal{F}^\mathrm{Pek}(\varphi):=\|\varphi\|^2+\inf \sigma \Big(\! -\Delta-2 v*\mathfrak{Re}[\varphi]\Big),
\end{align}
where $v(y):=\pi^{-\frac{3}{2}}|y|^{-2}$ and $\varphi\in L^2\! \left(\mathbb R^3\right)$. Notably the proof of Eq.~(\ref{Eq:Introduction_Convergence_Semi_Classical_Energy}) in \cite{DV} is based on a functional integral representation of $E_\alpha$ using a Feynman-Kac formula for the semi-group $e^{-T\mathbb H}$, see \cite{F}. Later, an elementary proof of Eq.~(\ref{Eq:Introduction_Convergence_Semi_Classical_Energy}) has been established in \cite{LT} using a functional analytic approach. Regarding the asymptotic behaviour of the energy-momentum relation $E_\alpha(p)$ and the effective mass $m_\mathrm{eff}(\alpha)$, where
\begin{align}
\label{Eq:Effective_Mass_Def}
    m_\mathrm{eff}(\alpha):=\underset{P\rightarrow 0}{\lim}\frac{|p|^2}{2(E_\alpha(p)-E_\alpha)},
\end{align}
there has been significant progress in recent years. It has been verified in \cite{LS} that 
\begin{align}
    \label{Eq:Divergence_of_Effective_Mass}
    m_\mathrm{eff}(\alpha)\underset{\alpha \rightarrow \infty}{\longrightarrow }\infty,
\end{align}
i.e. the effective mass diverges in the limit of large $\alpha$. Using a functional integral representation of $m_\mathrm{eff}(\alpha)$, see \cite{Sp,DS} and \cite{MV,BP}, the rate of divergence in Eq.~(\ref{Eq:Divergence_of_Effective_Mass}) has been quantified in \cite{BP} by a lower bound of the form $m_\mathrm{eff}(\alpha)\geq c\alpha^{\frac{2}{5}}$, which has been improved in \cite{S} by the almost quartic lower bound $m_\mathrm{eff}(\alpha)\geq c(\log \alpha)^{-6}\alpha^{4}$ and in \cite{BMSV} by a lower bound of the form
\begin{align}
\label{Eq:Quartic_Lower_Bound}
 m_\mathrm{eff}(\alpha)\geq c\alpha^{4} 
\end{align}
for a suitable constant $c\in (0,\infty)$. It is worth pointing out, that Eq.~(\ref{Eq:Quartic_Lower_Bound}) already captures the correct quartic divergence $\alpha^4$ of the effective mass as $\alpha \rightarrow \infty$. In our main Theorem \ref{Th:Main} we are going to improve Eq.~(\ref{Eq:Quartic_Lower_Bound}) by establishing the lower bound
\begin{align}
\label{Eq:In_Text_Main_Result}
     m_\mathrm{eff}(\alpha)\geq \alpha^{4}m_{\mathrm{LP}}-C\alpha^{4-\epsilon}
\end{align}
for a suitable $\epsilon>0$, where $m_{\mathrm{LP}}:=\frac{2}{3}\left\|\nabla \varphi^\mathrm{Pek}\right\|^2$ and $\varphi^\mathrm{Pek}>0$ is the unique radial minimizer of the Pekar functional defined in Eq.~(\ref{Eq:Pekar_Functional}), see \cite{Li}. The corresponding upper bound 
\begin{align}
\label{Eq:Upper_Bound_Effective_Mass}
     m_\mathrm{eff}(\alpha)\leq \alpha^{4}m_{\mathrm{LP}}+C\alpha^{4-\epsilon}
\end{align}
has been proven recently in \cite{BS2}, using the results in \cite{Po}. Therefore we conclude that the constant $m_{\mathrm{LP}}$ in Eq.~(\ref{Eq:In_Text_Main_Result}) is the optimal one such that an inequality of the form Eq.~(\ref{Eq:In_Text_Main_Result}) holds. Combining Eq.~(\ref{Eq:In_Text_Main_Result}), see Theorem \ref{Th:Main}, and Eq.~(\ref{Eq:Upper_Bound_Effective_Mass}) confirms the Landau-Pekar-Spohn conjecture on the effective mass stated in Eq.~(\ref{Eq:Landau_Pekar_Conjecture}). 

Regarding the upper bound in Eq.~(\ref{Eq:Upper_Bound_Effective_Mass}), it has been shown in \cite{BS2}, using an upper bound on $E_\alpha$ from \cite{MMS}, that the energy-momentum relation $E_\alpha(p)$ satisfies a lower bound of the form
\begin{align}
\label{Eq:Non_Small_p_Lower_Bound_Energy}
    E_\alpha(p)-E_\alpha\geq \min\left\{\frac{|p|^2}{2\alpha^4 m_\mathrm{LP}},\alpha^{-2}\right\}-C\alpha^{-(2+\epsilon)}.
\end{align}
Since the error term $C\alpha^{-(2+\epsilon)}$ does not scale like $|p|^2$ in  $p$, this lower bound alone is clearly insufficient to obtain a non-trivial upper bound on $m_\mathrm{eff}(\alpha)$. However, together with the results in \cite{Po}, which especially imply that $E_\alpha(p)\leq E_\alpha+\frac{|p|^2}{2m_\mathrm{eff}(\alpha)}$, Eq.~(\ref{Eq:Upper_Bound_Effective_Mass}) follows from Eq.~(\ref{Eq:Non_Small_p_Lower_Bound_Energy}). A corresponding upper bound to Eq.~(\ref{Eq:Non_Small_p_Lower_Bound_Energy}) has been verified previously in \cite{MMS}
\begin{align}
\label{Eq:Non_Small_p_Upper_Bound_Energy}
    E_\alpha(p)-E_\alpha\leq \min\left\{\frac{|p|^2}{2\alpha^4 m_\mathrm{LP}},\alpha^{-2}\right\}+C\alpha^{-\left(\frac{5}{2}+\epsilon\right)},
\end{align}
using a lower bound on $E_\alpha$ from \cite{BS1}. Again, due to the lack of a $|p|^2$ scaling in the error term $C\alpha^{-\left(\frac{5}{2}+\epsilon\right)}$, Eq.~(\ref{Eq:Non_Small_p_Upper_Bound_Energy}) alone is insufficient to obtain a non-trivial lower bound on $m_\mathrm{eff}(\alpha)$.

\begin{thm}
\label{Th:Main}
    Let $m_{\mathrm{LP}}$ be the Landau-Pekar constant introduced below Eq.~(\ref{Eq:In_Text_Main_Result}) and $m_\mathrm{eff}(\alpha)$ the effective mass defined in Eq.~(\ref{Eq:Effective_Mass_Def}). Then there exist $C,\epsilon>0$ such that
    \begin{align*}
        m_\mathrm{eff}(\alpha)\geq \alpha^{4}m_{\mathrm{LP}}-C\alpha^{4-\epsilon}.\\
    \end{align*}
\end{thm}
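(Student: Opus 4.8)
The plan is to combine the Lee--Low--Pines fiber decomposition with a variational implementation of second order perturbation theory in the total momentum $p$, which reduces the theorem to a single spectral estimate for the fiber Hamiltonian at $p=0$; the advantage over (\ref{Eq:Non_Small_p_Lower_Bound_Energy})--(\ref{Eq:Non_Small_p_Upper_Bound_Energy}) is that the resulting upper bound on $E_\alpha(p)-E_\alpha$ has an error term that genuinely scales faster than $|p|^2$. Concretely, since $p=0$ is a strict minimum of $E_\alpha(\cdot)$ (see \cite{DS}) it suffices to exhibit a quantity $K_\alpha$ with $1-\tfrac43 K_\alpha>0$ and
\begin{align*}
E_\alpha(p)-E_\alpha\ \le\ |p|^2\bigl(1-\tfrac43 K_\alpha\bigr)+o_{p\to 0}(|p|^2),\qquad K_\alpha\ \ge\ \tfrac34-\tfrac38\,\alpha^{-4}m_{\mathrm{LP}}^{-1}-C\alpha^{-4-\epsilon},
\end{align*}
because dividing by $2(E_\alpha(p)-E_\alpha)$ and letting $p\to 0$ yields $m_\mathrm{eff}(\alpha)\ge\bigl(2-\tfrac83 K_\alpha\bigr)^{-1}\ge\alpha^4 m_{\mathrm{LP}}-C'\alpha^{4-\epsilon}$.

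\textbf{Step 1 (fibers and perturbation theory).} Diagonalising the translations $\mathbb P$ by the Lee--Low--Pines transformation $e^{ix\cdot P_f}$, with $P_f$ the field momentum, conjugates $\mathbb H$ to $\int^{\oplus}\mathbb H_p\,dp$ with $\mathbb H_p=(p-P_f)^2+\mathcal N-a^*(v)-a(v)$, so that $E_\alpha(p)=\inf\sigma(\mathbb H_p)$. From the known properties of the polaron at fixed momentum (see \cite{DS,Po}), $E_\alpha=E_\alpha(0)$ is a simple eigenvalue of $\mathbb H_0$ with a parity eigenvector $\psi_0$, whence $\langle\psi_0,P_f\psi_0\rangle=0$. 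Testing $\mathbb H_p$ on $\psi_0+p\cdot\eta$ with $\eta_j$ in the form domain and $\eta_j\perp\psi_0$, and expanding in $p$ (the order--$p$ terms vanish since $(\mathbb H_0-E_\alpha)\psi_0=0$), gives
\begin{align*}
E_\alpha(p)-E_\alpha\ \le\ |p|^2-4\,\mathfrak{Re}\,\langle p\cdot\eta,(p\cdot P_f)\psi_0\rangle+\langle p\cdot\eta,(\mathbb H_0-E_\alpha)(p\cdot\eta)\rangle+O_\alpha(|p|^3),
\end{align*}
and completing the square in $\eta$ produces the bound above with
\begin{align*}
K_\alpha\ =\ \sum_{j=1}^3\ \sup_{\phi\perp\psi_0}\Bigl(2\,\mathfrak{Re}\,\langle\phi,P_{f,j}\psi_0\rangle-\langle\phi,(\mathbb H_0-E_\alpha)\phi\rangle\Bigr)\ =\ \langle P_f\psi_0,(\mathbb H_0-E_\alpha)^{-1}\bar P_0 P_f\psi_0\rangle,
\end{align*}
where $\bar P_0$ projects onto $\psi_0^{\perp}$; that $K_\alpha<\infty$ (equivalently $m_\mathrm{eff}(\alpha)<\infty$) is part of the input from \cite{Po}. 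It remains to prove the lower bound on $K_\alpha$.

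\textbf{Step 2 (the spectral estimate).} Here the Pekar structure enters, and this is where the real work lies. The strong coupling analysis of the fiber ground states, in particular \cite{BS1} and \cite{Po}, provides a quantitative description of $\psi_0$ as a Bogoliubov--corrected coherent state built on $\varphi^\mathrm{Pek}$ and the Pekar electron profile $\psi^\mathrm{Pek}$ (so that $\varphi^\mathrm{Pek}=v*|\psi^\mathrm{Pek}|^2$), with $P_{f,j}\psi_0$ lying, to leading order, in the one--excitation sector along the translation mode $\partial_j\varphi^\mathrm{Pek}$ and probing only the part of $\sigma(\mathbb H_0)$ at height of order $\alpha^{-2}$ above $E_\alpha$. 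As trial vector I would take $\phi=\phi_j$ equal to the $p_j$--derivative at $p=0$ of the explicit momentum--$p$ trial state of \cite{Po}; equivalently $\phi_j=\mathcal B_j\psi_0$ for an explicit excitation operator $\mathcal B_j$, linear in $a^*,a$ and built from $\partial_j\varphi^\mathrm{Pek}$, $\psi^\mathrm{Pek}$ and the Bogoliubov transformation, so that $\langle\phi_j,(\mathbb H_0-E_\alpha)\phi_j\rangle=\langle\mathcal B_j\psi_0,[\mathbb H_0,\mathcal B_j]\psi_0\rangle$ is a commutator expectation and therefore insensitive to the imprecisely known absolute value of $E_\alpha$. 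Inserting $\phi_j$ into the variational formula for $K_\alpha$, expanding, absorbing the remainders into the a priori estimates on $\psi_0$, and using the Euler--Lagrange equation together with isotropy ($\int\partial_i\varphi^\mathrm{Pek}\,\partial_j\varphi^\mathrm{Pek}=\tfrac13\delta_{ij}\|\nabla\varphi^\mathrm{Pek}\|^2$), the leading terms should assemble into $K_\alpha\ge\tfrac34-\tfrac38\alpha^{-4}m_{\mathrm{LP}}^{-1}-C\alpha^{-4-\epsilon}$ with $m_{\mathrm{LP}}=\tfrac23\|\nabla\varphi^\mathrm{Pek}\|^2$, which by the reduction above completes the proof.

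The main obstacle is Step 2. Because the relevant energy scale is $\alpha^{-2}$ while the near--optimal vectors $\phi_j$ have norm of order $\alpha$, a subleading error in $\phi_j$ is amplified by a factor $\sim\alpha^{2}$ inside $K_\alpha$; hence one cannot work with the bare Pekar coherent state but must resolve the Bogoliubov fluctuations of $\psi_0$ and control their interaction with $\mathcal B_j$ to relative precision $\alpha^{-\epsilon}$ — that is, to one order beyond the estimate ``$K_\alpha\to\tfrac34$'' which merely encodes the divergence of $m_\mathrm{eff}$. Writing the energy as a commutator, so that only the polaron--translation part of the dynamics is tested, and importing the quantitative ground state description of \cite{BS1,Po} is what should make this feasible.
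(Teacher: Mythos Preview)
Your Step 1 is sound and essentially coincides with the paper's set-up: both reduce the theorem to producing a trial state for $\mathbb H_p$ whose excess energy over $E_\alpha$ is $\frac{|p|^2}{2\alpha^4 m_{\mathrm{LP}}}+o(\alpha^{-4})|p|^2$, and your resolvent reformulation $K_\alpha=\langle P_f\psi_0,(\mathbb H_0-E_\alpha)^{-1}\bar P_0 P_f\psi_0\rangle$ is a clean way to record that (it is in fact exact, given the spectral gap from \cite{DS}).

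The gap is Step 2, and the paper's solution there is substantially different from what you sketch. First, the trial direction actually used is $\phi_j=i\mathbb B_j\Psi_\alpha$ with $\mathbb B_j$ a \emph{multiplication operator in phonon position space} built from the regularized median $m_q(\chi*\rho_{Y_n})$ plus a correction $\int f(y-m_q)\,d\rho_{Y_n}$; this is a highly nonlinear functional of the field and is \emph{not} linear in $a^*,a$. The nonlinearity is not incidental: it is what gives the exact pseudo-boost identity $e^{-ip\cdot\mathbb B}\mathcal P e^{ip\cdot\mathbb B}=\mathcal P+F(\rho_{Y_n})p$ (Lemma~\ref{Lem:Pseudo_Boost_B_FB}) eliminating the bad $p$ in $(p-\mathcal P)^2$, and it is what guarantees uniformly bounded moments $\langle\Psi_\alpha,|\mathbb B|^m\Psi_\alpha\rangle\le C_m$ (Theorem~\ref{Th:Powers_of_Median}); the paper's introduction explains why naive boosts such as the mean position fail precisely this moment condition. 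Second, the reference \cite{Po} does not supply a momentum-$p$ trial state whose $p$-derivative you could take; it gives the convexity bound $E_\alpha(p)\le E_\alpha+\frac{|p|^2}{2m_{\mathrm{eff}}(\alpha)}$ via a renewal/probabilistic argument. Third, the ``quantitative ground state description of \cite{BS1}'' is not enough as stated: the paper has to \emph{upgrade} the BEC result of \cite{BS1} so as to control the empirical phonon measure in mollified total variation (Lemma~\ref{Lem:Estimate_Convoluted_Total_Variation}), because that is the norm in which $m_q$ is differentiable (Lemma~\ref{Lem:Eq:Total_Variation_Upper_Bound}). Finally, your scaling heuristic ($\|\phi_j\|\sim\alpha$, gap $\sim\alpha^{-2}$) does not match the actual mechanism: the paper's $\mathbb B_j\Psi_\alpha$ has $O(1)$ norm, and the Landau--Pekar constant emerges not from a small-gap resolvent but from the semiclassical computation $\langle\varphi^{\mathrm{Pek}},\zeta\,\varphi^{\mathrm{Pek}}\rangle=\frac{|p|^2}{2m_{\mathrm{LP}}}$ after optimizing the modifier $f$ (Remark~\ref{Rem:Optimizing}). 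In short, your outline correctly locates the difficulty but the concrete proposal for $\phi_j$ and the cited inputs would not carry the argument; the paper's regularized-median boost and its attendant estimates are the missing idea.
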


\textbf{Proof strategy of Theorem \ref{Th:Main}.} By the definition of the effective mass $m_\mathrm{eff}(\alpha)$ in Eq.~(\ref{Eq:Effective_Mass_Def}), it is clearly enough to show the upper bound
\begin{align}
\label{Eq:Introduction_Upper_Bound_Energy-Momentum_Relation}
    E_\alpha(p)\leq E_\alpha+\frac{|p|^2}{2\alpha^4 m_{\mathrm{LP}}}+C\alpha^{-(4+\epsilon)}|p|^2.
\end{align}
In order to find a more convenient expression for $E_\alpha(p)$, let us switch to the Lee-Low-Pines picture, \cite{LLP}, where the frame for the polarization field moves along with the position of the electron $x$. After applying this unitary transformation, the operator $\mathbb H$ reads
\begin{align*}
    \mathbb H_{\frac{1}{i}\nabla_x}=\left( \frac{1}{i}\nabla_x- \mathcal P\right)^2+\mathcal{N}-a^*(v)-a(v),
\end{align*}
where $\mathcal P$ is the generator of the translations $a(f)\mapsto a(f_z)$ and $v(y):=\pi^{-\frac{3}{2}}|y|^{-2}$, and we define for $p\in \mathbb R^3$ the fiber Hamiltonian $\mathbb H_p$ acting only on the Fock space $\mathcal{F}$ as
\begin{align}
\label{Eq:Introduction_Fiber_Hamiltonian}
    \mathbb H_p:=\left( p - \mathcal P\right)^2+\mathcal{N}-a^*(v)-a(v).
\end{align}
Since $\frac{1}{i}\nabla_x$ is the Lee-Low-Pines transform of the total momentum operator $\mathbb P$, we can therefore express $E_\alpha(p)$ as the ground state energy of the fiber operator $\mathbb H_p$, i.e.
\begin{align*}
    E_\alpha(p)=\inf \sigma(\mathbb H_p).
\end{align*}
In order to establish the upper bound in Eq.~(\ref{Eq:Introduction_Upper_Bound_Energy-Momentum_Relation}), we first note that by the variational characterization of the ground state energy, $E_\alpha(p)$ is bounded from above by
\begin{align*}
    E_\alpha(p)\leq \left\langle \Psi_{\alpha,p},\mathbb H_p \Psi_{\alpha,p}\right\rangle_\mathcal{F},
\end{align*}
for any element $\Psi_{\alpha,p}\in \mathcal{F}$ with $\|\Psi_{\alpha,p}\|=1$. It is therefore enough to find a good trial state $\Psi_{\alpha,p}$, such that
\begin{align}
\label{Eq:Introduction_Trial_state_Upper_Bound}
    \left\langle \Psi_{\alpha,p},\mathbb H_p \Psi_{\alpha,p}\right\rangle_\mathcal{F}\leq E_\alpha+\frac{|p|^2}{2\alpha^4 m_{\mathrm{LP}}}+C\alpha^{-(4+\epsilon)}|p|^2.
\end{align}
Naively, the most natural candidate would be $\Psi_{\alpha,p}:=\Psi_\alpha$, where $\Psi_\alpha$ is the ground state of the operator $\mathbb H_0$, i.e. $\langle \Psi_\alpha,\mathbb H_0 \Psi_\alpha\rangle_\mathcal{F}=E_\alpha$ and $\|\Psi_\alpha\|=1$. In this case however 
\begin{align*}
    \left\langle \Psi_{\alpha,p},\mathbb H_p \Psi_{\alpha,p}\right\rangle_\mathcal{F}=E_\alpha+|p|^2
\end{align*}
would not have the right quartic scaling in $\alpha$, and would only yield a lower bound of the form $m_\mathrm{eff}(\alpha)\geq \frac{1}{2}$, i.e. we would not even observe the divergence of the effective mass. The problem is of course, that the term $\left( p - \mathcal P\right)^2$ in Eq.~(\ref{Eq:Introduction_Fiber_Hamiltonian}) contains a $p$ without the right $\alpha$ scaling. In order to obtain an improved trial state $\Psi_{\alpha,p}$, let $\mathbb B=(\mathbb B_1,\mathbb B_2,\mathbb B_3)$ be a (pseudo) boost for $\mathcal P$, i.e. a collection of commuting self-adjoint operators $\mathbb B_j$ acting on $\mathcal{F}$ that satisfies for $p\in \mathbb R^3$
\begin{align}
\label{Eq:Introduction_Boost_Relations_For_P}
  e^{-ip\cdot \mathbb B}\mathcal Pe^{ip\cdot \mathbb B} \approx \mathcal P+p,
\end{align}
or equivalently for $z\in \mathbb R^3$
\begin{align}
\label{Eq:Introduction_Boost_Relations}
    e^{iz\cdot \mathcal P}\mathbb Be^{-iz\cdot \mathcal P} \approx \mathbb B+z,
\end{align}
and define $\Psi_{\alpha,p}:=e^{ip\cdot \mathbb B}\Psi_\alpha$. We write $\approx$, respectively pseudo-boost, as it is not possible to satisfy Eq.~(\ref{Eq:Introduction_Boost_Relations_For_P}) and Eq.~(\ref{Eq:Introduction_Boost_Relations}) as an identity. With this choice we have a cancellation of the unwanted $p$-term
\begin{align*}
    \left\langle \Psi_{\alpha,p}, \left( p - \mathcal P\right)^2\Psi_{\alpha,p}\right\rangle_{\mathcal{F}}\approx \left\langle \Psi_{\alpha},  \mathcal P^2 \Psi_{\alpha}\right\rangle_{\mathcal{F}}.
\end{align*}

For the purpose of finding the correct boost $\mathbb B$ that yields the desired upper bound in Eq.~(\ref{Eq:Introduction_Trial_state_Upper_Bound}), let us formally apply the semi-classical correspondence principle, which tells us that the energy of the quantum system corresponds to the Peka functional $\mathcal{F}^\mathrm{Pek}$ defined in Eq.~(\ref{Eq:Pekar_Functional}), the ground state $\Psi_\alpha$ corresponds to a minimizer $\varphi^\mathrm{Pek}$ of $\mathcal{F}^\mathrm{Pek}$, the ground state energy $E_\alpha$ corresponds to the minimum $e^\mathrm{Pek}$ of $\mathcal{F}^\mathrm{Pek}$ and the unitary transformation $e^{ip\cdot \mathbb B}$ corresponds formally to a  Hamiltonian flow $p\mapsto \varphi_p\in  L^2\! \left(\mathbb R^3\right)$ with $p\in \mathbb R^3$ and values in the infinite dimensional space $L^2\! \left(\mathbb R^3\right)$ 
\begin{align}
   \label{Eq:Introduction_Hamiltonian_Diff._Eq.}
   \begin{cases}
\alpha^2 \frac{\mathrm{d}}{\mathrm{d}p}\varphi_p=i\nabla_{\overline{\varphi}}B\! \left(\varphi(p)\right),\\
\varphi_0=\varphi^\mathrm{Pek},
\end{cases} 
\end{align}
for a suitable function $B:L^2\! \left(\mathbb R^3\right)\longrightarrow \mathbb R^3$ corresponding to the operator ${\mathbb  B}$. Since all of these considerations are of purely formal nature, we are not going to worry about the notion of differentiability. It is however worth pointing out that Eq.~(\ref{Eq:Introduction_Boost_Relations}) corresponds to
\begin{align}
\label{Eq:Introduction_Translation_Covariant}
    B(\varphi_z)=B(\varphi)+z
\end{align}
for $\varphi\neq 0$. A natural candidate for a function $B$ is the mean-value (we ignore at this point that this is ill-defined for some $\varphi\in L^2\! \left(\mathbb R^3\right)$) with respect to the measure $\|\varphi\|^{-2}|\varphi(y)|^2\mathrm{d}y$, i.e.
\begin{align}
\label{Eq:Introduction_Mean_Value_Boost}
    B(\varphi):=\|\varphi\|^{-2}\int_{\mathbb R^3} \! y\, |\varphi(y)|^2\mathrm{d}y.
\end{align}
This choice of $B$ clearly satisfies Eq.~(\ref{Eq:Introduction_Translation_Covariant}), and the solution of Eq.~(\ref{Eq:Introduction_Hamiltonian_Diff._Eq.}) reads
\begin{align*}
    \varphi_p(y)=e^{i\alpha^{-2}\|\varphi\|^{-2} p\cdot y}\varphi^\mathrm{Pek}(y).
\end{align*}
However, since $\nabla_{\overline{\varphi}}B\! \left(\varphi^\mathrm{Pek}\right)=\|\varphi\|^{-2} y \varphi^\mathrm{Pek}(y)$ is not an $L^2\! \left(\mathbb R^3\right)$ function, $\mathcal{F}^\mathrm{Pek}(\varphi_p)$ does not exhibit the right quadratic scaling in $p$. To be precise, one can show that there exists a constant $C>0$ such that for $p$ small enough
\begin{align*}
 \mathcal{F}^\mathrm{Pek}(\varphi_p)\geq e^\mathrm{Pek}+C\alpha^{-2}|p|.
\end{align*}

In order to make sure that $\nabla_{\overline{\varphi}}B\! \left(\varphi^\mathrm{Pek}\right)\in L^2\! \left(\mathbb R^3\right)$, let us take the mean-value only in a bounded region. For this purpose let $x^\lambda$ denote the $\lambda$-quantile of a measure on $\mathbb R$, let $\rho_j$ be the $j$-th marginal distribution of a measure $\rho$ on $\mathbb R^3$ and define $\mathrm{d}\rho_\varphi:=|\varphi(y)|^2\mathrm{d}y$. Then we introduce for $q>0$ in components
\begin{align}
\label{Eq:Introduction_regularized_median}
    m_q(\rho)_j: &=\left(\int_{x^{\frac{1}{2}- q}(\rho_j)}^{x^{\frac{1}{2}+ q}(\rho_j)} \mathrm{d}\rho_j \right)^{-1}\int_{x^{\frac{1}{2}- q}(\rho_j)}^{x^{\frac{1}{2}+ q}(\rho_j)} y \mathrm{d}\rho_j(y),\\
    \label{Eq:H_B_as_reg._med.}
    B(\varphi): & =m_q(\chi* \rho_\varphi),
\end{align}
where $\chi\geq 0$ is a smooth function, see also Section \ref{Sec:Construction_of_a_Trial_State} for a comprehensive definition of $m_q$. The statistical quantity $m_q$ has been used previously in the study of translation-invariant Bose gases \cite{BS0} as well as in the study of the Fröhlich polaron \cite{BS1,BS2}, where it was referred to as the regularized median. Clearly the choice of $B$ in Eq.~(\ref{Eq:H_B_as_reg._med.}) satisfies Eq.~(\ref{Eq:Introduction_Translation_Covariant}) and since
\begin{align*}
\nabla_{\overline{\varphi}}B\! \left(\varphi^\mathrm{Pek}\right)\in L^2\! \left(\mathbb R^3\right),    
\end{align*}
 we have the right scaling
\begin{align}
\label{Eq:Introduction_Expansion_Pekar_Energy}
    \mathcal{F}^\mathrm{Pek}(\varphi_p)= e^\mathrm{Pek}+C\alpha^{-4}|p|^2+o\! \left(\alpha^{-4}|p|^2\right),
\end{align}
for a suitable constant $C>0$, where we write $o\! \left(\alpha^{-4}|p|^2\right)$ for a term that is small compared to $\alpha^{-4}|p|^2$ as $p\rightarrow 0$ and $\alpha\rightarrow \infty$. 

Finally, we want to modify $B$, such that the corresponding constant $C$ in Eq.~(\ref{Eq:Introduction_Expansion_Pekar_Energy}) is the optimal one $C=\frac{1}{2m_\mathrm{Pek}}$. We note that the Landau-Pekar constant $m_\mathrm{Pek}$ emerges naturally as
\begin{align}
\label{Eq:Classical_Polaron}
     \mathcal{F}^\mathrm{Pek}\! \left(\varphi^\mathrm{Pek}-i\frac{p}{\alpha^2 m_{\mathrm{LP}}}\cdot \nabla \varphi^\mathrm{Pek}\right)=e^\mathrm{Pek}+\frac{|p|^2}{2\alpha^4 m_{\mathrm{LP}}}+o\! \left(\alpha^{-4}|p|^2\right),
\end{align}
see for example \cite{FRS}, where the effective mass is identified for the classical polaron described by the Landau-Pekar equations. Motivated by Eq.~(\ref{Eq:Classical_Polaron}), we want to find a function $ B$ satisfying Eq.~(\ref{Eq:Introduction_Translation_Covariant}) and 
\begin{align}
\label{Eq:Introduction_Boost_Condition}
   \nabla_{\overline{\varphi}}B\! \left(\varphi^\mathrm{Pek}\right)=-\frac{1}{m_\mathrm{LP}} \nabla \varphi^\mathrm{Pek}. 
\end{align}
In the following let us define for $f:\mathbb R^3\longrightarrow \mathbb R^3$ the modified boost $B$ as
\begin{align}
\label{Eq:Introduction_H_B_with_modifier}
     B(\varphi):  =m_q(\chi* \rho_\varphi)+\int_{\mathbb R^3} f\Big(y-m_q(\chi* \rho_\varphi)\Big)\mathrm{d}\rho_\varphi(y),
\end{align}
which satisfies Eq.~(\ref{Eq:Introduction_Translation_Covariant}). An elementary computation shows that 
\begin{align}
\label{Eq:Introduction_Gradient_Modified_result}
   \nabla_{\overline{\varphi}}B\! \left(\varphi^\mathrm{Pek}\right)=\nabla_{\overline{\varphi}}m_q(\chi* \rho_\varphi)|_{\varphi^\mathrm{Pek}}+f\varphi^\mathrm{Pek}
\end{align}
in case $f\varphi^\mathrm{Pek}\perp \nabla \varphi^\mathrm{Pek}$, i.e. we can write any vector in the affine space
\begin{align}
\label{Eq:Introduction_Affine_Space}
    \xi\in \mathcal{A}:=\nabla_{\overline{\varphi}}m_q(\chi* \rho_\varphi)|_{\varphi^\mathrm{Pek}}+\{\nabla \varphi^\mathrm{Pek}\}^\perp
\end{align}
as a gradient $\xi= \nabla_{\overline{\varphi}}B\! \left(\varphi^\mathrm{Pek}\right)$ of a boost $B$ with the right choice of $f$. Notably, 
\begin{align*}
  -\frac{1}{m_{\mathrm{LP}}} \nabla \varphi^\mathrm{Pek}\in \mathcal{A}  
\end{align*}
is an element of $\mathcal{A}$, and therefore we can find a boost $B$ such that Eq.~(\ref{Eq:Introduction_Boost_Condition}) holds. Hence
\begin{align}
\label{Eq:Introduction_Boost_Condition_Satisfied}
    \mathcal{F}^\mathrm{Pek}(\varphi_p)= e^\mathrm{Pek}+\frac{|p|^2}{2\alpha^4 m_{\mathrm{LP}}}+o\! \left(\alpha^{-4}|p|^2\right) .
\end{align}

Having the correct semi-classical boost $B$ at hand such that Eq.~(\ref{Eq:Introduction_Boost_Condition_Satisfied}) holds, we are going to construct the corresponding boost operator ${\mathbb  B}$ and define the trial state
\begin{align*}
   \Psi_{\alpha,p}:=e^{ip\cdot {\mathbb  B}}\Psi_\alpha. 
\end{align*}
There are two main technical challenges in this manuscript. The first one is to establish a rigorous correspondence between the quantum objects $\Psi_{\alpha,p},\mathbb H_p,{\mathbb  B},\dots $ and their semi-classical counterparts $\varphi_p, \mathcal{F}^\mathrm{Pek},B,\dots $, i.e. we have to verify that
\begin{align}
\label{Eq:Introduction_Correspondence_rigorous}
    \left\langle \Psi_{\alpha,p},\mathbb H_p \Psi_{\alpha,p}\right\rangle_\mathcal{F}-E_\alpha = \mathcal{F}^\mathrm{Pek}(\varphi_p) - e^\mathrm{Pek}+o\! \left(\alpha^{-4}|p|^2\right)=\frac{|p|^2}{2\alpha^4 m_{\mathrm{LP}}}+o\! \left(\alpha^{-4}|p|^2\right).
\end{align}
In \cite{BS1}, it has been shown that suitable low energy states satisfy Bose-Einstein condensation, which loosely speaking means that a typical phonon configuration is expected to be close to the semi-classical optimizer $\varphi^\mathrm{Pek}$. In order to derive Eq.~(\ref{Eq:Introduction_Correspondence_rigorous}), we are going to improve this result, by showing that the empirical distribution $\rho_{Y_n}$ of a collection of phonons at position $Y_n=(y_1,\dots ,y_n)$ is expected to be close to the Pekar measure $\mathrm{d}\rho^\mathrm{Pek}:=|\varphi^\mathrm{Pek}|^2\mathrm{d}y$ with respect to the distance
\begin{align}
\label{Eq:Introduction_Distance}
\|\chi*(\rho_{Y_n}-\rho^\mathrm{Pek})\|_{\mathrm{TV}},   
\end{align}
where $\|\cdot \|_{\mathrm{TV}}$ refers to the total variation and $\chi$ is a mollifier. Furthermore, we show that the notion of distance in Eq.~(\ref{Eq:Introduction_Distance}) is the correct one to discuss continuity and differentiability aspects of the statistical quantity $m_q(\chi* \rho_\varphi)$ used in the definition of $B$ in Eq.~(\ref{Eq:Introduction_H_B_with_modifier}). The second technical difficulty concerns the Taylor expansion of the unitary group $e^{ip\cdot \mathbb B}$, which is necessary to express the left hand side of Eq.~(\ref{Eq:Introduction_Correspondence_rigorous}). To be precise, we need to show that the moments of the boost operator ${\mathbb  B}$ are bounded in the ground state $\Psi_\alpha$, i.e. we have to show that there exist constants $C_m$ such that for all $\alpha$ large enough
\begin{align}
\label{Eq:Introduction_moments_boost}
    \left\langle \Psi_\alpha, |{\mathbb  B}|^m \Psi_\alpha \right\rangle\leq C_m.
\end{align}
For this purpose, we use a layer-cake representation of $ \left\langle \Psi_\alpha, |{\mathbb  B}|^m \Psi_\alpha \right\rangle$ and show that the ground state has to be supported mostly on the spectral subspace where $|{\mathbb  B}|$ is small. Related to the question of whether $\nabla_{\overline{\varphi}}B\! \left(\varphi^\mathrm{Pek}\right)$ is an element of $L^2\! \left(\mathbb R^3\right)$, we want to emphasise that Eq.~(\ref{Eq:Introduction_moments_boost}) certainly does not hold for any boost ${\mathbb  B}$. For example the mean-value boost in Eq.~(\ref{Eq:Introduction_Mean_Value_Boost}) has already on the semi-classical level unbounded second moments
\begin{align*}
    \int_{\mathbb R^3}|y|^2 \mathrm{d}\rho^\mathrm{Pek}=\|y\varphi^\mathrm{Pek}\|^2=\infty.\\
\end{align*}

\textbf{Outline.} The paper is structured as follows. In Section \ref{Sec:Fock_Space_Formalism} we introduce Fock space formalism, including all the objects appearing in the definition of $\mathbb H$, as well as the most important semi-classical objects. In the subsequent Section \ref{Sec:Construction_of_a_Trial_State} we will introduce an appropriate boost operator as a counterpart to the semi-classical function $B$ defined in Eq.~(\ref{Eq:Introduction_H_B_with_modifier}) and give a precise definition of the trial state $\Psi_{\alpha,p}$. Furthermore, we will isolate the main contribution in the energy $\left\langle \Psi_{\alpha,p},\mathbb H_p \Psi_{\alpha,p}\right\rangle_\mathcal{F}$ from various error terms $\mathcal{E}$. The proof of Theorem \ref{Th:Main} is then content of Section \ref{Sec:Proof_of_Theorem}, where we confirm the formal semi-classical predictions. In the rest of the paper we will derive technical results necessary for the proof of Theorem \ref{Th:Main}, starting with the analysis of the regularized median $m_q$, see Eq.~(\ref{Eq:Introduction_regularized_median}), in Section \ref{Sec:Analysis_of_the_generalized_Median}. In the following Section \ref{Sec:Spatial_concentration_of_Probability} we will discuss the spatial distribution of the ground state $\Psi_\alpha$ and establish that $\Psi_\alpha$ is mostly supported on the spectral subspace where the boost $|{\mathbb  B}|$ is small. In a similar spirit, we verify in Section \ref{Sec:Asymptotic_concentration_of_Probability} that the ground state $\Psi_\alpha$ is mostly supported on polarization fields $\varphi$ close to the semi-classical minimizer $\varphi^\mathrm{Pek}$. The analysis of the error terms $\mathcal{E}$ from Section \ref{Sec:Construction_of_a_Trial_State} is then content of Section \ref{Sec:Analysis_of_the_Error_Terms} and in Section \ref{Sec:Bose-Einstein_Condensation} we will derive improved results on the issue of Bose-Einstein condensation. Especially, we will verify that the empirical measures $\rho_{Y_n}$ are close to the semi-classical distribution $\rho^\mathrm{Pek}$ with respect to the regularized total variation. In the Appendix \ref{Appendix:Properties_of_the_Semi-Classical_minimizers} we are going to verify various decay properties of the semi-classical minimizers, in momentum space as well as position space.

\section{Definitions and Fock Space Formalism}
\label{Sec:Fock_Space_Formalism}

\subsection{Basic Notation}
\label{Subsec:Basic_Notation}
It is the content of this Subsection, to introduce the Fock space $\mathcal{F}$, as well as two important families of operators defined on $\mathcal{F}$. Furthermore, we are presenting a collection of  useful cut-off functions. 

In order to define $\mathcal{F}$, let $\underset{n\in \mathbb N}{\bigcup}\mathbb R^{3n}$ be the (disjoint) union of the topological spaces $\mathbb R^{3n}$, and let us denote with $L^2\! \left(\underset{n\in \mathbb N}{\bigcup}\mathbb R^{3n}\right)$ the space of all Borel measurable complex valued (equivalence classes of) functions 
\begin{align*}
 \Psi:\underset{n\in \mathbb N}{\bigcup}\mathbb R^{3n}\longrightarrow \mathbb C   ,
\end{align*}
such that $ \sum_{n=0}^\infty \int_{\mathbb R^{3n}}\left|\Psi(Y_n)\right|^2\mathrm{d}Y_n<\infty$, equipped with the inner product
\begin{align*}
    \Big\langle\Psi,\Phi\Big\rangle_{L^2\! \left(\underset{n\in \mathbb N}{\bigcup}\mathbb R^{3n}\right)}:=\sum_{n=0}^\infty \int_{\mathbb R^{3n}}\overline{\Psi(Y_n)}\Phi(Y_n)\mathrm{d}Y_n.
\end{align*}
For the readers convenience, we will indicate the dimension of an element $Y_n\in \underset{n\in \mathbb N}{\bigcup}\mathbb R^{3n}$ in our notation, i.e. $Y_n\in \mathbb R^{3n}$, and we will usually structure the vector $Y_n$ as 
\begin{align*}
    Y_n=(y_1,\dots ,y_n),
\end{align*}
with $y_i\in \mathbb R^3$. Clearly, the Hilbert space $L^2\! \left(\underset{n\in \mathbb N}{\bigcup}\mathbb R^{3n}\right)$ is isomorphic to the orthogonal sum
\begin{align*}
    L^2\! \left(\underset{n\in \mathbb N}{\bigcup}\mathbb R^{3n}\right)\cong \underset{n\in \mathbb N}{\bigoplus} L^2\! \left(\mathbb R^{3n}\right).
\end{align*}
We then define the Fock space $\mathcal{F}$ as the subspace of $L^2\! \left(\underset{n\in \mathbb N}{\bigcup}\mathbb R^{3n}\right)$ containing all permutation symmetric functions $\Psi$, i.e. $\Psi\in \mathcal{F}$ if and only if for all $Y_n\in \underset{n\in \mathbb N}{\bigcup}\mathbb R^{3n}$ and permutations $\sigma\in S_n$
\begin{align*}
    \Psi(y_{\sigma_1},\dots ,y_{\sigma_n})=\Psi(y_1,\dots ,y_n).
\end{align*}
In case we want to emphasise that the functions involved in an inner product are permutation symmetric, we will write $\Big\langle \Psi,\Phi\Big\rangle_{\mathcal{F}}$ instead of $\Big\langle \Psi,\Phi\Big\rangle_{L^2 \! \left(\underset{n\in \mathbb N}{\bigcup}\mathbb R^{3n}\right)}$. In the following it will be useful to introduce for $\alpha>0$ and $Y_n\in \underset{n\in \mathbb N}{\bigcup}\mathbb R^{3n}$ the empirical measure
\begin{align}
\label{Eq:Empirical_Measure}
    \rho_{Y_n}=\alpha^{-2}\sum_{j=1}^n \delta_{y_j},
\end{align}
where we suppress the $\alpha$-dependence of $ \rho_{Y_n}$ in our notation for the sake of readability and use the convention that $\rho_{Y_0}=0$. Then $\Psi\in L^2\! \left(\underset{n\in \mathbb N}{\bigcup}\mathbb R^{3n}\right)$ is an element of $\mathcal{F}$, if and only if it can be written as a function of the empirical measures, i.e. if and only if there exists a complex valued function $\Psi'$ taking measures on $\mathbb R^3$ as an argument, such that
\begin{align*}
    \Psi(Y_n)=\Psi'(\rho_{Y_n}).
\end{align*}
For a measurable subset $\Omega\subseteq \mathbb R^3$, let us furthermore introduce the Fock space $\mathcal{F}(\Omega)$ as the space of all $\Psi\in \mathcal{F}$ satisfying
\begin{align*}
    \mathrm{supp}(\Psi)\subseteq \underset{n\in \mathbb N}{\bigcup}\Omega^n .
\end{align*}

The first important family of operators on the spaces $L^2\! \left(\underset{n\in \mathbb N}{\bigcup}\mathbb R^{3n}\right)$ and $\mathcal{F}$ are multiplication operators by a given measurable function $F:\underset{n\in \mathbb N}{\bigcup}\mathbb R^{3n}\longrightarrow \mathbb R$. In the following we will use the notation $F(Y_n)$ for the multiplication operator by $F$, e.g. we will write
\begin{align*}
    \Big\langle \Psi,F(Y_n)\Psi\Big\rangle_{L^2 \! \left(\underset{n\in \mathbb N}{\bigcup}\mathbb R^{3n}\right)}=\sum_{n=0}^\infty \int_{\mathbb R^{3n}}F(Y_n)\left|\Psi(Y_n)\right|^2\mathrm{d}Y_n.
\end{align*}
Many of the functions considered in this manuscript will naturally depend only on the empirical measure $\rho_{Y_n}$, i.e. will be of the form $F(Y_n)=f(\rho_{Y_n})$, in which case we might write
\begin{align*}
    \Big\langle \Psi,f(\rho_{Y_n})\Psi\Big\rangle_{\mathcal{F}}= \Big\langle \Psi,f(\rho_{Y_n})\Psi\Big\rangle_{L^2 \! \left(\underset{n\in \mathbb N}{\bigcup}\mathbb R^{3n}\right)}=\sum_{n=0}^\infty \int_{\mathbb R^{3n}}f(\rho_{Y_n})\left|\Psi(Y_n)\right|^2\mathrm{d}Y_n.
\end{align*}
An especially important example of a multiplication operator is the (re-scaled) particle number operator $\mathcal{N}$ defined as
\begin{align}
\label{Eq:Definition_Particle_Number}
    \mathcal{N}:=\mathcal{N}(\rho_{Y_n}):=\int \mathrm{d}\rho_{Y_n}=\frac{n}{\alpha^2}.
\end{align}
The second essential family of operators on a Fock space are the creation and annihilation operators $a^*(f)$ and $a(f)$, parameterized by functions $f\in L^2\! \left(\mathbb R^3\right)$. In order to properly introduce $a^*(f)$, let us first define the operator $L(f)$ acting on $L^2 \! \left(\underset{n\in \mathbb N}{\bigcup}\mathbb R^{3n}\right)$ as
\begin{align}
\label{Eq:Definition_L_Operator}
    \Big(L(f)\Psi\Big)(Y_n):=\frac{\sqrt{n}}{\alpha}\Psi(Y_{n-1})f(y_n).
\end{align}
Here we have used the convention that $Y_n=(y_1,\dots ,y_n)$ and $Y_{n-1}=(y_1,\dots ,y_{n-1})$, a notion which we will use repeatedly in this article. We can now define on $\mathcal{F}$
\begin{align*}
    a^*(f): & =\pi_{\mathcal{F}} L(f)\iota_{\mathcal{F}},\\
    a(f) : & =\pi_{\mathcal{F}} L(f)^* \iota_{\mathcal{F}},
\end{align*}
where $A^*$ is the adjoint of an operator $A$, $\iota_{\mathcal{F}}$ is the embedding of $\mathcal{F}\subseteq L^2 \! \left(\underset{n\in \mathbb N}{\bigcup}\mathbb R^{3n}\right)$ and $\pi_\mathcal{F}=\iota_{\mathcal{F}}^*$ is the orthogonal projection onto $\mathcal{F}$.

The operators $a^*(f)$ and $a(f)$ will be a convenient tool to express many other operators acting on $\mathcal{F}$, e.g. for a bounded and measurable function $h:\mathbb R^3\longrightarrow \mathbb R$ and an orthonormal basis $\{\varphi_n:n\in \mathbb N\}$ we can express the multiplication operator $Y_n\mapsto \int h\mathrm{d}\rho_{Y_n}$ as
\begin{align}
\label{Eq:Multiplication_Operator_in_terms_of_a}
    \int h\mathrm{d}\rho_{Y_n}=\sum_{n,m=0}^\infty \braket{\varphi_m,h \varphi_n}a^*(\varphi_m)a(\varphi_n).
\end{align}
Therefore, we see that the definition of the particle number operator in Eq.~(\ref{Eq:Definition_Particle_Number}) and the definition given previously in Eq.~(\ref{Eq:Definition_Particle_Number_II}) coincide.\\

Before giving a proper introduction of the Fröhlich Hamiltonian in the following Subsection \ref{Subsec:Hamiltonian}, we shall introduce some useful notation. Following \cite{BS1,BS2}, we are going to introduce for some $\kappa\geq 0$ and $-\infty\leq a<b\leq \infty$, and a function $f:U\longrightarrow \mathbb R$ defined on some set $U$, cut-off functions 
\begin{align*}
\chi_\kappa(a\leq f\leq b):
    \begin{cases}
        U\longrightarrow \mathbb R, \\
        t\mapsto \chi_\kappa(a\leq f(t)\leq b),
    \end{cases}
\end{align*}
where $\kappa\geq 0$ is going to determine the sharpness of the cut-off. Usually we will take $U$ as the set $\mathbb R$, $\underset{n\in \mathbb N}{\bigcup}\mathbb R^{3n}$ or the set of all Borel measures on $\mathbb R^3$. In order to introduce $\chi_\kappa(a\leq f\leq b)$, let 
\begin{align*}
  \beta_1,\beta_2:\mathbb R\cup \{-\infty,\infty\}\longrightarrow [0,1]  
\end{align*}
be smooth functions on $\mathbb R$ such that $\beta_1^2+\beta_2^2=1$, $\mathrm{supp}\! \left(\beta_1\right)\subseteq [-\infty,1)$ and $\mathrm{supp}\! \left(\beta_2\right)\subseteq (-1,\infty]$. Then we define $\chi_\kappa(a\leq f\leq b)$ as
\begin{align}
\label{Eq:General_Cut-Off_Functions}
    \chi_\kappa(a\leq f(t)\leq b):=
        \begin{cases}
        \beta_1\!\left(\frac{f(t)-b}{\kappa}\right)\beta_2\!\left(\frac{f(t)-a}{\kappa}\right),\ \mathrm{ for } \ \kappa>0 \\
        \mathds{1}_{[a,b]}(f(t)), \ \mathrm{ for } \ \kappa=0.
    \end{cases}
\end{align}
Clearly,
\begin{align*}
\chi_\kappa(a\leq f(t)\leq b) & =1 \ \ \mbox{in case} \ \ f(t)\in [a+\kappa,b-\kappa],\\
    \chi_\kappa(a\leq f(t)\leq b) & =0 \ \ \mbox{in case} \ \ f(t)\notin [a-\kappa,b+\kappa].
\end{align*}
Furthermore, we define for a self-adjoint operator $T$ the contraction
\begin{align*}
  \chi_\kappa(a\leq T\leq b)  
\end{align*}
by the means of spectral calculus, and we write $\chi(a\leq f\leq b)$ in case $\kappa=0$ and $\chi_\kappa(a\leq f)$, respectively $\chi_\kappa(f\leq b)$, in case $b=\infty$, respectively $a=-\infty$. Note that for $\kappa>0$
\begin{align*}
  t\mapsto \chi_\kappa(a\leq t\leq b)  
\end{align*}
is a smooth function and in case the additional condition $a+\kappa<b-\kappa$ is satisfied we have that
\begin{align*}
    \sqrt{1-\chi_\kappa(a\leq t\leq b)^2}=\sqrt{\chi_\kappa(t\leq a)^2+\chi_\kappa(b\leq t)^2}=\chi_\kappa(t\leq a)+\chi_\kappa(b\leq t)
\end{align*}
is a smooth function as well.

\subsection{The Fröhlich Hamiltonian}
\label{Subsec:Hamiltonian}
With the definitions of the previous Subsection \ref{Subsec:Basic_Notation} at hand, we are in a position to properly introduce the polaron model. The Hilbert space under consideration is given by the tensor product
\begin{align*}
    L^2\! \left(\mathbb R^3\right)\otimes \mathcal{F}\subseteq L^2 \! \left(\mathbb R^3\times \underset{n\in \mathbb N}{\bigcup}\mathbb R^{3n}\right),
\end{align*}
which we can naturally identify as a subspace of $L^2 \! \left(\mathbb R^3\times \underset{n\in \mathbb N}{\bigcup}\mathbb R^{3n}\right)$. Here the first factor $\mathbb R^3$ is the space of the electron coordinate $x\in \mathbb R^3$ and $\underset{n\in \mathbb N}{\bigcup}\mathbb R^{3n}$ contains the coordinates $Y_n=(y_1,\dots ,y_n)$ of the elementary excitations of the polarization field. For a function $\Psi\in L^2 \! \left(\mathbb R^3\times \underset{n\in \mathbb N}{\bigcup}\mathbb R^{3n}\right)$, we will usually write $\Psi(x;Y_n)$ and $\nabla_x$ as well as $\Delta_x$ for the gradient, respectively the Laplace operator, in the direction of $x$. Similarly, we define
\begin{align}
\label{Eq:P_f_Operator}
    (\mathcal P\Psi)(x;Y_n):=\sum_{j=1}^n \frac{1}{i}\nabla_{y_j}\Psi(x;Y_n).
\end{align}
Note that with these notations at hand, the total momentum operator $\mathbb P$ reads
\begin{align*}
    \mathbb P=\frac{1}{i}\nabla_x+\mathcal P.
\end{align*}
With $\Delta_x,\mathcal{N}$, $a^*(f)$ and $a(f)$, we have all objects at hand in order to introduce 
\begin{align*}
     \mathbb H:=-\Delta_x+\mathcal{N}-a^*(v_x)-a(v_x),
\end{align*}
see Eq.~(\ref{Eq:First_Definition_Hamiltonian}). However, since $v_x\notin L^2\! \left(\mathbb R^3\right)$, due to an ultraviolet singularity, the definition in Eq.~(\ref{Eq:First_Definition_Hamiltonian}) is only formal. In order to properly define $\mathbb H$, let us first construct a quadratic form $\mathcal{Q}$, acting on the space of permutation symmetric compactly supported $C^\infty$ functions as
\begin{align*}
    \mathcal{Q}(\Psi):=\left\langle \Psi, \left(-\Delta_x+\mathcal{N}\right)\Psi\right\rangle-2\mathfrak{Re}\sum_{n=1}^\infty \frac{\sqrt{n}}{\alpha}\int_{\mathbb R^3}\int_{\mathbb R^{3n}} v(y_n-x)\overline{\Psi(x;Y_n)}\Psi(x;Y_{n-1})\mathrm{d}Y_n \mathrm{d}x.
\end{align*}
Since $\Psi$ is compactly supported, there exists a $n_*$ such that $\Psi(Y_n)=0$ for $n\geq n_*$ and $(x;Y_n)\mapsto \overline{\Psi(x;Y_n)}\Psi(x;Y_{n-1})$ is a bounded function with compact support for all $n\in \mathbb N$. Together with the fact that $v$ is locally $L^1$, we observe that $ \mathcal{Q}(\Psi)$ is well defined. Furthermore, it follows from \cite{LY}, that the quadratic form $\mathcal{Q}$ is bounded from below and closable, and therefore defines a self-adjoint operator $\mathbb H$. Similarly, we can give meaning to the formal definitions
\begin{align*}
      \mathbb H_{\frac{1}{i}\nabla_x} & :=\left( \frac{1}{i}\nabla_x- \mathcal P\right)^2+\mathcal{N}-a^*(v)-a(v), \\
    \mathbb H_p & :=\left( p - \mathcal P\right)^2+\mathcal{N}-a^*(v)-a(v).
\end{align*}
It is easy to see that $\mathbb H$ is unitary equivalent to $\mathbb H_{\frac{1}{i}\nabla_x}$ by centering the coordinate frame of the polarization field at the position of the electron $x$, i.e. we define the unitary map
\begin{align}
\label{Eq:Essentially_LLP}
    (\mathcal{T}\Psi)(x;Y_n):=\Psi(x;y_1-x,\dots ,y_n-x)
\end{align}
and observe that $\mathbb H_{\frac{1}{i}\nabla_x}=\mathcal{T}^* \mathbb H \mathcal{T}$. In this manuscript, we will use both representations $\mathbb H$ and $\mathbb H_{\frac{1}{i}\nabla_x}$, where we prefer $\mathbb H$ in case we want to establish a semi-classical correspondence and use $\mathbb H_{\frac{1}{i}\nabla_x}$ in case we want to work with a proper ground state $\Psi_\alpha$ of the fiber Hamiltonian $\mathbb H_0$.

Finally, let us introduce the modified versions $\mathbb H^K$ and $\mathbb H^K_{\frac{1}{i}\nabla_x}$, which include an ultraviolet regularization. For this purpose let us use for $K>0$ the projection $\chi(|\nabla|\leq K)$, see Eq.~(\ref{Eq:General_Cut-Off_Functions}), in order to define the regularized interaction
\begin{align*}
    v^K:=\chi(|\nabla|\leq K)v.
\end{align*}
With $v^K$ at hand we define the operators
\begin{align}
    \label{Eq:Cut-off_Hamiltonian}
    \mathbb H^K : & = -\Delta_x+\mathcal{N}-a^*(v^K_x)-a(v^K_x),\\
    \label{Eq:Cut-off_Hamiltonian_LLP}
     \mathbb H^K_{\frac{1}{i}\nabla_x}: & =\left( \frac{1}{i}\nabla_x- \mathcal P\right)^2+\mathcal{N}-a^*(v^K)-a(v^K),
\end{align}
which satisfy  $\mathbb H^K_{\frac{1}{i}\nabla_x}=\mathcal{T}^* \mathbb H^K \mathcal{T}$.

\subsection{Semi-Classical Objects}
\label{Subsec:Semi-Classical_Objects}
In this Subsection we want to introduce the most important objects appearing in the semi-classical correspondence principle. For this purpose let us introduce the vacuum state $\Xi\in \mathcal{F}$ as $\Xi(Y_n):=0$ for $n\geq 1$ and $\Xi(Y_0)=1$ for $Y_0\in \mathbb R^0=\{0\}$, and let us define for $f\in L^2\! \left(\mathbb R^3\right)$ the unitary Weyl transformation $W_f$ acting on $\mathcal{F}$ as
\begin{align}
\label{Eq:Def_Weyl_Operator}
    W_f:=e^{\alpha^2 a(f)-\alpha^2 a^*(f)}.
\end{align}
Note that, up to a phase, the Weyl transformations $W_f$ are characterized by the law
\begin{align*}
    W_g^* a(f) W_g=a(f)-\braket{f,g}.
\end{align*}
With $\Xi$ and $W_f$ at hand, we can introduce for $\psi,\varphi\in L^2\! \left(\mathbb R^3\right)$, where $\|\psi\|=1$, a family of semi-classical trial states
\begin{align}
\label{Eq:Product_with_coherent_state}
    \psi\otimes \Xi_\varphi:=\psi\otimes \left(W_\varphi^* \Xi\right)\in L^2\! \left(\mathbb R^3\right)\otimes \mathcal{F}.
\end{align}
Notably, the coherent states $\Xi_\varphi$ are joint eigenfunctions of the annihilation operators $a(f)$, i.e. 
\begin{align}
\label{Eq:Coherent_State_Eigenequation}
    a(f)\Xi_\varphi=\braket{f,\varphi}\Xi_\varphi,
\end{align}
a property which characterizes them up to a constant. Together with Eq.~(\ref{Eq:Multiplication_Operator_in_terms_of_a}), we observe the following correspondence between the measure $\mathrm{d}\rho(y):=|\varphi(y)|^2\mathrm{d}y$, a semi-classical object, and the empirical measure 
\begin{align}
\label{Eq:Demonstrating_rho_correspondence}
    \left\langle \Xi_\varphi, \int h\mathrm{d}\rho_{Y_n} \Xi_\varphi \right\rangle_{\mathcal{F}}=\int h\mathrm{d}\rho_\varphi.
\end{align}
Furthermore, by Eq.~(\ref{Eq:Coherent_State_Eigenequation}) and Eq.~(\ref{Eq:Demonstrating_rho_correspondence}), we obtain that the energy of $\psi\otimes \Xi_\varphi$ is given by
\begin{align*}
    \mathcal{E}^\mathrm{Pek}(\psi,\varphi):=\left\langle   \psi\otimes \Xi_\varphi ,\mathbb H  \psi\otimes \Xi_\varphi\right\rangle_{L^2\! \left(\mathbb R^3\right)\otimes \mathcal{F}}=\|\varphi\|^2+\left\langle \psi, \left(-\Delta_x-2 v*\mathfrak{Re}[\varphi]\right)\psi\right\rangle.
\end{align*}
Both $\mathbb H$ as well as $\Xi_\varphi=W_\varphi \Xi$ depend on the parameter $\alpha>0$, since the definition of $a^*(f)$ and $a(f)$, respectively $L(f)$ in Eq.~(\ref{Eq:Definition_L_Operator}), is $\alpha$ dependent, however the energy
\begin{align*}
 \mathcal{E}^\mathrm{Pek}(\psi,\varphi)=\|\varphi\|^2+\left\langle \psi, \left(-\Delta_x-2 v*\mathfrak{Re}[\varphi]\right)\psi\right\rangle,   
\end{align*}
is $\alpha$ independent. Minimizing over all possible electron functions $\psi$ yields the semi-classical Pekar functional
\begin{align*}
    \mathcal{F}^\mathrm{Pek}(\varphi):=\inf_{\psi\in L^2\! \left(\mathbb R^3\right):\|\psi\|=1} \mathcal{E}^\mathrm{Pek}(\psi,\varphi)=\|\varphi\|^2+\inf \sigma \! \left(-\Delta_x-2 v*\mathfrak{Re}[\varphi]\right).
\end{align*}
According to \cite{Li}, the functional $\mathcal{E}^\mathrm{Pek}$ has unique minimizers $(\psi^\mathrm{Pek},\varphi^\mathrm{Pek})$ such that $\psi^\mathrm{Pek}$ is non-negative and $\varphi^\mathrm{Pek}$ is radial, and all other minimizers are of the form $(e^{i\theta}\psi^\mathrm{Pek}_z,\varphi^\mathrm{Pek}_z)$ for $\theta\in [0,2\pi)$ and $z\in \mathbb R^3$. Consequently, $\varphi^\mathrm{Pek}$ is a minimizer of $\mathcal{F}$ as well, and we define the minimal Pekar energy $e^\mathrm{Pek}$ as
\begin{align}
\label{Eq:Variational_Definition}
    e^\mathrm{Pek}:=\mathcal{F}^\mathrm{Pek}(\varphi^\mathrm{Pek})=\inf_{\varphi \in L^2\! \left(\mathbb R^3\right)}\mathcal{F}^\mathrm{Pek}(\varphi)=\inf_{\varphi,\psi \in L^2\! \left(\mathbb R^3\right):\|\psi\|=1}\mathcal{E}^\mathrm{Pek}(\psi,\varphi).
\end{align}
Let us furthermore denote with $P^\mathrm{Pek}:L^2\! \left(\mathbb R^3\right)\longrightarrow L^2\! \left(\mathbb R^3\right)$ the orthogonal projection onto $\psi^\mathrm{Pek}$ and with $Q^\mathrm{Pek}:L^2\! \left(\mathbb R^3\right)\longrightarrow L^2\! \left(\mathbb R^3\right)$ the orthogonal projection onto $\{\psi^\mathrm{Pek}\}^\perp$, i.e. we define
\begin{align*}
    (P^\mathrm{Pek}\psi)(x): & =\int_{\mathbb R^3} \overline{\psi^\mathrm{Pek}(x')}\psi(x')\mathrm{d}x' \psi^\mathrm{Pek}(x), \\
    Q^\mathrm{Pek}: & =1-P^\mathrm{Pek},
\end{align*}
and let us define the Pekar measure $\rho^\mathrm{Pek}$ on $\mathbb R^3$ as $\mathrm{d}\rho^\mathrm{Pek}(y):=\left|\varphi^\mathrm{Pek}(y)\right|^2\mathrm{d}y$. 

The following Lemma \ref{Lem:Semiclassical_objects_properties} provides useful identities and estimates for the minimizers $(\psi^\mathrm{Pek},\varphi^\mathrm{Pek})$ of $\mathcal{E}^\mathrm{Pek}$ and the truncated interaction terms
\begin{align}
\label{Eq:Def_v_Lambda}
      v^\Lambda: & = \chi(|\nabla|\leq \Lambda)v, \\
\label{Eq:Def_w_Lambda}
    w^\Lambda : & = \frac{i\nabla}{\Delta}\chi(|\nabla|> \Lambda)v.
\end{align}
The proof of Lemma \ref{Lem:Semiclassical_objects_properties} will be the content of Appendix \ref{Appendix:Properties_of_the_Semi-Classical_minimizers}.
\begin{lem}
    \label{Lem:Semiclassical_objects_properties}
   We have the identities $|\psi^\mathrm{Pek}|^2*v=\varphi^\mathrm{Pek}$ and
   \begin{align}
   \label{Eq:Decomposition_in_Lemma_of_interaction}
        v=v^\Lambda+\frac{1}{i}\nabla\cdot w^\Lambda.
   \end{align}
   Furthermore, there exists a constant $C>0$ such that for all $\Lambda>0$ and $R>0$
   \begin{align}
   \label{Eq:In_Lemma_v_w_Lambda}
      & \left\|v^\Lambda\right\|\leq C\Lambda^{\frac{1}{2}}, \ \ \ \ \left\|w^\Lambda\right\|\leq C\Lambda^{-\frac{1}{2}}, \\
      \label{Eq:High_Momentum_varphi}
     & \ \   \left\|\chi\!\left(|\nabla|>\Lambda\right)\varphi^\mathrm{Pek}\right\|\leq C\Lambda^{-\frac{1}{2}},\\
     \label{Eq:High_position_varphi}
      &    \ \  \left\|\chi\! \left(|y|>R\right)\varphi^\mathrm{Pek}\right\|\leq C R^{-\frac{1}{2}},
   \end{align}
  and $\frac{\nabla \varphi^\mathrm{Pek}}{\varphi^\mathrm{Pek}}\in C_b^2(\mathbb R^3,\mathbb R^3)$.
\end{lem}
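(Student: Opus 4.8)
I would establish the six assertions of Lemma~\ref{Lem:Semiclassical_objects_properties} in the order stated, taking as input only the existence, uniqueness and basic regularity of the Pekar minimizers recorded in Section~\ref{Subsec:Semi-Classical_Objects} (from \cite{Li}) together with the standard spectral-theoretic decay of $\psi^\mathrm{Pek}$. First, the identity $|\psi^\mathrm{Pek}|^2*v=\varphi^\mathrm{Pek}$ is the Euler--Lagrange equation in the $\varphi$-variable: fixing $\psi=\psi^\mathrm{Pek}$ and writing $\varphi=\varphi_1+i\varphi_2$ with $\varphi_1,\varphi_2$ real, the evenness of $v$ gives $\langle |\psi^\mathrm{Pek}|^2,v*\mathfrak{Re}[\varphi]\rangle=\langle v*|\psi^\mathrm{Pek}|^2,\varphi_1\rangle$, hence $\mathcal{E}^\mathrm{Pek}(\psi^\mathrm{Pek},\varphi)=\|\varphi_1-v*|\psi^\mathrm{Pek}|^2\|^2+\|\varphi_2\|^2+(\text{$\varphi$-independent})$ after completing the square, so that the (unique) minimizing $\varphi$ must equal $v*|\psi^\mathrm{Pek}|^2$; the right-hand side lies in $L^2$ because its Fourier transform is $c|k|^{-1}\widehat{|\psi^\mathrm{Pek}|^2}(k)$ with $\widehat{|\psi^\mathrm{Pek}|^2}$ bounded and rapidly decaying. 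In particular $\varphi^\mathrm{Pek}=v*|\psi^\mathrm{Pek}|^2>0$. For the decomposition \eqref{Eq:Decomposition_in_Lemma_of_interaction} I pass to Fourier space, where $\widehat{v^\Lambda}(k)=\mathds{1}(|k|\le\Lambda)\hat v(k)$ and $\widehat{w^\Lambda}(k)=\tfrac{k}{|k|^2}\mathds{1}(|k|>\Lambda)\hat v(k)$, so that $\widehat{\tfrac1i\nabla\cdot w^\Lambda}(k)=k\cdot\widehat{w^\Lambda}(k)=\mathds{1}(|k|>\Lambda)\hat v(k)$ and the two pieces sum to $\hat v$.

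Next, for the $L^2$-bounds \eqref{Eq:In_Lemma_v_w_Lambda} I use the explicit three-dimensional Riesz-transform formula $\hat v(k)=c\,|k|^{-1}$ for some $c>0$ (recall $v(y)=\pi^{-3/2}|y|^{-2}$), which gives $\|v^\Lambda\|^2=c^2\int_{|k|\le\Lambda}|k|^{-2}\,dk=4\pi c^2\Lambda$ and $\|w^\Lambda\|^2=c^2\int_{|k|>\Lambda}|k|^{-4}\,dk=4\pi c^2\Lambda^{-1}$, i.e.\ $\|v^\Lambda\|\le C\Lambda^{1/2}$ and $\|w^\Lambda\|\le C\Lambda^{-1/2}$.

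For the momentum-space decay \eqref{Eq:High_Momentum_varphi} I would combine the first two assertions: from $\varphi^\mathrm{Pek}=v*|\psi^\mathrm{Pek}|^2$ and $\chi(|\nabla|>\Lambda)v=\tfrac1i\nabla\cdot w^\Lambda$ one gets $\chi(|\nabla|>\Lambda)\varphi^\mathrm{Pek}=\tfrac1i\,w^\Lambda*\nabla|\psi^\mathrm{Pek}|^2$, and Young's inequality ($L^2*L^1\hookrightarrow L^2$) together with the just-proved bound on $\|w^\Lambda\|$ yields $\|\chi(|\nabla|>\Lambda)\varphi^\mathrm{Pek}\|\le C\|w^\Lambda\|\,\|\nabla|\psi^\mathrm{Pek}|^2\|_{L^1}\le C'\Lambda^{-1/2}$, where $\nabla|\psi^\mathrm{Pek}|^2=2\,\psi^\mathrm{Pek}\nabla\psi^\mathrm{Pek}\in L^1$ by Cauchy--Schwarz and the $H^1$-regularity of $\psi^\mathrm{Pek}$.

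Finally, for the position-space decay \eqref{Eq:High_position_varphi}, the boundedness of all derivatives of $\varphi^\mathrm{Pek}$, and the claim $\nabla\varphi^\mathrm{Pek}/\varphi^\mathrm{Pek}\in C_b^2$, I would use that $\psi^\mathrm{Pek}$ is the ground state of $-\Delta-2v*\varphi^\mathrm{Pek}$ with a potential vanishing at infinity and strictly negative eigenvalue, so that $\psi^\mathrm{Pek}$ and all its derivatives decay exponentially (Agmon/Combes--Thomas estimates plus elliptic bootstrap). Splitting $\varphi^\mathrm{Pek}(y)=\int v(y-y')|\psi^\mathrm{Pek}(y')|^2\,dy'$ into $|y'|\le|y|/2$ (where $v(y-y')\le 4\pi^{-3/2}|y|^{-2}$ and the mass is $\le1$) and $|y'|>|y|/2$ (exponentially small) gives $\varphi^\mathrm{Pek}(y)\le C(1+|y|)^{-2}$, and integration gives $\|\chi(|y|>R)\varphi^\mathrm{Pek}\|^2\le C^2\int_{|y|>R}|y|^{-4}\,dy=C'R^{-1}$. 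For $\nabla^k\varphi^\mathrm{Pek}=v*\nabla^k|\psi^\mathrm{Pek}|^2$ with $k\ge1$, the cancellation $\int\nabla^k|\psi^\mathrm{Pek}|^2=0$ together with $|v(y-y')-v(y)|\le C|y|^{-3}|y'|$ on $|y'|\le|y|/2$ upgrades this to $|\nabla^k\varphi^\mathrm{Pek}(y)|\le C(1+|y|)^{-3}$ for $k=1,2,3$, while $\widehat{\nabla^k\varphi^\mathrm{Pek}}(k)=c\,|k|^{k-1}\widehat{|\psi^\mathrm{Pek}|^2}(k)$ confirms $\varphi^\mathrm{Pek}\in C^\infty$; since $\varphi^\mathrm{Pek}>0$ is continuous it is bounded below on each ball, so globally $\varphi^\mathrm{Pek}(y)\ge c(1+|y|)^{-2}$. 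Expanding $\nabla(\nabla\varphi^\mathrm{Pek}/\varphi^\mathrm{Pek})$ and $\nabla^2(\nabla\varphi^\mathrm{Pek}/\varphi^\mathrm{Pek})$ into terms of the schematic form $(\nabla^{a_1}\varphi^\mathrm{Pek})\cdots(\nabla^{a_j}\varphi^\mathrm{Pek})/(\varphi^\mathrm{Pek})^{m}$, each factor's decay/lower bound makes every such term bounded, giving $\nabla\varphi^\mathrm{Pek}/\varphi^\mathrm{Pek}\in C_b^2$. The hard part will be exactly this last step: it needs $\nabla^k\varphi^\mathrm{Pek}$ ($k=1,2,3$) to decay strictly faster than $\varphi^\mathrm{Pek}$ itself, which rests on the cancellation $\int\nabla^k|\psi^\mathrm{Pek}|^2=0$ and hence on quantitative exponential decay of the derivatives of $\psi^\mathrm{Pek}$; everything else is a routine Fourier or convolution estimate.
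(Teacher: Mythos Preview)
Your proof is correct and follows essentially the same path as the paper for the first five assertions (completing the square for $|\psi^\mathrm{Pek}|^2*v=\varphi^\mathrm{Pek}$, Fourier-side decomposition for \eqref{Eq:Decomposition_in_Lemma_of_interaction}, explicit Fourier integrals for \eqref{Eq:In_Lemma_v_w_Lambda}, Young's inequality with $\nabla|\psi^\mathrm{Pek}|^2\in L^1$ for \eqref{Eq:High_Momentum_varphi}, and the convolution split for \eqref{Eq:High_position_varphi}).

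For the $C_b^2$ claim there are two minor differences worth noting. First, you cite Agmon/Combes--Thomas plus elliptic bootstrap for the exponential decay of $\partial^\gamma\psi^\mathrm{Pek}$; the paper instead works this out by hand from the eigenvalue equation, using radiality to reduce to a one-dimensional interpolation between $\psi^\mathrm{Pek}_*$ and $r^{-2}\partial_r(r^2\partial_r\psi^\mathrm{Pek}_*)$. Both are fine; your route is quicker, theirs is self-contained. Second, you use the cancellation $\int\nabla^k|\psi^\mathrm{Pek}|^2=0$ to upgrade the decay of $\nabla^k\varphi^\mathrm{Pek}$ to $(1+|y|)^{-3}$, whereas the paper is content with $(1+|y|)^{-2}$ for \emph{all} derivatives (no cancellation needed), which already suffices because the matching lower bound $\varphi^\mathrm{Pek}(y)\ge c(1+|y|)^{-2}$ makes every ratio $\nabla^k\varphi^\mathrm{Pek}/(\varphi^\mathrm{Pek})^m$ bounded. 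So your ``hard part'' is in fact avoidable.

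One genuine (though small) gap: your justification for the lower bound $\varphi^\mathrm{Pek}(y)\ge c(1+|y|)^{-2}$ is incomplete. Continuity and positivity give a lower bound on each fixed ball, but this says nothing about the rate at infinity. You need the convolution formula directly: for any $y$,
\[
\varphi^\mathrm{Pek}(y)=\pi^{-3/2}\!\int\frac{|\psi^\mathrm{Pek}(y')|^2}{|y-y'|^2}\,dy'\ \ge\ \pi^{-3/2}\!\int_{|y'|\le 1}\frac{|\psi^\mathrm{Pek}(y')|^2}{(|y|+1)^2}\,dy'\ =\ \frac{c}{(1+|y|)^2},
\]
which is exactly the argument the paper uses.
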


\subsection{Conventions and Definitions}
\label{Subsec:Conventions_and_Definitions}
In this Subsection we want to state some conventions and definitions that will be used repeatedly throughout this article. \\

\textbf{Convention.} Most importantly, $\alpha$ will always refer to a parameter $\alpha\geq 1$, which appears for example in the definition of the operator $L(f)$ in Eq.~(\ref{Eq:Definition_L_Operator}), and therefore in the definition of the creation and annihilation operators $a(f)$ and $a^*(f)$, and in the definition of the Fröhlich Hamiltonian $\mathbb H$. Furthermore, the empirical measures $\rho_{Y_n}$ defined in Eq.~(\ref{Eq:Empirical_Measure}) depends on $\alpha$, and therefore all functions of $\rho_{Y_n}$ do so as well. For the sake of readability, we will not always indicate the dependence on $\alpha$. Furthermore, if not indicated otherwise, all estimates are supposed to hold uniformly in $\alpha$ as $\alpha\rightarrow \infty$, e.g. we write 
\begin{align*}
  X\lesssim Y,  
\end{align*}
for $\alpha$-dependent objects $X$ and $Y$, in case there exists an $\alpha_0\geq 1$ and a $C>0$, such that
\begin{align*}
    X\leq CY,
\end{align*}
 for all $\alpha\geq \alpha_0$. Similarly, $p$ will always refer to an elements $p\in \mathbb R^3$ and, if not indicated otherwise, estimates are supposed to hold uniform in $p$. 
 
 The second convention we will make use of, is that multiplication operators by functions $Y_n\mapsto F(Y_n)$ will be denoted simply by $F(Y_n)$, and in case $F$ is in a natural way a function $f$ of the empirical measure we write $f(\rho_{Y_n})$. Occasionally we will further write $F(y_n)$ for multiplication operators depending only on the component $y_n$ in $Y_n=(y_1,\dots ,y_n)$. 
 
 Finally, we call an element $\Psi\in \mathcal{H}$ of a Hilbert space $\mathcal{H}$ a state, in case $\|\Psi\|=1$. \\

\textbf{Definition.} Besides the notation $\chi(a\leq f\leq b)$ introduced in Eq.~(\ref{Eq:General_Cut-Off_Functions}), we will denote with $g$ a (fixed) rotationally symmetric mollifier on $\mathbb R^3$, i.e. $g:\mathbb R^3\longrightarrow [0,1]$ will denote a smooth function with compact support, $\int_{\mathbb R^3}g(x)\mathrm{d}x=1$ and $g(x)=g(y)$ in case $|x|=|y|$. Furthermore, it will be useful to have mollifiers at hand that are convolutions them self, and therefore we introduce the function
\begin{align*}
    \chi:=g*g,
\end{align*}
which is clearly a rotationally symmetric mollifier again. It will also be convenient to have for $T>0$ the re-scaled (rotationally symmetric) molllifiers $g_T$ at hand
\begin{align*}
    g_T(y):=T^3 g(Ty).
\end{align*}
We will furthermore use a third family of smooth cut-off functions $\tau_{\eta}$ defined as
\begin{align*}
    \tau_{\eta}(y):=\eta^{-\frac{3}{2}}\tau\! \left(\eta^{-1}y\right),
\end{align*}
where $\tau:\mathbb R^3\longrightarrow [0,1]$ is smooth with support in $\left\{y:|y|\leq \frac{1}{3}\right\}$ and satisfies $\int\tau(y)^2\mathrm{d}y=1$.

In addition let us fix some constants, used throughout this paper. In the following let $\delta_*$ be such that $0<\delta_*<\frac{\|\varphi^\mathrm{Pek}\|^4}{2}$, and pick $\kappa,\sigma$ and $q$, such that $0<\kappa<\sigma$ and
\begin{align}
\label{Eq:rho_not_zero}
   & \ \ \ \    \ \     \sigma+\kappa <\|\varphi^\mathrm{Pek}\|^2, \\
    \nonumber
  &  \delta_*+\sigma+\kappa<\frac{(\|\varphi^\mathrm{Pek}\|^2-\sigma-\kappa)^2}{2}, \\
    \nonumber
 &  \  0<q<\frac{1}{2}-\frac{ \delta_*+\sigma+\kappa}{(\|\varphi^\mathrm{Pek}\|^2-\sigma-\kappa)^2}.
\end{align}
Moreover, since $\varphi^\mathrm{Pek}\in L^2\! \left(\mathbb R^3\right)$, $\varphi^\mathrm{Pek}>0$ and $\delta_*<\|\varphi^\mathrm{Pek}\|^4$, we find a $R_*>0$ such that
\begin{align*}
    \underset{|x-y|>R_*}{\int \int}|\varphi^\mathrm{Pek}(x)|^2|\varphi^\mathrm{Pek}(y)|^2\mathrm{d}x\mathrm{d}y= \delta_*.
\end{align*}
With $\delta_*$ and $R_*$ at hand, let us introduce for $\lambda>0$
\begin{align}
\label{Eq:Omega_def}
  \Omega_\lambda:=\bigcup_{n\in \mathbb N}\Omega_\lambda^{(n)}  
\end{align}
as the set of all $Y_n\in \underset{n\in \mathbb N}{\bigcup}\mathbb R^{3n}$, such that
\begin{align*}
 & \ \  \|\varphi^\mathrm{Pek}\|^2-\lambda\leq \int \mathrm{d}\rho_{Y_n}\leq  \|\varphi^\mathrm{Pek}\|^2+\lambda,\\
  &    \delta_*-\lambda\leq \underset{|x-y|>R_*}{\int \int}\mathrm{d}\rho_{Y_n}(x)\mathrm{d}\rho_{Y_n}(y)\leq \delta_*+\lambda.
\end{align*}
It will be the main result of Section \ref{Sec:Asymptotic_concentration_of_Probability} that the ground state concentrates on sets of the form $\Omega_\lambda$ for $\lambda>0$, as $\alpha\rightarrow \infty$. Furthermore, we demonstrate in Section \ref{Sec:Analysis_of_the_generalized_Median} that $\sigma$ and $\kappa$ are chosen in such a way, that the regularized median $m_q$ introduced in Eq.~(\ref{Eq:Introduction_regularized_median}) satisfies uniform continuity and differentiability properties on $\Omega_{\sigma+\kappa}$, see also the assumptions of \cite[Lemma 3.10]{BS1}.

\section{Construction of a Trial State}
\label{Sec:Construction_of_a_Trial_State}
It is the content of this Section, to construct a trial state $\Psi_{\alpha,p}\in \mathcal{F}$ satisfying
\begin{align*}
   \left\langle \Psi_{\alpha,p},\mathbb H_p \Psi_{\alpha,p}\right\rangle_\mathcal{F}\leq E_\alpha+\frac{|p|^2}{2\alpha^4 m_{\mathrm{LP}}}+C\alpha^{-(4+\epsilon)}|p|^2,
\end{align*}
as is shown in the subsequent Section \ref{Sec:Proof_of_Theorem}. Following the strategy proposed in Section \ref{Sec:Introduction_and_Main_Result}, right after Theorem \ref{Th:Main}, we will define $\Psi_{\alpha,p}$ as the ground state $\Psi_\alpha$ of the fiber Hamiltonian $\mathbb H_0$ boosted by the unitary group $e^{ip\cdot \mathbb B}$. Applying the (pseudo) boost $e^{ip\cdot \mathbb B}$ is a necessary procedure, as we want to have a cancellation of the $p$-term appearing in the fiber Hamiltonian
\begin{align*}
   \mathbb H_p=\left( p - \mathcal P\right)^2+\mathcal{N}-a^*(v)-a(v).
\end{align*}
Choosing the right (pseudo) boost, i.e. choosing the right self-adjoint operators $\mathbb B=(\mathbb B_1,\mathbb B_2,\mathbb B_3)$ satisfying $e^{-ip\cdot \mathbb B}\mathcal Pe^{ip\cdot \mathbb B}\approx \mathcal P+p$, is a delicate task, which we are going to elaborate on in the following Subsection \ref{Subsec:The_correct_Way_to_Boost}. The state $\Psi_{\alpha,p}$ is then introduced in Eq.~(\ref{Eq:Definition_proper_boosted_State}).

\subsection{The correct Way to Boost}
\label{Subsec:The_correct_Way_to_Boost}
On a semi-classical level, the right boost is generated by the function $B$, in the sense of Eq.~(\ref{Eq:Introduction_Hamiltonian_Diff._Eq.}), defined on the phase space $L^2\! \left(\mathbb R^3\right)$ as
\begin{align}
\label{Eq:Introduction_H_B_with_modifier_COPY}
     B(\varphi):  =m_q(\chi* \rho_\varphi)+\int_{\mathbb R^3} f\Big(y-m_q(\chi* \rho_\varphi)\Big)\mathrm{d}\rho_\varphi(y) 
\end{align}
for a suitable function $f:\mathbb R^3\longrightarrow \mathbb R^3$, see Eq.~(\ref{Eq:Introduction_H_B_with_modifier}) and the discussion below, which we assume to be in $C^2_b(\mathbb R^3,\mathbb R^3)$. It will be the goal of this Subsection to find a quantum counterpart ${\mathbb  B}=({\mathbb  B}_1,{\mathbb  B}_2,{\mathbb  B}_3)$ to $B$ in the form of self-adjoint commuting operators ${\mathbb  B}_j$ defined on $\mathcal{F}$. 

In advance we are giving a proper definition of the regularized median $m_q$ appearing in the definition of $B$ in Eq.~(\ref{Eq:Introduction_H_B_with_modifier_COPY}), see also \cite{BS0,BS1,BS2} for previous appearances of the statistical quantity $m_q$. Following \cite{BS1}, we define for $0\leq \lambda \leq 1$ the $\lambda$-quantile of a (finite, Borel) measure $\nu$ on $\mathbb R$ as
\begin{align}
\label{Eq:Def_Quantile}
    x^\lambda(\nu):=\sup \! \left\{t:\int_{-\infty}^t\mathrm{d}\nu\leq \lambda \int \mathrm{d}\nu\right\}.
\end{align}
Furthermore, let $\rho_j$ denote for $j\in \{1,2,3\}$ the $j$-th marginal distribution of a (finite, Borel) measure $\rho$ on $\mathbb R^3$, defined by
\begin{align}
\label{Eq:Def_Marginal_Measure}
    \int_{\mathbb R} h\mathrm{d}\rho_j:=\int_{\mathbb R^3} h(y_j)\mathrm{d}\rho(y).
\end{align}
With these notations at hand, and $K_q(\nu):=\left[x^{\frac{1}{2}-q}(\nu),x^{\frac{1}{2}+q}(\nu)\right]\subseteq \mathbb R$ for $q>0$, we define the regularized median $m_q(\rho)=(m_q(\rho)_1,m_q(\rho)_2,m_q(\rho)_3)$ in coordinates as
\begin{align*}
   m_q(\rho)_j:=\left(\int \mathds{1}_{K_q(\rho_j)}\mathrm{d}\rho_j\right)^{-1} \! \int \mathds{1}_{K_q(\rho_j)}(t)\, t\mathrm{d}\rho_j(t),
\end{align*}
with the convention that $m_q(0):=0$. Finally, recall for $\varphi\in L^2\! \left(\mathbb R^3\right)$ the definition of the measure $\rho_\varphi$ from Section \ref{Sec:Introduction_and_Main_Result}
\begin{align}
\label{Eq:Definition_rho_of_varphi}
    \mathrm{d}\rho_\varphi(y):=|\varphi(y)|^2\mathrm{d}y.
\end{align}
In order to find a quantum counterpart to the semi-classical object $B$ from Eq.~(\ref{Eq:Introduction_H_B_with_modifier_COPY}), observe the formal correspondence between the empirical measure $\rho_{Y_n}$ defined in Eq.~(\ref{Eq:Empirical_Measure}) and the measure $\rho_\varphi$ defined in Eq.~(\ref{Eq:Definition_rho_of_varphi}), see for example Eq.~(\ref{Eq:Demonstrating_rho_correspondence}). Therefore, a natural candidate for the boost $\mathbb B$ would be given by the multiplication operator $\mathcal{B}(\rho_{Y_n})$ with
\begin{align}
\label{Eq:Boost_Operator}
  \mathcal{B}(\rho):=m_q(\chi* \rho)+\int_{\mathbb R^3} f\Big(y-m_q(\chi* \rho)\Big)\mathrm{d}\rho(y).
\end{align}
Since we will require a multiplication operator that satisfies certain continuity and differentiability properties, we will use a slightly modified version $G$ instead of the function $\mathcal{B}$ from Eq.~(\ref{Eq:Boost_Operator}), defined as
\begin{align*}
    G(\rho):=F(\rho)\mathcal{B}(\rho),
\end{align*}
where the function 
\begin{align}
    \label{Eq:Def_F_cut_off_functional}
    F(\rho): & =\chi_{\kappa^2}\! \left(\left(\int \mathrm{d}\rho-\|\varphi^\mathrm{Pek}\|^2\right)^2 + \left(\underset{|x-y|>R_*}{\int \int}\mathrm{d}\rho\mathrm{d}\rho-\delta_*\right)^2 \leq \sigma^2\right)
\end{align}
makes sure that $G$ is only supported on regular enough measures $\rho$. The operator $\mathbb B$, which we will use as the generator of a unitary group, is then defined as
\begin{align}
\label{Eq:Definition_Almost_Boost_Operator}
    \mathbb B:=G(\rho_{Y_n})=F(\rho_{Y_n})\mathcal{B}(\rho_{Y_n}).
\end{align}
 The constants $\kappa, R_*,\delta_*$ and $\sigma$ appearing in the definition of $F$ as well as the constant $q$ and the mollifier $\chi$ appearing in the definition of $\mathcal{B}$ are chosen according to Subsection \ref{Subsec:Conventions_and_Definitions}.

Before we define the trial state $\Psi_{\alpha,p}$, let us make some observations about the spectrum of the fiber Hamiltonian $\mathbb H_0$ and analyse the operators $\mathbb B=(\mathbb B_1,\mathbb B_2,\mathbb B_3)$. First of all, it has been shown in \cite{DS}, respectively \cite{Miy}, that $E_\alpha$ is an isolated and non-degenerate point in the spectrum of $\mathbb H_0$. As an immediate consequence we obtain the following Lemma \ref{Lem:Unique_GS}, which establishes the existence of a unique reflection symmetric ground state $\Psi_\alpha$ of $\mathbb H_0$.
\begin{lem}
\label{Lem:Unique_GS}
    Let $\mathcal{S}:\mathcal{F}\longrightarrow \mathcal{F}$ be defined as $(\mathcal{S}\Psi)(Y_n):=\Psi(-Y_n)$ for all $Y_n\in \underset{n\in \mathbb N}{\bigcup} \mathbb{R}^{3 n}$. Then the operator $\mathbb H_0$ has, up to a phase, a unique ground state $\Psi_\alpha\in \mathcal{F}$ satisfying
    \begin{align*}
       \mathcal{S}\Psi_\alpha = \Psi_\alpha.
    \end{align*}
    Furthermore, the phase can be chosen such that $\Psi_\alpha(Y_n)\geq 0$ for all $Y_n$.
\end{lem}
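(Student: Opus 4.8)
\textbf{Proof proposal for Lemma \ref{Lem:Unique_GS}.}
The plan is to combine the known non-degeneracy of $E_\alpha=\inf\sigma(\mathbb H_0)$, established in \cite{DS,Miy}, with the reflection symmetry of $\mathbb H_0$ and a positivity-improving property of the associated semigroup. First I would observe that the reflection operator $\mathcal{S}$, $(\mathcal{S}\Psi)(Y_n):=\Psi(-Y_n)$, is unitary on $\mathcal{F}$ with $\mathcal{S}^2=\mathrm{id}$, and that it commutes with $\mathbb H_0$. Indeed, $\mathcal{N}$ and the $p=0$ fiber of $(p-\mathcal P)^2=\mathcal P^2$ are manifestly invariant under $y_j\mapsto -y_j$, since $\mathcal P^2$ is a sum of (negative) Laplacians, and the interaction $a^*(v)+a(v)$ is invariant because $v(y)=\pi^{-3/2}|y|^{-2}$ is an even function, so conjugating the annihilation/creation operators by $\mathcal S$ only replaces $v$ by $v(-\,\cdot\,)=v$. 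Hence $\mathcal{S}\mathbb H_0\mathcal{S}^{-1}=\mathbb H_0$.

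Next, since $E_\alpha$ is by \cite{DS,Miy} an isolated, simple eigenvalue, let $\Phi$ be any normalized ground state. Then $\mathcal{S}\Phi$ is again a normalized ground state, so by simplicity $\mathcal{S}\Phi=e^{i\theta}\Phi$ for some $\theta$; applying $\mathcal S$ once more and using $\mathcal S^2=\mathrm{id}$ gives $e^{2i\theta}=1$, so $\mathcal{S}\Phi=\pm\Phi$. To rule out the minus sign and simultaneously obtain positivity of the ground state, I would invoke a Perron–Frobenius/positivity-improving argument for the semigroup $e^{-t\mathbb H_0}$ acting on $\mathcal{F}\subseteq L^2\bigl(\bigcup_n\mathbb R^{3n}\bigr)$: writing $\mathbb H_0=\mathcal P^2+\mathcal N-a^*(v)-a(v)$ (with the standard ultraviolet renormalization understood), the free part $\mathcal P^2+\mathcal N$ generates a positivity-preserving semigroup in the position representation on each $n$-particle sector (heat semigroups of Laplacians, together with the diagonal multiplication operator $\mathcal N$), while the interaction $-a^*(v)-a(v)$ has nonpositive off-diagonal kernel connecting the $n$- and $(n\pm1)$-particle sectors because $v\ge 0$; a Duhamel/Trotter expansion then shows $e^{-t\mathbb H_0}$ has a nonnegative kernel, and irreducibility (connectedness of the sectors through the interaction, together with strict positivity of the heat kernels) upgrades this to positivity improving. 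Standard Perron–Frobenius theory then yields that the ground state is unique up to a phase and can be chosen strictly positive, i.e. $\Psi_\alpha(Y_n)\ge 0$; and a positive function obviously satisfies $\mathcal S\Psi_\alpha=\Psi_\alpha$ only if it is already even—so I would instead argue directly that the unique positive ground state $\Psi_\alpha$ must be $\mathcal S$-invariant, since $\mathcal S\Psi_\alpha$ is also a positive normalized ground state, forcing $\mathcal S\Psi_\alpha=\Psi_\alpha$.

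The main obstacle is the ultraviolet singularity of $v$: strictly speaking $\mathbb H_0$ is defined via the closed quadratic form $\mathcal{Q}$ of Subsection \ref{Subsec:Hamiltonian} (following \cite{LY}), not by the formal expression, so the positivity-improving property of $e^{-t\mathbb H_0}$ must be justified at the level of the renormalized operator rather than the naive Trotter product. I would handle this by working first with the ultraviolet-regularized operators $\mathbb H^K_{\frac1i\nabla_x}$ (equivalently their fibers $\mathbb H^K_0$), for which the kernel positivity and irreducibility arguments above apply verbatim since $v^K\ge 0$ is not guaranteed a priori—if $v^K=\chi(|\nabla|\le K)v$ fails to be pointwise nonnegative one instead notes that the relevant sign is that of the off-diagonal matrix elements, which can be controlled, or one appeals directly to the known results of \cite{Miy} which already establish a Perron–Frobenius statement; then the norm-resolvent convergence $\mathbb H^K_0\to\mathbb H_0$ as $K\to\infty$ transfers non-negativity of the ground state to the limit. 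In fact, since the statement "$E_\alpha$ is isolated and non-degenerate'' and the positivity of the ground state are both essentially contained in \cite{DS,Miy}, the only genuinely new content of this Lemma is the identity $\mathcal S\Psi_\alpha=\Psi_\alpha$, which then follows immediately from the uniqueness: $\mathcal S$ preserves $\mathbb H_0$ and maps the nonnegative ground state to a nonnegative ground state, so by simplicity the two must coincide.
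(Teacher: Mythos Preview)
Your overall strategy---cite \cite{DS,Miy} for simplicity and isolation of $E_\alpha$, check that $\mathcal S$ commutes with $\mathbb H_0$, and deduce $\mathcal S\Psi_\alpha=\Psi_\alpha$ from uniqueness once $\Psi_\alpha\ge 0$---is the same as the paper's. The difference lies in how positivity is obtained. The paper does not set up any Perron--Frobenius machinery: it simply observes that $|\Psi_\alpha|$ is again a ground state, because
\[
\big\langle |\Psi_\alpha|,\mathcal P^2|\Psi_\alpha|\big\rangle\le \big\langle \Psi_\alpha,\mathcal P^2\Psi_\alpha\big\rangle,\qquad
\big\langle |\Psi_\alpha|,\mathcal N|\Psi_\alpha|\big\rangle= \big\langle \Psi_\alpha,\mathcal N\Psi_\alpha\big\rangle,
\]
and, since $v\ge 0$, $\mathfrak{Re}\,\overline{\Psi_\alpha(Y_n)}\Psi_\alpha(Y_{n-1})\le |\Psi_\alpha(Y_n)|\,|\Psi_\alpha(Y_{n-1})|$ gives $-2\mathfrak{Re}\langle|\Psi_\alpha|,a^*(v)|\Psi_\alpha|\rangle\le -2\mathfrak{Re}\langle\Psi_\alpha,a^*(v)\Psi_\alpha\rangle$. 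Uniqueness then forces $\Psi_\alpha=e^{i\phi}|\Psi_\alpha|$, so one may take $\Psi_\alpha\ge 0$, and the reflection invariance follows exactly as you say. This avoids regularization, Trotter expansions, and any discussion of $v^K$.

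One technical slip in your sketch worth noting: $\mathcal P^2=\big(\sum_{j=1}^n \tfrac{1}{i}\nabla_{y_j}\big)^2$ is \emph{not} a sum of single-particle Laplacians; it is the Laplacian in the center-of-mass direction only and acts as the identity in the relative coordinates. This does not affect reflection invariance (still $\mathcal P\mapsto -\mathcal P$), but it does mean that $e^{-t\mathcal P^2}$ has a degenerate kernel on each $n$-particle sector, so your claim of ``strict positivity of the heat kernels'' is false as stated and the irreducibility step needs more care. Since you ultimately defer the positivity to \cite{Miy} this is not fatal, but the paper's two-line diamagnetic argument sidesteps the issue entirely.
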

\begin{proof}
    By \cite{DS}, respectively \cite{Miy}, we know that $E_\alpha$ is an isolated and non-degenerate point in the spectrum of $\mathbb H_0$, i.e. there exists, up to a phase, a unique ground state $\Psi_\alpha$. Since the unitary operator $\mathcal{S}$ commutes with $\mathbb H_0$, i.e. $\mathcal{S}^* \mathbb H_0 \mathcal{S}=\mathbb H_0$, we obtain 
    \begin{align*}
        \left\langle \mathcal{S}\Psi_\alpha, \mathbb H_0 \mathcal{S} \Psi_\alpha\right\rangle=\left\langle \Psi_\alpha, \mathbb H_0  \Psi_\alpha\right\rangle=E_\alpha.
    \end{align*}
    Consequently, $\mathcal{S}\Psi_\alpha$ is a ground state as well and by the uniqueness of ground states there exists a $\theta\in [0,2\pi)$ with $\mathcal{S}\Psi_\alpha=e^{i\theta}\Psi_\alpha$. A simple computation exhibits that $|\Psi_\alpha|\in \mathcal{F}$ defined as $|\Psi_\alpha|(Y_n):=\left|\Psi_\alpha(Y_n)\right|$ satisfies
    \begin{align*}
        \left\langle |\Psi_\alpha|,\mathbb H \, |\Psi_\alpha|\right\rangle\leq \left\langle \Psi_\alpha,\mathbb H \Psi_\alpha\right\rangle=E_\alpha,
    \end{align*}
    i.e. $|\Psi_\alpha|$ is a ground state again and by the uniqueness of ground states we can therefore assume w.l.o.g. that $\Psi_\alpha\geq 0$. Hence, $\theta=0$.
\end{proof}

The support of the function $G$ in $\mathbb B=G(\rho_{Y_n})$ is clearly contained in the support of $F$
\begin{align*}
    \mathrm{supp}\! \left(G\right)\subseteq \mathrm{supp}\! \left(F\right),
\end{align*}
and using the set $\Omega_\lambda$ defined in Eq.~(\ref{Eq:Omega_def}) we are going to localize the support of $F$ in the following Lemma \ref{Lem:Precise_Support_F}.
\begin{lem}
\label{Lem:Precise_Support_F}
    Let $F$ be as in Eq.~(\ref{Eq:Def_F_cut_off_functional}) and $\Omega_\lambda$ as in Eq.~(\ref{Eq:Omega_def}). Then,
    \begin{align}
    \label{Eq:Support_Chain}
         \Omega_{\frac{\sigma-\kappa}{\sqrt{2}}}\subseteq \left\{Y_n:F(\rho_{Y_n})=1\right\}\subseteq \mathrm{supp}(F)\subseteq \Omega_{\sigma+\kappa}.
    \end{align}
    Moreover, $\Omega_{\sigma+\kappa}\subseteq \left\{Y_n:\rho_{Y_n}\neq 0\right\}$.
\end{lem}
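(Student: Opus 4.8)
The plan is to reduce everything to elementary statements about the two ``bulk observables'' $M(\rho):=\int\mathrm{d}\rho$ and $D(\rho):=\iint_{|x-y|>R_*}\mathrm{d}\rho(x)\,\mathrm{d}\rho(y)$ together with the combined functional $g(\rho):=(M(\rho)-\|\varphi^\mathrm{Pek}\|^2)^2+(D(\rho)-\delta_*)^2$. In these terms $\Omega_\lambda$ is exactly the set of $Y_n$ with $|M(\rho_{Y_n})-\|\varphi^\mathrm{Pek}\|^2|\le\lambda$ and $|D(\rho_{Y_n})-\delta_*|\le\lambda$, while $F(\rho)=\chi_{\kappa^2}(g(\rho)\le\sigma^2)$. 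First I would record two trivial facts: $\Omega_\lambda$ is increasing in $\lambda$; and if $|a|,|b|\le\lambda$ then $a^2+b^2\le2\lambda^2$, while conversely $a^2+b^2\le t$ forces $|a|,|b|\le\sqrt{t}$. From the defining properties of the cut-off functions stated after Eq.~(\ref{Eq:General_Cut-Off_Functions}) I would also read off that $F(\rho)=1$ whenever $g(\rho)\le\sigma^2-\kappa^2$ and $F(\rho)=0$ whenever $g(\rho)>\sigma^2+\kappa^2$ (here $\mathrm{supp}(F)$ is understood as the set $\{Y_n:F(\rho_{Y_n})\ne0\}$).

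With these in place the chain of inclusions is short. For $\Omega_{(\sigma-\kappa)/\sqrt2}\subseteq\{Y_n:F(\rho_{Y_n})=1\}$: on $\Omega_{(\sigma-\kappa)/\sqrt2}$ each of the two squares in $g(\rho_{Y_n})$ is at most $(\sigma-\kappa)^2/2$, so $g(\rho_{Y_n})\le(\sigma-\kappa)^2\le\sigma^2-\kappa^2$, the last inequality because $0<\kappa<\sigma$; hence $F(\rho_{Y_n})=1$. The middle inclusion $\{F=1\}\subseteq\mathrm{supp}(F)$ is immediate. For $\mathrm{supp}(F)\subseteq\Omega_{\sigma+\kappa}$: if $F(\rho_{Y_n})\ne0$ then $g(\rho_{Y_n})\le\sigma^2+\kappa^2$, so $|M(\rho_{Y_n})-\|\varphi^\mathrm{Pek}\|^2|$ and $|D(\rho_{Y_n})-\delta_*|$ are both $\le\sqrt{\sigma^2+\kappa^2}\le\sigma+\kappa$, i.e.\ $Y_n\in\Omega_{\sigma+\kappa}$.

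For the ``moreover'' part I would use only the first constraint defining $\Omega_{\sigma+\kappa}$: if $Y_n\in\Omega_{\sigma+\kappa}$ then $n/\alpha^2=M(\rho_{Y_n})\ge\|\varphi^\mathrm{Pek}\|^2-(\sigma+\kappa)>0$ by the first line of Eq.~(\ref{Eq:rho_not_zero}), which forces $n\ge1$ and hence $\rho_{Y_n}\ne0$.

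I do not expect a genuine obstacle: the lemma is pure bookkeeping with the sharpness parameters of the cut-off $F$. The only point requiring a little care is that the inner and outer margins match correctly — the inclusion into $\{F=1\}$ uses $(\sigma-\kappa)^2\le\sigma^2-\kappa^2$, equivalently $\kappa<\sigma$; the localization of $\mathrm{supp}(F)$ uses the Cauchy--Schwarz-type slack $\sqrt{\sigma^2+\kappa^2}\le\sigma+\kappa$ when passing from the sum-of-squares bound on $(M-\|\varphi^\mathrm{Pek}\|^2,\,D-\delta_*)$ to the componentwise bounds; and the final non-vanishing statement rests on $\sigma+\kappa<\|\varphi^\mathrm{Pek}\|^2$ from Eq.~(\ref{Eq:rho_not_zero}).
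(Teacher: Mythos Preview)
Your proposal is correct and follows essentially the same argument as the paper's proof: both reduce to the cut-off properties $\chi_{\kappa^2}(g\le\sigma^2)=1$ for $g\le\sigma^2-\kappa^2$ and $=0$ for $g>\sigma^2+\kappa^2$, then compare the sum-of-squares bound with the componentwise bounds defining $\Omega_\lambda$. Your justification $(\sigma-\kappa)^2\le\sigma^2-\kappa^2\iff\kappa\le\sigma$ is in fact the precise hypothesis needed (the paper's passing remark ``$\kappa\le1$'' at that step is not the relevant inequality).
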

\begin{proof}
    The inclusion $ \left\{Y_n:F(\rho_{Y_n})=1\right\}\subseteq \mathrm{supp}(F)=\left\{Y_n:F(\rho_{Y_n})\neq 0\right\}$ is trivial. In order to see that $\mathrm{supp}(F)$ is contained in $\Omega_{\sigma+\kappa}$, note that by the definition of the cut-off function $\chi_{\kappa^2}(\cdot \leq \sigma^2)$ in Eq.~(\ref{Eq:General_Cut-Off_Functions}), $F(\rho_{Y_n})\neq 0$ implies
    \begin{align*}
       \left(\int \mathrm{d}\rho_{Y_n}-\|\varphi^\mathrm{Pek}\|^2\right)^2 + \left(\underset{|x-y|>R_*}{\int \int}\mathrm{d}\rho_{Y_n}\mathrm{d}\rho_{Y_n}-\delta_*\right)^2 \leq \sigma^2+\kappa^2 \leq (\sigma+\kappa)^2,
    \end{align*}
    and therefore $Y_n\in \Omega_{\sigma+\kappa}$, since we have both 
    \begin{align*}
        & \ \ \ \left|\int \mathrm{d}\rho_{Y_n}-\|\varphi^\mathrm{Pek}\|^2\right|\leq \sigma+\kappa,\\
        & \left| \underset{|x-y|>R_*}{\int \int}\mathrm{d}\rho_{Y_n}\mathrm{d}\rho_{Y_n}-\delta_*\right|\leq \sigma+\kappa.
    \end{align*}
    On the other hand, in case $Y_n\in \Omega_{\frac{\sigma-\kappa}{\sqrt{2}}}$, we clearly have
    \begin{align*}
        \left(\int \mathrm{d}\rho_{Y_n}-\|\varphi^\mathrm{Pek}\|^2\right)^2 + \left(\underset{|x-y|>R_*}{\int \int}\mathrm{d}\rho_{Y_n}\mathrm{d}\rho_{Y_n}-\delta_*\right)^2\leq (\sigma-\kappa)^2\leq \sigma^2-\kappa^2,
    \end{align*}
    where we have used $\kappa\leq 1$. Since $\chi_{\kappa^2}(x\leq \sigma^2)=1$ for $x\leq \sigma^2-\kappa^2$, this concludes the proof of Eq.~(\ref{Eq:Support_Chain}). Finally, recall that $\sigma+\kappa<\|\varphi^\mathrm{Pek}\|^2$, and therefore we have for $Y_n\in \Omega_{\sigma+\kappa}$
    \begin{align*}
        \int \mathrm{d}\rho_{Y_n}\geq \|\varphi^\mathrm{Pek}\|^2-\left|\int \mathrm{d}\rho_{Y_n}-\|\varphi^\mathrm{Pek}\|^2\right|\geq \|\varphi^\mathrm{Pek}\|^2-(\sigma+\kappa)>0.
    \end{align*}
\end{proof}
In the following we will consider $\Omega_{\sigma+\kappa}$ as a set of regular configurations, e.g. we have $\int \mathrm{d}\rho_{Y_n}> 0$ for all $Y_n\in \mathrm{supp}(F)$ by Lemma \ref{Lem:Precise_Support_F}. While it is clear that the components of $\mathbb B=(\mathbb B_1,\mathbb B_2,\mathbb B_3)$ commute, the following Lemma \ref{Lem:Pseudo_Boost_B_FB} demonstrates that both the multiplication operator $\mathcal{B}(\rho_{Y_n})$ as well as the operator $\mathbb B$ are pseudo boosts, see Eq.~(\ref{Eq:Introduction_Boost_Relations_For_P}), where we anticipate that the ground state $\Psi_\alpha$ is mostly supported on configurations $Y_n\in \Omega_{\frac{\sigma-\kappa}{\sqrt{2}}}$, see Section \ref{Sec:Asymptotic_concentration_of_Probability}, and hence $\Psi_\alpha$ is mostly supported on configurations satisfying both $\int \mathrm{d}\rho_{Y_n}>0$ and $F(\rho_{Y_n})=1$.
\begin{lem}
\label{Lem:Pseudo_Boost_B_FB}
    Let $\mathcal{B}$ be as in Eq.~(\ref{Eq:Boost_Operator}) and $\mathbb B$ as in Eq.~(\ref{Eq:Definition_Almost_Boost_Operator}). Then,
\begin{align}
\label{Eq:Pseudo_Boost_B}
    e^{-ip\cdot \mathcal{B}(\rho_{Y_n})} \mathcal P  e^{ip\cdot \mathcal{B}(\rho_{Y_n})} & =\mathcal P+\chi\! \left(\int \mathrm{d}\rho_{Y_n}>0\right)\! p,\\
    \label{Eq:Pseudo_Boost_FB}
        e^{-ip\cdot \mathbb B} \mathcal P  e^{ip\cdot \mathbb B} & =\mathcal P+F(\rho_{Y_n})p.
\end{align}
\end{lem}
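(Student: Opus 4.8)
The plan is to prove both identities by the same mechanism: each operator is a multiplication operator that is a \emph{function of the empirical measure} $\rho_{Y_n}$, and conjugating $\mathcal P=\sum_j \frac1i\nabla_{y_j}$ by such a multiplication operator produces a gradient of the phase. Concretely, write $\mathbb B=G(\rho_{Y_n})$; then $e^{ip\cdot\mathbb B}$ acts as multiplication by $e^{ip\cdot G(\rho_{Y_n})}$, and for a smooth function $\Psi$,
\begin{align*}
e^{-ip\cdot G(\rho_{Y_n})}\,\tfrac1i\nabla_{y_k}\!\left(e^{ip\cdot G(\rho_{Y_n})}\Psi\right)=\tfrac1i\nabla_{y_k}\Psi+\left(p\cdot \nabla_{y_k}G(\rho_{Y_n})\right)\Psi,
\end{align*}
so summing over $k$ gives $e^{-ip\cdot\mathbb B}\mathcal P e^{ip\cdot\mathbb B}=\mathcal P+p\cdot\sum_k\nabla_{y_k}G(\rho_{Y_n})$. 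Everything then reduces to the purely classical computation of $\sum_{k=1}^n\nabla_{y_k}G(\rho_{Y_n})$ and $\sum_{k=1}^n\nabla_{y_k}\mathcal B(\rho_{Y_n})$. I would first carry this out for $\mathcal B$, proving Eq.~(\ref{Eq:Pseudo_Boost_B}), and then deduce Eq.~(\ref{Eq:Pseudo_Boost_FB}) using the product structure $G=F\mathcal B$ together with the observation that $F(\rho_{Y_n})$ is a function of the \emph{rotation/reflection-type} invariants $\int\mathrm d\rho$ and $\iint_{|x-y|>R_*}\mathrm d\rho\,\mathrm d\rho$, which are themselves translation invariant, so $\sum_k\nabla_{y_k}F(\rho_{Y_n})=0$.

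The heart of the matter is therefore the identity $\sum_{k=1}^n\nabla_{y_k}\mathcal B(\rho_{Y_n})=\chi(\int\mathrm d\rho_{Y_n}>0)\,\mathrm{Id}_{3\times 3}$, i.e. that $\mathcal B$ is \emph{translation covariant} in the discrete sense $\mathcal B(\rho_{(Y_n)+z})=\mathcal B(\rho_{Y_n})+z$ whenever $n\geq 1$ (and constant $=0$ when $n=0$). This is the microscopic counterpart of Eq.~(\ref{Eq:Introduction_Translation_Covariant}). I would verify it termwise from the definition Eq.~(\ref{Eq:Boost_Operator}): shifting all particles by $z$ shifts $\rho_{Y_n}$ by $z$, hence shifts $\chi*\rho_{Y_n}$ by $z$; the regularized median is translation covariant, $m_q(\chi*\rho_{Y_n}+z)=m_q(\chi*\rho_{Y_n})+z$ — this follows because each quantile satisfies $x^\lambda(\nu(\cdot-z))=x^\lambda(\nu)+z$ and each marginal shifts by the corresponding component of $z$, so the conditional mean over $K_q$ shifts by $z$; and in the second term the argument $y-m_q(\chi*\rho)$ of $f$ is unchanged under a joint shift of $\rho$ and the median, while $\int f(y-m_q(\chi*\rho_{Y_n}))\,\mathrm d\rho_{Y_n}(y)$ is also shift-invariant since both $\mathrm d\rho_{Y_n}$ and the integrand transform compatibly. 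Adding the two terms gives exactly the $+z$. Differentiating the relation $\mathcal B(\rho_{(Y_n)+z})=\mathcal B(\rho_{Y_n})+z$ at $z=0$ yields $\sum_k\nabla_{y_k}\mathcal B=\mathrm{Id}$ on $\{n\geq1\}$, while on $\{n=0\}$, i.e. $\{\int\mathrm d\rho_{Y_n}=0\}$, we have $\mathcal B\equiv0$ by the convention $m_q(0)=0$, so the gradient vanishes; the factor $\chi(\int\mathrm d\rho_{Y_n}>0)$ records precisely this dichotomy.

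I would organize the write-up as: (i) record the multiplication-operator conjugation formula above and reduce each claim to computing $\sum_k\nabla_{y_k}$ of the relevant phase; (ii) prove translation covariance of $\mathcal B$ via translation covariance of $m_q$ (citing, if convenient, the analysis of $m_q$ deferred to Section~\ref{Sec:Analysis_of_the_generalized_Median}, or giving the short direct argument), yielding Eq.~(\ref{Eq:Pseudo_Boost_B}); (iii) for $G=F\mathcal B$, use the product rule $\sum_k\nabla_{y_k}(F\mathcal B)=(\sum_k\nabla_{y_k}F)\mathcal B+F\sum_k\nabla_{y_k}\mathcal B$, note the first summand vanishes by translation invariance of $F$, and note that on $\mathrm{supp}(F)$ we have $\int\mathrm d\rho_{Y_n}>0$ by Lemma~\ref{Lem:Precise_Support_F} so that $F\cdot\chi(\int\mathrm d\rho_{Y_n}>0)=F$, giving $\sum_k\nabla_{y_k}G=F(\rho_{Y_n})$ and hence Eq.~(\ref{Eq:Pseudo_Boost_FB}). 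The only genuine technical subtlety is domain/regularity: $m_q$ need not be smooth in $\rho$ globally, so the differentiation in $z$ should be justified either by working pointwise in $z$ with the exact covariance identity (which does not require smoothness — one differentiates $\mathcal B(\rho_{Y_n}+z)-z$ in $z$ and it is literally constant), or by restricting attention to the set $\Omega_{\sigma+\kappa}$ where $m_q$ is known to be $C^2$; I expect the clean route is to state the conjugation identity at the level of the exact finite translation $e^{iz\cdot\mathcal P}$ and differentiate there, so that smoothness of $m_q$ is never needed. That bookkeeping — making the conjugation/differentiation rigorous on the (possibly) non-smooth locus — is the main obstacle, and it is handled by exploiting that the covariance identity $\mathcal B(\rho_{Y_n}+z)=\mathcal B(\rho_{Y_n})+z$ is an exact affine relation in $z$.
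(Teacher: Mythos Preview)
Your proposal is correct and matches the paper's proof essentially line for line: reduce the conjugation to the total spatial gradient of the phase, then use that $\mathcal B((\rho_{Y_n})_z)=\mathcal B(\rho_{Y_n})+z$ for $n\ge1$ and that $F((\rho_{Y_n})_z)=F(\rho_{Y_n})$ (translation invariance of the two statistics defining $F$), together with Lemma~\ref{Lem:Precise_Support_F} to dispose of the $n=0$ case. The paper in fact chooses precisely what you call the ``clean route'': it writes $G((\rho_{Y_n})_z)=F(\rho_{Y_n})\big(\mathcal B(\rho_{Y_n})+z\big)$ as an exact affine function of $z$ and differentiates that, bypassing any product-rule argument and any smoothness assumption on $m_q$; your intermediate product-rule presentation is unnecessary and would indeed need the regularity caveat you flag.
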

\begin{proof}
    Let us define for $z\in \mathbb R^3$ the shift of a measure $\rho$ as
    \begin{align*}
        \int h\mathrm{d}\rho_z :=\int h(y+z)\mathrm{d}\rho(y),
    \end{align*}
    and compute for $\Psi\in \mathcal{F}$ using the definition of $\mathcal P$, see Eq.~(\ref{Eq:P_f_Operator}),
    \begin{align*}
      & \ \ \ \ \ \ \left( e^{-ip\cdot \mathbb B} \mathcal P  e^{ip\cdot \mathbb B}\Psi-\mathcal P\Psi\right)(Y_n)= \frac{1}{i}\sum_{j=1}^n e^{-ip\cdot G(\rho_{Y_n})}\nabla_{y_j}\left( e^{ip\cdot G(\rho_{Y_n})}\right)\Psi(Y_n)\\
        & = \frac{1}{i}e^{-ip\cdot G(\rho_{Y_n})}\nabla_{z}\Big|_{z=0} \! \left( e^{ip\cdot G \left((\rho_{Y_n})_z\right)}\right)\! \Psi(Y_n) \! = \! \frac{1}{i}e^{-ip\cdot G(\rho_{Y_n})}\nabla_{z}\Big|_{z=0} \! \left( e^{ip\cdot \mathcal{B} \left((\rho_{Y_n})_z\right)F \left((\rho_{Y_n})_z\right)}\right)\! \Psi(Y_n)\\
        & \ \ \ \ \ \ =\frac{1}{i}e^{-ip\cdot G(\rho_{Y_n})}\nabla_{z}\Big|_{z=0}\left( e^{ip\cdot \left(\mathcal{B}(\rho_{Y_n})+z\right)F(\rho_{Y_n})}\right)\Psi(Y_n)=F(\rho_{Y_n})p \, \Psi(Y_n).  
    \end{align*}
where we have used $\mathcal{B}((\rho_{Y_n})_z)=\mathcal{B}(\rho_{Y_n})+z$ in case $\int \mathrm{d}\rho_{Y_n}>0$ and $F(\rho_{Y_n})=0$ in case $\rho_{Y_n}=0$, see Lemma \ref{Lem:Precise_Support_F}. Eq.~(\ref{Eq:Pseudo_Boost_B}) can be verified analogously. 
\end{proof}

Having the (unique, symmetric) ground state $\Psi_\alpha\in \mathcal{F}$ of $\mathbb H_0$ at hand, see Lemma \ref{Lem:Unique_GS}, and the (multiplication) operator $\mathbb B$ defined in  Eq.~(\ref{Eq:Definition_Almost_Boost_Operator}), we introduce the trial state $\Psi_{\alpha,p}\in \mathcal{F}$ as
\begin{align}
\label{Eq:Definition_proper_boosted_State}
    \Psi_{\alpha,p}:=e^{ip\cdot \mathbb B}\Psi_{\alpha}=\left(Y_n\mapsto e^{ip\cdot G(\rho_{Y_n})}\Psi_{\alpha}(Y_n)\right)=\left(Y_n\mapsto e^{i p\cdot F(\rho_{Y_n})\mathcal{B}(\rho_{Y_n})}\Psi_{\alpha}(Y_n)\right),
\end{align}
which depends on the function $f:\mathbb R^3\longrightarrow \mathbb R^3$ in Eq.~(\ref{Eq:Boost_Operator}). Note that we optimize the choice of $f$ in Section \ref{Sec:Proof_of_Theorem}, however the constants $\kappa, R_*, \delta_*,\sigma$ and $q$, and the mollifier $\chi$, appearing in the definition of $F$ and $\mathcal{B}$, and therefore in the definition of $\mathbb B$, should be considered as being fixed according to Subsection \ref{Subsec:Conventions_and_Definitions}. It will be the main challenge of this manuscript, to show that the quantum energy 
   \begin{align*}
     \left\langle \Psi_{\alpha,p},\mathbb H_p  \Psi_{\alpha,p}  \right\rangle_\mathcal{F} -E_\alpha\leq \frac{|p|^2}{2\alpha^4 m_{\mathrm{LP}}}+o\! \left(\alpha^{-4}|p|^2\right).
   \end{align*}
   is to leading order bounded from above by the semi-classical prediction $\frac{|p|^2}{2\alpha^4 m_{\mathrm{LP}}}$ for an optimal choice of $f$.

\subsection{Isolating the essential Contribution in $\left\langle \Psi_{\alpha,p},\mathbb H_p  \Psi_{\alpha,p}  \right\rangle_\mathcal{F}$}
\label{Sec:Isolating_the_essential_Contribution}
In this Subsection we want to compute the energy $\left\langle \Psi_{\alpha,p},\mathbb H_p  \Psi_{\alpha,p}  \right\rangle_\mathcal{F}$, and isolate the leading order term from contributions that are small compared to $\alpha^{-4}|p|^2$. In a first step, we provide a rather explicit expression for the energy in the following Lemma \ref{Lem:Rather_Explicit_Expression}, using the function $\xi_p:\underset{n\in \mathbb N}{\bigcup}\mathbb R^{3n}\longrightarrow \mathbb C$ defined as
\begin{align}
\label{Eq:Definition_xi_Computation}
    \xi_p(Y_n):=e^{ip\cdot \left[G(\rho_{Y_{n-1}})-G(\rho_{Y_{n}})\right]}-1-ip\cdot \left[G(\rho_{Y_{n-1}})-G(\rho_{Y_{n}})\right].
\end{align}

\begin{lem}
\label{Lem:Rather_Explicit_Expression}
    Let $f\in C^2_b(\mathbb R^3,\mathbb R^3)$ be reflection anti-symmetric, i.e. $f(-y)=-f(y)$, and let $ \Psi_{\alpha,p}$ be the state defined in Eq.~(\ref{Eq:Definition_proper_boosted_State}). Then we have the identity
    \begin{align}
    \label{Eq:Computation_Trial_Energy_using_Symmetry}
      &  \left\langle \Psi_{\alpha,p},\mathbb H_p  \Psi_{\alpha,p}  \right\rangle_\mathcal{F}=E_\alpha -2\mathfrak{Re}\! \left[\big\langle \Psi_{\alpha},\xi_p(Y_n)L(v)\Psi_{\alpha}\big\rangle_{L^2\! \left(\underset{n\in \mathbb N}{\bigcup}\mathbb R^{3n}\right)}\right]\\
     \nonumber
        & \ \ \ \  \ \ \ \  \ \ \ \  \ \ \ \ +|p|^2 \left\langle \Psi_{\alpha}, \left[1-F(\rho_{Y_n})\right]^2 \Psi_{\alpha}\right\rangle_\mathcal{F}.
    \end{align}
    Furthermore, there exists a constant $C>0$ such that
    \begin{align}
    \label{Eq:NEW_T_1_Smallness}
       \mathcal{E}_0:= \left\langle \Psi_{\alpha}, \left[1-F(\rho_{Y_n})\right]^2 \Psi_{\alpha}\right\rangle_\mathcal{F}\leq C\alpha^{-5}.
    \end{align}
\end{lem}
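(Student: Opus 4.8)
The plan is to compute $\langle \Psi_{\alpha,p},\mathbb H_p\Psi_{\alpha,p}\rangle_{\mathcal F}$ term by term, exploiting that $\Psi_{\alpha,p}=e^{ip\cdot G(\rho_{Y_n})}\Psi_\alpha$ differs from $\Psi_\alpha$ only by a (real-measure-dependent) phase, and that $G(\rho_{Y_n})$, being a multiplication operator, commutes with every other multiplication operator. First I would treat the kinetic part $(p-\mathcal P)^2$. Conjugating $\mathcal P$ by $e^{ip\cdot G(\rho_{Y_n})}=e^{ip\cdot \mathbb B}$ and invoking Lemma \ref{Lem:Pseudo_Boost_B_FB}, $e^{-ip\cdot\mathbb B}\mathcal P e^{ip\cdot\mathbb B}=\mathcal P+F(\rho_{Y_n})p$, so that $e^{-ip\cdot\mathbb B}(p-\mathcal P)e^{ip\cdot\mathbb B}=p-\mathcal P-F(\rho_{Y_n})p=[1-F(\rho_{Y_n})]p-\mathcal P$. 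Squaring and taking the expectation in $\Psi_\alpha$ gives $\langle\Psi_\alpha,(\mathcal P-[1-F]p)^2\Psi_\alpha\rangle$. Here the cross term $-2\mathfrak{Re}\langle\Psi_\alpha,\mathcal P\cdot[1-F(\rho_{Y_n})]p\,\Psi_\alpha\rangle$ vanishes by the reflection symmetry $\mathcal S\Psi_\alpha=\Psi_\alpha$ from Lemma \ref{Lem:Unique_GS}: under $Y_n\mapsto -Y_n$ the operator $\mathcal P$ is odd while $F(\rho_{Y_n})$ is even (since $F$ depends only on $\int\mathrm d\rho$ and the reflection-invariant double integral), so the integrand is odd. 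Hence the kinetic contribution is $\langle\Psi_\alpha,\mathcal P^2\Psi_\alpha\rangle+|p|^2\langle\Psi_\alpha,[1-F(\rho_{Y_n})]^2\Psi_\alpha\rangle$, which supplies the last term of \eqref{Eq:Computation_Trial_Energy_using_Symmetry} plus the $\mathcal P^2$-piece of $E_\alpha=\langle\Psi_\alpha,\mathbb H_0\Psi_\alpha\rangle$.

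Next the number operator $\mathcal N$ and, crucially, the interaction $-a^*(v)-a(v)$. Since $\mathcal N=\mathcal N(\rho_{Y_n})$ is a multiplication operator, it commutes with $e^{ip\cdot\mathbb B}$, so $\langle\Psi_{\alpha,p},\mathcal N\Psi_{\alpha,p}\rangle=\langle\Psi_\alpha,\mathcal N\Psi_\alpha\rangle$. For the interaction I would write $a^*(v)+a(v)=\pi_{\mathcal F}(L(v)+L(v)^*)\iota_{\mathcal F}$ and compute directly on Fock space. Using \eqref{Eq:Definition_L_Operator}, $(L(v)\Phi)(Y_n)=\frac{\sqrt n}{\alpha}\Phi(Y_{n-1})v(y_n)$, one gets
\begin{align*}
\langle\Psi_{\alpha,p},a^*(v)\Psi_{\alpha,p}\rangle
=\sum_{n\ge1}\frac{\sqrt n}{\alpha}\int \overline{e^{ip\cdot G(\rho_{Y_n})}\Psi_\alpha(Y_n)}\,e^{ip\cdot G(\rho_{Y_{n-1}})}\Psi_\alpha(Y_{n-1})\,v(y_n)\,\mathrm dY_n,
\end{align*}
i.e. the ground-state interaction term with the extra phase $e^{ip\cdot[G(\rho_{Y_{n-1}})-G(\rho_{Y_n})]}$ inserted. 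Splitting this phase as $1+ip\cdot[G(\rho_{Y_{n-1}})-G(\rho_{Y_n})]+\xi_p(Y_n)$ per \eqref{Eq:Definition_xi_Computation}, the ``$1$'' reconstructs $-\langle\Psi_\alpha,(a^*(v)+a(v))\Psi_\alpha\rangle$, completing $E_\alpha$. The linear-in-$p$ term cancels after taking $2\mathfrak{Re}$: one checks, again via $\mathcal S\Psi_\alpha=\Psi_\alpha$ together with $G(-\rho_{Y_n})=-G(\rho_{Y_n})$ (which holds because $m_q$ is odd under reflection and $f$ is assumed reflection anti-symmetric) and $v(-y)=v(y)$, that the linear term is purely imaginary, hence killed by $2\mathfrak{Re}$. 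What remains is exactly $-2\mathfrak{Re}[\langle\Psi_\alpha,\xi_p(Y_n)L(v)\Psi_\alpha\rangle_{L^2(\bigcup\mathbb R^{3n})}]$, giving \eqref{Eq:Computation_Trial_Energy_using_Symmetry}. I would double-check here that $G$ is genuinely odd: $m_q(\chi*\rho_{-Y_n})=-m_q(\chi*\rho_{Y_n})$ since $\chi$ is even and quantiles reflect, and the $f$-correction $\int f(y-m_q)\,\mathrm d\rho$ flips sign precisely because $f$ is odd — this is the sole place the reflection anti-symmetry hypothesis on $f$ is used, and it is what makes the linear term drop.

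Finally, the bound \eqref{Eq:NEW_T_1_Smallness}: $\mathcal E_0=\langle\Psi_\alpha,[1-F(\rho_{Y_n})]^2\Psi_\alpha\rangle$. Since $[1-F]^2\le \mathds 1\{F\neq1\}$ and, by Lemma \ref{Lem:Precise_Support_F}, $\{F(\rho_{Y_n})=1\}\supseteq\Omega_{(\sigma-\kappa)/\sqrt2}$, we have $\mathcal E_0\le \langle\Psi_\alpha,\mathds 1_{\Omega_{(\sigma-\kappa)/\sqrt2}^c}\Psi_\alpha\rangle$, i.e. $\mathcal E_0$ is bounded by the probability that the ground state charges configurations far from the Pekar values of $\int\mathrm d\rho$ and the reflection-invariant double integral. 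This is exactly the content of the concentration estimate announced for Section \ref{Sec:Asymptotic_concentration_of_Probability} (the ground state concentrates on $\Omega_\lambda$ for any fixed $\lambda>0$, with a quantitative rate); I would cite that result, noting the rate there must be at least $O(\alpha^{-5})$, which follows from the energy gap analysis of \cite{BS1} combined with the lower bound $E_\alpha\ge e^{\mathrm{Pek}}-C\alpha^{-2/5}$-type estimates, sharpened to the needed order. \textbf{The main obstacle} is really two-fold: (i) making the symmetry cancellation of the linear-in-$p$ term fully rigorous, since $\mathbb H_p$ is defined via a closed quadratic form with a UV singularity, so the formal manipulations with $L(v)$ must be justified by a UV cutoff $v^K$, the identity proved for $\mathbb H_p^K$, and then $K\to\infty$ using that $\xi_p(Y_n)=O(|p|^2|G(\rho_{Y_{n-1}})-G(\rho_{Y_n})|^2)$ is bounded (as $G$ is bounded, being built from the bounded $m_q$ on its support and $f\in C_b^2$) uniformly, so the $\xi_p L(v)$ pairing converges; and (ii) verifying that the concentration estimate genuinely delivers the power $\alpha^{-5}$ rather than a weaker rate — this is where the precise choice of the constants $\delta_*,\sigma,\kappa$ in Subsection \ref{Subsec:Conventions_and_Definitions} enters, and one wants to quote the Section \ref{Sec:Asymptotic_concentration_of_Probability} result in exactly the form that yields \eqref{Eq:NEW_T_1_Smallness}.
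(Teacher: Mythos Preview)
Your proposal is correct and follows essentially the same approach as the paper: conjugate by the boost via Lemma~\ref{Lem:Pseudo_Boost_B_FB} for the kinetic term, use reflection symmetry (Lemma~\ref{Lem:Unique_GS} together with $G(\rho_{-Y_n})=-G(\rho_{Y_n})$) to kill the cross terms, and bound $\mathcal E_0$ via Lemma~\ref{Lem:Precise_Support_F} and the concentration estimate Lemma~\ref{Lem:Most_Of_Mass}. The only minor differences are that the paper shows the linear-in-$p$ interaction integral is \emph{identically zero} by reflection (not just purely imaginary), and it does not pause on the UV regularization you flag in obstacle~(i).
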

\begin{proof}
    By Eq.~(\ref{Eq:Pseudo_Boost_FB}), we can write
    \begin{align*}
     & \ \ \ \   \ \ \ \  \left\langle \Psi_{\alpha,p},\left(p \! - \! \mathcal P\right)^2  \Psi_{\alpha,p}  \right\rangle_\mathcal{F} \! = \!  \left\langle \Psi_{\alpha},e^{-ip\cdot \mathbb B}\left(p \! - \! \mathcal P\right)^2 e^{ip\cdot \mathbb B} \Psi_{\alpha}  \right\rangle_\mathcal{F} \\
     &  \ \ \ \  =  \left\langle \Psi_{\alpha},\left(p \! - \! e^{-ip\cdot \mathbb B}\mathcal P e^{ip\cdot \mathbb B} \right)^2 \Psi_{\alpha}  \right\rangle_\mathcal{F} =\left\langle \Psi_{\alpha},\Big(\big[1-F(\rho_{Y_n})\big]p-\mathcal{P} \right)^2 \Psi_{\alpha}  \Big\rangle_\mathcal{F}\\
     & =\left\langle \Psi_{\alpha},\mathcal{P}^2 \Psi_{\alpha}  \right\rangle_\mathcal{F}+|p|^2\left\langle \Psi_{\alpha}, \left[1-F(\rho_{Y_n})\right]^2 \Psi_{\alpha}\right\rangle_\mathcal{F}+2p \mathfrak{Re}\! \left[\left\langle \Psi_{\alpha}, \left[1-F(\rho_{Y_n})\right]\mathcal{P} \Psi_{\alpha}  \right\rangle_\mathcal{F}\right].
    \end{align*}
    Note that $\left\langle \Psi_{\alpha}, \left[1-F(\rho_{Y_n})\right]\mathcal{P} \Psi_{\alpha}  \right\rangle_\mathcal{F}=0$, since $\Psi_{\alpha}$ is invariant under the reflection $\mathcal{S}$, see Lemma \ref{Lem:Unique_GS}, and $ \left[1-F(\rho_{Y_n})\right]\mathcal{P}$ is relection anti-symmetric, i.e.
    \begin{align*}
       \mathcal{S}^* \left[1-F(\rho_{Y_n})\right]\mathcal{P}\mathcal{S}=\left[1-F(\rho_{Y_n})\right]\mathcal{S}^* \mathcal{P}\mathcal{S}=-\left[1-F(\rho_{Y_n})\right]\mathcal{P}.
    \end{align*}
    Consequently,
    \begin{align*}
        \left\langle \Psi_{\alpha,p},\left(p \! - \! \mathcal P\right)^2  \Psi_{\alpha,p}  \right\rangle_\mathcal{F}=\left\langle \Psi_{\alpha},\mathcal{P}^2 \Psi_{\alpha}  \right\rangle_\mathcal{F}+|p|^2\left\langle \Psi_{\alpha}, \left[1-F(\rho_{Y_n})\right]^2 \Psi_{\alpha}\right\rangle_\mathcal{F}.
    \end{align*}
    We observe that both $e^{ip\cdot \mathbb B}$ and $\mathcal{N}$ are multiplication operators and therefore
    \begin{align*}
         \left\langle \Psi_{\alpha,p},\mathcal{N}  \Psi_{\alpha,p}  \right\rangle_\mathcal{F}=\left\langle \Psi_{\alpha},e^{-ip\cdot \mathbb B }\mathcal{N}  e^{ip\cdot \mathbb B }\Psi_{\alpha}  \right\rangle_\mathcal{F}=\left\langle \Psi_{\alpha},\mathcal{N}  e^{-ip\cdot \mathbb B }e^{ip\cdot \mathbb B }\Psi_{\alpha}  \right\rangle_\mathcal{F}=\left\langle \Psi_{\alpha},\mathcal{N}  \Psi_{\alpha}  \right\rangle_\mathcal{F}.
    \end{align*}
    Finally, recall the definition of $L(v)$ and $a^*(v)$ from Subsection \ref{Subsec:Basic_Notation} and compute
    \begin{align}
   \nonumber
\left\langle \Psi_{\alpha,p},a^*(v) \Psi_{\alpha,p}  \right\rangle_\mathcal{F}     \!   &  = \! \left\langle \Psi_{\alpha,p},L(v) \Psi_{\alpha,p}  \right\rangle_{L^2\! \left(\underset{n\in \mathbb N}{\bigcup}\mathbb R^{3n}\right)} \! \!  = \! \sum_{n=0}^\infty \frac{\sqrt{\alpha}}{n}\int_{\mathbb R^{3n}} \!  \! \overline{\Psi_{\alpha,p}(Y_n)}\Psi_{\alpha,p}(Y_{n-1})v(y_n)\mathrm{d}Y_n\\
\label{Eq:Computing_a_Star_Term}
      &  =\sum_{n=0}^\infty \frac{\sqrt{\alpha}}{n}\int_{\mathbb R^{3n}} \!  \! e^{i p\cdot \left[G(\rho_{Y_{n-1}})- G(\rho_{Y_n})\right]}\overline{\Psi_{\alpha}(Y_n)}\Psi_{\alpha}(Y_{n-1})v(y_n)\mathrm{d}Y_n.
    \end{align}
    Due to our assumption that $f$ is reflection anti-symmetric, we obtain that $G(\rho_{-Y_n})=-G(\rho_{Y_n})$. Using again that $\Psi_\alpha$ is reflection invariant, therefore yields
    \begin{align}
    \label{Eq:Reflection_anti-symmetry-_f}
        \sum_{n=0}^\infty \frac{\sqrt{\alpha}}{n}\int_{\mathbb R^{3n}} \!  \! \left[G(\rho_{Y_{n-1}})- G(\rho_{Y_n})\right]\overline{\Psi_{\alpha}(Y_n)}\Psi_{\alpha}(Y_{n-1})v(y_n)\mathrm{d}Y_n=0.
    \end{align}
 Combining Eq.~(\ref{Eq:Computing_a_Star_Term}), Eq.~(\ref{Eq:Reflection_anti-symmetry-_f}) and the definition of $\xi_p$ in Eq.~(\ref{Eq:Definition_xi_Computation}), we observe that
 \begin{align*}
    \left\langle \Psi_{\alpha,p},a^*(v) \Psi_{\alpha,p}  \right\rangle_\mathcal{F} & =\left\langle \Psi_{\alpha},a^*(v) \Psi_{\alpha}  \right\rangle_\mathcal{F} +\sum_{n=0}^\infty \frac{\sqrt{\alpha}}{n}\int_{\mathbb R^{3n}} \xi_p(Y_n)\overline{\Psi_{\alpha}(Y_n)}\Psi_{\alpha}(Y_{n-1})v(y_n)\mathrm{d}Y_n\\
    &= \left\langle \Psi_{\alpha},a^*(v) \Psi_{\alpha}  \right\rangle_\mathcal{F} +\Big\langle \Psi_{\alpha},\xi_p(Y_n)L(v) \Psi_{\alpha}  \Big\rangle_{L^2\! \left(\underset{n\in \mathbb N}{\bigcup}\mathbb R^{3n}\right)} .
 \end{align*}
 This concludes the proof of Eq.~(\ref{Eq:Computation_Trial_Energy_using_Symmetry}), since
 \begin{align*}
    &  \left\langle \Psi_{\alpha,p},\mathbb H_p  \Psi_{\alpha,p}  \right\rangle_\mathcal{F}=\left\langle \Psi_{\alpha,p},\left(p \! - \! \mathcal P\right)^2  \Psi_{\alpha,p}  \right\rangle_\mathcal{F}+ \left\langle \Psi_{\alpha,p},\mathcal{N}  \Psi_{\alpha,p}  \right\rangle_\mathcal{F}-2\mathfrak{Re}\! \left[ \left\langle \Psi_{\alpha,p},a^*(v) \Psi_{\alpha,p}  \right\rangle_\mathcal{F}\right]\\
      & \ \  =\left\langle \Psi_{\alpha}, \mathcal P  \Psi_{\alpha}  \right\rangle_\mathcal{F}+ \left\langle \Psi_{\alpha},\mathcal{N}  \Psi_{\alpha}  \right\rangle_\mathcal{F}-2\mathfrak{Re}\! \left[ \left\langle \Psi_{\alpha},a^*(v) \Psi_{\alpha}  \right\rangle_\mathcal{F}\right]+ |p|^2\left\langle \Psi_{\alpha}, \left[1-F(\rho_{Y_n})\right]^2 \Psi_{\alpha}\right\rangle_\mathcal{F}\\
      &  \ \  \ \  \ \  \ \  \ \ -2\mathfrak{Re}\! \left[\Big\langle \Psi_{\alpha},\xi_p(Y_n)L(v) \Psi_{\alpha}  \Big\rangle_{L^2\! \left(\underset{n\in \mathbb N}{\bigcup}\mathbb R^{3n}\right)}\right]\\
      & \ \ = E_\alpha -2\mathfrak{Re}\! \left[\Big\langle \Psi_{\alpha},\xi_p(Y_n)L(v) \Psi_{\alpha}  \Big\rangle_{L^2\! \left(\underset{n\in \mathbb N}{\bigcup}\mathbb R^{3n}\right)}\right]+|p|^2\left\langle \Psi_{\alpha}, \left[1-F(\rho_{Y_n})\right]^2 \Psi_{\alpha}\right\rangle_\mathcal{F}.
 \end{align*}
 Regarding Eq.~(\ref{Eq:NEW_T_1_Smallness}), note that $\mathrm{supp}\! \left(1-F(\rho_{Y_n})\right)\subseteq \underset{n\in \mathbb N}{\bigcup}\mathbb R^{3n}\setminus \Omega_{\frac{\sigma-\kappa}{\sqrt{2}}}$ by Eq.~(\ref{Eq:Support_Chain}), hence
 \begin{align*}
     \left\langle \Psi_{\alpha}, \left[1-F(\rho_{Y_n})\right]^2 \Psi_{\alpha}\right\rangle_\mathcal{F}\leq \sum_{n=0}^\infty \int_{\mathbb R^{3n}\setminus \Omega_{\frac{\sigma-\kappa}{\sqrt{2}}}^{(n)}}\left|\Psi_\alpha(Y_n)\right|^2\mathrm{d}Y_n=:P_\alpha\! \left(\frac{\sigma-\kappa}{\sqrt{2}}\right).
 \end{align*}
 In Lemma \ref{Lem:Most_Of_Mass} we are going to show that $P_\alpha(\lambda)\leq C_\lambda \alpha^{-5}$ for $\lambda>0$ and a suitable $C_\lambda>0$.
\end{proof}

In the remainder of this Section, we are going to isolate the main contribution of the term
\begin{align}
\label{Eq:Raw_Contribution_0}
  \Big\langle \Psi_{\alpha},\xi_p(Y_n)L(v)\Psi_{\alpha}\Big\rangle_{L^2\! \left(\underset{n\in \mathbb N}{\bigcup}\mathbb R^{3n}\right)}
\end{align}
appearing in Eq.~(\ref{Eq:Computation_Trial_Energy_using_Symmetry}). In order to simplify the analysis later, we are first going to address the ultraviolet singularity of the interaction term $v(y)=\pi^{-\frac{3}{2}}|y|^{-2}$. Using the decomposition
\begin{align*}
    v=v^\Lambda+\frac{1}{i}\nabla\cdot w^\Lambda,
\end{align*}
see Eq.~(\ref{Eq:Decomposition_in_Lemma_of_interaction}), we can split the term in Eq.~(\ref{Eq:Raw_Contribution_0}) according to
\begin{align}
\label{Eq:Raw_Contribution_1_PRE}
    \Big\langle \Psi_{\alpha},\xi_p(Y_n)L(v)\Psi_{\alpha}\Big\rangle_{L^2\! \left(\underset{n\in \mathbb N}{\bigcup}\mathbb R^{3n}\right)}=  \Big\langle \Psi_{\alpha},\xi_p(Y_n)L(v^\Lambda)\Psi_{\alpha}\Big\rangle_{L^2\! \left(\underset{n\in \mathbb N}{\bigcup}\mathbb R^{3n}\right)}+\mathcal{E}_1,
\end{align}
where $v^\Lambda$ is an ultraviolet regularized interaction satisfying $\|v^\Lambda\|\leq C\sqrt{\Lambda}$, see Eq.~(\ref{Eq:In_Lemma_v_w_Lambda}), and the residuum $\mathcal{E}_1$ is defined as
\begin{align}
\label{Eq:Def_E_1}
    \mathcal{E}_1 :=\frac{1}{i} \Big\langle \Psi_{\alpha},\xi_p(Y_n)L(\nabla\cdot w^\Lambda)\Psi_{\alpha}\Big\rangle_{L^2\! \left(\underset{n\in \mathbb N}{\bigcup}\mathbb R^{3n}\right)}.
\end{align}
It is content of Lemma \ref{Lem:E_1} to show that the error term is small compared to $\frac{ |p|^2}{\sqrt{\Lambda} \alpha^4}$, i.e. taking the cut-off parameter $\Lambda$ of the order $\alpha^\epsilon$ for $\epsilon>0$ will yield a term small compared to $\alpha^{-4}|p|^2$. In order to further analyse the first term on the right hand side of Eq.~(\ref{Eq:Raw_Contribution_1_PRE})
\begin{align}
\label{Eq:Raw_Contribution_1}
    \Big\langle \Psi_{\alpha},\xi_p(Y_n)L(v^\Lambda)\Psi_{\alpha}\Big\rangle_{L^2\! \left(\underset{n\in \mathbb N}{\bigcup}\mathbb R^{3n}\right)},
\end{align}
we are going to Taylor expand the exponential in the definition of $\xi_p$ in Eq.~(\ref{Eq:Definition_xi_Computation}), i.e. we write
\begin{align}
\label{Eq:Taylor_xi}
    \xi_p(Y_n)=-\frac{1}{2}\left[p \! \cdot \! G(\rho_{Y_{n}})-p \! \cdot \! G(\rho_{Y_{n-1}})\right]^2+\left(\xi_p(Y_n)+\frac{1}{2}\left[p \! \cdot \! G(\rho_{Y_{n}})-p \! \cdot \! G(\rho_{Y_{n-1}})\right]^2\right),
\end{align}
where the expression in $\big(\dots \big)$ is the corresponding Taylor residuum. Based on Eq.~(\ref{Eq:Taylor_xi}), we are going to decompose Eq.~(\ref{Eq:Raw_Contribution_1}) according to
\begin{align}
\label{Eq:Raw_Contribution_2_PRE}
    \Big\langle \!  \Psi_{\alpha},\xi_p(Y_n)L(v^\Lambda)\Psi_{\alpha} \! \Big\rangle_{L^2\! \left(\underset{n\in \mathbb N}{\bigcup}\mathbb R^{3n}\right)} \! = \! - \frac{1}{2} \Big\langle \!  \Psi_{\alpha}, \! \left[p \! \cdot \! G(\rho_{Y_{n}}) \! - \! p \! \cdot \! G(\rho_{Y_{n-1}})\right]^2  \! L(v^\Lambda)\Psi_{\alpha} \! \Big\rangle_{L^2\! \left(\underset{n\in \mathbb N}{\bigcup}\mathbb R^{3n}\right)} \! + \! \mathcal{E}_2,
\end{align}
where the error term $\mathcal{E}_2$ is defined in terms of the Taylor residuum as
\begin{align}
\label{Eq:Def_E_2}
    \mathcal{E}_2:= \Big\langle \!  \Psi_{\alpha},\left(\xi_p(Y_n)+\frac{1}{2}\left[p \! \cdot \! G(\rho_{Y_{n}})-p \! \cdot \! G(\rho_{Y_{n-1}})\right]^2\right)L(v^\Lambda)\Psi_{\alpha} \! \Big\rangle_{L^2\! \left(\underset{n\in \mathbb N}{\bigcup}\mathbb R^{3n}\right)} .
\end{align}
We are going to show in Lemma \ref{Lem:E_2,3} that $\mathcal{E}_2$ is small compared to $\frac{\sqrt{\Lambda} |p|^3}{ \alpha^6}$, i.e. as long as the cut-off parameter $\Lambda$ is chosen small compared to $\frac{\alpha^4}{|p|^2}$, $\mathcal{E}_2$ is of the order $o\! \left(\alpha^{-4}|p|^2\right)$. 

In order to provide an asymptotically correct computation of the first contribution on the right hand side of Eq.~(\ref{Eq:Raw_Contribution_2_PRE})
\begin{align}
\label{Eq:Raw_Contribution_2}
    - \frac{1}{2} \Big\langle \!  \Psi_{\alpha}, \! \left[p \! \cdot \! G(\rho_{Y_{n}}) \! - \! p \! \cdot \! G(\rho_{Y_{n-1}})\right]^2  \! L(v^\Lambda)\Psi_{\alpha} \! \Big\rangle_{L^2\! \left(\underset{n\in \mathbb N}{\bigcup}\mathbb R^{3n}\right)},
\end{align}
it will be essential to make use of the concept of Bose-Einstein condensation, see \cite{BS1,BS2} in the context of the Fröhlich polaron, which tells us that low energy states $\Phi\in L^2\! \left(\mathbb R^3\right)\otimes \mathcal{F}$ of the (cut-off) Hamiltonian $\mathbb H^K$, see Eq.~(\ref{Eq:Cut-off_Hamiltonian}), are close, in a suitable sense, to a superposition of the tensor products with coherent states $\psi^\mathrm{Pek}_z\otimes \Xi_{\varphi^\mathrm{Pek}_z}$ for $z\in \mathbb R^3$ defined in Eq.~(\ref{Eq:Product_with_coherent_state}). Furthermore, in case the (regularized) median is highly localized at the origin $0\in \mathbb R^3$, the low energy state $\Phi$ is even close to the single tensor product $\psi^\mathrm{Pek} \otimes \Xi_{\varphi^\mathrm{Pek}}$, see \cite{BS1}. Unfortunately, we cannot directly apply this result for $\Psi_\alpha$, since $\Psi_\alpha \in \mathcal{F}$ is only an element of the fiber Hilbert space $\mathcal{F}$. In the following we want to lift $\Psi_\alpha$ to a state $\Phi_\alpha\in L^2\! \left(\mathbb R^3\right)\otimes \mathcal{F}$ on the full Hilbert space, in such a way that the (regularized) median is localized around the origin. Given the parameter
\begin{align}
\label{Eq:Definition_eta_from_beta}
    \eta:=\alpha^{-\beta},
\end{align}
where $\frac{2}{29}<\beta<\frac{3}{4}$ is fixed, representing the sharpness of the localization, let us recall the definition of
\begin{align*}
    \tau_{\eta}(y)=\eta^{-\frac{3}{2}}\tau\! \left(\eta^{-1}y\right),
\end{align*}
from Subsection \ref{Subsec:Conventions_and_Definitions}, where $\tau:\mathbb R^3\longrightarrow [0,1]$ is smooth with support in $\left\{y:|y|\leq \frac{1}{3}\right\}$ satisfying $\int\tau(y)^2\mathrm{d}y=1$. Then we define the state $\overline{\Psi}_\alpha\in  L^2\! \left(\mathbb R^3\right)\otimes \mathcal{F}$
\begin{align}
    \label{Eq:Localized_State_LLP}
    \overline{\Psi}_\alpha(x;Y_n):=\mu_\alpha^{-1}\tau_{\eta}\! \left(m_\eta (\rho_{Y_{n}})+x\right)F(\rho_{Y_{n}})\Psi_\alpha(Y_n),
\end{align}
where $\mu_\alpha\in (0,1]$ is chosen such that $\| \overline{\Psi}_\alpha\|=1$, as well as the state $\Phi_\alpha\in  L^2\! \left(\mathbb R^3\right)\otimes \mathcal{F}$ via the map $\mathcal{T}$ from Eq.~(\ref{Eq:Essentially_LLP}) as
\begin{align}
\label{Eq:Def_Phi_State}
    \Phi_\alpha(x;Y_n):=\left(\mathcal{T}  \overline{\Psi}_\alpha\right)\! (x;Y_n)=\mu_\alpha^{-1}\tau_{\eta}\! \left(m_\eta (\rho_{Y_{n}})\right)F(\rho_{Y_{n}})\Psi_\alpha(y_1-x_1,\dots ,y_n-x).
\end{align}
It will be the content of Lemma \ref{Lem:Energy_Estimates_tilde} to show that $\overline{\Psi}_\alpha$ is a low energy state of the (cut-off) Hamiltonian $\mathbb H^K_{\frac{1}{i}\nabla}$ defined in Eq.~(\ref{Eq:Cut-off_Hamiltonian_LLP}), or equivalently $\Phi_\alpha$ is a low energy state of the (cut-off) Hamiltonian $\mathbb H^K$ defined in Eq.~(\ref{Eq:Cut-off_Hamiltonian}). Furthermore, we clearly have for $\Phi_\alpha$ that the regularized median $m_\eta$ is localized around the origin on a length scale of the order $\eta$. Using the fact that we have $\int_{\mathbb R^3} \tau_\eta (m_{\eta}(\rho_{Y_n})+x)^2\mathrm{d}x=1$ for all $Y_n$, we can express the term in Eq.~(\ref{Eq:Raw_Contribution_2}) as
\begin{align}
\nonumber
   & \ \ \ \  -\frac{1}{2} \Big\langle \!  \Psi_{\alpha}, \! \left[p \! \cdot \! G(\rho_{Y_{n}}) \! - \! p \! \cdot \! G(\rho_{Y_{n-1}})\right]^2  \! L(v^\Lambda)\Psi_{\alpha} \! \Big\rangle_{L^2\! \left(\underset{n\in \mathbb N}{\bigcup}\mathbb R^{3n}\right)}\\
   \nonumber
    & = -\frac{1}{2} \left\langle \!  \Psi_{\alpha}, \! \left[p \! \cdot \! G(\rho_{Y_{n}}) \! - \! p \! \cdot \! G(\rho_{Y_{n-1}})\right]^2  \! L(v^\Lambda)\int_{\mathbb R^3} \! \! \tau_\eta(m_{\eta}(\rho_{Y_n})+x)^2 \! \mathrm{d}x\, \Psi_{\alpha} \! \right\rangle_{L^2\! \left(\underset{n\in \mathbb N}{\bigcup}\mathbb R^{3n}\right)}\\
    \nonumber
    & =  \! -\frac{1}{2} \! \int_{\mathbb R^3}   \! \! \! \left\langle \!  \tau_\eta(m_{\eta}(\rho_{Y_{n-1}}) \! + \! x)\Psi_{\alpha}, \! \left[p \! \cdot \! G(\rho_{Y_{n}}) \! - \! p \! \cdot \! G(\rho_{Y_{n-1}})\right]^2  \! L(v^\Lambda) \tau_\eta(m_{\eta}(\rho_{Y_n}) \! + \! x) \, \Psi_{\alpha} \! \right\rangle_{ \!  \!  \! L^2\! \left(\underset{n\in \mathbb N}{\bigcup}\mathbb R^{3n}\right)}\!  \! \mathrm{d}x\\
     \label{Eq:Raw_Contribution_3_Pre}
     & =  \! -\frac{1}{2} \! \int_{\mathbb R^3}  \! \! \left\langle \!  \tau_\eta(m_{\eta}(\rho_{Y_{n}}) \! + \! x)\Psi_{\alpha}, \! \left[p \! \cdot \! G(\rho_{Y_{n}}) \! - \! p \! \cdot \! G(\rho_{Y_{n-1}})\right]^2  \! L(v^\Lambda) \tau_\eta(m_{\eta}(\rho_{Y_n}) \! + \! x) \, \Psi_{\alpha} \! \right\rangle_{ \!  \!  \! L^2\! \left(\underset{n\in \mathbb N}{\bigcup}\mathbb R^{3n}\right)}\! \mathrm{d}x\\
     \nonumber
     & \ \ \ \ \ \  \ \ \ \ \ \  -\frac{1}{2} \Big\langle \!  \Psi_{\alpha}, \! \varphi_\eta(Y_n)\left[p \! \cdot \! G(\rho_{Y_{n}}) \! - \! p \! \cdot \! G(\rho_{Y_{n-1}})\right]^2  \! L(v^\Lambda)\Psi_{\alpha} \! \Big\rangle_{L^2\! \left(\underset{n\in \mathbb N}{\bigcup}\mathbb R^{3n}\right)},
\end{align}
where we have used the commutator relation $L(v^\Lambda)\tau_\eta(m_{\eta}(\rho_{Y_{n}}) \! + \! x)=\tau_\eta(m_{\eta}(\rho_{Y_{n-1}}) \! + \! x)L(v^\Lambda)$ and defined $\varphi_\eta$ as 
\begin{align}
    \label{Eq:Def_F_with_subscript}
    \varphi_\eta(Y_n): &=\int_{\mathbb{R}^3}\left[\tau_{\eta}\! \left(m_\eta(\rho_{Y_{n-1}})+x\right)-\tau_{\eta}\! \left(m_\eta(\rho_{Y_{n}})+x\right)\right]\tau_{\eta}\! \left(m_\eta(\rho_{Y_{n-1}})+x\right)\mathrm{d}x.
\end{align}
By Lemma \ref{Lem:E_2,3}, the residual term $\mathcal{E}_3$, defined as
\begin{align}
\label{Eq:Def_E_3}
     \mathcal{E}_3 :  =-\frac{1}{2}\Big\langle \!  \Psi_{\alpha}, \! \varphi_\eta(Y_n)\left[p \! \cdot \! G(\rho_{Y_{n}}) \! - \! p \! \cdot \! G(\rho_{Y_{n-1}})\right]^2  \! L(v^\Lambda)\Psi_{\alpha} \! \Big\rangle_{L^2\! \left(\underset{n\in \mathbb N}{\bigcup}\mathbb R^{3n}\right)},
\end{align}
is of the order $\frac{\sqrt{\Lambda} |p|^2}{\eta^4 \alpha^8}$, and therefore $o\! \left(\alpha^{-4}|p|^2\right)$ as long as $\eta$ is large compared to $\alpha^{-1}\Lambda^{\frac{1}{8}}$. Finally, in order to extract the essential contribution from the first term on the right hand side of Eq.~(\ref{Eq:Raw_Contribution_3_Pre}), recall that $\Psi_\alpha(Y_n)$ is mostly supported on configurations $Y_n$ such that $F(\rho_{Y_n})=1$, for which we have
\begin{align*}
  \tau_\eta(m_{\eta}(\rho_{Y_{n}}) \! + \! x)\Psi_{\alpha} & =\mu_\alpha\overline{\Psi}_\alpha(x;Y_n),\\ 
 p \! \cdot \! G(\rho_{Y_{n}}) \! - \! p \! \cdot \! G(\rho_{Y_{n-1}}) & =p \! \cdot \! \mathcal{B}(\rho_{Y_{n}}) \! - \! p \! \cdot \! \mathcal{B}(\rho_{Y_{n-1}}).
\end{align*}
Based on this observation we are going to decompose the first term in Eq.~(\ref{Eq:Raw_Contribution_3_Pre}) as
\begin{align*}
   & -\frac{1}{2} \! \int_{\mathbb R^3}  \! \! \left\langle \!  \tau_\eta(m_{\eta}(\rho_{Y_{n}}) \! + \! x)\Psi_{\alpha}, \! \left[p \! \cdot \! G(\rho_{Y_{n}}) \! - \! p \! \cdot \! G(\rho_{Y_{n-1}})\right]^2  \! L(v^\Lambda) \tau_\eta(m_{\eta}(\rho_{Y_n}) \! + \! x) \, \Psi_{\alpha} \! \right\rangle_{L^2\! \left(\underset{n\in \mathbb N}{\bigcup}\mathbb R^{3n}\right)}\! \mathrm{d}x\\
    & \ \ \ \  =-\frac{\mu_\alpha^2}{2} \int_{\mathbb R^3}  \! \! \left\langle   \overline{\Psi}_\alpha(x;\cdot ), \! \left[p \! \cdot \! \mathcal{B}(\rho_{Y_{n}}) \! - \! p \! \cdot \! \mathcal{B}(\rho_{Y_{n-1}})\right]^2  \! L(v^\Lambda) \overline{\Psi}_\alpha(x;\cdot )  \right\rangle_{L^2\! \left(\underset{n\in \mathbb N}{\bigcup}\mathbb R^{3n}\right)}\! \mathrm{d}x+\mathcal{E}_4\\
    &  \ \ \ \  =-\frac{\mu_\alpha^2}{2} \left\langle   \overline{\Psi}_\alpha, \! \left[p \! \cdot \! \mathcal{B}(\rho_{Y_{n}}) \! - \! p \! \cdot \! \mathcal{B}(\rho_{Y_{n-1}})\right]^2  \! L(v^\Lambda) \overline{\Psi}_\alpha  \right\rangle_{L^2\! \left(\mathbb R^3\times \underset{n\in \mathbb N}{\bigcup}\mathbb R^{3n}\right)}+\mathcal{E}_4\\
     &  \ \ \ \  =-\frac{\mu_\alpha^2}{2} \left\langle   \Phi_\alpha, \! \left[p \! \cdot \! \mathcal{B}(\rho_{Y_{n}}) \! - \! p \! \cdot \! \mathcal{B}(\rho_{Y_{n-1}})\right]^2  \! L(v^\Lambda_x) \Phi_\alpha  \right\rangle_{L^2\! \left(\mathbb R^3\times \underset{n\in \mathbb N}{\bigcup}\mathbb R^{3n}\right)}+\mathcal{E}_4,
\end{align*}
where we consider $\overline{\Psi}_\alpha(x;\cdot )$ as a state of $\mathcal{F}$ for all $x\in \mathbb R^3$ and define
\begin{align}
\nonumber
& \mathcal{E}_4:=\frac{1}{2} \int_{\mathbb R^3}  \! \! \Big\langle  \tau_{\eta}\! \left(m_\eta (\rho_{Y_{n}})+x\right)\Psi_\alpha,  \Big(F(\rho_{Y_{n}})F(\rho_{Y_{n-1}})\left[p \! \cdot \! \mathcal{B}(\rho_{Y_{n}}) \! - \! p \! \cdot \! \mathcal{B}(\rho_{Y_{n-1}})\right]^2\\
\nonumber
& -\left[p \! \cdot \! G(\rho_{Y_{n}}) \! - \! p \! \cdot \! G(\rho_{Y_{n-1}})\right]^2\Big)  \! L(v^\Lambda) \tau_{\eta}\! \left(m_\eta (\rho_{Y_{n}})+x\right)\Psi_\alpha(Y_n)\Big\rangle_{L^2\! \left(\underset{n\in \mathbb N}{\bigcup}\mathbb R^{3n}\right)}\! \mathrm{d}x\\
\label{Eq:Def_E_4}
&  \ \ \ \ =-\frac{1}{2} \! \int_{\mathbb R^3}  \! \! \Big\langle \!  \tau_\eta(m_{\eta}(\rho_{Y_{n}}) \! + \! x)\Psi_{\alpha}, \! \left(F(\rho_{Y_{n}}) \! - \! F(\rho_{Y_{n-1}})\right)\\
\nonumber
 &\times \left(p \! \cdot \! G(\rho_{Y_{n}})p \! \cdot \! \mathcal{B}(\rho_{Y_{n}}) \! - \! p \! \cdot \! G(\rho_{Y_{n-1}})p \! \cdot \! \mathcal{B}(\rho_{Y_{n-1}})\right)  \! L(v^\Lambda) \tau_\eta(m_{\eta}(\rho_{Y_n}) \! + \! x) \, \Psi_{\alpha} \! \Big\rangle_{L^2\! \left(\underset{n\in \mathbb N}{\bigcup}\mathbb R^{3n}\right)}\! \mathrm{d}x.
\end{align}
We going to confirm in Lemma \ref{Lem:E_4} that the term $\mathcal{E}_4$ is of the order $\frac{\sqrt{\Lambda} |p|^2}{ \alpha^{6}}$, and therefore small compared to $\alpha^{-4}|p|^2$ for $\Lambda$ small compared to $\alpha^4$.

Summarizing the results of this Subsection, we obtain the following Corollary \ref{Cor:Summarizing_Subsec_3.2}, which extracts the subleading residual terms $\mathcal{E}_0,\dots ,\mathcal{E}_4$ from the quantum energy.
\begin{cor}
\label{Cor:Summarizing_Subsec_3.2}
     Let $f\in C^2_b(\mathbb R^3,\mathbb R^3)$ be reflection anti-symmetric, $\Phi_\alpha\in L^2\! \left(\mathbb R^3\times \underset{n\in \mathbb N}{\bigcup}\mathbb R^{3n}\right)$ the state defined in Eq.~(\ref{Eq:Def_Phi_State}), and let $\mathcal{E}_0,\dots ,\mathcal{E}_4$ be defined in Eq.~(\ref{Eq:NEW_T_1_Smallness}), Eq.~(\ref{Eq:Def_E_1}), Eq.~(\ref{Eq:Def_E_2}), Eq.~(\ref{Eq:Def_E_3}) and Eq.~(\ref{Eq:Def_E_4}), as well as $\mu_\alpha$ below Eq.~(\ref{Eq:Localized_State_LLP}). Then
    \begin{align*}
    \left\langle \Psi_{\alpha,p},\mathbb H_p  \Psi_{\alpha,p}  \right\rangle_\mathcal{F} =E_\alpha & +\mu_\alpha^2\left\langle   \Phi_\alpha, \! \left[p \! \cdot \! \mathcal{B}(\rho_{Y_{n}}) \! - \! p \! \cdot \! \mathcal{B}(\rho_{Y_{n-1}})\right]^2  \! L(v^\Lambda_x) \Phi_\alpha  \right\rangle_{L^2\! \left(\mathbb R^3\times \underset{n\in \mathbb N}{\bigcup}\mathbb R^{3n}\right)}\\
 &  +|p|^2\mathcal{E}_0-2\sum_{j=1}^4 \mathfrak{Re}\! \left[\mathcal{E}_j\right].
\end{align*}
\end{cor}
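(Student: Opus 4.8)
This corollary is essentially a bookkeeping statement: it merely records the outcome of the five successive decompositions carried out in the present Subsection, each of which isolates one residual term $\mathcal E_j$. The plan is therefore to start from Lemma~\ref{Lem:Rather_Explicit_Expression}, which is applicable since $f$ is assumed reflection anti-symmetric, and which gives
$$\left\langle \Psi_{\alpha,p},\mathbb H_p \Psi_{\alpha,p}\right\rangle_\mathcal{F}=E_\alpha-2\,\mathfrak{Re}\big[T\big]+|p|^2\,\mathcal E_0,\qquad T:=\big\langle \Psi_\alpha,\xi_p(Y_n)L(v)\Psi_\alpha\big\rangle,$$
see Eq.~(\ref{Eq:Computation_Trial_Energy_using_Symmetry}) and Eq.~(\ref{Eq:NEW_T_1_Smallness}), and then to rewrite $T$ step by step by the four identities of the Subsection.

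First I would use linearity of $f\mapsto L(f)$ together with the ultraviolet splitting $v=v^\Lambda+\frac{1}{i}\nabla\cdot w^\Lambda$ of Eq.~(\ref{Eq:Decomposition_in_Lemma_of_interaction}) to write $T=\langle \Psi_\alpha,\xi_p(Y_n)L(v^\Lambda)\Psi_\alpha\rangle+\mathcal E_1$, with $\mathcal E_1$ as in Eq.~(\ref{Eq:Def_E_1}), cf. Eq.~(\ref{Eq:Raw_Contribution_1_PRE}). Next, inserting the second order Taylor expansion of $\xi_p$ from Eq.~(\ref{Eq:Taylor_xi}) produces the quadratic leading term $-\frac12\langle\Psi_\alpha,[p\cdot G(\rho_{Y_n})-p\cdot G(\rho_{Y_{n-1}})]^2L(v^\Lambda)\Psi_\alpha\rangle$ plus the Taylor residuum $\mathcal E_2$ of Eq.~(\ref{Eq:Def_E_2}), cf. Eq.~(\ref{Eq:Raw_Contribution_2_PRE}). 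In the third step I would insert the resolution of identity $\int_{\mathbb R^3}\tau_\eta(m_\eta(\rho_{Y_n})+x)^2\,\mathrm dx=1$, commute one factor $\tau_\eta(m_\eta(\rho_{Y_n})+x)$ past $L(v^\Lambda)$ via the relation $L(v^\Lambda)\tau_\eta(m_\eta(\rho_{Y_n})+x)=\tau_\eta(m_\eta(\rho_{Y_{n-1}})+x)L(v^\Lambda)$ (valid because $L(v^\Lambda)$ acts by the particle shift $Y_n\mapsto Y_{n-1}$), and finally replace the argument $m_\eta(\rho_{Y_{n-1}})$ that then appears in the bra by $m_\eta(\rho_{Y_n})$; the discrepancy of this last replacement is precisely $\mathcal E_3$ of Eq.~(\ref{Eq:Def_E_3}), expressed through the function $\varphi_\eta$ of Eq.~(\ref{Eq:Def_F_with_subscript}), see Eq.~(\ref{Eq:Raw_Contribution_3_Pre}). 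The fourth step uses, on the set $\{F(\rho_{Y_n})=1\}$ localized in Lemma~\ref{Lem:Precise_Support_F}, the identities $\mu_\alpha\overline\Psi_\alpha(x;\cdot)=F(\rho_{Y_n})\tau_\eta(m_\eta(\rho_{Y_n})+x)\Psi_\alpha$ and $G(\rho_{Y_n})=F(\rho_{Y_n})\mathcal B(\rho_{Y_n})$; inserting the $F$-factors and replacing $G$ by $\mathcal B$ costs exactly the error $\mathcal E_4$ of Eq.~(\ref{Eq:Def_E_4}) and turns the $x$-integral into $-\frac{\mu_\alpha^2}{2}\int_{\mathbb R^3}\langle\overline\Psi_\alpha(x;\cdot),[p\cdot\mathcal B(\rho_{Y_n})-p\cdot\mathcal B(\rho_{Y_{n-1}})]^2L(v^\Lambda)\overline\Psi_\alpha(x;\cdot)\rangle_{\mathcal F}\,\mathrm dx$, which—$\overline\Psi_\alpha$ being a genuine element of $L^2(\mathbb R^3)\otimes\mathcal F$—equals $-\frac{\mu_\alpha^2}{2}\langle\overline\Psi_\alpha,[\,\cdots\,]^2L(v^\Lambda)\overline\Psi_\alpha\rangle$; applying the unitary $\mathcal T$ of Eq.~(\ref{Eq:Essentially_LLP}), which leaves the multiplier $[p\cdot\mathcal B(\rho_{Y_n})-p\cdot\mathcal B(\rho_{Y_{n-1}})]^2$ invariant by the translation covariance $\mathcal B(\rho_z)=\mathcal B(\rho)+z$ from Lemma~\ref{Lem:Pseudo_Boost_B_FB} and carries $L(v^\Lambda)$ to $L(v^\Lambda_x)$, produces the term $-\frac{\mu_\alpha^2}{2}\langle\Phi_\alpha,[\,\cdots\,]^2L(v^\Lambda_x)\Phi_\alpha\rangle$ with $\Phi_\alpha$ as in Eq.~(\ref{Eq:Def_Phi_State}).

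Collecting the four steps gives $T=-\frac{\mu_\alpha^2}{2}\langle\Phi_\alpha,[p\cdot\mathcal B(\rho_{Y_n})-p\cdot\mathcal B(\rho_{Y_{n-1}})]^2L(v^\Lambda_x)\Phi_\alpha\rangle+\sum_{j=1}^4\mathcal E_j$, and substituting this into the identity of Lemma~\ref{Lem:Rather_Explicit_Expression} together with $-2\,\mathfrak{Re}[-\frac{\mu_\alpha^2}{2}z]=\mu_\alpha^2\,\mathfrak{Re}[z]$ yields the claim, once one observes that $z:=\langle\Phi_\alpha,[p\cdot\mathcal B(\rho_{Y_n})-p\cdot\mathcal B(\rho_{Y_{n-1}})]^2L(v^\Lambda_x)\Phi_\alpha\rangle$ is real, so that $\mathfrak{Re}[z]=z$; this holds because $\Psi_\alpha\ge0$ and $\tau_\eta$, $F$, $v$ are real-valued—hence $\Phi_\alpha$ and $v^\Lambda_x$ are real-valued—while $L(v^\Lambda_x)$ and the real multiplication operator $[p\cdot\mathcal B(\rho_{Y_n})-p\cdot\mathcal B(\rho_{Y_{n-1}})]^2$ preserve real-valuedness. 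I do not expect any genuine obstacle here; the analytic content of the Section is entirely deferred to the later Lemmas~\ref{Lem:E_1}, \ref{Lem:E_2,3}, \ref{Lem:E_4} and~\ref{Lem:Most_Of_Mass}, which estimate $\mathcal E_1,\dots,\mathcal E_4$ and $\mathcal E_0$. The only point in the present bookkeeping requiring real care is the placement of the $\tau_\eta$ and $F$ factors at particle number $n$ versus $n-1$ when commuting them through $L(v^\Lambda)$, together with the correlated check that the multiplier surviving the $\mathcal T$-conjugation is translation invariant—both of which are controlled by the covariance relations already recorded in Lemma~\ref{Lem:Pseudo_Boost_B_FB} and the very definition of $\mathcal B$.
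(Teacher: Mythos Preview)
Your proposal is correct and follows essentially the same route as the paper: the corollary is indeed a pure bookkeeping statement, and the paper presents it simply as ``summarizing the results of this Subsection,'' with the reality of the main contribution noted immediately afterwards for the same reason you give (namely $\Phi_\alpha\ge 0$ and $v^\Lambda$ real-valued). Your account of the four successive decompositions, the commutation of $\tau_\eta$ through $L(v^\Lambda)$, and the passage from $\overline\Psi_\alpha$ to $\Phi_\alpha$ via $\mathcal T$ matches the paper's derivation in Eqs.~(\ref{Eq:Raw_Contribution_1_PRE}), (\ref{Eq:Raw_Contribution_2_PRE}), (\ref{Eq:Raw_Contribution_3_Pre}) and the display preceding the corollary.
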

Note that we have removed the projection onto the real part $\mathfrak{Re}$ in the essential contribution
\begin{align}
\label{Eq:Final_Essential_Contribution}
    \left\langle   \Phi_\alpha, \! \left[p \! \cdot \! \mathcal{B}(\rho_{Y_{n}}) \! - \! p \! \cdot \! \mathcal{B}(\rho_{Y_{n-1}})\right]^2  \! L(v^\Lambda_x) \Phi_\alpha  \right\rangle_{L^2\! \left(\mathbb R^3\times \underset{n\in \mathbb N}{\bigcup}\mathbb R^{3n}\right)},
\end{align}
since $\Phi_\alpha\geq 0$, see Lemma \ref{Lem:Unique_GS}, and $v^\Lambda$ is real-valued, and therefore the term in Eq.~(\ref{Eq:Final_Essential_Contribution}) is an element of $\mathbb R$. Furthermore, it is notable that the expression in Eq.~(\ref{Eq:Final_Essential_Contribution}) is an expectation value with respect to the state $\Phi_\alpha$, which is a low energy state of $\mathbb H^K$ having a localized median, and therefore expected to satisfy Bose-Einstein condensation, see Lemma \ref{Lem:BEC_Imported}. Making use of Bose-Einstein condensation, we are going to evaluate the expression in Eq.~(\ref{Eq:Final_Essential_Contribution}) in the following Section \ref{Sec:Proof_of_Theorem}.

\section{Proof of Theorem \ref{Th:Main}}
In order to verify the main Theorem \ref{Th:Main}, we will make use of the results on the regularized median $m_q$ obtained in Section \ref{Sec:Analysis_of_the_generalized_Median} and the fact that $\Phi_\alpha$ satisfies Bose-Einstein condensation, which we are going to verify in Section \ref{Sec:Bose-Einstein_Condensation}, in order to evaluate the term in Eq.~(\ref{Eq:Final_Essential_Contribution}). As a first step, we are going to identify the leading term in the squared increment in Lemma \ref{Lem:Square_B_Analysis}
\begin{align*}
    \left[p \! \cdot \! \mathcal{B}(\rho_{Y_{n}}) \! - \! p \! \cdot \! \mathcal{B}(\rho_{Y_{n-1}})\right]^2.
\end{align*}
For this purpose, we are going to use Lemma \ref{Lem:Eq:Total_Variation_Upper_Bound} from Section \ref{Sec:Analysis_of_the_generalized_Median}, which tells us that for admissible measures $\rho$ and $\rho'$
\begin{align}
\label{Eq:Differential_median_APPLIED}
       & \ \ \ \ \ \ \ \ \ \left|m_q(\chi*\rho)  +  \int H\mathrm{d}(\rho' \! - \! \rho)  -  m_q(\chi*\rho')\right|\\
\nonumber
        & \leq C \|g \! * \! (\rho' \! - \! \rho)\|_{TV}\! \left(\|g \! * \! (\rho' \! - \! \rho)\|_{TV}  +  \|g \! * \! (\rho \! - \! \rho^\mathrm{Pek})\|_{TV}\right),
\end{align}
where $\rho^\mathrm{Pek}=\rho_{\varphi^\mathrm{Pek}}$ is the Pekar measure and the function $H:=H_{q,\rho^\mathrm{Pek}}^{\chi }:\mathbb R^3\longrightarrow \mathbb R^3$ is defined in components $H=(H_1,H_2,H_3)$ according to Eq.~(\ref{Eq:H_Function}) as
\begin{align*}
      &  \ \ \ \ \   \ \ \ \ \  H_j(y): =\frac{1}{2q \int \mathrm{d}\rho^\mathrm{Pek}} \Bigg(\left[\mathds{1}_{[x^{-}_{j,q}(\rho^\mathrm{Pek}),x^{+}_{j,q}(\rho^\mathrm{Pek})]}(y_j)\, y_j\right]*\chi\\
    & \! +  \! x^+_{j,q}(\rho^\mathrm{Pek})   \left( \! \frac{1}{2} \! + \! q \! - \! f_{j,x^+_{j,q}(\rho^\mathrm{Pek})}(y) \!  \! \right) \! - \! x^-_{j,q}(\rho^\mathrm{Pek}) \left(\frac{1}{2} \! - \! q \! - \! f_{j,x^-_{j,q}(\rho^\mathrm{Pek})}(y) \!  \! \right) \!  \!  \Bigg) .
\end{align*}
Note that the bounded function $f_{j,y}$ is defined in Eq.~(\ref{Eq:Def_f_j_t}), the regularized quantiles $x^-_{j,q}$ and $x^+_{j,q}$ are introduced in Eq.~(\ref{Eq:pm_Quantile}) and the total variation between $\rho$ and $\rho'$ is defined as
\begin{align*}
   \|\rho'-\rho\|_{\mathrm{TV}}:=\sup_{\|f\|_\infty=1}\left|\int f\mathrm{d}\rho'-\int f\mathrm{d}\rho\right|.
\end{align*}
Furthermore, we have used in Eq.~(\ref{Eq:Differential_median_APPLIED}) that $m_q\! \left(\chi* \rho^\mathrm{Pek}\right)=0$, which follows from the fact that the Pekar measure $\rho^\mathrm{Pek}$ is radial. In the following recall that we have chosen the mollifier $\chi$ as a convolution $\chi=g*g$ of another mollifier $g$ and defined $g_T$ as a rescaled version of $g$ in Subsection \ref{Subsec:Conventions_and_Definitions}.

\begin{lem}
\label{Lem:Square_B_Analysis}
    Given a function $f_0\in C_\mathrm{b}^2(\mathbb R^3 , \mathbb R^3)$ and $T>0$, let us make the choice $f:=g_T * f_0$ in Eq.~(\ref{Eq:Boost_Operator}). Then there exists a constant $C>0$ such that for all $T> 0$, and $\rho$ and $\rho'$ satisfying that $(\rho^{\mathrm{Pek}},\rho,\rho')$ is an admissible triple in the sense of Lemma \ref{Lem:Eq:Total_Variation_Upper_Bound}
    \begin{align}
    \nonumber
      &  \left|\Big[ p \! \cdot \! \mathcal{B}(\rho')- p \! \cdot \! \mathcal{B}(\rho) \Big]^2-\left[\int p \! \cdot \! (H+f)\mathrm{d}(\rho'-\rho)-\sum_{j=1}^3\int \partial_{y_j} (p \! \cdot \! f)\mathrm{d}\rho^\mathrm{Pek} \! \!  \int H_j \mathrm{d}(\rho'-\rho)\right]^2\right|\\
      \label{Eq:B_diff_squared_estimate}
      & \ \ \ \ \leq C \|\rho' \! - \! \rho\|^2_{TV}\! \left(\|\rho' \! - \! \rho\|_{TV} + \|g \! * \! (\rho \! - \! \rho^\mathrm{Pek})\|_{TV} +  \|g_T \! * \! (\rho \! - \! \rho^\mathrm{Pek})\|_{TV}\right)|p|^2.
    \end{align}
\end{lem}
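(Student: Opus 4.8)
The plan is to linearize the increment $\mathcal{B}(\rho')-\mathcal{B}(\rho)$ in $\rho'-\rho$ up to a remainder that is quadratically small, and then to square. Set $m:=m_q(\chi*\rho)$, $m':=m_q(\chi*\rho')$ and
\[
\Sigma:=\|\rho'-\rho\|_{TV}+\|g*(\rho-\rho^{\mathrm{Pek}})\|_{TV}+\|g_T*(\rho-\rho^{\mathrm{Pek}})\|_{TV},
\]
and recall that $m_q(\chi*\rho^{\mathrm{Pek}})=0$ and that convolution with a probability density contracts $\|\cdot\|_{TV}$, so $\|g*\mu\|_{TV}\le\|\mu\|_{TV}$ and $\|g_T*\mu\|_{TV}\le\|\mu\|_{TV}$. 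Applying Eq.~(\ref{Eq:Differential_median_APPLIED}) with its two measure arguments chosen among $\rho^{\mathrm{Pek}},\rho,\rho'$, and using that $H=H_{q,\rho^{\mathrm{Pek}}}^{\chi}$ is $g$-smeared --- i.e.\ $\int H\,\mathrm{d}\mu$ depends on $\mu$ only through $g*\mu$, with $|\int H\,\mathrm{d}\mu|\lesssim\|g*\mu\|_{TV}$, which is visible from the explicit form in Eq.~(\ref{Eq:H_Function}) since $\chi=g*g$ --- one first records the a priori bounds
\begin{align*}
m'-m&=\int H\,\mathrm{d}(\rho'-\rho)+O\big(\|\rho'-\rho\|_{TV}\,\Sigma\big),\\
|m|&\lesssim\|g*(\rho-\rho^{\mathrm{Pek}})\|_{TV},\qquad |m'|\lesssim\|\rho'-\rho\|_{TV}+\|g*(\rho-\rho^{\mathrm{Pek}})\|_{TV},
\end{align*}
and in particular $|m'-m|\lesssim\|\rho'-\rho\|_{TV}$. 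These hold for any admissible triple $(\rho^{\mathrm{Pek}},\rho,\rho')$ --- for which each pair formed from $\rho^{\mathrm{Pek}},\rho,\rho'$ is admissible in the sense of Lemma \ref{Lem:Eq:Total_Variation_Upper_Bound} and the total masses $\rho(\mathbb{R}^3),\rho'(\mathbb{R}^3),\|\varphi^{\mathrm{Pek}}\|^2$ are $O(1)$, so that $\Sigma\lesssim1$, hence $\Sigma^2\lesssim\Sigma$.

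Next I would split
\[
\mathcal{B}(\rho')-\mathcal{B}(\rho)=(m'-m)+\int f(y-m')\,\mathrm{d}(\rho'-\rho)(y)+\int\big[f(y-m')-f(y-m)\big]\,\mathrm{d}\rho(y).
\]
In the middle term, replacing $f(y-m')$ by $f(y)$ costs at most $\|\nabla f\|_\infty\,|m'|\,\|\rho'-\rho\|_{TV}\lesssim\|\rho'-\rho\|_{TV}\,\Sigma$, where $\|\nabla f\|_\infty\le\|\nabla f_0\|_\infty$ uniformly in $T$ since $f=g_T*f_0$. In the last term I would use $f(y-m')-f(y-m)=-(m'-m)\cdot\int_0^1\nabla f\big(y-m-s(m'-m)\big)\,\mathrm{d}s$, integrate against $\mathrm{d}\rho$, and then replace $\int_0^1\!\!\int\nabla f\big(y-m-s(m'-m)\big)\,\mathrm{d}\rho(y)\,\mathrm{d}s$ by $\int\nabla f\,\mathrm{d}\rho^{\mathrm{Pek}}$: shifting the argument costs $\|D^2 f\|_\infty\big(|m|+|m'-m|\big)\lesssim\Sigma$ ($T$-uniformly, again because $f=g_T*f_0$), while changing the measure contributes $\int\nabla f\,\mathrm{d}(\rho-\rho^{\mathrm{Pek}})=\int\nabla f_0\,\mathrm{d}\big(g_T*(\rho-\rho^{\mathrm{Pek}})\big)$, of size at most $\|\nabla f_0\|_\infty\,\|g_T*(\rho-\rho^{\mathrm{Pek}})\|_{TV}$ --- this is exactly the term in Eq.~(\ref{Eq:B_diff_squared_estimate}) at the mollification scale $T$, and is the sole reason for taking $f=g_T*f_0$ rather than $f_0$ in Eq.~(\ref{Eq:Boost_Operator}). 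Inserting the expansion of $m'-m$ from the first paragraph and collecting --- each remainder now carries a factor $\|\rho'-\rho\|_{TV}$ and sees $\rho-\rho^{\mathrm{Pek}}$ only through $\|g*(\cdot)\|_{TV}$ or $\|g_T*(\cdot)\|_{TV}$ --- one arrives at
\[
\mathcal{B}(\rho')-\mathcal{B}(\rho)=\int(H+f)\,\mathrm{d}(\rho'-\rho)-\sum_{j=1}^3\int\partial_{y_j}f\,\mathrm{d}\rho^{\mathrm{Pek}}\int H_j\,\mathrm{d}(\rho'-\rho)+\mathcal{R},\qquad |\mathcal{R}|\lesssim\|\rho'-\rho\|_{TV}\,\Sigma.
\]

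To conclude, pairing with $p$ gives $p\cdot\mathcal{B}(\rho')-p\cdot\mathcal{B}(\rho)=L_p+p\cdot\mathcal{R}$, where $L_p$ is precisely the bracketed linear expression in Eq.~(\ref{Eq:B_diff_squared_estimate}). Since $H$, $f$ and $\int\partial_{y_j}f\,\mathrm{d}\rho^{\mathrm{Pek}}$ are bounded ($T$-uniformly), $|L_p|\lesssim|p|\,\|\rho'-\rho\|_{TV}$ and $|p\cdot\mathcal{R}|\lesssim|p|\,\|\rho'-\rho\|_{TV}\,\Sigma$, whence
\[
\Big|\big[p\cdot\mathcal{B}(\rho')-p\cdot\mathcal{B}(\rho)\big]^2-L_p^2\Big|\le 2|L_p|\,|p\cdot\mathcal{R}|+|p\cdot\mathcal{R}|^2\lesssim|p|^2\,\|\rho'-\rho\|_{TV}^2\,\big(\Sigma+\Sigma^2\big)\lesssim|p|^2\,\|\rho'-\rho\|_{TV}^2\,\Sigma,
\]
which is Eq.~(\ref{Eq:B_diff_squared_estimate}). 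The main obstacle is the bookkeeping in the linearization step: one must verify that \emph{every} error term simultaneously carries a factor $\|\rho'-\rho\|_{TV}$ and involves $\rho-\rho^{\mathrm{Pek}}$ only through the mollified norms $\|g*(\cdot)\|_{TV}$ or $\|g_T*(\cdot)\|_{TV}$, never through the bare total variation (which does not vanish for empirical measures approximating the absolutely continuous $\rho^{\mathrm{Pek}}$). This is what dictates the two structural inputs --- the $g$-smeared form of $H$, needed to control $m$ and $m'$, and the choice $f=g_T*f_0$, needed for the change-of-measure estimate in the last term.
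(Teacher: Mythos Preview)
Your proof is correct and follows essentially the same approach as the paper: linearize $\mathcal{B}(\rho')-\mathcal{B}(\rho)$ via Lemma~\ref{Lem:Eq:Total_Variation_Upper_Bound} for the median increment and Taylor expansion together with $f=g_T*f_0$ for the $f$-terms, collect all remainders into a single $O(\|\rho'-\rho\|_{TV}\,\Sigma)$ (the paper's $O_*$), and then square using $\Sigma\lesssim 1$. The only cosmetic difference is that the paper's three-term splitting of $\mathcal{B}(\rho')-\mathcal{B}(\rho)$ carries $m$ in the middle term and $\rho'$ in the last, whereas yours carries $m'$ and $\rho$---an equally valid decomposition leading to identical estimates.
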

\begin{proof}
   Let us first write $\mathcal{B}(\rho')-  \mathcal{B}(\rho)$ as
   \begin{align}
   \nonumber
      & \mathcal{B}(\rho')-   \mathcal{B}(\rho)=m_q(\chi*\rho') -  m_q(\chi*\rho)+\int f(y-m_q(\chi*\rho))\mathrm{d}(\rho'-\rho)\\
      \label{Eq:B_split}
       & \ \ \ \  +\int \big[f(y-m_q(\chi*\rho'))-f(y-m_q(\chi*\rho))\big]\mathrm{d}\rho'.
   \end{align}
   Furthermore, we introduce for the sake of convenience the Landau notation
   \begin{align*}
       O_*:=O\! \left(\|\rho' \! - \! \rho\|_{TV}\! \left(\|\rho' \! - \! \rho\|_{TV}  +  \|g \! * \! (\rho \! - \! \rho^\mathrm{Pek})\|_{TV} + \|g_T\! * \! (\rho \! - \! \rho^\mathrm{Pek})\|_{TV}\right)\right).
   \end{align*}
   By Eq.~(\ref{Eq:Differential_median_APPLIED}), and the fact that taking the convolution with $g$ is a contraction of the total variation, i.e. $\|g*(\nu_1-\nu_2)\|_{\mathrm{TV}}\leq \|\nu_1-\nu_2\|_{\mathrm{TV}}$, we immediately obtain
   \begin{align}
   \label{Eq:Repeat_M_diff}
       m_q(\chi*\rho') -   m_q(\chi*\rho)=\int H \mathrm{d}(\rho'-\rho)+O_*.
   \end{align}
   Since any measure $\nu$ in an admissible triple satisfies $\|\nu\|_{\mathrm{TV}}=\int \mathrm{d}\nu\leq D$, and $m_q\! \left(\chi*\rho^\mathrm{Pek}\right)=0$, we furthermore have $\left|m_q(\chi*\rho)\right|\leq C\|g \! * \! \left(\rho \! - \! \rho^\mathrm{Pek}  \right) \! \|_{\mathrm{TV}}$ for a suitable constant $C>0$. Hence
   \begin{align*}
     & \ \  \ \  \left| \int f(y-m_q(\chi*\rho))\mathrm{d}(\rho'-\rho)-\int f\mathrm{d}(\rho'-\rho)\right|\\
      & =  \left|\int_0^1 \int_{\mathbb R^3}m_q(\chi*\rho) \! \cdot \! \nabla f(y-sm_q(\chi*\rho))\mathrm{d}(\rho'-\rho)(y)\mathrm{d}s\right|\\
      &\leq C\|g \! * \! \left(\rho \! - \! \rho^\mathrm{Pek}  \right) \! \|_{\mathrm{TV}} \|\nabla f\|_{\infty} \|\rho'-\rho\|_{\mathrm{TV}}=O_*.
   \end{align*}
   Regarding the final term in Eq.~(\ref{Eq:B_split}) let us use that for a suitable constant $C>0$
   \begin{align*}
     \left|m_q(\chi*\rho)\right|+\left|m_q(\chi*\rho')\right|\leq C\left(\|\rho'-\rho\|_\mathrm{TV}+\|g \! * \! \left(\rho \! - \! \rho^\mathrm{Pek}  \right) \! \|_{\mathrm{TV}}\right),  
   \end{align*}
   as well as Eq.~(\ref{Eq:Repeat_M_diff}) in order to obtain
   \begin{align*}
      & \ \ \ \ \int \big[f(y-m_q(\chi*\rho'))-f(y-m_q(\chi*\rho))\big]\mathrm{d}\rho'\\
        & =\int \big[f_0(y-m_q(\chi*\rho'))-f_0(y-m_q(\chi*\rho))\big]\mathrm{d}(g_T*\rho')\\
      & =-\sum_{j=1}^3\left(m_q(\chi*\rho')-m_q(\chi*\rho)\right)_j\int_0^1 \partial_{y_j} f_0(y-sm_q(\chi*\rho')-(1-s)m_q(\chi*\rho))\mathrm{d}(g_T* \rho')\\
      & = -\sum_{j=1}^3\int H_j \mathrm{d}(\rho'-\rho)\int_0^1 \partial_{y_j} f_0(y-sm_q(\chi*\rho')-(1-s)m_q(\chi*\rho))\mathrm{d}(g_T* \rho')+O_*\\
       & = -\sum_{j=1}^3\int H_j \mathrm{d}(\rho'-\rho)\int_0^1 \partial_{y_j} f_0(y-sm_q(\chi*\rho')-(1-s)m_q(\chi*\rho))\mathrm{d}(g_T* \rho^\mathrm{Pek})+O_*\\
        & = -\sum_{j=1}^3\int H_j \mathrm{d}(\rho'-\rho)\int_0^1 \partial_{y_j} f_0(y)\mathrm{d}(g_T* \rho^\mathrm{Pek})+O_*\\
        & = -\sum_{j=1}^3\int H_j \mathrm{d}(\rho'-\rho)\int_0^1 \partial_{y_j} f(y)\mathrm{d} \rho^\mathrm{Pek}+O_*.
   \end{align*}
   Summarizing what we have so far yields
   \begin{align}
   \label{Eq:B_diff_no_square}
       \mathcal{B}(\rho')-   \mathcal{B}(\rho)=\int  (H+f)\mathrm{d}(\rho'-\rho)-\sum_{j=1}^3\int \partial_{y_j}  f\mathrm{d}\rho^\mathrm{Pek} \! \!  \int H_j \mathrm{d}(\rho'-\rho)+O_*.
   \end{align}
   Since the terms $\int  (H+f)\mathrm{d}(\rho'-\rho)$ and $\int \partial_{y_j}  f\mathrm{d}\rho^\mathrm{Pek} \! \!  \int H_j \mathrm{d}(\rho'-\rho)$ are bounded by $C\|\rho'-\rho\|_{\mathrm{TV}}$, the left hand side of Eq.~(\ref{Eq:B_diff_squared_estimate}) is bounded for a suitable constant $C>0$ by
   \begin{align*}
       & \ \ \ C\|\rho' \! - \! \rho\|_{TV}^2 \! \left(\|\rho' \! - \! \rho\|_{TV}   \! +   \|g \! * \! (\rho \! - \! \rho^\mathrm{Pek})\|_{TV}+\!  \|g_T \! * \! (\rho \! - \! \rho^\mathrm{Pek})\|_{TV}\right) \\
       &  \ \ \ \ \times \left(\|\rho' \! - \! \rho\|_{TV}   \! +  \|g \! * \! (\rho \! - \! \rho^\mathrm{Pek})\|_{TV} + \!  \|g_T \! * \! (\rho \! - \! \rho^\mathrm{Pek})\|_{TV} \! + \! 1\right) \!  |p|^2\\
       &  \leq C(4D+1)\|\rho' \! - \! \rho\|_{TV}^2\left(\|\rho' \! - \! \rho\|_{TV}   \! +  \|g \! * \! (\rho \! - \! \rho^\mathrm{Pek})\|_{TV} +  \!  \|g_T \! * \! (\rho \! - \! \rho^\mathrm{Pek})\|_{TV}\right)|p|^2,
   \end{align*}
   where we have used that $\|\nu\|_{\mathrm{TV}}=\|g_T* \nu\|_{\mathrm{TV}}=\|g* \nu\|_{\mathrm{TV}}\leq D$ for $\nu\in \{\rho_0,\rho,\rho'\}$.
\end{proof}

In order to successfully apply Lemma \ref{Lem:Square_B_Analysis} for the analysis of the term
\begin{align}
\label{Eq:Main_Contribution_COPY_IN_IV}
    \left\langle   \Phi_\alpha, \! \left[p \! \cdot \! \mathcal{B}(\rho_{Y_{n}}) \! - \! p \! \cdot \! \mathcal{B}(\rho_{Y_{n-1}})\right]^2  \! L(v^\Lambda_x) \Phi_\alpha  \right\rangle_{L^2\! \left(\mathbb R^3\times \underset{n\in \mathbb N}{\bigcup}\mathbb R^{3n}\right)},
\end{align}
it is imperative to control the expectation value of the total variation
\begin{align}
\label{Eq:BEC_in_the_Sense_of_TV}
    \|g_T*(\rho_{Y_n}-\rho^\mathrm{Pek})\|
\end{align}
with respect to the state $\Phi_\alpha$. Estimates for the expression in Eq.~(\ref{Eq:BEC_in_the_Sense_of_TV}), and other results concerning the issue of Bose-Einstein condensation and the closeness of $\Phi_\alpha$ to a product state $\psi^\mathrm{Pek}\otimes \Xi_\varphi$, see Eq.~(\ref{Eq:Product_with_coherent_state}), are being verified in Section \ref{Sec:Bose-Einstein_Condensation} and used in the subsequent Theorem \ref{Sec:Proof_of_Theorem}, which identifies the leading order behaviour of the expression in Eq.~(\ref{Eq:Main_Contribution_COPY_IN_IV}). For the convenience of the reader, we are going to state the main results of Section \ref{Sec:Bose-Einstein_Condensation}, which have been verified in Lemma \ref{Lem:BEC_Imported}, Lemma \ref{Lem:Estimate_Convoluted_Total_Variation} and Lemma \ref{Lem:Electron_Wave_Function} respectively
\begin{align}
   \label{Eq:BEC_Instance_I}
   &   \ \ \ \ \   \ \ \ \ \  \left\langle \Phi_\alpha, W^{-1}_{\varphi^{\mathrm{Pek}}}\mathcal{N}W_{\varphi^{\mathrm{Pek}}} \, \Phi_\alpha\right\rangle_{L^2(\mathbb R^3)\otimes \mathcal{F}}\leq C\alpha^{-\frac{2}{29}},\\
   \label{Eq:BEC_Instance_III}
  & \sum_{n=0}^\infty \int_{\mathbb R^{3n}}\|g_T \! * \! (\rho_{Y_n}-\rho^{\mathrm{Pek}})\|_{\mathrm{TV}} \! \left(\int_{\mathbb R^3} |\Phi_\alpha(x;Y_n)|^2 \mathrm{d}x\right)\! \mathrm{d}Y_n\leq \alpha^{-\epsilon}T^{\frac{3}{2}},\\
        \label{Eq:BEC_Instance_II}
   & \ \ \ \ \   \ \ \ \ \  \ \ \ \ \   \ \ \ \ \  \ \ \ \   \|Q^\mathrm{Pek}_*\Phi_\alpha \|\leq C\alpha^{-\epsilon} ,
\end{align}
where $Q^\mathrm{Pek}_*:=Q^\mathrm{Pek}\otimes 1_\mathcal{F}$, $T>0$ and $C,\epsilon>0$ are suitable constants. \\

\label{Sec:Proof_of_Theorem}
\begin{thm}
\label{Th:Quadratic_Upper_Bound}
 Let $H$ be as in Eq.~(\ref{Eq:Differential_median_APPLIED}), and define on $L^2\! \left(\mathbb R^3\right)$ the linear operator 
    \begin{align*}
        \mathcal{L}(X):=X+2\sum_{j=1}^3 \left\langle \partial_{y_j} \varphi^\mathrm{Pek},X\right\rangle \varphi^\mathrm{Pek}H_j.
    \end{align*}
    Given a bounded function $f_0\in C_\mathrm{b}^2(\mathbb R^3 , \mathbb R^3)$, let us furthermore choose $f:=g_T * f_0$ in Eq.~(\ref{Eq:Boost_Operator}). Then there exist constants $\epsilon>0 $ and $C>0$ such that for $T\geq 1$ and $\Lambda\geq 1$
    \begin{align*}
      &  \ \   \ \     \ \  \ \ \left\langle \Phi_\alpha, \left[p \! \cdot \! \mathcal{B}(\rho_{Y_n})-p \! \cdot \! \mathcal{B}(\rho_{Y_{n-1}})\right]^2 L(v^\Lambda_{x}) \Phi_\alpha \right\rangle_{L^2\! \left(\mathbb{R}^3\times \underset{n\in \mathbb N}{\bigcup}  \mathbb{R}^{3 n}\right)}\\
         &\leq \alpha^{-4}\left\|\sum_{i=1}^3 p_i\mathcal{L}\! \left(\varphi^\mathrm{Pek}f_i\right)+\sum_{i=1}^3 p_i\varphi^\mathrm{Pek} H_i\right\|^2+C\left(\Lambda^{\frac{1}{2}}T^{\frac{3}{2}}\alpha^{-\epsilon}+\Lambda^{-\frac{1}{2}}\right)\alpha^{-4}|p|^2.
    \end{align*}
\end{thm}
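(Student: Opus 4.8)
The plan is to linearize the squared boost-increment by means of Lemma~\ref{Lem:Square_B_Analysis}, to absorb the Taylor residue into the error using the condensation estimate in the regularized total variation norm, and then to evaluate the remaining expression (linear in $\rho_{Y_n}-\rho_{Y_{n-1}}$) by conjugating with the Weyl operator $W_{\varphi^\mathrm{Pek}}$ and invoking the other two condensation bounds. Since $\rho_{Y_n}-\rho_{Y_{n-1}}=\alpha^{-2}\delta_{y_n}$, the bracket on the left of~(\ref{Eq:B_diff_squared_estimate}) equals $\alpha^{-2}\Theta_p(y_n)$ with the bounded, $p$-linear function
\[
\Theta_p(y):=p\cdot(H+f)(y)-\sum_{j=1}^3\left(\int \partial_{y_j}(p\cdot f)\,\mathrm d\rho^\mathrm{Pek}\right)H_j(y).
\]
As $\Phi_\alpha$ carries the factor $F(\rho_{Y_n})$ and $\Phi_\alpha(x;Y_{n-1})$ carries $F(\rho_{Y_{n-1}})$, both $\rho_{Y_n}$ and $\rho_{Y_{n-1}}$ lie in $\Omega_{\sigma+\kappa}$ on the support relevant to the bilinear form, so $(\rho^\mathrm{Pek},\rho_{Y_{n-1}},\rho_{Y_n})$ is admissible for $\alpha$ large, and Lemma~\ref{Lem:Square_B_Analysis} gives pointwise $[p\cdot\mathcal B(\rho_{Y_n})-p\cdot\mathcal B(\rho_{Y_{n-1}})]^2=\alpha^{-4}\Theta_p(y_n)^2+D(Y_n)$ with $|D(Y_n)|\lesssim\alpha^{-4}(\alpha^{-2}+\|g*(\rho_{Y_{n-1}}-\rho^\mathrm{Pek})\|_\mathrm{TV}+\|g_T*(\rho_{Y_{n-1}}-\rho^\mathrm{Pek})\|_\mathrm{TV})|p|^2=:S(\rho_{Y_{n-1}})$.

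For the residue I would estimate $|\langle\Phi_\alpha,D(Y_n)L(v^\Lambda_x)\Phi_\alpha\rangle|\le\||D|^{1/2}\Phi_\alpha\|\,\||D|^{1/2}L(v^\Lambda_x)\Phi_\alpha\|$. On $\mathrm{supp}(\Phi_\alpha)$ one has $\mathcal N=\tfrac{n}{\alpha^2}\le\|\varphi^\mathrm{Pek}\|^2+\sigma+\kappa$, so integrating out the created coordinate gives $\||D|^{1/2}L(v^\Lambda_x)\Phi_\alpha\|^2\lesssim\|v^\Lambda\|^2\langle\Phi_\alpha,S(\rho_{Y_n})\Phi_\alpha\rangle$; bounding $\|g_T*(\rho_{Y_{n-1}}-\rho^\mathrm{Pek})\|_\mathrm{TV}\le\|g_T*(\rho_{Y_n}-\rho^\mathrm{Pek})\|_\mathrm{TV}+\alpha^{-2}$ and invoking~(\ref{Eq:BEC_Instance_III}) (with the given $T$, and with $T=1$ for the $g=g_1$ term) yields $\langle\Phi_\alpha,S(\rho_{Y_n})\Phi_\alpha\rangle\lesssim\alpha^{-4}|p|^2\alpha^{-\epsilon}T^{3/2}$. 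Since also $\||D|^{1/2}\Phi_\alpha\|^2=\langle\Phi_\alpha,|D|\Phi_\alpha\rangle\lesssim\alpha^{-4}|p|^2\alpha^{-\epsilon}T^{3/2}$ by the same estimate, and $\|v^\Lambda\|\le C\Lambda^{1/2}$, the residue is $\lesssim\Lambda^{1/2}T^{3/2}\alpha^{-\epsilon}\cdot\alpha^{-4}|p|^2$.

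It remains to handle $\alpha^{-4}\langle\Phi_\alpha,\Theta_p(y_n)^2L(v^\Lambda_x)\Phi_\alpha\rangle$. Using $\Theta_p(y_n)^2L(v^\Lambda_x)=L(\Theta_p^2v^\Lambda_x)$ and $\langle\Phi_\alpha,L(h_x)\Phi_\alpha\rangle=\langle\Phi_\alpha,a^*(h_x)\Phi_\alpha\rangle$, I would conjugate with $W_{\varphi^\mathrm{Pek}}$: with $\widetilde\Phi:=W_{\varphi^\mathrm{Pek}}\Phi_\alpha$ and $W_{\varphi^\mathrm{Pek}}a^*(h)W_{\varphi^\mathrm{Pek}}^{-1}=a^*(h)+\langle\varphi^\mathrm{Pek},h\rangle$,
\[
\langle\Phi_\alpha,a^*(\Theta_p^2v^\Lambda_x)\Phi_\alpha\rangle=\langle\widetilde\Phi,a^*(\Theta_p^2v^\Lambda_x)\widetilde\Phi\rangle+\int_{\mathbb R^3}\big((\varphi^\mathrm{Pek}\Theta_p^2)*v^\Lambda\big)(x)\,\|\Phi_\alpha(x;\cdot)\|_\mathcal F^2\,\mathrm dx .
\]
The fluctuation term is bounded by $\|a(\Theta_p^2v^\Lambda_x)\widetilde\Phi\|\le\sup_x\|\Theta_p^2v^\Lambda_x\|\,\|\mathcal N^{1/2}\widetilde\Phi\|\lesssim\Lambda^{1/2}|p|^2\langle\Phi_\alpha,W_{\varphi^\mathrm{Pek}}^{-1}\mathcal NW_{\varphi^\mathrm{Pek}}\Phi_\alpha\rangle^{1/2}\lesssim\Lambda^{1/2}|p|^2\alpha^{-1/29}$ by~(\ref{Eq:BEC_Instance_I}). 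In the second term I would replace $\|\Phi_\alpha(x;\cdot)\|_\mathcal F^2$ by $|\psi^\mathrm{Pek}(x)|^2$; decomposing $\Phi_\alpha=\psi^\mathrm{Pek}\otimes\Phi_\alpha^\mathcal F+Q^\mathrm{Pek}_*\Phi_\alpha$ and using $\|(\varphi^\mathrm{Pek}\Theta_p^2)*v^\Lambda\|_\infty\le\|\varphi^\mathrm{Pek}\Theta_p^2\|\,\|v^\Lambda\|\lesssim\Lambda^{1/2}|p|^2$ together with~(\ref{Eq:BEC_Instance_II}), this replacement costs $\lesssim\Lambda^{1/2}\alpha^{-\epsilon}|p|^2$. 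Finally, since $v^\Lambda$ is radial, the resulting integral equals $\langle\varphi^\mathrm{Pek}\Theta_p^2,v^\Lambda*|\psi^\mathrm{Pek}|^2\rangle$, and $v^\Lambda*|\psi^\mathrm{Pek}|^2=\varphi^\mathrm{Pek}-\chi(|\nabla|>\Lambda)\varphi^\mathrm{Pek}$ by $v*|\psi^\mathrm{Pek}|^2=\varphi^\mathrm{Pek}$ from Lemma~\ref{Lem:Semiclassical_objects_properties}, so this is $\|\varphi^\mathrm{Pek}\Theta_p\|^2-\langle\varphi^\mathrm{Pek}\Theta_p^2,\chi(|\nabla|>\Lambda)\varphi^\mathrm{Pek}\rangle$, the last inner product being $\lesssim\Lambda^{-1/2}|p|^2$ by~(\ref{Eq:High_Momentum_varphi}).

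To identify the constant, integration by parts gives $2\langle\partial_{y_j}\varphi^\mathrm{Pek},\varphi^\mathrm{Pek}f_i\rangle=\int\partial_{y_j}|\varphi^\mathrm{Pek}|^2f_i=-\int\partial_{y_j}f_i\,\mathrm d\rho^\mathrm{Pek}$, hence $\mathcal L(\varphi^\mathrm{Pek}f_i)=\varphi^\mathrm{Pek}f_i-\sum_j(\int\partial_{y_j}f_i\,\mathrm d\rho^\mathrm{Pek})\varphi^\mathrm{Pek}H_j$, and therefore $\sum_i p_i\mathcal L(\varphi^\mathrm{Pek}f_i)+\sum_i p_i\varphi^\mathrm{Pek}H_i=\varphi^\mathrm{Pek}\Theta_p$, so that $\|\varphi^\mathrm{Pek}\Theta_p\|^2$ is exactly the norm in the statement. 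Collecting the four error contributions ($D$-residue, fluctuation, electron-density replacement, ultraviolet tail) and using $T\ge1$ gives the claimed bound. I expect the control of the $D$-residue to be the main obstacle: $D(Y_n)$ genuinely couples the $n$- and $(n-1)$-particle sectors through $L(v^\Lambda_x)$, so the Cauchy--Schwarz estimate closes only because $\mathcal N$ is a priori bounded on $\mathrm{supp}(\Phi_\alpha)$ and because~(\ref{Eq:BEC_Instance_III}) is available precisely in the regularized total-variation norm governing Lemma~\ref{Lem:Square_B_Analysis}; the remaining steps are then bookkeeping of errors controlled by~(\ref{Eq:BEC_Instance_I}), (\ref{Eq:BEC_Instance_II}) and Lemma~\ref{Lem:Semiclassical_objects_properties}.
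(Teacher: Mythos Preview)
Your proposal is correct and follows essentially the same route as the paper's proof: linearize the squared increment via Lemma~\ref{Lem:Square_B_Analysis}, control the Taylor residue through the total-variation condensation estimate~(\ref{Eq:BEC_Instance_III}), rewrite the main term as $a^*(\Theta_p^2 v^\Lambda_x)$, peel off $\langle\varphi^{\mathrm{Pek}},\cdot\rangle$ via the Weyl shift and~(\ref{Eq:BEC_Instance_I}), replace the electron density via~(\ref{Eq:BEC_Instance_II}), and finish with $|\psi^{\mathrm{Pek}}|^2*v^\Lambda=\chi(|\nabla|\le\Lambda)\varphi^{\mathrm{Pek}}$. The only point you gloss over is that admissibility in Lemma~\ref{Lem:Eq:Total_Variation_Upper_Bound} requires, in addition to the $\Omega_{\sigma+\kappa}$ conditions coming from the factor $F$, the median localization $|m_{\tilde q}(\rho_{Y_n})-m_{\tilde q}(\rho_{Y_{n-1}})|+|m_{\tilde q}(\rho_{Y_{n-1}})-m_{\tilde q}(\rho^{\mathrm{Pek}})|\le R'$; this is supplied by the factor $\tau_\eta(m_\eta(\rho_{Y_n}))$ in $\Phi_\alpha$ (and its companion on the $Y_{n-1}$ side), which you did not invoke.
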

\begin{proof}
    Let us define the auxiliary function $\zeta:\mathbb R^3\longrightarrow \mathbb R$
    \begin{align}
    \label{Eq:Definition_zeta_in_proof}
        \zeta(y): & = \left[p \! \cdot \! H(y)+p \! \cdot \! f(y)-\sum_{j=1}^3\int \partial_{y_j} (p \! \cdot \! f)\mathrm{d}\rho^\mathrm{Pek}  H_j(y)\right]^2, \\
        \nonumber
         \theta(Y_n): & =\left[p \! \cdot \! \mathcal{B}(\rho_{Y_n})-p \! \cdot \! \mathcal{B}(\rho_{Y_{n-1}})\right]^2-\alpha^{-4}\zeta(y_n),
    \end{align}
       which allow us to write
    \begin{align}
    \label{Eq:Trivial_Decompostion_zeta_and_theta}
    & \ \  \ \  \ \ \left\langle \Phi_\alpha, \left[p \! \cdot \! \mathcal{B}(\rho_{Y_n})-p \! \cdot \! \mathcal{B}(\rho_{Y_{n-1}})\right]^2 L(v^\Lambda_{x}) \Phi_\alpha \right\rangle_{L^2\! \left(\mathbb{R}^3\times \underset{n\in \mathbb N}{\bigcup}  \mathbb{R}^{3 n}\right)}\\
    \nonumber
        &  =\left\langle \Phi_\alpha, \zeta(y_n) L(v^\Lambda_{x}) \Phi_\alpha \right\rangle_{L^2\! \left(\mathbb{R}^3\times \underset{n\in \mathbb N}{\bigcup}  \mathbb{R}^{3 n}\right)}+\left\langle \Phi_\alpha, \theta(Y_n) L(v^\Lambda_{x}) \Phi_\alpha \right\rangle_{L^2\! \left(\mathbb{R}^3\times \underset{n\in \mathbb N}{\bigcup}  \mathbb{R}^{3 n}\right)}.
    \end{align}
    Since the differences of the empirical measures satisfies $\rho_{Y_n}-\rho_{Y_{n-1}}=\alpha^{-2}\delta_{y_n}$, we clearly have
    \begin{align*}
        \left[\int p \! \cdot \! (H+f)\mathrm{d}(\rho_{Y_n}-\rho_{Y_{n-1}})-\sum_{j=1}^3\int \partial_{y_j} (p \! \cdot \! f)\mathrm{d}\rho^\mathrm{Pek} \! \!  \int H_j \mathrm{d}(\rho_{Y_n}-\rho_{Y_{n-1}})\right]^2=\alpha^{-4}\zeta(y_n).
    \end{align*}
  Furthermore, $\|\rho_{Y_n}-\rho_{Y_{n-1}}\|_{\mathrm{TV}}=\alpha^{-2}$. Consequently, we obtain by Lemma \ref{Lem:Square_B_Analysis}, that there exists a constant $C>0$ such that 
    \begin{align}
    \nonumber
      &    \left|\theta(Y_n)\right|\leq C\alpha^{-4}|p|^2\! \left(\|g \! * \! \left(\rho_{Y_n}-\rho^{\mathrm{Pek}}\right) \! \|_{\mathrm{TV}}+\|g_T \! * \! \left(\rho_{Y_n}-\rho^{\mathrm{Pek}}\right) \! \|_{\mathrm{TV}}+\alpha^{-2}\right)\\
      \nonumber
      & \ \leq C\alpha^{-4}|p|^2 \sqrt{\left(\|g \! * \! \left(\rho_{Y_{n-1}}-\rho^{\mathrm{Pek}}\right) \! \|_{\mathrm{TV}}+\|g_T \! * \! \left(\rho_{Y_{n-1}}-\rho^{\mathrm{Pek}}\right) \! \|_{\mathrm{TV}}+2\alpha^{-2}\right)}\\
      \label{Eq:Estimate_for_admissible_Triples}
        & \ \ \ \    \times \sqrt{\left(\|g \! * \! \left(\rho_{Y_n}-\rho^{\mathrm{Pek}}\right) \! \|_{\mathrm{TV}}+\|g_T \! * \! \left(\rho_{Y_n}-\rho^{\mathrm{Pek}}\right) \! \|_{\mathrm{TV}}+2\alpha^{-2}\right)},
    \end{align}
    for all $Y_n\in \mathbb R^{3n}$ satisfying that $(\rho^\mathrm{Pek},\rho_{Y_{n-1}},\rho_{Y_n})$ is an admissible triple in the sense of Lemma \ref{Lem:Eq:Total_Variation_Upper_Bound}. In the following we want to show that $(\rho^\mathrm{Pek},\rho_{Y_{n-1}},\rho_{Y_n})$ is an admissible triple for all $Y_n\in A_n$, where $A_n$ is the set of all $Y_n\in \mathbb R^{3n}$ such that there exists a $x\in \mathbb R^3$ with
    \begin{align}
    \label{Eq:Two_Sided_Support}
        (x,Y_{n-1}),(x,Y_n)\in \mathrm{supp}(\Phi_\alpha),
    \end{align}
 with the concrete choices
 \begin{align*}
     \tilde \beta: & =\underset{|y_j-x^+_{j,q}(\rho^\mathrm{Pek})|\leq \ell}{\int}\mathrm{d}\rho^\mathrm{Pek}(y)=\underset{|y_j-x^-_{j,q}(\rho^\mathrm{Pek})|\leq \ell}{\int}\mathrm{d}\rho^\mathrm{Pek}(y),\\
     \tilde d: & = \|\varphi^\mathrm{Pek}\|^2-\sigma-\kappa,\\
     \tilde D: & =\|\varphi^\mathrm{Pek}\|^2+\sigma+\kappa,\\
     \tilde \delta: & =\delta_* +\sigma+\kappa,\\
        \tilde q:  & =q,\ \ \ \tilde R:  =R_*,\ \ \  \tilde R':  =3,
 \end{align*}
 for the constants in Lemma \ref{Lem:Eq:Total_Variation_Upper_Bound}, where $\kappa$, $R_*$, $\delta_*$, $\sigma$ and $q$ are chosen according to Subsection \ref{Subsec:Conventions_and_Definitions}. By our definition of $\tilde \beta$ and the fact that $\sigma,\kappa>0$ it is clear that $\nu:=\rho^\mathrm{Pek}$ satisfies 
    \begin{align}
    \label{Eq:COPY_Mass_Conditions}
        \tilde d \leq \int\mathrm{d}\nu\leq \tilde D,\ \ \  \underset{|x-y|>\tilde R}{\int \int}\mathrm{d}\nu\mathrm{d}\nu\leq \tilde \delta,
    \end{align}
and $\underset{|y_j-x^\pm_{j,q}(\rho^\mathrm{Pek})|\leq \ell}{\int}\mathrm{d}\rho^{\mathrm{Pek}}(y)\geq \tilde \beta$. By the support properties of $\Phi_\alpha$ we furthermore have $Y_n\in \Omega_{\sigma+\kappa}$ and $Y_{n-1}\in \Omega_{\sigma+\kappa}$ for $Y_{n}\in A_n$, and therefore Eq.~(\ref{Eq:COPY_Mass_Conditions}) holds for $\nu\in \{\rho_{Y_n},\rho_{Y_{n-1}}\}$. We also note that $\tau_\eta(m_\eta(\nu))\neq 0$ for $\nu\in \{\rho_{Y_n},\rho_{Y_{n-1}}\}$ by the support properties of $\Phi_\alpha$, and hence $|m_\eta(\nu)|\leq \eta\leq 1$, see the definition of $\tau_\eta(\cdot )$ below Eq.~(\ref{Eq:Localized_State_LLP}). Since $m_q\! \left(\rho^\mathrm{Pek}\right)=0$ we obtain
\begin{align*}
    \left|m_\eta\! \left(\rho_{Y_n}\right) - m_\eta\! \left(\rho_{Y_{n-1}}\right)\right|+\left| m_\eta\! \left(\rho_{Y_{n-1}}\right)-m_\eta\! \left(\rho^{\mathrm{Pek}}\right)\right|\leq 3=\tilde R',
\end{align*}
and conclude that $(\rho^\mathrm{Pek},\rho_{Y_{n-1}},\rho_{Y_n})$ is an admissible triple for $\alpha$ large enough such that
\begin{align*}
 \eta=\alpha^{-\beta}\leq \tilde q=q.   
\end{align*}
Especially, Eq.~(\ref{Eq:Estimate_for_admissible_Triples}) holds for all $Y_n\in A_n$. Let us furthermore define
    \begin{align*}
        \widetilde \Phi_\alpha(Y_n):=\sqrt{\left(\|g \! * \! \left(\rho_{Y_n}-\rho^{\mathrm{Pek}}\right) \! \|_{\mathrm{TV}}+\|g_T \! * \! \left(\rho_{Y_n}-\rho^{\mathrm{Pek}}\right) \! \|_{\mathrm{TV}}+2\alpha^{-2}\right)}\left|\Phi_\alpha(Y_n)\right|.
    \end{align*}
    By the support properties of $\Phi_\alpha$ we have $\Phi_\alpha(x,Y_n)=0$ in case $n>(\|\varphi^\mathrm{Pek}\|^2+\sigma+\kappa)\alpha^2:=D_\alpha$ and hence we obtain together with the fact that Eq.~(\ref{Eq:Estimate_for_admissible_Triples}) holds for all $Y_n\in A_n$
    \begin{align}
    \label{Eq:Estimate_theta_1}
      & \ \ \ \ \   \ \ \ \ \    \ \ \ \ \    \ \ \ \ \    \left|  \left\langle \Phi_\alpha,\theta(Y_n) L(v^\Lambda_{x}) \Phi_\alpha \right\rangle_{L^2\! \left(\mathbb{R}^3\times \underset{n\in \mathbb N}{\bigcup}  \mathbb{R}^{3 n}\right)}  \right| \\
      \nonumber
      &   \ \ \ \ \    \ \ \ \ \  \leq  \!  \!  \! \sum_{n=1}^{D_\alpha} \!  \! \frac{\sqrt{n}}{\alpha} \!  \! \int_{\mathbb R^3}\int_{A_n}  \!  \! \left|\Phi_\alpha(x,Y_n)\right| \left|\Phi_\alpha(x,Y_{n-1})\right| \left|\theta(Y_n)\right| \left|v^\Lambda_{x}(y_n)\right| \mathrm{d}Y_n\mathrm{d}x\\
      \nonumber
        &  \lesssim  \alpha^{-4}|p|^2 \!  \sum_{n=1}^{D_\alpha}         \int_{\mathbb R^{3n}}  \!  \!  \widetilde \Phi_\alpha(Y_n)  \widetilde \Phi_\alpha(Y_{n-1}) \! \left|v^\Lambda_{x}(y_n)\right|\mathrm{d}Y_n  \! \leq  \! \sqrt{D+\kappa}\, C \alpha^{-4}|p|^2 \|v^\Lambda\| \| \widetilde \Phi_\alpha\|^2 \! \! .
    \end{align}
    Note that $\|v^\Lambda\|\lesssim \Lambda^{\frac{1}{2}}$, see Lemma \ref{Lem:Semiclassical_objects_properties}, and by Eq.~(\ref{Eq:BEC_Instance_III}) there exists an $\epsilon>0$ such that
    \begin{align}
    \nonumber
         \| \widetilde \Phi_\alpha\|^2 & = \left\langle \Phi_\alpha,\left(\|g \! * \! \left(\rho_{Y_n}-\rho^{\mathrm{Pek}}\right) \! \|_{\mathrm{TV}}+\|g_T \! * \! \left(\rho_{Y_n}-\rho^{\mathrm{Pek}}\right) \! \|_{\mathrm{TV}}+2\alpha^{-2}\right)\Phi_{\alpha}\right\rangle_{L^2\! \left(\mathbb{R}^3\times \underset{n\in \mathbb N}{\bigcup}  \mathbb{R}^{3 n}\right)}\\
            \label{Eq:Estimate_theta_2}
         & \lesssim \alpha^{-\epsilon} + \alpha^{-\epsilon} T^{\frac{3}{2}}+\alpha^{-2}\lesssim  \alpha^{-\epsilon} T^{\frac{3}{2}}.
    \end{align}
Combining Eq.~(\ref{Eq:Trivial_Decompostion_zeta_and_theta}) with the estimates in Eq.~(\ref{Eq:Estimate_theta_1}) and Eq.~(\ref{Eq:Estimate_theta_2}) yields for a suitable $C>0$
    \begin{align}
    \label{Eq:Summarize_zeta_1}
           &   \  \   \ \ \left\langle \Phi_\alpha, \left[p \! \cdot \! \mathcal{B}(\rho_{Y_n})-p \! \cdot \! \mathcal{B}(\rho_{Y_{n-1}})\right]^2 L(v^\Lambda_{x}) \Phi_\alpha \right\rangle_{L^2\! \left(\mathbb{R}^3\times \underset{n\in \mathbb N}{\bigcup}  \mathbb{R}^{3 n}\right)}\\
           \nonumber
         &  \leq \alpha^{-4}\left\langle \Phi_\alpha,\zeta(y_n)L(v^\Lambda_{x}) \Phi_\alpha \right\rangle_{L^2\! \left(\mathbb{R}^3\times \underset{n\in \mathbb N}{\bigcup}  \mathbb{R}^{3 n}\right)}+C\Lambda^{\frac{1}{2}}T^{\frac{3}{2}}\alpha^{-(4+\epsilon)}|p|^2.
    \end{align}
    Recalling the definition of the operators $L$ and $a^*$, we compute explicitly using the permutation symmetry of $\Phi_\alpha$
    \begin{align}
    \nonumber 
       & \left\langle \Phi_\alpha,\zeta(y_n)L(v^\Lambda_{x}) \Phi_\alpha \right\rangle_{L^2\! \left(\mathbb{R}^3\times \underset{n\in \mathbb N}{\bigcup}  \mathbb{R}^{3 n}\right)}=\sum_{n=1}^\infty \frac{\sqrt{n}}{\alpha}\int_{\mathbb{R}^{3 n}}\overline{\Phi_\alpha(Y_n)}\Phi_\alpha(Y_{n-1})\zeta(y_n)v^\Lambda_{x}(y_n)\mathrm{d}Y_n\\
           \nonumber 
       & \ =\left\langle \Phi_\alpha,L(\zeta v^\Lambda_{x}) \Phi_\alpha \right\rangle_{L^2\! \left(\mathbb{R}^3\times \underset{n\in \mathbb N}{\bigcup}  \mathbb{R}^{3 n}\right)}=\left\langle \Phi_\alpha,a^*(\zeta v^\Lambda_{x}) \Phi_\alpha \right\rangle_{L^2(\mathbb R^3)\otimes \mathcal{F}}\\
             \label{Eq:Only_x_Exp_and_BEC}
       & \ = \left\langle \Phi_\alpha, \left\langle\varphi^{\mathrm{Pek}},\zeta v^\Lambda_{x}\right\rangle \Phi_\alpha \right\rangle_{L^2(\mathbb R^3)\otimes \mathcal{F}}+\left\langle \Phi_\alpha,\Big(a^*(\zeta v^\Lambda_{x})-\left\langle\varphi^{\mathrm{Pek}},\zeta v^\Lambda_{x}\right\rangle\Big) \Phi_\alpha \right\rangle_{L^2(\mathbb R^3)\otimes \mathcal{F}}.
    \end{align}
    Regarding the second term in Eq.~(\ref{Eq:Only_x_Exp_and_BEC}), let us define $a^*_j:=a^*(u_j)$ for an orthonormal basis $\{u_j:j\in \mathbb N\}$ of $L^2\! \left(\mathbb R^3\right)$ and estimate for $\lambda>0$
    \begin{align}
    \nonumber
        &  \left\langle \Phi_\alpha,\Big( \! a^*(\zeta v^\Lambda_{x}) \! - \! \left\langle\varphi^{\mathrm{Pek}},\zeta v^\Lambda_{x}\right\rangle \!  \Big) \Phi_\alpha \right\rangle_{ \!  \!  \! L^2(\mathbb R^3)\otimes \mathcal{F}} \!  \! = \!\left\langle  \! \Phi_\alpha,  \sum_j \braket{u_j,\zeta v^\Lambda_{x}}\left(a_j^* \! - \! \left\langle\varphi^{\mathrm{Pek}},u_j\right\rangle\right)  \! \Phi_\alpha  \! \right\rangle_{ \!  \!  \!  \! L^2(\mathbb R^3)\otimes \mathcal{F}}\\
          \label{Eq:zeta_v_Estimate}
        &  \! \leq  \! \lambda  \! \left\langle  \! \!  \Phi_\alpha,  \!  \sum_j  \!  | \! \braket{u_j,\zeta v^\Lambda_{x}} \! |^2 \Phi_\alpha \!  \! \right\rangle_{ \!  \!  \!  \! L^2(\mathbb R^3)\otimes \mathcal{F}} \!  \!     \!  \!  \!  \!  + \lambda^{-1} \!  \! \left\langle  \!  \! \Phi_\alpha ,  \! \sum_j  \! \left(a_j^* \! - \! \left\langle\varphi^{\mathrm{Pek}},u_j\right\rangle \right)  \! \left(a_j \! - \! \left\langle \varphi^{\mathrm{Pek}},u_j\right\rangle\right) \! \Phi_\alpha   \! \! \right\rangle_{ \!  \!  \!  \! L^2(\mathbb R^3)\otimes \mathcal{F}} \!  \!  \!  \!  \! .
    \end{align}
  In order to analyse the term in Eq.~(\ref{Eq:zeta_v_Estimate}), we use the Weyl operator $W_{\varphi^\mathrm{Pek}}$ introduced in Eq.~(\ref{Eq:Def_Weyl_Operator}), in order to write
    \begin{align*}
  \sum_j  \! \left(a_j^* \! - \! \left\langle\varphi^{\mathrm{Pek}},u_j\right\rangle \right)  \! \left(a_j \! - \! \left\langle \varphi^{\mathrm{Pek}},u_j\right\rangle\right)=  W^{-1}_{\varphi^{\mathrm{Pek}}}\mathcal{N}W_{\varphi^{\mathrm{Pek}}},
\end{align*}
 and note that by Eq.~(\ref{Eq:BEC_Instance_I}) there exists a $C>0$ such that
\begin{align*}
     \left\langle \Phi_\alpha, W^{-1}_{\varphi^{\mathrm{Pek}}}\mathcal{N}W_{\varphi^{\mathrm{Pek}}} \, \Phi_\alpha\right\rangle_{L^2(\mathbb R^3)\otimes \mathcal{F}}\leq C\alpha^{-\frac{2}{29}}.
\end{align*}
Since we have
\begin{align*}
 \sum_j |\braket{u_j,\zeta v^\Lambda_{x}}|^2=\|\zeta v^\Lambda_{x}\|^2\lesssim |p|^4\|  v^\Lambda_{x}\|^2=|p|^4\|  v^\Lambda\|^2\lesssim \Lambda |p|^4, 
\end{align*}
we obtain for the choice $\lambda:=\alpha^{-\frac{1}{29}}\Lambda^{-\frac{1}{2}}|p|^{-2}$ and a suitable constant $C>0$ the estimate
\begin{align}
 \label{Eq:Summarize_zeta_2}
    \left\langle \Phi_\alpha,\Big(a^*(\zeta v^\Lambda_{x})-\left\langle\varphi^{\mathrm{Pek}},\zeta v^\Lambda_{x}\right\rangle\Big) \Phi_\alpha \right\rangle_{L^2(\mathbb R^3)\otimes \mathcal{F}}\leq C\Lambda^{\frac{1}{2}}\alpha^{-\frac{1}{29}}|p|^2.
\end{align}
In order to analyse the first term in Eq.~(\ref{Eq:Only_x_Exp_and_BEC}), let us define the multiplication operator
\begin{align*}
    \mathcal{G}:= \left\langle\varphi^{\mathrm{Pek}},\zeta v^\Lambda_{x}\right\rangle
\end{align*}
acting on $L^2(\mathbb R^3)$ and let us denote with $\mathcal{G}\otimes 1_{\mathcal{F}}$ the corresponding operator on $L^2(\mathbb R^3)\otimes \mathcal{F}$, which allows us to write
\begin{align}
 \label{Eq:Summarize_zeta_extra}
    \left\langle \Phi_\alpha, \left\langle\varphi^{\mathrm{Pek}},\zeta v^\Lambda_{x}\right\rangle \Phi_\alpha \right\rangle_{L^2(\mathbb R^3)\otimes \mathcal{F}}=\left\langle  \Phi_\alpha,  \mathcal{G}\otimes 1_{\mathcal{F}}\Phi_\alpha\right\rangle.
\end{align}
Recalling the definition of $\psi^\mathrm{Pek}$, $P^\mathrm{Pek}$ and $Q^\mathrm{Pek}$ in Subsection \ref{Subsec:Semi-Classical_Objects}, let us define $P^\mathrm{Pek}_*:=P^\mathrm{Pek}\otimes 1_{\mathcal{F}}$ and $Q^\mathrm{Pek}_*:=Q^\mathrm{Pek}\otimes 1_{\mathcal{F}}$, and compute
\begin{align*}
& \ \  \ \left\langle \Phi_\alpha,  \mathcal{G} \otimes 1_{\mathcal{F}}\, \Phi_\alpha \right\rangle_{L^2(\mathbb R^3)\otimes \mathcal{F}}=\left\langle \psi^\mathrm{Pek},\mathcal{G}\psi^\mathrm{Pek}\right\rangle - \|Q^\mathrm{Pek}_*\Phi_\alpha\|^2\left\langle \psi^\mathrm{Pek},\mathcal{G} \psi^\mathrm{Pek}\right\rangle\\
&  +2\mathfrak{Re}\left\langle P^\mathrm{Pek}_*\Phi_\alpha, \mathcal{G}\otimes 1_{\mathcal{F}}\, Q^\mathrm{Pek}_*\Phi_\alpha \right\rangle_{L^2(\mathbb R^3)\otimes \mathcal{F}}+\left\langle Q^\mathrm{Pek}_*\Phi_\alpha, \mathcal{G}\otimes 1_{\mathcal{F}}\, Q^\mathrm{Pek}_*\Phi_\alpha \right\rangle_{L^2(\mathbb R^3)\otimes \mathcal{F}},
\end{align*}
Together with the fact that the operator norm satisfies for a suitable $C>0$
\begin{align*}
    \|\mathcal{G}\|_{\mathrm{op}}\leq \|\zeta\|_\infty \|\varphi^\mathrm{Pek}\|\, \|v^\Lambda\|\leq C|p|^2 \Lambda^{\frac{1}{2}},
\end{align*}
see Eq.~(\ref{Eq:In_Lemma_v_w_Lambda}), we obtain by Eq.~(\ref{Eq:BEC_Instance_II}) that there exists an $\epsilon>0$ and a $D>0$ such that
\begin{align}
 \label{Eq:Summarize_zeta_3}
   \left| \left\langle \Phi_\alpha, \mathcal{G} \otimes 1_{\mathcal{F}}\, \Phi_\alpha \right\rangle_{L^2(\mathbb R^3)\otimes \mathcal{F}} - \left\langle \psi^\mathrm{Pek},\mathcal{G}\psi^\mathrm{Pek}\right\rangle\right|\leq 4C|p|^2\Lambda^{\frac{1}{2}}\|Q^\mathrm{Pek}_*\Phi_\alpha\|\leq D\Lambda^{\frac{1}{2}}\alpha^{-\epsilon}|p|^2.
\end{align}
Combining the upper bound in Eq.~(\ref{Eq:Summarize_zeta_1}) with the identities in Eq.~(\ref{Eq:Only_x_Exp_and_BEC}) and Eq.~(\ref{Eq:Summarize_zeta_extra}), and the estimates in Eq.~(\ref{Eq:Summarize_zeta_2}) and Eq.~(\ref{Eq:Summarize_zeta_3}), yields for a suitable $C>0$
\begin{align*}
       &   \left\langle \Phi_\alpha, \left[p \! \cdot \! \mathcal{B}(\rho_{Y_n})-p \! \cdot \! \mathcal{B}(\rho_{Y_{n-1}})\right]^2 L(v^\Lambda_{x}) \Phi_\alpha \right\rangle_{L^2\! \left(\mathbb{R}^3\times \underset{n\in \mathbb N}{\bigcup}  \mathbb{R}^{3 n}\right)} \\
       & \ \  \ \  \    \leq \alpha^{-4}\left\langle \psi^\mathrm{Pek},\mathcal{G}\psi^\mathrm{Pek}\right\rangle + C\Lambda^{\frac{1}{2}}T^{\frac{3}{2}}\alpha^{-(4+\epsilon)}|p|^2.
\end{align*}
We use $\left|\psi^{\mathrm{Pek}}\right|^2 * v=\varphi^\mathrm{Pek}$, see Lemma \ref{Lem:Semiclassical_objects_properties}, which tells us furthermore that
\begin{align*}
  & \left|\psi^{\mathrm{Pek}}\right|^2 * v^\Lambda=\chi(|\nabla|\leq \Lambda)\! \left(\left|\psi^{\mathrm{Pek}}\right|^2 * v\right)=\chi(|\nabla|\leq \Lambda)\varphi^\mathrm{Pek} ,\\
    & \ \ \ \left\|\varphi^\mathrm{Pek}-\left|\psi^{\mathrm{Pek}}\right|^2 * v^\Lambda\right\|=\|\chi(|\nabla|>\Lambda)\varphi^\mathrm{Pek}\|\leq \Lambda^{-\frac{1}{2}},
\end{align*}
and therefore $\|\zeta\|_\infty\lesssim |p|^2$ yields for a suitable $C>0$
\begin{align*}
     \left\langle \psi^\mathrm{Pek},\mathcal{G}\psi^\mathrm{Pek}\right\rangle=\left\langle\varphi^{\mathrm{Pek}},\zeta \left|\psi^{\mathrm{Pek}}\right|^2 * v^\Lambda\right\rangle \leq \left\langle\varphi^{\mathrm{Pek}},\zeta \varphi^{\mathrm{Pek}}\right\rangle+C \Lambda^{-\frac{1}{2}}|p|^2.
\end{align*}
Finally, using $-\int \partial_{y_j}  f\mathrm{d}\rho^\mathrm{Pek}=-\int \partial_{y_j}  f \! \left|\varphi^{\mathrm{Pek}}\right|^2\!\mathrm{d}x=2\int  \varphi^{\mathrm{Pek}}f\partial_{y_j} \varphi^{\mathrm{Pek}}\mathrm{d}x$, we compute
\begin{align*}
   \left\langle\varphi^{\mathrm{Pek}},\zeta \varphi^{\mathrm{Pek}}\right\rangle & = \left\|p\! \cdot \! \left(H+f+2\sum_{j=1}^3\int  \varphi^{\mathrm{Pek}}f\partial_{y_j} \varphi^{\mathrm{Pek}}\mathrm{d}x H_j\right)\! \varphi^{\mathrm{Pek}}\right\|^2\\
   & =\left\|\sum_{i=1}^3 p_i\mathcal{L}\! \left(\varphi^\mathrm{Pek}f_i\right)+\sum_{i=1}^3 p_i\varphi^\mathrm{Pek} H_i\right\|^2.
\end{align*}
\end{proof}

\begin{rem}
\label{Rem:Optimizing}
   In the following we want to choose the function $f=(f_1,f_2,f_3):\mathbb R^3\longrightarrow \mathbb R^3$ such that it minimizes the following expression appearing in Theorem \ref{Th:Quadratic_Upper_Bound}
   \begin{align*}
        \left\|\sum_{i=1}^3 p_i\mathcal{L}\! \left(\varphi^\mathrm{Pek}f_i\right)+\sum_{i=1}^3 p_i\varphi^\mathrm{Pek} H_i\right\|^2.
   \end{align*}
   Restricted to functions $\varphi^\mathrm{Pek}f_i\in L^2(\mathbb R^3)$, this minimization problem is clearly equivalent to the variational problem
   \begin{align}
   \label{Eq:Optimization}
       \inf_{f:\varphi^\mathrm{Pek}f\in L^2(\mathbb R^3,\mathbb R^3)}\left\|\sum_{i=1}^3 p_i\mathcal{L}\! \left(\varphi^\mathrm{Pek}f_i\right) \! + \! \sum_{i=1}^3 p_i\varphi^\mathrm{Pek} H_i\right\|^2 \! = \! \inf_{X\in L^2(\mathbb R^3)}\left\| \mathcal{L}\! \left(X\right)+Y\right\|^2 \! = \! \|\pi Y\|^2 \! ,
   \end{align}
   with $Y:=\sum_{i=1}^3 p_i\varphi^\mathrm{Pek} H_i$ and $\pi$ being the orthogonal projection onto $\mathcal{L}\! \left(L^2(\mathbb R^3)\right)^\perp$. In order to identify $\mathcal{L}\! \left(L^2(\mathbb R^3)\right)$, note that $\mathcal{L}(X)=X$ in case $X\perp \partial_j \varphi^\mathrm{Pek}$ for all $j\in \{1,2,3\}$, hence
   \begin{align}
   \label{Eq:Inclusion}
      \left\{\partial_{y_1} \varphi^\mathrm{Pek},\partial_{y_2} \varphi^\mathrm{Pek},\partial_{y_3} \varphi^\mathrm{Pek}\right\}^\perp\subseteq \mathcal{L}\! \left(L^2(\mathbb R^3)\right).
   \end{align}
It turns out that the inclusion in Eq.~(\ref{Eq:Inclusion}) is even an identity. To verify this, recall that $\varphi\mapsto m_q(\chi*\rho_\varphi)$ is a boost in the sense of Eq.~(\ref{Eq:Introduction_Translation_Covariant}), which especially means that
\begin{align*}
    m_q\! \left(\chi*\rho^{\mathrm{Pek}}_z\right)=m_q\! \left(\chi*\rho_{\varphi^\mathrm{Pek}_z}\right)=m_q\! \left(\chi*\rho_{\varphi^\mathrm{Pek}}\right)+z=z,
\end{align*}
where $\mathrm{d}\rho^{\mathrm{Pek}}_z(x):=\left|\varphi^\mathrm{Pek}(x - z)\right|^2\mathrm{d}x$, and therefore as a consequence of Lemma \ref{Lem:Eq:Total_Variation_Upper_Bound}
\begin{align}
\label{Eq:Inner_Product_T-Inv}
    \delta_{i,j}=\partial_{z_i} z_j\Big|_{z=0}=\partial_{z_i} m_q\! \left(\chi*\rho^{\mathrm{Pek}}_z\right)_j\Big|_{z=0}=-\int H_j \partial_{y_i} \left|\varphi^\mathrm{Pek}\right|^2\mathrm{d}x=-2\braket{\varphi^\mathrm{Pek}H_j,\partial_{y_i} \varphi^\mathrm{Pek}}.
\end{align}
With this at hand we compute for $i\in \{1,2,3\}$ and $X\in L^2\! \left(\mathbb R^3\right)$
\begin{align*}
   \left\langle \partial_{y_i} \varphi^\mathrm{Pek},\mathcal{L}(X)\right\rangle=\left\langle \partial_{y_i} \varphi^\mathrm{Pek},X\right\rangle+2\sum_{j=1}^3 \left\langle \partial_{y_j} \varphi^\mathrm{Pek},X\right\rangle  \left\langle \partial_{y_i} \varphi^\mathrm{Pek}, \varphi^\mathrm{Pek}H_j\right\rangle=0.
\end{align*}
Hence, $\mathcal{L}\! \left(L^2(\mathbb R^3)\right)= \left\{\partial_{y_1} \varphi^\mathrm{Pek},\partial_{y_2} \varphi^\mathrm{Pek},\partial_{y_3} \varphi^\mathrm{Pek}\right\}^\perp$, or equivalently $\pi$ is the orthogonal projection onto $ \mathrm{span} \! \left\{\partial_{y_1} \varphi^\mathrm{Pek},\partial_{y_2} \varphi^\mathrm{Pek},\partial_{y_3} \varphi^\mathrm{Pek}\right\}$. Using again Eq.~(\ref{Eq:Inner_Product_T-Inv}), we can therefore express the right hand side of Eq.~(\ref{Eq:Optimization}) as
\begin{align*}
    \|\pi Y\|^2=\sum_{j=1}^3 \frac{\left|\braket{\partial_{y_j} \varphi^\mathrm{Pek},Y}\right|^2}{\|\partial_{y_j} \varphi^\mathrm{Pek}\|^2}=\sum_{j=1}^3 \frac{\left|\sum_{i=1}^3 p_i\braket{\partial_{y_j} \varphi^\mathrm{Pek},\varphi^\mathrm{Pek} H_i}\right|^2}{\|\partial_{y_j} \varphi^\mathrm{Pek}\|^2}=\sum_{j=1}^3 \frac{p_j^2}{4\|\partial_{y_j} \varphi^\mathrm{Pek}\|^2}.
\end{align*}
Due to the rotational symmetry of $\varphi^{\mathrm{Pek}}$ we have $\|\partial_{y_j} \varphi^\mathrm{Pek}\|^2=\frac{1}{3}\|\nabla \varphi^\mathrm{Pek}\|^2$ for $j\in \{1,2,3\}$, and therefore we obtain by Eq.~(\ref{Eq:Optimization})
\begin{align}
\label{Eq:Variational_LP_Constant}
     \inf_{f:\varphi^\mathrm{Pek}f\in L^2(\mathbb R^3,\mathbb R^3)}\left\|\sum_{i=1}^3 p_i\mathcal{L}\! \left(\varphi^\mathrm{Pek}f_i\right) \! + \! \sum_{i=1}^3 p_i\varphi^\mathrm{Pek} H_i\right\|^2=\frac{3|p|^2}{4\|\nabla \varphi^\mathrm{Pek}\|^2}=\frac{|p|^2}{2m_{\mathrm{LP}}},
\end{align}
with the Landau-Pekar constant $m_{\mathrm{LP}}=\frac{2}{3}\|\nabla \varphi^\mathrm{Pek}\|^2$ introduced in Eq.~(\ref{Eq:Effective_Mass_Def}). Finally, we note that a ($p$-independent) optimizer of Eq.~(\ref{Eq:Variational_LP_Constant}) is given by
\begin{align*}
    f^\diamond_i:=\frac{(\pi-1) \! \left(\varphi^\mathrm{Pek} H_i\right)}{\varphi^\mathrm{Pek}} =-\frac{\partial_i \varphi^{\mathrm{Pek}}}{m_{\mathrm{LP}}\varphi^{\mathrm{Pek}}}-H_i.
\end{align*}
\end{rem}

\begin{proof}[Proof of Theorem \ref{Th:Main}]
  Let us define the function $f:\mathbb R^3 \longrightarrow \mathbb R^3$ in Eq.~(\ref{Eq:Boost_Operator}) as
   \begin{align*}
       f:=g_T * f^\diamond=-g_T*\! \left(\frac{\nabla \varphi^{\mathrm{Pek}}}{m_{\mathrm{LP}}\varphi^{\mathrm{Pek}}}+H\right),
   \end{align*}
   and note that $f(-y)=-f(y)$, and therefore we have by Corollary \ref{Cor:Summarizing_Subsec_3.2}
    \begin{align*}
    \left\langle \Psi_{\alpha,p},\mathbb H_p  \Psi_{\alpha,p}  \right\rangle_\mathcal{F} \leq E_\alpha & +\mu_\alpha^2 \left\langle   \Phi_\alpha, \! \left[p \! \cdot \! \mathcal{B}(\rho_{Y_{n}}) \! - \! p \! \cdot \! \mathcal{B}(\rho_{Y_{n-1}})\right]^2  \! L(v^\Lambda_x) \Phi_\alpha  \right\rangle_{L^2\! \left(\mathbb R^3\times \underset{n\in \mathbb N}{\bigcup}\mathbb R^{3n}\right)}\\
 &  +|p|^2\mathcal{E}_0-2\sum_{j=1}^4 \mathfrak{Re}\! \left[\mathcal{E}_j\right],
\end{align*}
   where $\mu_\alpha\in (0,1]$ is defined below Eq.~(\ref{Eq:Localized_State_LLP}). Since $f^\diamond\in C^2_b(\mathbb R^3,\mathbb R^3)$, see Lemma \ref{Lem:Semiclassical_objects_properties}, we have
   \begin{align*}
    \|f\|_\infty+\|\nabla f\|_\infty\leq  \|f^\diamond\|_\infty+\|\nabla f^\diamond\|_\infty< C   
   \end{align*}
for all $T>0$, and consequently we obtain using the estimates on $\mathcal{E}_0,\dots ,\mathcal{E}_4$ from Lemma \ref{Lem:Rather_Explicit_Expression}, Lemma \ref{Lem:E_1}, Lemma \ref{Lem:E_2,3} and Lemma \ref{Lem:E_4} that there exists a constant $C>0$ such that 
   \begin{align*}
 &     E_\alpha(p)\leq \left\langle \Psi_{\alpha,p},\mathbb H_p \Psi_{\alpha,p}\right\rangle_{\mathcal{F}}\leq E_\alpha \! + \! \mu_\alpha^2 \left\langle \Phi_\alpha, \left[p \! \cdot \! \mathcal{B}(\rho_{Y_n}) \! - \! p \! \cdot \! \mathcal{B}(\rho_{Y_{n-1}})\right]^2 L(v^\Lambda_{x}) \Phi_\alpha \right\rangle_{L^2\! \left(\mathbb{R}^3\times \underset{n\in \mathbb N}{\bigcup}  \mathbb{R}^{3 n}\right)}\\
       & \ \ \ \ \  \ \ \ \ \  \ \ \ \ \  \ \ \ \ \  \ \ \ \ \ +C|p|^2\alpha^{-4}\left\{\alpha^{-1}\Lambda^{\frac{1}{2}}+\Lambda^{-\frac{1}{2}}+\alpha^{-2}\Lambda^{\frac{1}{2}}|p|\right\}.
   \end{align*}
 In combination with Theorem \ref{Th:Quadratic_Upper_Bound} and the fact that $0<\mu_\alpha\leq 1$ we obtain for suitable constants $\epsilon',\kappa,C>0$, and the concrete choice $\Lambda:=\alpha^{\kappa}$, for all $|p|\leq \alpha^{2-\epsilon'}$
   \begin{align*}
        E_\alpha(p)\leq E_\alpha+\alpha^{-4}\left\|\sum_{i=1}^3 p_i\mathcal{L}\! \left(\varphi^\mathrm{Pek}f_i\right)+\sum_{i=1}^3 p_i\varphi^\mathrm{Pek} H_i\right\|^2+C|p|^2\alpha^{-(4+\epsilon')}T^{\frac{3}{2}}.
   \end{align*}
   In order to compare this with the right hand side of Eq.~(\ref{Eq:Variational_LP_Constant}), we use that $f^{\diamond}$ is an element of $C_\mathrm{b}^1(\mathbb R^3, \mathbb R^3)$ and therefore 
   \begin{align*}
       \|f-f^\diamond\|_\infty\leq \|\nabla f\|_\infty \int_{\mathbb R^3}|y|T^3g(Ty)\mathrm{d}y\leq \frac{C}{T},
   \end{align*}
for a suitable constant $C>0$. Since the operator $\mathcal{L}$ is bounded and $\varphi^{\mathrm{Pek}}\in L^2\! \left(\mathbb R^3\right)$, we obtain for a suitable constant $C>0$
\begin{align*}
   & \left\|\sum_{i=1}^3 p_i\mathcal{L}\! \left(\varphi^\mathrm{Pek}f_i\right) \! + \!  \! \sum_{i=1}^3 p_i\varphi^\mathrm{Pek} H_i\right\|^2 \!  \!  \!   \! \leq   \!  \! \left\|\sum_{i=1}^3 p_i\mathcal{L}\! \left(\varphi^\mathrm{Pek}f^\diamond_i\right) \! + \!  \! \sum_{i=1}^3 p_i\varphi^\mathrm{Pek} H_i\right\|^2  \!  \!  \!  \! + \! \frac{C|p|^2}{T}   \! = \! \frac{|p|^2}{2m_{\mathrm{LP}}} \! + \! \frac{C|p|^2}{T}   \!    ,
\end{align*}
where we have used Eq.~(\ref{Eq:Variational_LP_Constant}). Choosing $T:=\alpha^{\xi}$ with $0<\frac{3}{2}\xi<\epsilon'$ and defining $\epsilon:=\min\left\{\xi,\epsilon' -\frac{3}{2}\xi\right\}$ yields for a suitable constant $C>0$
\begin{align*}
    E_\alpha(p)\leq E_\alpha+\frac{|p|^2}{2 \alpha^4 m_{\mathrm{LP}}}+C\alpha^{-(4+\epsilon)}|p|^2.
\end{align*}
Recalling the definition of the effective mass $m_\mathrm{eff}(\alpha)$ in Eq.~(\ref{Eq:Effective_Mass_Def}), we therefore have 
\begin{align*}
    m_\mathrm{eff}(\alpha)=\lim_{p\rightarrow 0}\frac{|p|^2}{2\left(E_\alpha(p)-E_\alpha\right)}\geq \left(1+\frac{2C\alpha^{-\epsilon}}{m_{\mathrm{LP}}}\right)^{-1} \! \!  \alpha^4 m_{\mathrm{LP}}\geq \alpha^4 m_{\mathrm{LP}}-2C\alpha^{4-\epsilon}.
\end{align*}
\end{proof}

\section{Analysis of the generalized Median}
\label{Sec:Analysis_of_the_generalized_Median}
It is the goal of this Section to analyse the regularized median, and especially to identify the leading order term of the increment
\begin{align*}
    m_q(\chi*\rho')-m_q(\chi*\rho),
\end{align*}
as $\rho'\longrightarrow \rho$ in a suitable topology, see Lemma \ref{Lem:Eq:Total_Variation_Upper_Bound}. Since the regularized median $m_q$ involves the quantiles $x^\lambda$ of the marginal measures $\rho_j$, we are first going to investigate in Lemma \ref{Lem:Comparison_of_quantiles} the increments 
\begin{align*}
    x^{\lambda}\! \left((\chi* \rho')_j\right)- x^{\lambda}\! \left((\chi* \rho)_j\right).\\
\end{align*}

In the following, it will be convenient to define for $t\in \mathbb R$ and a permutation symmetric mollifier $\chi$, the function 
\begin{align}
\label{Eq:Def_f_j_t}
   f_{j,t}^\chi:\begin{cases}
       \mathbb R^3\longrightarrow [0,1],\\
  y\mapsto  (\chi*\mathds{1}_{(-\infty,t]}(\cdot_j))(y)=\int_{\mathbb R^3}\chi(z_j)\mathds{1}_{(-\infty,t]}(y_j-z_j)\mathrm{d}z ,\end{cases}
\end{align}
where we keep track of the function $\chi$ used for the convolution in our notation. Clearly $\partial_t f_{j,t}^\chi=-\partial_{y_j}f_{j,t}^\chi\geq 0$ and there exist constants $\ell,h>0$ such that
\begin{align}
\label{Eq:Lower_Bound_partial_f}
  -\partial_{y_j}f_{j,t}^\chi(y_1,y_2,y_3)\geq h\mathds{1}_{(t-2\ell,t+2\ell)}(y_j)  .
\end{align}
Let us recall the definition of the quantile $x^\lambda(\nu)$ of a measure $\nu$ on $\mathbb R$ in Eq.~(\ref{Eq:Def_Quantile}) as well as the definition of the marginal measure $\rho_j$ in Eq.~(\ref{Eq:Def_Marginal_Measure}). Using the functions $f_{j,t}$ we can express the $\lambda$-quantile of $(\chi* \rho)_j$ as
\begin{align*}
    x^{\lambda}\! \left((\chi* \rho)_j\right)=\sup \Big\{t:\int f_{j,t}^\chi\mathrm{d}\rho\leq \lambda \int \mathrm{d}\rho \Big\},
\end{align*}
Since the marginals distributions of the convoluted measures $(\chi * \rho)_j$ are atom-free, the supremum in the definition of $x^{\lambda}\! \left((\chi* \rho)_j\right)$ is a maximum for all $0<\lambda<1$ and we have
\begin{align*}
    \int \Big(f_{j,x^{\lambda} \left((\chi* \rho)_j\right)}^\chi-\lambda\Big) \mathrm{d}\rho=\int_{-\infty}^{x^{\lambda} \left((\chi* \rho)_j\right)}\mathrm{d}(\chi*\rho)_j-\lambda \int \mathrm{d}(\chi*\rho)_j=0.
\end{align*}

\begin{lem}
\label{Lem:Comparison_of_quantiles}
   Let $\chi:\mathbb R^3 \longrightarrow [0,1]$ be a $C^\infty$ function with compact support and $\int \chi\, \mathrm{d}x=1$. Given $\beta>0$, there exist constants $\gamma,C>0$ such that for all $0< \lambda< 1$ and measures $\rho$ and $\rho'$ satisfying $\underset{|y_j-x^{\lambda} \left((\chi* \rho)_j\right)|\leq \ell}{\int}\mathrm{d}\rho(y)\geq \beta$ and $\|\rho'-\rho\|_{\mathrm{TV}}\leq \gamma$
\begin{align}
\label{Eq:Comparison_of_quantiles}
\left|x^{\lambda}\! \left((\chi* \rho')_j\right)- x^{\lambda}\! \left((\chi* \rho)_j\right)\right|\leq C \|\rho'-\rho\|_{\mathrm{TV}}.
\end{align}
\end{lem}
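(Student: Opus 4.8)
The plan is to derive the estimate from a quantitative monotonicity statement for the auxiliary function
\[
  g(t):=\int \big(f_{j,t}^{\chi}-\lambda\big)\,\mathrm d\rho',
\]
which is continuous and non-decreasing in $t$ because $\partial_t f_{j,t}^{\chi}=-\partial_{y_j}f_{j,t}^{\chi}\ge 0$. Write $t_0:=x^{\lambda}\big((\chi*\rho)_j\big)$, $t_1:=x^{\lambda}\big((\chi*\rho')_j\big)$ and $\varepsilon:=\|\rho'-\rho\|_{\mathrm{TV}}$. Taking $\gamma$ small enough that $\rho'\neq 0$ (which is automatic once $\gamma<\beta$, since the hypothesis forces $\int\mathrm d\rho\ge\beta$), both quantiles are finite for $0<\lambda<1$, and by the identity recorded just above the statement, applied once to $\rho$ and once to $\rho'$, we have $\int\big(f_{j,t_0}^{\chi}-\lambda\big)\,\mathrm d\rho=0$ and $g(t_1)=0$. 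Since $0\le f_{j,t}^{\chi}\le 1$ and $0<\lambda<1$, so that $\|f_{j,t_0}^{\chi}-\lambda\|_{\infty}\le 1$, subtracting these two identities yields
\[
  |g(t_0)|=\Big|\int\big(f_{j,t_0}^{\chi}-\lambda\big)\,\mathrm d(\rho'-\rho)\Big|\le \varepsilon .
\]
Thus $g$ vanishes at $t_1$ and is at most $\varepsilon$ in modulus at $t_0$, and everything reduces to a lower bound on the rate at which $g$ moves away from $t_0$.

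For that growth estimate I would use Eq.~(\ref{Eq:Lower_Bound_partial_f}), which combined with $\partial_t f_{j,t}^{\chi}=-\partial_{y_j}f_{j,t}^{\chi}$ gives $\partial_t f_{j,t}^{\chi}(y)\ge h\,\mathds 1_{(t-2\ell,t+2\ell)}(y_j)$, with $\ell,h$ as in Eq.~(\ref{Eq:Lower_Bound_partial_f}). By Tonelli, for $0\le s\le\ell/2$,
\[
  g(t_0+s)-g(t_0)=\int_{t_0}^{t_0+s}\!\!\int_{\mathbb R^3}\partial_t f_{j,t}^{\chi}\,\mathrm d\rho'\,\mathrm dt
  \ \ge\ h\int_{t_0}^{t_0+s}\rho'_j\big((t-2\ell,t+2\ell)\big)\,\mathrm dt .
\]
Whenever $|t-t_0|\le\ell/2$ one has $[t_0-\ell,t_0+\ell]\subseteq(t-2\ell,t+2\ell)$, so the hypothesis $\rho_j([t_0-\ell,t_0+\ell])\ge\beta$ together with the elementary bound $|\rho'_j(I)-\rho_j(I)|\le\varepsilon$ for every interval $I$ (apply $\|\rho'-\rho\|_{\mathrm{TV}}$ to $\mathds 1_I(y_j)$) gives $\rho'_j\big((t-2\ell,t+2\ell)\big)\ge\beta-\varepsilon\ge\beta/2$ provided $\varepsilon\le\beta/2$. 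Hence $g(t_0+s)-g(t_0)\ge\tfrac{h\beta}{2}\,s$ for $0\le s\le\ell/2$, and by the identical computation $g(t_0)-g(t_0-s)\ge\tfrac{h\beta}{2}\,s$ for $0\le s\le\ell/2$.

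To finish, set $\gamma:=\min\{\beta/2,\,h\beta\ell/8\}$. If $t_1>t_0+\ell/2$, then monotonicity and the growth bound force $0=g(t_1)\ge g(t_0+\ell/2)\ge g(t_0)+\tfrac{h\beta\ell}{4}\ge-\varepsilon+\tfrac{h\beta\ell}{4}>0$, a contradiction; the case $t_1<t_0-\ell/2$ is excluded symmetrically, so $|t_1-t_0|\le\ell/2$. Applying the growth bound at $s=|t_1-t_0|$ and using $g(t_1)=0$ and $|g(t_0)|\le\varepsilon$ then gives $\tfrac{h\beta}{2}\,|t_1-t_0|\le|g(t_0)|\le\varepsilon$, which is Eq.~(\ref{Eq:Comparison_of_quantiles}) with $C:=2/(h\beta)$, uniformly in $\lambda$. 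The step I expect to be the main obstacle is precisely this bootstrap: simultaneously transporting the lower mass bound from $\rho$ to $\rho'$ on a fixed-size neighbourhood of $t_0$ and ensuring, uniformly in $\lambda$, that $t_1$ cannot escape that neighbourhood when $\varepsilon$ is small; the remaining ingredients (finiteness of the quantiles for $0<\lambda<1$, continuity and monotonicity of $g$, and the two elementary $L^\infty$/total-variation estimates) are routine.
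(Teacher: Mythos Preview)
Your proof is correct and follows essentially the same approach as the paper: both arguments rest on the identity $\int(f_{j,x^\lambda}^\chi-\lambda)\,\mathrm d\nu=0$, the growth estimate from Eq.~(\ref{Eq:Lower_Bound_partial_f}), a single $\|\cdot\|_{\mathrm{TV}}$-comparison, and a truncation (your $\ell/2$-neighbourhood contradiction, the paper's $y=\min\{x',x+\ell\}$) to keep the quantile in the coercive region. The only cosmetic difference is that the paper establishes the coercivity bound for $y\mapsto\int(f_{j,y}^\chi-\lambda)\,\mathrm d\rho$ directly from the hypothesis on $\rho$ and transfers to $\rho'$ afterwards, whereas you work with $g(t)=\int(f_{j,t}^\chi-\lambda)\,\mathrm d\rho'$ and transfer the mass bound $\beta\mapsto\beta/2$ first; this costs you a harmless factor of $2$ in the final constant.
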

\begin{proof}
In order to keep the notation light let us write $x:=x^{\lambda}\! \left((\chi* \rho)_j\right)$ and $x':=x^{\lambda}\! \left((\chi* \rho')_j\right)$. We first verify that there exists a constant $c>0$ such that for $y\in (x-\ell,x+\ell)$
    \begin{align}
    \label{Eq:Coercivity_Quantiles}
      c|y-x| \leq  \left|\int (f_{j,y}^\chi-\lambda)\mathrm{d}\rho\right|.
    \end{align}
Making use of the assumption $\underset{|y_1-x^{\lambda} \left((\chi* \rho)_j\right)|\leq \ell}{\int}\mathrm{d}\rho\geq \beta$ and Eq.~(\ref{Eq:Lower_Bound_partial_f}) we obtain
\begin{align*}
  &  \left|\int (f_{j,y}^\chi-\lambda)\mathrm{d}\rho\right|=\left|\int (f_{j,y}^\chi-\lambda)\mathrm{d}\rho-\int (f_{j,x}^\chi-\lambda)\mathrm{d}\rho\right| = |y \! - \! x| \int_0^1 \int (-\partial_{y_j}f_{j,x+s(y-x)}^\chi)\mathrm{d}\rho \mathrm{d}s\\
  & \ \  \ \  \ \ \geq  |y-x| h\int_0^1 \underset{|y_j-x-s(y-x)| \leq 2\ell}{\int}\mathrm{d}\rho(y) \mathrm{d}s \geq  |y-x| h\underset{|y_j-x|\leq \ell}{\int}\mathrm{d}\rho(y)\geq (h\beta)  |y-x| ,
    \end{align*}
    which concludes the proof of Eq.~(\ref{Eq:Coercivity_Quantiles}). In order to prove Eq.~(\ref{Eq:Comparison_of_quantiles}), we are going to distinguish between the cases $x'\geq x$ and $x'\leq x$. In the first case, we define $y:=\min\{x',x+\ell\}\geq x$. Using $y\leq x'$ and Eq.~(\ref{Eq:Coercivity_Quantiles}), and the fact that $|f_{j,y}^\chi-\lambda|\leq 1$, we obtain
    \begin{align*}
        0\geq & \int (f_{j,y}^\chi-\lambda)\mathrm{d}\rho'\geq \int (f_{j,y}^\chi-\lambda)\mathrm{d}\rho- \|\rho'-\rho\|_{\mathrm{TV}}= \left|\int (f_{j,y}^\chi-\lambda)\mathrm{d}\rho\right|-\|\rho'-\rho\|_{\mathrm{TV}}\\
        & \geq c|y-x|-\|\rho'-\rho\|_{\mathrm{TV}}=c(y-x)-\|\rho'-\rho\|_{\mathrm{TV}},
    \end{align*}
    and consequently 
    \begin{align}
    \label{Eq:Quantile_with_min}
        \min\{|x'-x|,\ell\}=y-x\leq \frac{1}{c}\|\rho'-\rho\|_{\mathrm{TV}}.
    \end{align}
    Choosing $\gamma<c\ell$ we have $\frac{1}{c}\|\rho'-\rho\|_{\mathrm{TV}}<\ell$, and therefore $|x-x'|\leq \frac{1}{c}\|\rho'-\rho\|_{\mathrm{TV}}$ by Eq.~(\ref{Eq:Quantile_with_min}). Regarding the other case $x'\leq x$ we define $y:=\max\{x',x-\ell\}$ and proceed similarly
    \begin{align*}
        0\leq \int (f_{j,y}^\chi-\lambda)\mathrm{d}\rho'\leq \int (f_{j,y}^\chi-\lambda)\mathrm{d}\rho+\|\rho'-\rho\|_{\mathrm{TV}}\leq c(y-x)+\|\rho'-\rho\|_{\mathrm{TV}}.
    \end{align*}
    This again yields Eq.~(\ref{Eq:Quantile_with_min}), which concludes the proof.
\end{proof}

In the following we want to quantify the increment $m_q(\chi*\rho')-m_q(\chi*\rho)$ of the regularized median. For this purpose let us introduce the abbreviation
\begin{align}
\label{Eq:pm_Quantile}
    x^{\pm}_{j,q}(\rho):  =x^{\frac{1}{2}\pm q}((\chi*\rho)_j),
\end{align}
and define for $j\in \{1,2,3\}$ the functions $H_{j,q,\rho}^\chi:\mathbb R^3 \longrightarrow \mathbb R$ as
\begin{align*}
   &  \ \ \ \ \   \ \ \ \ \   H_{j,q,\rho}^\chi(y): =\frac{1}{2q \int \mathrm{d}\rho} \Bigg(\left[\mathds{1}_{[x^{-}_{j,q}(\rho),x^{+}_{j,q}(\rho)]}(y_j)\big(y_j-m_q((\chi*\rho)_j)\big)\right]*\chi\\
    & \! +  \! \big(x^+_{j,q}(\rho) \! - \! m_q((\chi*\rho)_j)\big) \! \left( \! \frac{1}{2} \! + \! q \! - \! f_{j,x^+_{j,q}(\rho)}(y) \!  \! \right) \! + \! \big(m_q((\chi*\rho)_j) \! - \! x^-_{j,q}(\rho)\big) \! \left(\frac{1}{2} \! - \! q \! - \! f_{j,x^-_{j,q}(\rho)}(y) \!  \! \right) \!  \!  \!  \! \Bigg) ,
\end{align*}
where we keep track of $q,\rho$ and $\chi$ in our notation. Furthermore, we collect all the scalar valued functions $H_{j,q,\rho}^\chi$ into a single vector valued function $H^\chi_{q,\rho}:\mathbb R^3\longrightarrow \mathbb R^3$, i.e. we define
\begin{align}
\label{Eq:H_Function}
   H_{q,\rho}^\chi(y):=(H_{1,q,\rho}^\chi(y),H_{2,q,\rho}^\chi(y),H_{3,q,\rho}^\chi(y)).
\end{align}
Following the convention introduced in Subsection \ref{Subsec:Conventions_and_Definitions}, let $\chi$ be a convolution $\chi=g*g$, where $g$ is a rotational symmetric mollifier. Then, Lemma \ref{Lem:Eq:Total_Variation_Upper_Bound} identifies the leading order of $m_q(\chi*\rho')   -  m_q(\chi*\rho)$, up to an error depending on the $g$-mollified total variation.

\begin{lem}
\label{Lem:Eq:Total_Variation_Upper_Bound}
   Let $\ell$ be as in Eq.~(\ref{Eq:Lower_Bound_partial_f}) for $f_{j,t}^g$. Given $\beta,q,d,D,\delta,R,R'>0$, we call a triple $(\rho_0,\rho,\rho')$ of (finite, Borel) measures admissible, in case there exists a $0<\tilde{q}\leq q$ such that 
    \begin{align}
    \label{Eq:Comparison_Median_Assumptions}
     &   d \leq \int\mathrm{d}\nu\leq D,\ \ \  \underset{|x-y|>R}{\int \int}\mathrm{d}\nu\mathrm{d}\nu\leq \delta, \ \ \ |m_{\tilde{q}}(\rho')-m_{\tilde{q}}(\rho)|+|m_{\tilde{q}}(\rho)-m_{\tilde{q}}(\rho_0)|\leq R',
    \end{align}
 for any $\nu\in \{\rho_0,\rho,\rho'\}$, and $\underset{|y_j-x^\pm_{j,q}(\rho_0)|\leq \ell}{\int}\mathrm{d}\rho_0(y)\geq \beta$. Let $0<\delta<\frac{d^2}{2}$ and $0<q\leq \frac{1}{2}-\frac{\delta}{d^2}$. Then there exists a constant $C>0$, such that for all admissible triples 
    \begin{align}
               \label{Eq:Total_Variation_Upper_Bound}
       & \ \ \ \ \ \ \ \   \left|m_q(\chi*\rho)  +  \int H_{q,\rho_0}^{\chi }\mathrm{d}(\rho' \! - \! \rho)  -  m_q(\chi*\rho')\right|\\
\nonumber
        & \leq C \|g \! * \! (\rho' \! - \! \rho)\|_{TV}\! \left(\|g \! * \! (\rho' \! - \! \rho)\|_{TV}  +  \|g \! * \! (\rho \! - \! \rho_0)\|_{TV}\right).
    \end{align}
\end{lem}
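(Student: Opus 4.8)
The plan is to reduce the statement for $m_q(\chi*\rho)-m_q(\chi*\rho')$ to the analysis of its three marginal components $m_q((\chi*\rho)_j)$, and to analyse each such component by expressing it as a ratio involving the quantiles $x_{j,q}^\pm$ and the primitives $f_{j,t}^\chi$, and then performing a first-order Taylor expansion in $(\rho'-\rho)$. First I would exploit the factorization $\chi=g*g$: writing $\chi*\rho = g*(g*\rho)$, every convolution against $\chi$ that appears can be split into a convolution against $g$ applied to an already $g$-mollified measure, and since convolution against $g$ contracts the total variation ($\|g*(\nu_1-\nu_2)\|_{\mathrm{TV}}\le \|\nu_1-\nu_2\|_{\mathrm{TV}}$) the natural small parameter becomes $\|g*(\rho'-\rho)\|_{\mathrm{TV}}$ rather than $\|\rho'-\rho\|_{\mathrm{TV}}$. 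In particular, Lemma \ref{Lem:Comparison_of_quantiles}, applied with the mollifier $g$ to the measures $g*\rho$ and $g*\rho'$ (whose admissibility hypotheses are guaranteed by the conditions in Eq.~(\ref{Eq:Comparison_Median_Assumptions}) together with $\underset{|y_j-x^\pm_{j,q}(\rho_0)|\le \ell}{\int}\mathrm{d}\rho_0\ge \beta$), gives the Lipschitz bound
\begin{align*}
\left|x^{\pm}_{j,q}(\rho')-x^{\pm}_{j,q}(\rho)\right|\le C\,\|g*(\rho'-\rho)\|_{\mathrm{TV}},
\end{align*}
and likewise $|x^\pm_{j,q}(\rho)-x^\pm_{j,q}(\rho_0)|\le C\|g*(\rho-\rho_0)\|_{\mathrm{TV}}$; the condition $0<q\le \tfrac12-\tfrac{\delta}{d^2}$ together with $0<\delta<\tfrac{d^2}{2}$ is exactly what keeps the relevant $\lambda=\tfrac12\pm q$ in $(0,1)$ uniformly and keeps enough mass near the quantiles so Lemma \ref{Lem:Comparison_of_quantiles} applies with a uniform constant.

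Next I would write, for fixed $j$, $m:=m_q((\chi*\rho)_j)$ and $m':=m_q((\chi*\rho')_j)$, and use the defining identity
\begin{align*}
2q\Big(\!\int\mathrm{d}\rho\Big)\,m = \int_{x^-_{j,q}(\rho)}^{x^+_{j,q}(\rho)} t\,\mathrm{d}(\chi*\rho)_j(t),
\end{align*}
i.e. $2q(\int\mathrm{d}\rho)(m-t_0)=\int \mathds{1}_{[x^-,x^+]}(t)(t-t_0)\mathrm{d}(\chi*\rho)_j$ for any reference point $t_0$; choosing $t_0=m$ is convenient. Subtracting the analogous identity for $\rho'$ and writing the difference as (change in quantile endpoints) $+$ (change in the measure $\rho'-\rho$ on the fixed interval), the boundary terms produce exactly the $f_{j,x^\pm_{j,q}(\rho)}$ contributions in $H^\chi_{j,q,\rho_0}$, using $\partial_t f_{j,t}^\chi = $ density of $(\chi*\cdot)_j$ at $t$ to convert endpoint shifts into integrals; the bulk term produces the $[\mathds{1}_{[x^-,x^+]}(y_j)(y_j-m)]*\chi$ piece. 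Crucially I would linearize $H$ at the reference measure $\rho_0$ rather than at $\rho$: the difference between $H^\chi_{q,\rho}$ and $H^\chi_{q,\rho_0}$ is controlled, via the quantile Lipschitz estimates above, by $\|g*(\rho-\rho_0)\|_{\mathrm{TV}}$, and it multiplies a factor of size $\|g*(\rho'-\rho)\|_{\mathrm{TV}}$, so it fits inside the claimed error. All leftover terms are quadratic: products of two endpoint-shifts, or an endpoint-shift times a $(\rho'-\rho)$-integral, each bounded by $\|g*(\rho'-\rho)\|_{\mathrm{TV}}\big(\|g*(\rho'-\rho)\|_{\mathrm{TV}}+\|g*(\rho-\rho_0)\|_{\mathrm{TV}}\big)$ after using the uniform bounds $d\le \int\mathrm{d}\nu\le D$ to control the prefactors $1/(2q\int\mathrm{d}\nu)$ and the uniform boundedness of the integrands (the $y_j-m$ type factors are bounded because $|m|$ is controlled: $|m_q(\chi*\rho)-m_q(\chi*\rho_0)|\lesssim \|g*(\rho-\rho_0)\|_{\mathrm{TV}}$ by the same quantile estimates, and the third condition in Eq.~(\ref{Eq:Comparison_Median_Assumptions}) pins $m_{\tilde q}$ hence the effective support to a bounded region, so $y_j$ ranges over a bounded set on the support of the relevant indicator).

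The main obstacle I anticipate is controlling the \emph{change of the integration interval} $[x^-_{j,q}(\rho),x^+_{j,q}(\rho)]$ to $[x^-_{j,q}(\rho'),x^+_{j,q}(\rho')]$ while keeping the error genuinely second order: a naive bound on the symmetric difference of the two intervals times $\|t\,\mathrm{d}(\chi*\rho')_j\|$ only gives a first-order error, so one must extract the first-order part of this boundary contribution \emph{explicitly} and recognize it as the $f_{j,x^\pm}$ terms in $H$, leaving a genuinely quadratic remainder. This is exactly where the smoothness of $\chi$ (so that $(\chi*\rho)_j$ has a bounded density near the quantiles, with a lower bound from Eq.~(\ref{Eq:Lower_Bound_partial_f})) and the mass condition $\underset{|y_j-x^\pm_{j,q}(\rho_0)|\le \ell}{\int}\mathrm{d}\rho_0\ge \beta$ are indispensable: they give a two-sided comparison between the shift of a quantile and the change in the cumulative mass there, so that an endpoint shift $\delta x^\pm$ satisfies $\delta x^\pm = -\int (f_{j,x^\pm_{j,q}(\rho)}^\chi-(\tfrac12\pm q))\,\mathrm{d}(\rho'-\rho)\,/\,(\text{density}) + O(\|g*(\rho'-\rho)\|^2_{\mathrm{TV}})$, and substituting this into the boundary term is what converts it to the stated linear functional plus a quadratic error. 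Once each marginal $j$ is handled, summing over $j=1,2,3$ gives the vector identity Eq.~(\ref{Eq:Total_Variation_Upper_Bound}) with $H=H^\chi_{q,\rho_0}$, and the final replacement of $\|\rho'-\rho\|_{\mathrm{TV}}$-type bounds by $\|g*(\rho'-\rho)\|_{\mathrm{TV}}$ throughout is legitimate precisely because every occurrence of $\rho'-\rho$ or $\rho-\rho_0$ in the estimates arises under at least one convolution by $g$.
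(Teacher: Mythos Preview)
Your plan is essentially the same as the paper's proof, and it would go through. Two points of comparison are worth noting.

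First, your handling of the boundary contribution is slightly more indirect than the paper's. You propose to Taylor-expand the endpoint shift $\delta x^\pm$ itself, writing $\delta x^\pm = -\int(f^\chi_{j,x^\pm}-(\tfrac12\pm q))\,\mathrm{d}(\rho'-\rho)/(\text{density})+O(\ldots)$, and then substitute. The paper sidesteps this by using the \emph{exact} identity
\[
\int_{x^+_{j,q}(\rho)}^{x^+_{j,q}(\rho')}\mathrm{d}(\chi*\rho')_j=\int\!\Big(\tfrac12+q-f^\chi_{j,x^+_{j,q}(\rho)}\Big)\,\mathrm{d}(\rho'-\rho),
\]
which follows directly from the defining property of the quantiles; this expresses the first-order part of $S_2$ (and similarly $S_3$) without ever needing to control $\delta x^\pm$ to second order. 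Your route works too, but requires the additional lower bound on the density of $(\chi*\rho)_j$ near the quantile (coming from Eq.~(\ref{Eq:Lower_Bound_partial_f}) and the $\beta$-condition on $\rho_0$) in a place where the paper does not need it.

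Second, a minor correction: the role of the hypotheses $0<\delta<\tfrac{d^2}{2}$ and $0<q\le\tfrac12-\tfrac{\delta}{d^2}$ is not really to keep $\lambda=\tfrac12\pm q$ in $(0,1)$ (that is automatic) or to supply mass near the quantiles (that is the separate $\beta$-condition). Rather, these hypotheses are precisely what is needed to invoke \cite[Lemma~3.9]{BS1}, which gives the uniform width bound $|x^{1/2+q}(\nu)-x^{1/2-q}(\nu)|\le 2R$; combined with the $R'$-condition on the $m_{\tilde q}$'s, this is what yields the uniform bound $\widehat R$ on all the $|y_j-m|$ factors appearing in your argument. You correctly identified that such a bound is needed, but attributed it to the wrong hypothesis.

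Finally, the paper structures the argument in two steps: it first proves the estimate for a general mollifier with ordinary total variations, and only then applies it with mollifier $g$ to the $g$-smoothed measures $g*\rho_0,g*\rho,g*\rho'$ (checking that the admissibility conditions transfer, including the $\beta$-condition via a positive lower bound on the convolution). This two-step structure also makes transparent why Lemma~\ref{Lem:Comparison_of_quantiles} can be invoked: one first reduces to the case $\|\rho'-\rho\|_{\mathrm{TV}},\|\rho-\rho_0\|_{\mathrm{TV}}\le\gamma$ (the complementary case being covered by a cruder linear bound on $m_q(\chi*\rho')-m_q(\chi*\rho)$). Your one-pass approach is fine, but you should keep this case split in mind when you actually write it out.
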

\begin{proof}
In a first step, we are going to verify for general convolution functions $\chi$ satisfying Eq.~(\ref{Eq:Lower_Bound_partial_f}) for some $h>0$, that
\begin{align}
 \label{Eq:Total_Variation_Upper_Bound_no_g}
    \left|m_q(\chi*\rho)  +  \int H_{q,\rho_0}^{\chi }\mathrm{d}(\rho' \! - \! \rho)  -  m_q(\chi*\rho')\right|\leq C \|\rho' \! - \! \rho\|_{TV}\! \left(\|\rho' \! - \! \rho\|_{TV}  +  \|\rho \! - \! \rho_0\|_{TV}\right).
\end{align}
Let us first identify $m_q((\chi*\rho')_j)-m_q((\chi*\rho)_j)$ as
    \begin{align*}
     &     m_q((\chi*\rho')_j)-m_q((\chi*\rho)_j)=\frac{1}{\int_{x^{-}_{j,q}(\rho')}^{x^{+}_{j,q}(\rho')}\mathrm{d}(\chi*\rho')_j}\int_{x^{-}_{j,q}(\rho')}^{x^{+}_{j,q}(\rho')}y\mathrm{d}(\chi*\rho')_j(y)- m_q((\chi*\rho)_j)\\
        & \ \ \ \  \  \ \ \ \  \  \ \ \ \  \ = \frac{1}{\int_{x^{-}_{j,q}(\rho')}^{x^{+}_{j,q}(\rho')}\mathrm{d}(\chi*\rho')_j}\int_{x^{-}_{j,q}(\rho')}^{x^{+}_{j,q}(\rho')}\Big[y-m_q((\chi*\rho)_j)\Big]\mathrm{d}(\chi*\rho')_j(y)\\
        & \ \ \ \  \  \ \ \ \  \  \ \ \ \  \ =  \frac{1}{\int_{x^{-}_{j,q}(\rho')}^{x^{+}_{j,q}(\rho')}\mathrm{d}(\chi*\rho')_j}\left(S_1+S_2-S_3\right),
    \end{align*}
    where we define
    \begin{align*}
        S_1: & =\int_{x^{-}_{j,q}(\rho)}^{x^{+}_{j,q}(\rho)}\Big[y-m_q((\chi*\rho)_j)\Big]\mathrm{d}(\chi*(\rho'-\rho))_j(y),\\
        S_2: & = \int_{x^{+}_{j,q}(\rho)}^{x^{+}_{j,q}(\rho')}\Big[y-m_q((\chi*\rho)_j)\Big]\mathrm{d}(\chi*\rho')_j(y),\\
         S_2: & = \int_{x^{-}_{j,q}(\rho)}^{x^{-}_{j,q}(\rho')}\Big[y-m_q((\chi*\rho)_j)\Big]\mathrm{d}(\chi*\rho')_j(y).
    \end{align*}
    Let us start with the observation that
    \begin{align}
    \label{Eq:Quantile_Dif_Integral_I}
      &  \int_{x^{+}_{j,q}(\rho)}^{x^{+}_{j,q}(\rho')}\mathrm{d}(\chi*\rho')_j=\int \left(\frac{1}{2}+q -f_{j,x^+_{j,q}(\rho)}^{\chi}\right) \mathrm{d}(\rho'-\rho),\\
          \label{Eq:Quantile_Dif_Integral_II}
       & \int_{x^{-}_{j,q}(\rho)}^{x^{-}_{j,q}(\rho')}\mathrm{d}(\chi*\rho')_j=\int \left(\frac{1}{2}-q -f_{j,x^-_{j,q}(\rho)}^{\chi}\right) \mathrm{d}(\rho'-\rho).
    \end{align}
    We can therefore write
    \begin{align}
    \nonumber
      & m_q((\chi*\rho')_j)-m_q((\chi*\rho)_j)=  \int H_{q,\rho}^{\chi}\mathrm{d}(\rho'  -  \rho)+\left(\frac{1}{2q\int \mathrm{d}\rho'}-\frac{1}{2q\int \mathrm{d}\rho}\right) \! \big(S_1+S_2-S_3\big)\\
      \label{Eq:diff_median_multi-expression}
        &   \ \ +  \frac{1}{2q\int \mathrm{d}\rho }\left(\int_{x^{+}_{j,q}(\rho)}^{x^{+}_{j,q}(\rho')}\Big[y-x^{+}_{j,q}(\rho)\Big]\mathrm{d}(\chi*\rho')_j(y)-\int_{x^{-}_{j,q}(\rho)}^{x^{-}_{j,q}(\rho')}\Big[y-x^{-}_{j,q}(\rho)\Big]\mathrm{d}(\chi*\rho')_j(y)\right).
    \end{align}
      In the following let $r>0$ be large enough such that $\mathrm{supp}(\chi)\subseteq B_r(0)$, and note that we have for measures $\nu\neq 0$ 
    \begin{align*}
        \left|x^{\lambda}\! \left((\chi* \nu)_j\right)-x^{\lambda}\! \left( \nu\right)\right|\leq r.
    \end{align*}
 Furthermore, using \cite[Lemma 3.9]{BS1} together with our assumptions on $\nu \in \{\rho_0,\rho,\rho'\}$ yields
    \begin{align*}
        \left|x^{\frac{1}{2}+q}\! \left( \nu\right)-x^{\frac{1}{2}-q}\! \left( \nu\right)\right|\leq 2R.
    \end{align*}
    In combination with the assumption $|m_{\tilde{q}}(\rho')-m_{\tilde{q}}(\rho)|+|m_{\tilde{q}}(\rho)-m_{\tilde{q}}(\rho_0)|\leq R'$, and the fact that $x^{\frac{1}{2}-q}\! \left( \nu\right)\leq x^{\frac{1}{2}-\tilde{q}}\! \left( \nu\right)\leq m_{\tilde{q}}(\nu)\leq x^{\frac{1}{2}+\tilde{q}}\! \left( \nu\right)\leq x^{\frac{1}{2}+q}\! \left( \nu\right)$, we obtain 
    \begin{align}
    \label{Eq:x_y_diff}
        |x-y|\leq \widehat{R} : = R'+2r+4R, \ \mathrm{for} \ x,y\in \bigcup_{\nu\in \{\rho_0,\rho,\rho'\}}\{x^{-}_{j,q}(\nu), m_q((\chi*\nu)_j),x^{+}_{j,q}(\nu)\}.
    \end{align}
    As a consequence we can estimate $\left|S_1\right|\leq \widehat{R}\|\rho'-\rho\|_{\mathrm{TV}}$ and for $q\in \{2,3\}$ 
    \begin{align*}
        \left|S_q\right|\leq \widehat{R}\int_{x^{\pm}_{j,q}(\rho)}^{x^{\pm}_{j,q}(\rho')}\mathrm{d}(\chi*\rho')_j\leq \widehat{R} \left\|\frac{1}{2}\pm q -f_{j,x^+_{j,q}}^{\chi}\right\|_\infty \|\rho'-\rho\|_{\mathrm{TV}}\leq  \widehat{R} \|\rho'-\rho\|_{\mathrm{TV}},
    \end{align*}
    where we have used Eq.~(\ref{Eq:Quantile_Dif_Integral_I}) and Eq.~(\ref{Eq:Quantile_Dif_Integral_II}). Consequently
    \begin{align}
    \label{Eq:Weighted_S_combined}
        \left|\left(\frac{1}{2q\int \mathrm{d}\rho'}-\frac{1}{2q\int \mathrm{d}\rho}\right) \! \big(S_1+S_2-S_3\big)\right|\leq \frac{3\widehat{R} \|\rho'-\rho\|_{\mathrm{TV}}}{4q^2}\frac{\left|\int \mathrm{d}(\rho'-\rho)\right|}{\int \mathrm{d}\rho \int\mathrm{d}\rho'}\leq  \frac{3\widehat{R} \|\rho'-\rho\|^2_{\mathrm{TV}}}{4q^2 d^2}.
    \end{align}
    Similarly, we can use Eq.~(\ref{Eq:Quantile_Dif_Integral_I}) and Eq.~(\ref{Eq:Quantile_Dif_Integral_I}) in order to estimate
    \begin{align}
    \nonumber
      & \frac{1}{2q\int \mathrm{d}\rho }\left(\int_{x^{+}_{j,q}(\rho)}^{x^{+}_{j,q}(\rho')}\Big[y-x^{+}_{j,q}(\rho)\Big]\mathrm{d}(\chi*\rho')_j(y)-\int_{x^{-}_{j,q}(\rho)}^{x^{-}_{j,q}(\rho')}\Big[y-x^{-}_{j,q}(\rho)\Big]\mathrm{d}(\chi*\rho')_j(y)\right) \\
      \label{Eq:median_modified_integrand}
       & \ \ \ \ \leq \frac{\left|x^{+}_{j,q}(\rho') - x^{+}_{j,q}(\rho)\right|+\left|x^{-}_{j,q}(\rho') - x^{-}_{j,q}(\rho)\right|}{2q d}\|\rho'-\rho\|_{\mathrm{TV}}\leq \frac{\widehat{R}}{q d}\|\rho'-\rho\|_{\mathrm{TV}}.
    \end{align}
    By Eq.~(\ref{Eq:x_y_diff}) we have $\|H_{q,\rho}\|_\infty\leq \frac{3\widehat{R}}{2qd}$. Combining Eq.~(\ref{Eq:diff_median_multi-expression}), Eq.~(\ref{Eq:Weighted_S_combined}) and Eq.~(\ref{Eq:median_modified_integrand}) yields 
    \begin{align}
    \nonumber
     & \left| m_q((\chi*\rho')_j)-m_q((\chi*\rho)_j)\right|\leq \left(\|H_{q,\rho}\|_\infty+\frac{3\widehat{R} \|\rho'-\rho\|_{\mathrm{TV}}}{4q^2 d^2}+\frac{\widehat{R}}{q d}\right)\|\rho'-\rho\|_{\mathrm{TV}}\\
     \label{Eq:Comparison_Median_linear}
      & \ \ \  \ \ \  \ \ \ \leq \left(\frac{3\widehat{R}}{2qd}+\frac{3\widehat{R}D}{4q^2 d^2}+\frac{\widehat{R}}{q d}\right)\|\rho'-\rho\|_{\mathrm{TV}}.
    \end{align}
    Consequently, it is clearly enough to verify Eq.~(\ref{Eq:Total_Variation_Upper_Bound_no_g}) in the case that $\|\rho' \! - \! \rho\|_{TV}  \leq \gamma$ and $\|\rho \! - \! \rho_0\|_{TV}  \leq \gamma$, where we choose $\gamma>0$ as in Lemma \ref{Lem:Comparison_of_quantiles}. Assuming
    \begin{align*}
    &  \|\rho' \! - \! \rho\|_{TV}  \leq \gamma ,\\
    & \|\rho \! - \! \rho_0\|_{TV}  \leq \gamma,
    \end{align*}
   we have by Lemma \ref{Lem:Comparison_of_quantiles} that $\left|x^{\pm }_{j,q}(\rho') - x^{\pm}_{j,q}(\rho)\right|\leq C\|\rho'-\rho\|_{\mathrm{TV}}$, which yields in combination with Eq.~(\ref{Eq:median_modified_integrand})
    \begin{align}
    \label{Eq:median_modified_integrand_II}
           \frac{1}{2q \! \int \mathrm{d}\rho } \! \left( \! \int_{x^{+}_{j,q}(\rho)}^{x^{+}_{j,q}(\rho')} \!  \! \Big[y \! - \! x^{+}_{j,q}(\rho) \! \Big] \! \mathrm{d}(\chi*\rho')_j(y) \! - \! \int_{x^{-}_{j,q}(\rho)}^{x^{-}_{j,q}(\rho')} \!  \! \Big[  y \! - \! x^{-}_{j,q}(\rho) \! \Big] \! \mathrm{d}(\chi*\rho')_j(y)\right) \! \leq  \! \frac{C}{q d}\|\rho' \! - \! \rho\|^2_{\mathrm{TV}} .
    \end{align}
    Finally, we need to analyse the term 
    \begin{align}
    \label{Eq:Elementary_H_estimate}
        \left|\int H_{q,\rho}\mathrm{d}(\rho'  -  \rho)-\int H_{q,\rho_0}\mathrm{d}(\rho'  -  \rho)\right|\leq \|H_{q,\rho}-H_{q,\rho_0}\|_\infty \|\rho' \! - \! \rho\|_{\mathrm{TV}}.
    \end{align}
 By our assumption $\|\rho \! - \! \rho_0\|_{TV}  \leq \gamma$, we can apply Lemma \ref{Lem:Comparison_of_quantiles}, which yields
  \begin{align}
    \label{Eq:New_convolution_median_diff_I}
        \left|x^\pm_{j,q}(\rho)-x^\pm_{j,q}(\rho_0)\right|\leq C \|\rho \! - \! \rho_0\|_{TV}.
   \end{align}
    As an immediate consequence of Eq.~(\ref{Eq:Comparison_Median_linear}) and Eq.~(\ref{Eq:New_convolution_median_diff_I}), we obtain for a suitable $C>0$
    \begin{align*}
        \left|H^{(j)}_{q,\rho}(y)-H^{(j)}_{q,\rho_0}(y)\right|\leq C\|\rho-\rho_0\|_\mathrm{TV}+\frac{1}{2q \int \mathrm{d}\rho_0}\big(|B_1|+|B_2|+|B_3|+|B_4|\big),
    \end{align*}
    with
    \begin{align*}
        B_1: & =\left[\mathds{1}_{[x^{+}_{j,q}(\rho_0), x^{+}_{j,q}(\rho)]}(y_j)\big(y_j-m_q((\chi*\rho_0)_j)\big)\right]*\chi,\\
               B_2: & =\left[\mathds{1}_{[x^{-}_{j,q}(\rho_0), x^{-}_{j,q}(\rho)]}(y_j)\big(y_j-m_q((\chi*\rho_0)_j)\big)\right]*\chi,\\
        B_3 : & = \big( x^+_{j,q}( \rho_0)  -  m_q((\chi*\rho_0)_j)\big) \! \left(  f_{j, x^+_{j,q}( \rho_0)}^{\chi}(y) - f_{j, x^+_{j,q}( \rho)}^\chi (y)    \right) ,\\
               B_4 : & = \big( x^-_{j,q}( \rho_0)  -  m_q((\chi*\rho_0)_j)\big) \! \left(  f_{j, x^-_{j,q}( \rho_0)}^\chi (y) -   f_{j, x^-_{j,q}( \rho)}^\chi (y)    \right) .       
    \end{align*}
    Let us analyse the term $B_1$ in detail, the other terms can be controlled similarly. By Eq.~(\ref{Eq:x_y_diff}), we have for all $y$ such that $ x^{-}_{j,q}(\rho_0)\leq y_j\leq  x^{+}_{j,q}(\rho)$
    \begin{align*}
        |y_j-m_q((\chi*\rho_0)_j)|\leq  x^{+}_{j,q}(\rho)- x^{-}_{j,q}(\rho_0)\leq \widehat{R},
    \end{align*}
and therefore we obtain with the aid of Eq.~(\ref{Eq:New_convolution_median_diff_I}) for a suitable constant $C>0$
\begin{align*}
    & \ \  |B_1|\leq \int_{\mathbb R^3}\chi(y-z)\mathds{1}_{[ x^{+}_{j,q}(\rho_0), x^{+}_{j,q}(\rho)]}(z_j)\left|z_j-m_q((\chi*\rho_0)_j)\right|\mathrm{d}z\\
    & \leq \widehat{R} \left| x^{+}_{j,q}(\rho)- x^{-}_{j,q}(\rho_0)\right|\sup_{t}\int_{\mathbb R^2}\chi(t,u,v)\mathrm{d}u\mathrm{d}v\leq C \|\rho \! - \! \rho_0\|_{TV}.
\end{align*}
Similarly we obtain $|B_j|\leq C\|\rho - \rho_0\|_{TV}$ for $j\in \{2,3,4\}$, and therefore
\begin{align*}
    \|H_{q,\rho}-H_{q,\rho_0}\|_\infty\leq 4C\|\rho  - \rho_0\|_{TV},
\end{align*}
which concludes the proof of Eq.~(\ref{Eq:Total_Variation_Upper_Bound_no_g}) by Eq.~(\ref{Eq:Elementary_H_estimate}). 

In order to verify Eq.~(\ref{Eq:Total_Variation_Upper_Bound}) for the specific convolution function $\chi=g* g$ introduced in Subsection \ref{Subsec:Conventions_and_Definitions}, note that the measures $ \widetilde \rho_0:=g*\rho_0$, $\widetilde \rho:= g*\rho$ and $\widetilde \rho':=g* \rho$ are an admissible triple with $\widetilde R:=R+2\widetilde r$ and $\widetilde R':=R'+2\widetilde r$, where $g(x)=0$ for $|x|\geq \widetilde r$, and
\begin{align*}
    \widetilde \beta:=\beta \inf_{x:|x_j|\leq \ell}\int_{|y_j|\leq \ell}g(y-x)\mathrm{d}y>0.
\end{align*}
Consequently, we obtain by Eq.~(\ref{Eq:Total_Variation_Upper_Bound_no_g}), used for the convolution function $\widetilde \chi:=g$,
\begin{align*}
    & \left|m_q(\chi*\rho)  +  \int H_{q,\rho_0}^{\chi }\mathrm{d}(\rho' \! - \! \rho)  -  m_q(\chi*\rho')\right|= \left|m_q(g*\widetilde \rho)  +  \int H_{q,\widetilde \rho_0}^{g }\mathrm{d}(\widetilde \rho' \! - \! \widetilde \rho)  -  m_q(g*\widetilde \rho')\right|\\
        & \ \ \ \  \ \ \ \  \leq C \|\widetilde \rho' \! - \! \widetilde \rho\|_{TV}\! \left(\|\widetilde \rho' \! - \! \widetilde \rho\|_{TV}  +  \|\widetilde \rho \! - \! \widetilde \rho_0\|_{TV}\right)\\
        & \ \ \ \  \ \ \ \   = C \|g \! * \! (\rho' \! - \! \rho)\|_{TV}\! \left(\|g \! * \! (\rho' \! - \! \rho)\|_{TV}  +  \|g \! * \! (\rho \! - \! \rho_0)\|_{TV}\right).
\end{align*}
\end{proof}

\section{Spatial concentration of Probability}
\label{Sec:Spatial_concentration_of_Probability}

It is the objective of this Section to derive decay properties for the probability measure $|\Psi_\alpha(Y_n)|^2\mathrm{d}Y_n$. To be more precise, we show in Theorem \ref{Th:Powers_of_Median} that moments of the regularized median $m_q(\chi* \rho_{Y_{n}})$ are bounded (uniformly in $\alpha$) on the set of non-degenerate configurations $Y_n$ satisfying $\int \mathrm{d}\rho_{Y_n}\geq d$. The fact that the statistical quantity $m_q(\chi* \rho_{Y_{n}})$ has bounded moments on $\int \mathrm{d}\rho_{Y_n}\geq d$ will be a central tool in Section \ref{Sec:Analysis_of_the_Error_Terms}, where we analyse the various residual terms $\mathcal{E}_1,\dots ,\mathcal{E}_4$ that have been separated from the energy $\left\langle \Psi_{\alpha,p},\mathbb H_p \Psi_{\alpha,p}\right\rangle_{\mathcal{F}}$ in Subsection \ref{Sec:Isolating_the_essential_Contribution}.

\begin{thm}
\label{Th:Powers_of_Median}
   Let $0<q<\frac{1}{2}$, $d>0$ and $m\in \mathbb{N}$. Then there exists a $C<\infty$ such that
    \begin{align*}
        \left\langle \Psi_\alpha,\left|m_q(\chi* \rho_{Y_{n}})\right|^m \chi\! \left(\int \mathrm{d}\rho_{Y_{n}}\geq d\right)\Psi_\alpha\right\rangle_{\mathcal{F}}\leq C.
    \end{align*}
\end{thm}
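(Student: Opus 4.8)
## Proof Proposal for Theorem \ref{Th:Powers_of_Median}

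\textbf{Overall strategy.} The plan is to use a layer-cake representation of the moment together with an energy (IMS-type localization) argument showing that the ground state $\Psi_\alpha$ cannot put non-negligible mass on configurations where $|m_q(\chi*\rho_{Y_n})|$ is large while the total mass $\int\mathrm{d}\rho_{Y_n}$ stays above $d$. Concretely, writing $M:=|m_q(\chi*\rho_{Y_n})|\,\chi(\int\mathrm{d}\rho_{Y_n}\geq d)$ as a (bounded-below, measurable) multiplication operator and using
\begin{align*}
  \langle\Psi_\alpha, M^m\Psi_\alpha\rangle_{\mathcal F}
  = m\int_0^\infty \lambda^{m-1}\,\big\langle\Psi_\alpha,\chi(M\geq \lambda)\Psi_\alpha\big\rangle_{\mathcal F}\,\mathrm d\lambda,
\end{align*}
it suffices to establish a decay estimate of the form $P_\alpha(\lambda):=\langle\Psi_\alpha,\chi(M\geq\lambda)\Psi_\alpha\rangle_{\mathcal F}\lesssim \lambda^{-(m+1+\epsilon)}$ for some $\epsilon>0$, uniformly in $\alpha$, for $\lambda$ large, the small-$\lambda$ part being trivially bounded by $1$ times the length of the interval which needs a cutoff at some $\lambda_0$; so really one needs the tail bound for $\lambda\geq\lambda_0$.

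\textbf{Key steps.} First, I would use the translation covariance of the median: since $m_q(\chi*\rho_{\varphi_z})=m_q(\chi*\rho_\varphi)+z$ on the semiclassical side, the analogous shift acts on $\mathcal F$ by the translation unitary generated by $\mathcal P$. Given a configuration with $|m_q(\chi*\rho_{Y_n})|=\lambda$ large, translating the polarization field by $-m_q(\chi*\rho_{Y_n})$ (a configuration-dependent shift, i.e. the same trick already used to pass from $\mathbb H$ to $\mathbb H_{\frac1i\nabla_x}$ and in Lemma \ref{Lem:Pseudo_Boost_B_FB}) produces a state concentrated near the origin, and the key point is that the interaction term $-a^*(v)-a(v)$ is \emph{not} translation invariant on the fiber, so moving the bulk of the phonon mass far from the origin costs energy. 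Quantitatively, I would sandwich $\mathbb H_0$ between spectral cutoffs of $M$: with $\Theta_\lambda:=\chi(M\le 2\lambda)$ and a smooth partition, an IMS localization formula gives
\begin{align*}
  \mathbb H_0 \;\geq\; \Theta_\lambda\,\mathbb H_0\,\Theta_\lambda + (1-\Theta_\lambda^2)^{1/2}\,\mathbb H_0\,(1-\Theta_\lambda^2)^{1/2} - \text{(localization error)},
\end{align*}
where the localization error comes from commutators of $\mathcal P^2$ with functions of $m_q(\chi*\rho_{Y_n})$ and is controlled using the differentiability of $m_q$ with respect to $\|g*(\rho_{Y_n}-\cdot)\|_{\mathrm{TV}}$ established in Lemma \ref{Lem:Eq:Total_Variation_Upper_Bound} (this is precisely why the median was regularized and mollified). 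On the region $M\ge\lambda$ one then needs a lower bound $\inf\sigma\big(\mathbb H_0\restriction\{M\ge\lambda\}\big)\ge E_\alpha + c(\lambda)$ with $c(\lambda)\to\infty$; since by Eq.~(\ref{Eq:Introduction_Convergence_Semi_Classical_Energy}) $E_\alpha\to e^{\mathrm{Pek}}$, it is enough to show that a normalized state supported on $\{M\ge\lambda\}$ has energy at least $e^{\mathrm{Pek}}+c(\lambda)+o(1)$. This in turn follows from the binding/localization estimates of \cite{BS1} (Bose–Einstein condensation with a far-away median forces a non-minimizing profile): once the regularized median is at distance $\gtrsim\lambda$ from the origin, the electron density and the phonon field are displaced relative to each other, and the Pekar-type energy picks up a positive increment growing (at least logarithmically, which suffices) in $\lambda$. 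Combining, the standard "energy $\Rightarrow$ small mass on high-energy subspace" argument (test $\mathbb H_0-E_\alpha$ against $\Psi_\alpha$, which has energy exactly $E_\alpha$, and use the IMS splitting) yields $P_\alpha(\lambda)\le C/c(\lambda)$, and iterating/bootstrapping the argument on dyadic shells $\{2^k\lambda\le M<2^{k+1}\lambda\}$ upgrades this to the polynomial (indeed faster-than-any-polynomial) decay needed to integrate the layer-cake formula for every fixed $m$.

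\textbf{Main obstacle.} The delicate point is controlling the IMS localization error: cutting off by a function of $m_q(\chi*\rho_{Y_n})$ does not commute with the kinetic term $\mathcal P^2$ appearing (after the LLP transform) in $\mathbb H_0$, and one must show the resulting commutator is subleading compared to the energy gain $c(\lambda)$. This requires the uniform Lipschitz/$C^2$ control of $m_q$ on the admissible set $\Omega_{\sigma+\kappa}$ from Section \ref{Sec:Analysis_of_the_generalized_Median} (Lemma \ref{Lem:Eq:Total_Variation_Upper_Bound}), together with the a priori bound $\langle\Psi_\alpha,\mathcal P^2\Psi_\alpha\rangle\lesssim 1$ (which is part of the standard low-energy estimates, since $\mathcal P^2\le\mathbb H_0+a^*(v)+a(v)$ and the interaction is form-bounded). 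A secondary technical nuisance is the cutoff $\chi(\int\mathrm d\rho_{Y_n}\ge d)$: on its complement the median is set to $0$ by convention, so that region contributes nothing to $M^m$ and can simply be discarded at the outset — but one must check the layer-cake manipulation is legitimate for the multiplication operator $M$, which is immediate since $M\ge 0$ is a bona fide measurable function of $Y_n$.
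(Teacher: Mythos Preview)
Your overall skeleton (layer-cake representation, IMS localization, energy-gap argument, iteration) is right, and it matches what the paper does. However, two key ingredients are misidentified, and as stated the argument does not close.

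\textbf{The energy gap does not grow.} You claim that states with $|m_q(\chi*\rho_{Y_n})|\ge\lambda$ have energy at least $E_\alpha+c(\lambda)$ with $c(\lambda)\to\infty$. This is false. In the LLP fiber $\mathbb H_0=\mathcal P^2+\mathcal N-a^*(v)-a(v)$ the interaction $v$ is centred at the origin and decays like $|y|^{-2}$; if the phonon bulk sits at distance $\lambda$, the interaction is merely weak (of order $\lambda^{-1/2}$ in $L^2$), so the energy is $\gtrsim 0>E_\alpha$ but the gap is asymptotically $|E_\alpha|\approx|e^{\mathrm{Pek}}|$, a \emph{constant}. (Semiclassically this is obvious: $\mathcal F^{\mathrm{Pek}}(\varphi^{\mathrm{Pek}}_z)=e^{\mathrm{Pek}}$ for all $z$, so displacing the field costs nothing at the Pekar level; there is no growing penalty.) With only a constant gap, a single IMS step gives $P_\alpha(\lambda)\lesssim(\text{localization error})$, and the decay in $\lambda$ must come entirely from that error.

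\textbf{You localize in the wrong variable.} You cut off directly in $M=|m_q|$. By translation covariance $\sum_j\nabla_{y_j}m_q=\mathrm{Id}$ on the good set, so $[\mathcal P,\Theta_\lambda(M)]$ carries no geometric smallness, and the interaction error $\big(\Theta_\lambda(m_q(\rho_{Y_n}))-\Theta_\lambda(m_q(\rho_{Y_{n-1}}))\big)L(v)$ has no reason to decay in $\lambda$ either. The paper instead first converts ``large median'' into ``mass far away'': if $|m_q(\chi*\rho_{Y_n})|\ge s$ and $\int\mathrm d\rho_{Y_n}\ge d$, then a fixed fraction $(\tfrac12-q)d$ of the phonon mass lies at distance $\gtrsim s$ from the origin (one quantile is at least $s/3$). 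One then localizes in the smooth functional $\int g(s^{-1}y)\,\mathrm d\rho_{Y_n}$. The payoff is that the IMS error for $a^*(v)$ picks up the factor $\|\chi(|\cdot|>s)v\|\sim s^{-1/2}$ from the tail of $v$, and the $\mathcal P^2$ error picks up $s^{-2}$ from $\nabla g(s^{-1}\cdot)$; together with the constant energy gap this yields the recursion $P_\alpha(\sqrt3\,s,\lambda')\lesssim P_\alpha(s,\lambda)^{1-z}\,s^{-1/2}$, which iterates to $P_\alpha(s,\lambda)\lesssim s^{-\gamma}$ for any $\gamma$. Lemma~\ref{Lem:Eq:Total_Variation_Upper_Bound} is not what controls these commutators; the control is purely from the explicit smoothness of $g$ and the decay of $v$.

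In short: the iteration engine is (constant gap) $\times$ (localization error $\sim s^{-1/2}$), not (growing gap), and the $s^{-1/2}$ appears only if you localize in ``mass at distance $s$'' rather than in $m_q$ itself.
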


Before we come to the proof of Theorem \ref{Th:Powers_of_Median}, we need the auxiliary Lemma \ref{Lem:Bad_quantile_support_Bad_Energy}, Lemma \ref{Lem:IMS_Mass_Away_From_Electron} and Lemma \ref{Lem:Quantile_Mass_Estimate_by_IMS}, which will provide us with strong bounds on the probabilities
\begin{align}
\label{Eq:Probability_t_lambda}
    P_\alpha(s,\lambda):=  \sum_{n=0}^\infty \underset{\mathbb R^{3n}\setminus \Omega_{s,\lambda}^{(n)}}{\int}|\Psi_\alpha(Y_n)|^2\, \mathrm{d}Y_n,
\end{align}
where $\Omega_{s,\lambda}^{(n)}$ is the set of all $Y_n\in \mathbb R^{3n}$ such that
\begin{align*}
  \int\chi(|y|>s)\mathrm{d}\rho_{Y_n}< \lambda.  
\end{align*}
Note that $\Omega_{s,\lambda}^{(n)}$ depends on $\alpha$ as well via the empirical measure $\rho_{Y_n}$ defined in Eq.~(\ref{Eq:Empirical_Measure}). Let us furthermore introduce $\Omega_{s,\lambda}:=\underset{n\in \mathbb N}{\bigcup} \Omega_{s,\lambda}^{(n)}$ and the shifted version $\Omega^{(n)}_{s,\lambda}(x)$ as the set of all $Y_n\in \mathbb R^{3n}$ satisfying 
\begin{align*}
 \int\chi(|y-x|>s)\mathrm{d}\rho_{Y_n}< \lambda.   
\end{align*}

\begin{lem}
    \label{Lem:Bad_quantile_support_Bad_Energy}
    For any $\lambda>0$, there exists a constant $s_0$ such that for all $s\geq s_0$ and all states $\Phi\in L^2\! \left(\mathbb{R}^3\right)\otimes \mathcal{F}$ with $\mathrm{supp}(\Phi)\subseteq \underset{n\in \mathbb N,x\in \mathbb R^3}{\bigcup}\{x\}\times \! \left(\mathbb R^{3n}\setminus \Omega_{s,\lambda}^{(n)}(x)\right)$
    \begin{align*}
        \left\langle \Phi,  \mathbb H\Phi\right\rangle_{L^2\! \left(\mathbb{R}^3\right)\otimes \mathcal{F}}\geq E_\alpha+\frac{\lambda}{4}.
    \end{align*}
\end{lem}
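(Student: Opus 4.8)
Throughout, $\Phi$ is a state and we exploit that phonons lying at distance $>s$ from the electron cost at least $\lambda$ in $\mathcal N$, while the coupling $v_x(y)=\pi^{-\frac32}|y-x|^{-2}$ lets them lower the interaction energy only by $O(s^{-1})$, since $\|\chi(|y-x|>s)v_x\|=\|\chi(|\cdot|>s)v\|\le Cs^{-\frac12}$ (uniformly in $x$). Accordingly I would split $v_x=v_x^<+v_x^>$ with $v_x^>:=\chi(|y-x|>s)v_x$ and $v_x^<:=v_x-v_x^>$ supported in $\{|y-x|\le s\}$, and write $\mathbb H=\mathbb K+\mathcal N^{>s}-a^*(v_x^>)-a(v_x^>)$, where
\begin{align*}
 \mathbb K:=-\Delta_x+\mathcal N^{\le s}-a^*(v_x^<)-a(v_x^<),\qquad \mathcal N^{\le s}:=\int\chi(|y-x|\le s)\,\mathrm d\rho_{Y_n},\quad \mathcal N^{>s}:=\mathcal N-\mathcal N^{\le s}.
\end{align*}
Since $v_x^>$ and $\mathcal N^{>s}$ involve only the ultraviolet‑regular modes supported in $\{|y-x|>s\}$, completing the square there gives the operator inequality $\tfrac12\mathcal N^{>s}-a^*(v_x^>)-a(v_x^>)\ge-2\|v^>\|^2\ge-Cs^{-1}$, hence $\mathbb H\ge\mathbb K+\tfrac12\mathcal N^{>s}-Cs^{-1}$. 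The hypothesis on $\mathrm{supp}(\Phi)$ means exactly $\langle\Phi,\mathcal N^{>s}\Phi\rangle\ge\lambda$, so the whole statement reduces to the claim that there is $\delta(s)\to0$ as $s\to\infty$, uniform in $\alpha\ge1$, with $\mathbb K\ge E_\alpha-\delta(s)$; granting this, $\langle\Phi,\mathbb H\Phi\rangle\ge E_\alpha-\delta(s)+\tfrac\lambda2-Cs^{-1}\ge E_\alpha+\tfrac\lambda4$ once $s\ge s_0(\lambda)$.

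To prove the claim I would record the Lieb--Yamazaki form bound $\mathbb H\ge c_0\mathcal N-C_0$ with $c_0,C_0$ independent of $\alpha\ge1$ (cf.\ \cite{LY}) and $|E_\alpha|\le C$ (Eq.~(\ref{Eq:Introduction_Convergence_Semi_Classical_Energy})). The one genuine difficulty is that the ``near region'' $\{|y-x|\le s\}$ in $\mathbb K$ moves with the electron and so is not respected by the Fock‑space tensor splitting; this I would handle by an IMS localisation of the electron. Pick a partition of unity $\sum_\nu J_\nu^2=1$ on $\mathbb R^3_x$ subordinate to balls $\{|x-x_\nu|\le 2L\}$ with $1\ll L\ll s$ (e.g.\ $L=s^{1/4}$), so that $\mathbb K=\sum_\nu J_\nu\mathbb K J_\nu-\sum_\nu|\nabla J_\nu|^2\ge\sum_\nu J_\nu\mathbb K J_\nu-CL^{-2}$. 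On $\mathrm{supp}\,J_\nu$ the frozen ball $R_\nu:=B_{s-2L}(x_\nu)$ satisfies $R_\nu\subseteq\{|y-x|\le s\}$; dropping the nonnegative number of phonons in the annulus $\{|y-x|\le s\}\setminus R_\nu$ and completing the square in the annulus modes (whose piece of $v_x$ stays at distance $\ge s-4L$ from $y=x$ and has $L^2$‑norm $\le Cs^{-\frac12}$) gives
\begin{align*}
 J_\nu\mathbb K J_\nu\ge J_\nu\big(\mathbb H_\nu-Cs^{-1}\big)J_\nu,\qquad \mathbb H_\nu:=-\Delta_x+\int\mathds 1_{R_\nu}\,\mathrm d\rho_{Y_n}-a^*(v_x\mathds 1_{R_\nu})-a(v_x\mathds 1_{R_\nu}).
\end{align*}

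Because $R_\nu$ is now fixed, the splitting $\mathcal F=\mathcal F(R_\nu)\otimes\mathcal F(R_\nu^{\mathrm c})$ turns $\mathbb H_\nu$ into $\mathbb H_\nu^{(0)}\otimes\mathbf 1_{\mathcal F(R_\nu^{\mathrm c})}$ with $\mathbb H_\nu^{(0)}$ on $L^2(\mathbb R^3)\otimes\mathcal F(R_\nu)$, i.e.\ the far phonons are discarded for free. Testing $J_\nu\mathbb H_\nu J_\nu$ against an arbitrary $\Psi$ then reduces, through this tensor decomposition, to bounding $\langle\psi,\mathbb H_\nu^{(0)}\psi\rangle$ for $\psi$ whose $x$‑support lies in $\mathrm{supp}\,J_\nu$; for such $\psi$ the electron sits at distance $\ge s-4L$ from $\partial R_\nu$, so $v_x-v_x\mathds 1_{R_\nu}=v_x\mathds 1_{R_\nu^{\mathrm c}}$ has $L^2$‑norm $\le Cs^{-\frac12}$ and
\begin{align*}
 \langle\psi,\mathbb H_\nu^{(0)}\psi\rangle=\langle\psi,\mathbb H\psi\rangle+\langle\psi,\big(a^*(v_x\mathds 1_{R_\nu^{\mathrm c}})+a(v_x\mathds 1_{R_\nu^{\mathrm c}})\big)\psi\rangle\ge\big(E_\alpha-Cs^{-1/4}\big)\|\psi\|^2,
\end{align*}
using $\mathbb H\ge E_\alpha$, $\mathbb H\ge c_0\mathcal N-C_0$ (to control $\langle\psi,\mathcal N\psi\rangle$) and $|E_\alpha|\le C$. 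Summing over $\nu$ gives $\mathbb K\ge E_\alpha-Cs^{-1/4}-CL^{-2}$, which is the claim. The main obstacle is exactly this reduction of the electron‑dependent cutoff to a frozen region together with the decoupling of the far phonons; the remaining ingredients — the $L^2$ tail bounds on $v_x$, the displaced‑oscillator inequalities, and the Lieb--Yamazaki absorption — are routine.
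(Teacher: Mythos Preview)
Your proof is correct and follows essentially the same strategy as the paper: IMS localisation of the electron, tensor-splitting of the Fock space over a fixed region, and $L^2$ tail bounds on $v$. The only real difference is the order of operations: the paper localises the electron first (at scale $s$ via $\tau_s(x-x')$) and then splits the interaction around the frozen centre $x'$ at scale $s/2$, whereas you split around the moving $x$ first and only afterwards localise at the finer scale $L=s^{1/4}$ and adjust to a fixed region $R_\nu$. The paper's order is slightly more economical since it avoids the intermediate annulus estimate. Your final step is also overcomplicated: once $\psi\in L^2(\mathbb R^3)\otimes\mathcal F(R_\nu)$ is embedded in the full space by tensoring with the vacuum in $\mathcal F(R_\nu^{\mathrm c})$, both cross terms $\langle\psi,a^*(v_x\mathds 1_{R_\nu^{\mathrm c}})\psi\rangle$ and $\langle\psi,a(v_x\mathds 1_{R_\nu^{\mathrm c}})\psi\rangle$ actually vanish, so $\langle\psi,\mathbb H_\nu^{(0)}\psi\rangle=\langle\psi,\mathbb H\psi\rangle\ge E_\alpha\|\psi\|^2$ directly---this is exactly the paper's observation in Eq.~(\ref{Eq:GS_Energy_Component})---and the Lieb--Yamazaki bound is not needed at that point.
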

\begin{proof}
    Following the approach in \cite{LY}, respectively \cite[Theorem 2.5]{BS1}, let us localize the electron coordinate $x$ by introducing the states 
    \begin{align*}
       \Phi_{x'}(x;Y_n):=\tau_s(x-x')\Phi(x;Y_n),
    \end{align*}
    where $\tau_s(y):=s^{-\frac{3}{2}}\tau(s^{-1}y)$ and $\tau$ is a smooth $[0,1]$-valued function with $\tau(y)=0$ in case $|y|>\frac{1}{3}$ and $\int \tau(y)^2\mathrm{d}y=1$, see Subsection \ref{Subsec:Conventions_and_Definitions}. It is elementary to check that
    \begin{align}
    \label{Eq:IMS_in_x_Space}
         \left\langle \Phi,  \mathbb H\Phi\right\rangle_{L^2\! \left(\mathbb{R}^3\right)\otimes \mathcal{F}} & =\int_{\mathbb R^3} \left\langle \Phi_{x'},  \mathbb H \Phi_{x'}\right\rangle_{L^2\! \left(\mathbb{R}^3\right)\otimes \mathcal{F}}\mathrm{d}x'-s^{-2}\left\|\nabla \tau\right\|^2,
    \end{align}
  see for example \cite[Theorem 2.5]{BS1}. In order to analyse $\left\langle \Phi_{x'},  \mathbb H \Phi_{x'}\right\rangle_{L^2\! \left(\mathbb{R}^3\right)\otimes \mathcal{F}}$, let us introduce
    \begin{align*}
        v^{\leq }_{x,x',s}(y): & =\chi\! \left(|y-x'|\leq \frac{s}{2}\right)v_x(y),\\
        v^{> }_{x,x',s}(y): & =v_x(y)-v^{\leq }_{x,x',s}(y)=\chi\! \left(|y-x'|> \frac{s}{2}\right)v_x(y),
    \end{align*}
    as well as the multiplication operators $\mathcal{N}^{\leq }_{x',s}=\mathcal{N}^{\leq }_{x',s}(\rho_{Y_n})$ and $\mathcal{N}^{>}_{x',s}=\mathcal{N}^{>}_{x',s}(\rho_{Y_n})$, and the operator $\mathbb H_{x',s}$ as 
    \begin{align*}
        \mathcal{N}^{\leq }_{x',s}(\rho_{Y_n}): & =\int \chi \! \left(|y-x'|\leq \frac{s}{2}\right)\mathrm{d}\rho_{Y_n}, \\ 
        \mathcal{N}^{> }_{x',s}(\rho_{Y_n}): & =\int \chi\! \left(|y-x'|> \frac{s}{2}\right)\mathrm{d}\rho_{Y_n},\\
        \mathbb H_{x',s} := - \Delta_x & +\mathcal{N}^{\leq }_{x',s}-a\! \left(v^{\leq }_{x,x',s}\right)^*-a\! \left(v^{\leq }_{x,x',s}\right).
    \end{align*}
    Using the set $A_{x'}:=\left\{x\in \mathbb R^3:|x-x'|\leq \frac{s}{2}\right\}$ we can split the space $L^2\! \left(\mathbb{R}^3\right)\otimes \mathcal{F}$ into the tensor factors
    \begin{align}
    \label{Eq:Tensor_Split}
        L^2\! \left(\mathbb{R}^3\right)\otimes \mathcal{F}\cong L^2\! \left(\mathbb{R}^3\right)\otimes \mathcal{F}\! \left(A_{x'}\right)\otimes  \mathcal{F}\! \left(\mathbb R^3\setminus A_{x'}\right),
    \end{align}
    and we observe that the operator $\mathbb H_{x',s}$ acts as the identity on the last factor in the decomposition in Eq.~(\ref{Eq:Tensor_Split}). Denoting with $\Omega_{x'}$ the vacuum in $\mathcal{F}\! \left(\mathbb R^3\setminus A_{x'}\right)$, we furthermore have for any state $\Phi'\in L^2\! \left(\mathbb{R}^3\right)\otimes \mathcal{F}\! \left(A_{x'}\right)$
    \begin{align*}
        \left\langle\Phi'\otimes \Omega_{x'},\mathbb H_{x',s}\Phi'\otimes \Omega_{x'}\right\rangle_{  L^2\! \left(\mathbb{R}^3\right)\otimes \mathcal{F}} =  \left\langle\Phi'\otimes \Omega_{x'},\mathbb H\Phi'\otimes \Omega_{x'}\right\rangle_{  L^2\! \left(\mathbb{R}^3\right)\otimes \mathcal{F}}\geq E_\alpha,
    \end{align*}
    and therefore we have as an operator inequality
    \begin{align}
    \label{Eq:GS_Energy_Component}
        \mathbb H_{x',s}\geq E_\alpha.
    \end{align}
    Making use of the decomposition
    \begin{align*}
        \mathbb H=\mathbb H_{x',s}+ \mathcal{N}^{> }_{x',s}-a\! \left(v^{> }_{x,x',s}\right)^*-a\! \left(v^{> }_{x,x',s}\right)
    \end{align*}
    we obtain by Eq.~(\ref{Eq:GS_Energy_Component})
    \begin{align}
    \label{Eq:Applied_GS_Energy_Component}
        \left\langle \Phi_{x'},  \mathbb H \Phi_{x'}\right\rangle_{L^2\! \left(\mathbb{R}^3\right)\otimes \mathcal{F}}\geq E_\alpha\|\Phi_{x'}\|^2+ \left\langle \Phi_{x'}, \left(\mathcal{N}^{> }_{x',s}-a\! \left(v^{> }_{x,x',s}\right)^*-a\! \left(v^{> }_{x,x',s}\right)\right)\Phi_{x'}\right\rangle_{L^2\! \left(\mathbb{R}^3\right)\otimes \mathcal{F}}.
    \end{align}
    Note that $\Phi_{x'}(x;Y_n)=0$ for $|x-x'|> \frac{s}{3}$ and that for $|x-x'|\leq \frac{s}{3}$
    \begin{align*}
        \|v^{> }_{x,x',s}\|^2\lesssim \int_{\mathbb R^3}\frac{\chi \! \left(|y-x'|>\frac{s}{2}\right)}{|y-x|^4}\mathrm{d}y\leq \int_{\mathbb R^3}\frac{\chi \! \left(|y-x|>\frac{s}{6}\right)}{|y-x|^4}\mathrm{d}y\lesssim s^{-1},
    \end{align*}
    and therefore there exists a constant $C>0$ such that we have for all $(x;Y_n)\in \mathrm{supp}(\Phi_{x'})$ the operator inequality on $\mathcal{F}$
    \begin{align*}
        \mathcal{N}^{> }_{x',s}-a\! \left(v^{> }_{x,x',s}\right)^*-a\! \left(v^{> }_{x,x',s}\right)\geq \frac{1}{2}\mathcal{N}^{> }_{x',s}-Cs^{-1}.
    \end{align*}
    Together with Eq.~(\ref{Eq:Applied_GS_Energy_Component}) we therefore obtain
    \begin{align}
    \label{Eq:Phi_x_prime_Estimate}
          \left\langle \Phi_{x'},  \mathbb H \Phi_{x'}\right\rangle_{L^2\! \left(\mathbb{R}^3\right)\otimes \mathcal{F}}\geq \left(E_\alpha-Cs^{-1}\right)\|\Phi_{x'}\|^2+\frac{1}{2} \left\langle \Phi_{x'},  \mathcal{N}^{> }_{x',s} \Phi_{x'}\right\rangle_{L^2\! \left(\mathbb{R}^3\right)\otimes \mathcal{F}}.
    \end{align}
    We observe that $\Phi_{x'}(x;Y_n)\neq 0$ implies $|x-x'|\leq \frac{s}{3}$ as well as $Y_n\notin \Omega^{(n)}_{s,\lambda}(x)$ and hence
    \begin{align*}
        \int \chi\! \left(|y-x'|>\frac{s}{2}\right)\mathrm{d}\rho_{Y_n}\geq \int \chi\! \left(|y-x|>s\right)\mathrm{d}\rho_{Y_n}\geq \lambda.
    \end{align*}
    Consequently, $\left\langle \Phi_{x'},  \mathcal{N}^{> }_{x',s} \Phi_{x'}\right\rangle_{L^2\! \left(\mathbb{R}^3\right)\otimes \mathcal{F}}\geq \lambda\|\Phi_{x'}\|^2$, which yields together with Eq.~(\ref{Eq:IMS_in_x_Space}) and Eq.~(\ref{Eq:Phi_x_prime_Estimate}) for a suitable constant $C>0$ and $s$ large enough
    \begin{align*}
         \left\langle \Phi,  \mathbb H\Phi\right\rangle_{L^2\! \left(\mathbb{R}^3\right)\otimes \mathcal{F}} & \geq \left(E_\alpha+\frac{\lambda}{2}-Cs^{-1}\right)\int_{\mathbb R^{3}}\|\Phi_{x'}\|^2\mathrm{d}x'-s^{-2}\|\nabla \tau\|^2\\
         & =E_\alpha+\frac{\lambda}{2}-Cs^{-1}-s^{-2}\|\nabla \tau\|^2\geq E_\alpha+\frac{\lambda}{4}.
    \end{align*}
\end{proof}

Let us define $g(y):=\chi_{1}(|y|^2>2)$. In the following Lemma \ref{Lem:IMS_Mass_Away_From_Electron}, we are going to quantify the localization error with respect to the localization functions
\begin{align*}
    K_{1,s,\lambda,u}(\rho): & =\chi_u\! \left(\int \! g\! \left(s^{-1}y\right)\mathrm{d}\rho(y)\leq \lambda\right),\\
      K_{2,s,\lambda,u}(\rho): & =\sqrt{1-K_{1,s,\lambda,u}(\rho)^2}.
\end{align*}

\begin{lem}
    \label{Lem:IMS_Mass_Away_From_Electron}
    For any $\lambda,u,z>0$, there exists a constant $C>0$ such that for all $s\geq 1$
    \begin{align*}
        \sum_{j=1}^2\left\langle K_{j,s,\lambda,u}(\rho_{Y_n})\Psi_\alpha , \mathbb{H}_0 K_{j,s,\lambda,u}(\rho_{Y_n})\Psi_\alpha\right\rangle_{ \mathcal{F}}\leq E_\alpha + C P_\alpha(s,\lambda-2u)^{1-z} s^{-\frac{1}{2}}.
    \end{align*}
\end{lem}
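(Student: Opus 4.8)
The plan is to read Lemma~\ref{Lem:IMS_Mass_Away_From_Electron} as an IMS localization of the ground state $\Psi_\alpha$ of $\mathbb{H}_0$ in the ``mass far from the origin'' variable $N_s(Y_n):=\int g(s^{-1}y)\,\mathrm{d}\rho_{Y_n}(y)$, where the localization error is controlled by its decay in $s$ and by the fact that it is supported on configurations lying outside $\Omega^{(\cdot)}_{s,\lambda-2u}$, whose $\Psi_\alpha$-probability is $P_\alpha(s,\lambda-2u)$. Write $K_{j,s,\lambda,u}(\rho_{Y_n})=k_j\bigl(N_s(Y_n)\bigr)$ with $k_1:=\chi_u(\,\cdot\le\lambda)$ and $k_2:=\sqrt{1-k_1^2}=\chi_u(\lambda\le\cdot)$ smooth on $\mathbb{R}$, so that $k_1^2+k_2^2=1$, $\|k_j'\|_\infty\lesssim u^{-1}$, and each $k_j$ is constant on $(-\infty,\lambda-u]$ and on $[\lambda+u,\infty)$. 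Since $\mathcal{N}$ and the $K_j$ are commuting multiplication operators, applying the standard IMS identity to the kinetic term $\mathcal{P}^2$ and the commutator identity for $a^*(v)+a(v)$ (the only piece of $\mathbb{H}_0$ not commuting with the $K_j$) gives, on the form domain of $\mathbb{H}_0$,
\[
 \sum_{j=1}^{2}K_j\,\mathbb{H}_0\,K_j=\mathbb{H}_0+\mathcal{E}_{\mathrm{kin}}-\bigl(R+R^*\bigr),\qquad \mathcal{E}_{\mathrm{kin}}:=\sum_{j=1}^{2}\sum_{a=1}^{3}\Bigl(\sum_i\partial_{y_i^a}K_j\Bigr)^{2}\ \ge 0,\qquad R:=\sum_{j=1}^{2}[\,K_j,a^*(v)\,]K_j .
\]
Evaluating at $\Psi_\alpha$ and using $\langle\Psi_\alpha,\mathbb{H}_0\Psi_\alpha\rangle_{\mathcal{F}}=E_\alpha$ reduces the claim to the bound $\langle\Psi_\alpha,\mathcal{E}_{\mathrm{kin}}\Psi_\alpha\rangle_{\mathcal{F}}+|\langle\Psi_\alpha,R\Psi_\alpha\rangle_{\mathcal{F}}|\lesssim P_\alpha(s,\lambda-2u)^{1-z}\,s^{-1/2}$. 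The structural point used throughout is that $\partial_{y_i^a}K_j$ and $\Delta_j(Y_n):=k_j\bigl(N_s(Y_n)\bigr)-k_j\bigl(N_s(y_1,\dots,y_{n-1})\bigr)$ both vanish unless $N_s(Y_n)>\lambda-u$; since $g(s^{-1}\,\cdot)\le\mathds{1}_{|\cdot|>s}$ and adding one phonon moves $N_s$ by at most $\alpha^{-2}$, for $\alpha$ large this forces both $Y_n$ and $(y_1,\dots,y_{n-1})$ to lie outside $\Omega^{(\cdot)}_{s,\lambda-2u}$, i.e. the error operators are supported there.

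For the kinetic error, $\partial_{y_i^a}K_j=k_j'(N_s)\,\alpha^{-2}s^{-1}(\partial_ag)(y_i/s)$ with $(\partial_ag)(\,\cdot/s)$ bounded and supported in the annulus $\{s<|y_i|<\sqrt3\,s\}$, so $|\sum_i\partial_{y_i^a}K_j|\lesssim u^{-1}s^{-1}\!\int\!\mathds{1}_{|y|>s}\,\mathrm{d}\rho_{Y_n}\le u^{-1}s^{-1}\mathcal{N}$, whence $\mathcal{E}_{\mathrm{kin}}\lesssim u^{-2}s^{-2}\mathcal{N}^{2}$, supported on $\mathbb{R}^{3n}\setminus\Omega^{(n)}_{s,\lambda-2u}$. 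Thus $\langle\Psi_\alpha,\mathcal{E}_{\mathrm{kin}}\Psi_\alpha\rangle_{\mathcal{F}}\lesssim u^{-2}s^{-2}\langle\Psi_\alpha,\mathcal{N}^{2}\,\mathds{1}_{\mathbb{R}^{3n}\setminus\Omega^{(n)}_{s,\lambda-2u}}\Psi_\alpha\rangle_{\mathcal{F}}$, and by H\"older with exponents $(p,\tfrac{p}{p-1})$ together with the uniform-in-$\alpha$ boundedness of the moments $\langle\Psi_\alpha,\mathcal{N}^{m}\Psi_\alpha\rangle_{\mathcal{F}}$ this is $\lesssim_{u,p}s^{-2}P_\alpha(s,\lambda-2u)^{1-1/p}$; choosing $p\ge 1/z$ and using $s^{-2}\le s^{-1/2}$ for $s\ge1$ gives the required bound for this nonnegative term.

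For the interaction error the key observation is that $[\,K_j,a^*(v)\,]$ only ever sees $v$ cut to $\{|y|>s\}$: one has $(R\Psi_\alpha)(Y_n)=\sum_j\tfrac{\sqrt n}{\alpha}\,\Delta_j(Y_n)\,k_j\bigl(N_s(y_1,\dots,y_{n-1})\bigr)\Psi_\alpha(y_1,\dots,y_{n-1})\,v(y_n)$ with $|\Delta_j(Y_n)|\lesssim u^{-1}\alpha^{-2}g(y_n/s)$ and $g(y_n/s)v(y_n)\le\widetilde v(y_n):=\mathds{1}_{|y_n|>s}v(y_n)$, where $\|\widetilde v\|\lesssim s^{-1/2}$. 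Inserting the indicators $\mathds{1}_{\mathbb{R}^{3n}\setminus\Omega^{(n)}_{s,\lambda-2u}}$ (legitimate by the support fact above) and bounding $|k_j|\le1$, one gets $|\langle\Psi_\alpha,R\Psi_\alpha\rangle_{\mathcal{F}}|\lesssim u^{-1}\alpha^{-2}\langle\Phi,a^*(\widetilde v)\Phi\rangle$ with $\Phi:=\mathds{1}_{\mathbb{R}^{3n}\setminus\Omega^{(n)}_{s,\lambda-2u}}|\Psi_\alpha|$; using $\|a(\widetilde v)\Phi\|\le\|\widetilde v\|\,\|\mathcal{N}^{1/2}\Phi\|$ and Cauchy--Schwarz, $|\langle\Psi_\alpha,R\Psi_\alpha\rangle_{\mathcal{F}}|\lesssim u^{-1}\alpha^{-2}s^{-1/2}\langle\Psi_\alpha,\mathcal{N}\,\mathds{1}_{\mathbb{R}^{3n}\setminus\Omega^{(n)}_{s,\lambda-2u}}\Psi_\alpha\rangle_{\mathcal{F}}^{1/2}P_\alpha(s,\lambda-2u)^{1/2}$, and the same H\"older/moment argument with $p$ large turns this into $\lesssim_{u,z}s^{-1/2}P_\alpha(s,\lambda-2u)^{1-z}$. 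Adding the two error estimates yields the statement.

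The step I expect to be the main obstacle is making the form identity above rigorous in the presence of the ultraviolet singularity of $v$. I would run the argument first with the regularized interaction $v^\Lambda$ (so that $a^*(v^\Lambda)$ is bounded and the identity is immediate), observing that the error operator $R^\Lambda$ depends on $v^\Lambda$ only through $v^\Lambda\,\mathds{1}_{|y|>s}\in L^2$, which converges in $L^2$ to $\widetilde v$ as $\Lambda\to\infty$ while $\mathcal{E}_{\mathrm{kin}}$ is $\Lambda$-independent, and then pass to the limit, checking on the way that $K_j\Psi_\alpha$ lies in the form domain of $\mathbb{H}_0$ (which holds since $K_j$ is bounded and smooth with bounded first derivatives). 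The only other external input is the uniform-in-$\alpha$ boundedness of all moments $\langle\Psi_\alpha,\mathcal{N}^{m}\Psi_\alpha\rangle_{\mathcal{F}}$, a standard consequence of the ground state equation for $\mathbb{H}_0$; everything else is routine.
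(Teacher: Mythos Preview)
Your proposal is correct and follows essentially the same route as the paper: IMS localization of $\mathbb H_0$ with respect to the multiplication operators $K_j=k_j(N_s)$, trivial handling of $\mathcal N$, the commutator $[\mathcal P,K_j]$ for the kinetic error, the increment $K_*(Y_n)=\sum_j K_j(\rho_{Y_n})K_j(\rho_{Y_{n-1}})-1$ for the interaction error (your $R$ satisfies $\langle\Psi_\alpha,R\Psi_\alpha\rangle=\langle\Psi_\alpha,K_* L(v)\Psi_\alpha\rangle$), the support observation forcing $Y_n,Y_{n-1}\notin\Omega_{s,\lambda-2u}$ together with $|y_n|>s$, and finally H\"older against the uniform moments of $\mathcal N$ from Lemma~\ref{Lem:Moments_of_particle_number}. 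The only noticeable variation is that you bound the interaction increment by the Lipschitz estimate $|\Delta_j|\lesssim u^{-1}\alpha^{-2}g(y_n/s)$, which buys an extra factor $\alpha^{-2}$, whereas the paper simply uses $|K_*|\le 1$; both give the stated bound, and your remark on first running the argument with $v^\Lambda$ and passing to the limit is a clean way to justify the formal identity that the paper takes for granted.
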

\begin{proof}
    We start with the observation that $[K_{j,s,\lambda,u},\mathcal{N}]=0$ and therefore
    \begin{align}
    \label{Eq:Large_s_particle_number}
        \sum_{j=1}^2\left\langle K_{j,s,\lambda,u}(\rho_{Y_n})\Psi_\alpha , \mathcal{N} K_{j,s,\lambda,u}(\rho_{Y_n})\Psi_\alpha\right\rangle_{ \mathcal{F}}=\left\langle \Psi_\alpha, \mathcal{N} \Psi_\alpha\right\rangle_{\mathcal{F}} .
    \end{align}
    Furthermore, we can write
      \begin{align}
      \nonumber
     &    \sum_{j=1}^2 \left\langle \Psi_\alpha,K_{j,s,\lambda,u}(\rho_{Y_n})a(v)^* K_{j,s,\lambda,u}(\rho_{Y_n})\Psi_\alpha\right\rangle_{\mathcal{F}}-\left\langle \Psi_\alpha,a(v)^*\Psi_\alpha\right\rangle_{\mathcal{F}}\\
     \nonumber
      & \ \ \ =\left\langle \Psi_\alpha,\left(\sum_{j=1}^2 K_{j,s,\lambda,u}(\rho_{Y_{n}})K_{j,s,\lambda,u}(\rho_{Y_{n-1}})-1\right)L(v)\Psi_\alpha\right\rangle_{L^2\! \left(\underset{n\in \mathbb N}{\bigcup}  \mathbb{R}^{3 n}\right)}\\
      \label{Eq:Lage_s_K_*_introduction}
      &\ \ \ =\left\langle \Psi_\alpha,K_*(Y_n)L(v)\Psi_\alpha\right\rangle_{L^2\! \left(\underset{n\in \mathbb N}{\bigcup}  \mathbb{R}^{3 n}\right)},
    \end{align}
    with $K_*:\underset{n\in \mathbb N}{\bigcup}  \mathbb{R}^{3 n}\longrightarrow \mathbb R$ defined as
    \begin{align*}
        K_*(Y_n):=\sum_{j=1}^2 K_{j,s,\lambda,u}(\rho_{Y_{n}})K_{j,s,\lambda,u}(\rho_{Y_{n-1}})-1=-\frac{1}{2}\sum_{j=1}^2 \left[K_{j,s,\lambda,u}(\rho_{Y_{n}})-K_{j,s,\lambda,u}(\rho_{Y_{n-1}})\right]^2.
    \end{align*}
    Trivially we have the inequality $|K_*|\leq 1$. We note that $K_*(Y_n)\neq 0$ implies $g\! \left(s^{-1}y_n\right)\neq 0$ and therefore $|y_n|\geq s$. Furthermore, $K_*(Y_n)\neq 0$ implies either $\int \! g\! \left(s^{-1}y\right)\mathrm{d}\rho_{Y_n}(y)>\lambda-u$ or $\int \! g\! \left(s^{-1}y\right)\mathrm{d}\rho_{Y_{n-1}}(y)>\lambda-u$, hence
    \begin{align*}
      Y_{n-1},Y_n\in A:=\underset{n\in \mathbb N}{\bigcup} \mathbb R^{3n}\setminus \Omega_{s,\lambda-2u}
    \end{align*}
   for $\alpha$ large enough such that $\alpha^{-2}\leq u$. Consequently, we have the estimate
    \begin{align}
    \nonumber
        & \ \ \ \ \  \ \ \ \ \ \left|\left\langle \Psi_\alpha,K_*(Y_n)L(v)\Psi_\alpha\right\rangle_{L^2\! \left(\underset{n\in \mathbb N}{\bigcup}  \mathbb{R}^{3 n}\right)}\right|\\
        \nonumber
        & \leq \sum_{n=1}^\infty \frac{\sqrt{n}}{\alpha}\int_{\mathbb R^{3n}}|\mathds{1}_A(Y_{n-1})\Psi_\alpha(Y_{n-1})|\, |\mathds{1}_A(Y_{n})\Psi_\alpha(Y_{n})|\, \chi(|y_n|>s)v(y_n)\mathrm{d}Y_n\\
        \label{Eq:Classic_CS}
        & \leq \left\|\sqrt{\mathcal{N}}\mathds{1}_A\Psi_\alpha\right\|\, \left\|\chi(|\cdot |>s)v\right\|\, \left\|\mathds{1}_A\Psi_\alpha\right\|.
    \end{align}
    Let us compute first
    \begin{align}
    \label{Eq:1_A_Estimate}
      \left\|\mathds{1}_A\Psi_\alpha\right\|^2 & =P(s,\lambda-2u)\leq P(s,\lambda-2u)^{1-z}, \\
      \label{Eq:chi_cdot_s_Estimate}
     \left\|\chi(|\cdot |>s)v\right\|^2  & =\int \chi(|y|>s)v(y)^2\mathrm{d}y\lesssim \int \frac{\chi(|y|>s)}{|y|^4}\mathrm{d}y\lesssim s^{-1}.
    \end{align}
    Furthermore, we obtain by Hölder's inequality for the measure $|\Psi_\alpha(Y_n)|^2 \mathrm{d}Y_n$ and the random variables $Y_n\mapsto \frac{n}{\alpha^2}$ and $Y_n\mapsto \mathds{1}_A(Y_n)$, defined on the space $\underset{n\in \mathbb N}{\bigcup} \mathbb R^{3n}$, for all $z>0$
    \begin{align}
    \nonumber 
        & \ \ \ \ \  \left\|\sqrt{\mathcal{N}}\mathds{1}_A\Psi_\alpha\right\|^2=\sum_{n=1}^\infty \int_{\mathbb R^{3n}}\frac{n}{\alpha^2}\mathds{1}_A(Y_{n})\,  |\Psi_\alpha(Y_{n})|^2\mathrm{d}Y_n\\
            \nonumber 
        & \leq \left(\sum_{n=1}^\infty \int_{\mathbb R^{3n}}\mathds{1}_A(Y_{n})^{\frac{1}{1-z}}\,  |\Psi_\alpha(Y_{n})|^2\mathrm{d}Y_n\right)^{1-z}\left(\sum_{n=1}^\infty \int_{\mathbb R^{3n}}\left(\frac{n}{\alpha^2}\right)^{\frac{1}{z}}\,  |\Psi_\alpha(Y_{n})|^2\mathrm{d}Y_n\right)^{z}\\
        \label{Eq:Hoelder_I}
        & \ \ \ \ \ = P(s,\lambda-2u)^{1-z}\braket{\Psi_\alpha,\mathcal{N}^{\frac{1}{z}}\Psi_\alpha}_{\mathcal{F}}^z\lesssim P(s,\lambda-2u)^{1-z},
    \end{align}
    where we have used $\braket{\Psi_\alpha,\mathcal{N}^{\frac{1}{z}}\Psi_\alpha}_{\mathcal{F}}\lesssim 1$ in the final estimate, see Lemma \ref{Lem:Moments_of_particle_number}. Combining Eq.~(\ref{Eq:Classic_CS}), Eq.~(\ref{Eq:1_A_Estimate}), Eq.~(\ref{Eq:chi_cdot_s_Estimate}) and Eq.~(\ref{Eq:Hoelder_I}) yields for a suitable constant $C>0$
    \begin{align}
     \label{Eq:Lage_s_K_*_Estimate}
        \left|\left\langle \Psi_\alpha,K_*(Y_n)L(v)\Psi_\alpha\right\rangle_{L^2\! \left(\underset{n\in \mathbb N}{\bigcup}  \mathbb{R}^{3 n}\right)}\right|\leq C P(s,\lambda-2u)^{1-z} s^{-\frac{1}{2}}.
    \end{align}
    Finally, we have to understand the localization error of the $\mathcal P^2$ term in $\mathbb H$. For this purpose we apply the IMS formula again, see \cite[Theorem A.1]{LSol}, respectively \cite[Proposition 6.1]{LNSS} and \cite[Lemma 3.3]{BS1}, in order to compute
    \begin{align}
    \nonumber
      &  \ \ \ \ \ \ \   \sum_{j=1}^2\left\langle K_{j,s,\lambda,u}(\rho_{Y_n})\Psi_\alpha , \mathcal P^2 K_{j,s,\lambda,u}(\rho_{Y_n})\Psi_\alpha\right\rangle_{ \mathcal{F}}\\
      \nonumber
         & =\left\langle \Psi_\alpha, \mathcal P^2 \Psi_\alpha\right\rangle_{\mathcal{F}}+\frac{1}{2}\sum_{j=1}^2 \left\langle \Psi_\alpha, \left[\left[K_{j,s,\lambda,u}(\rho_{Y_n}),\mathcal P^2\right],K_{j,s,\lambda,u}(\rho_{Y_n})\right] \Psi_\alpha\right\rangle_{\mathcal{F}}\\
         \label{Large_s_P_f_introduction}
         &   = \left\langle \Psi_\alpha, \mathcal P^2 \Psi_\alpha\right\rangle_{\mathcal{F}}-\frac{1}{2}\sum_{j=1}^2 \left\langle \Psi_\alpha, \left[\mathcal P,K_{j,s,\lambda,u}(\rho_{Y_n})\right]^2 \Psi_\alpha\right\rangle_{\mathcal{F}},
    \end{align}
   where we have used that $K_{j,s,\lambda,u}(\rho_{Y_n})$ commutes with the multiplication operator
    \begin{align*}
        \left[\mathcal P, \! K_{j,s,\lambda,u}(\rho_{Y_n})\right] \! = \! \frac{1}{i} \! \sum_{k=1}^n\partial_{y_k}K_{j,s,\lambda,u}(\rho_{Y_n})  \! = \! \frac{1}{is}\Theta_j\! \left(\int \! g\! \left(s^{-1}y\right) \! \mathrm{d}\rho_{Y_n}(y)\right) \!  \! \int \! \nabla g\! \left(s^{-1}y\right) \! \mathrm{d}\rho_{Y_n}(y),
    \end{align*}
   and $\Theta_j$ is defined as
    \begin{align*}
        \Theta_1(z): & =\nabla_z \chi_{u}(z\leq \lambda),\\
        \Theta_2(z): & = \nabla_z \sqrt{1-\chi_{u}(z\leq \lambda)^2}.
    \end{align*}
    We observe that $\Theta_j\! \left(\int \! g\! \left(s^{-1}y\right)\mathrm{d}\rho(y)\right)\neq 0$ implies $\int \! g\! \left(s^{-1}y\right)\mathrm{d}\rho(y)\geq \lambda-u> \lambda-2u$, and therefore $Y_n\notin \Omega_{s,\lambda-2u}$ or equivalently $Y_n \in A=\underset{n\in \mathbb N}{\bigcup} \mathbb R^{3n}\setminus \Omega_{s,\lambda-2u}$. Applying again Hölder's inequality yields for $z>0$
    \begin{align}
    \nonumber
      &  \left|\left\langle \Psi_\alpha, \left[\mathcal P,K_{j,s,\lambda,u}(\rho_{Y_n})\right]^2 \Psi_\alpha\right\rangle_{\mathcal{F}}\right|=\left|\left\langle \Psi_\alpha, \mathds{1}_A\left[\mathcal P,K_{j,s,\lambda,u}(\rho_{Y_n})\right]^2 \Psi_\alpha\right\rangle_{\mathcal{F}}\right|\\
      \nonumber
      & \leq \left\langle \Psi_\alpha, \mathds{1}_A^{\frac{1}{1-z}} \Psi_\alpha\right\rangle_{\mathcal{F}}^{1-z} \left\langle \Psi_\alpha, \left[\mathcal P,K_{j,s,\lambda,u}(\rho_{Y_n})\right]^{\frac{2}{z}} \Psi_\alpha\right\rangle_{\mathcal{F}}^z\\
      \nonumber
      & \leq s^{-2}P(s,\lambda-2u)^{1-z} \left(\sum_{j=1}^2\|\nabla \Theta_j\|^2_\infty\right)\|\nabla g\|_\infty^2 \left\langle \Psi_\alpha, \left(\int \mathrm{d}\rho_{Y_{n}}\right)^{\frac{2}{z}} \Psi_\alpha\right\rangle_{\mathcal{F}}^z\\
        \label{Large_s_P_f_Estimate}
      & = s^{-2} P(s,\lambda-2u)^{1-z} \left(\sum_{j=1}^2\|\nabla \Theta_j\|^2_\infty\right)\|\nabla g\|_\infty^2 \left\langle \Psi_\alpha, \mathcal{N}^{\frac{2}{z}} \Psi_\alpha\right\rangle_{\mathcal{F}}^z\lesssim s^{-2}P(s,\lambda-2u)^{1-z},
    \end{align}
      where we have used $\braket{\Psi_\alpha,\mathcal{N}^{\frac{2}{z}}\Psi_\alpha}_{\mathcal{F}}\lesssim 1$, see Lemma \ref{Lem:Moments_of_particle_number}, which we are going to prove in the subsequent Section \ref{Sec:Asymptotic_concentration_of_Probability}. Combining Eq.~(\ref{Eq:Large_s_particle_number}), Eq.~(\ref{Eq:Lage_s_K_*_introduction}), Eq.~(\ref{Eq:Lage_s_K_*_Estimate}), Eq.~(\ref{Large_s_P_f_introduction}) and Eq.~(\ref{Large_s_P_f_Estimate}) yields for a suitable constant $C>0$
      \begin{align*}
        & \sum_{j=1}^2\left\langle K_{j,s,\lambda,u}(\rho_{Y_n})\Psi_\alpha , \mathbb{H}_0 K_{j,s,\lambda,u}(\rho_{Y_n})\Psi_\alpha\right\rangle_{ \mathcal{F}} =\sum_{j=1}^2\left\langle K_{j,s,\lambda,u}(\rho_{Y_n})\Psi_\alpha , \mathcal P^2 K_{j,s,\lambda,u}(\rho_{Y_n})\Psi_\alpha\right\rangle_{ \mathcal{F}}\\
        & \! + \!  \sum_{j=1}^2  \! \! \left\langle K_{j,s,\lambda,u}(\rho_{Y_n})\Psi_\alpha , \mathcal{N} K_{j,s,\lambda,u}(\rho_{Y_n})\Psi_\alpha\right\rangle_{ \mathcal{F}} \! - \! 2\sum_{j=1}^2 \! \mathfrak{Re} \! \left\langle K_{j,s,\lambda,u}(\rho_{Y_n})\Psi_\alpha , a(v)^* K_{j,s,\lambda,u}(\rho_{Y_n})\Psi_\alpha\right\rangle_{ \mathcal{F}}\\
        & \ \  \leq \left\langle\Psi_\alpha , \mathcal P^2 \Psi_\alpha\right\rangle_{ \mathcal{F}}+\left\langle \Psi_\alpha , \mathcal{N} \Psi_\alpha\right\rangle_{ \mathcal{F}}-2\mathfrak{Re}\left\langle \Psi_\alpha , a(v)^* \Psi_\alpha\right\rangle_{ \mathcal{F}}+CP(s,\lambda-2u)^{1-z}\! \left(s^{-\frac{1}{2}}+s^{-2}\right)\\
        & \ \ =E_\alpha + CP(s,\lambda-2u)^{1-z}\! \left(s^{-\frac{1}{2}}+s^{-2}\right).
      \end{align*}
\end{proof}

Combining Lemma \ref{Lem:Bad_quantile_support_Bad_Energy} and Lemma \ref{Lem:IMS_Mass_Away_From_Electron}, we can provide strong bounds on the probability $P_\alpha(s,\lambda)$ in the following Lemma \ref{Lem:Quantile_Mass_Estimate_by_IMS}, which we will use together with a layer-cake representation to verify Theorem \ref{Th:Powers_of_Median}.

\begin{lem}
\label{Lem:Quantile_Mass_Estimate_by_IMS}
    For any $\lambda,\gamma>0$, there exists a constant $C>0$ such that for all $s>0$
    \begin{align*}
        P_\alpha(s,\lambda)\leq C s^{-\gamma}.
    \end{align*}
\end{lem}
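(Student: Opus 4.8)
The strategy is to combine the IMS-type upper bound of Lemma~\ref{Lem:IMS_Mass_Away_From_Electron} with the support-based energy lower bound of Lemma~\ref{Lem:Bad_quantile_support_Bad_Energy} into a self-improving inequality for $P_\alpha$, and then bootstrap in the decay exponent. Writing $K_j:=K_{j,s,\lambda,u}(\rho_{Y_n})$ for $j=1,2$, the relation $K_1^2+K_2^2=1$ together with $\mathbb H_0\geq E_\alpha$ and Lemma~\ref{Lem:IMS_Mass_Away_From_Electron} gives
\begin{align*}
  E_\alpha\|K_1\Psi_\alpha\|^2+\langle K_2\Psi_\alpha,\mathbb H_0 K_2\Psi_\alpha\rangle_{\mathcal F}\leq \sum_{j=1}^2\langle K_j\Psi_\alpha,\mathbb H_0 K_j\Psi_\alpha\rangle_{\mathcal F}\leq E_\alpha+C\,P_\alpha(s,\lambda-2u)^{1-z}s^{-\frac12}.
\end{align*}
Thus everything reduces to a lower bound $\langle K_2\Psi_\alpha,\mathbb H_0 K_2\Psi_\alpha\rangle_{\mathcal F}\geq (E_\alpha+c(\lambda))\|K_2\Psi_\alpha\|^2$ with $c(\lambda)>0$, together with $\|K_2\Psi_\alpha\|^2\geq P_\alpha(2s,\lambda+2u)$; the latter holds because $g(s^{-1}y)=1$ for $|y|\geq\sqrt3\,s$, so $\int g(s^{-1}y)\,\mathrm{d}\rho_{Y_n}\geq\int\chi(|y|>2s)\,\mathrm{d}\rho_{Y_n}$, and if the right side is $\geq\lambda+2u$ then $K_1=0$, hence $K_2^2=1$.

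For the lower bound I would first use that $K_2(\rho_{Y_n})\neq0$ forces $\int g(s^{-1}y)\,\mathrm{d}\rho_{Y_n}>\lambda-u$, hence (since $g$ vanishes on $\{|y|\leq s\}$ after rescaling) $\int\chi(|y|>s)\,\mathrm{d}\rho_{Y_n}\geq\lambda-u$; the same computation with a shift shows that the function $(y_1-x,\dots,y_n-x)\mapsto(K_2\Psi_\alpha)(\dots)$ is supported on configurations with $\int\chi(|y-x|>s)\,\mathrm{d}\rho_{Y_n}\geq\lambda-u$. Since Lemma~\ref{Lem:Bad_quantile_support_Bad_Energy} is formulated for $\mathbb H$ on $L^2(\mathbb R^3)\otimes\mathcal F$ with an electron-dependent support condition, I would lift $K_2\Psi_\alpha$ to $\Phi_L:=\mathcal T\big(\psi_L\otimes K_2\Psi_\alpha\big)$ with $\psi_L(x):=L^{-3/2}\psi(x/L)$ a normalized radial function; by the definition of $\mathcal T$ in Eq.~(\ref{Eq:Essentially_LLP}), $\mathrm{supp}(\Phi_L)$ satisfies exactly the condition of Lemma~\ref{Lem:Bad_quantile_support_Bad_Energy} with parameter $\lambda-u$, so for $s$ beyond an $(\lambda-u)$-dependent threshold $s_0$ one gets $\langle\Phi_L,\mathbb H\Phi_L\rangle\geq (E_\alpha+\tfrac{\lambda-u}{4})\|K_2\Psi_\alpha\|^2$. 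On the other hand $\mathbb H=\mathcal T\,\mathbb H_{\frac{1}{i}\nabla_x}\mathcal T^*$, and fiber-decomposing with $\mathbb H_p=\mathbb H_0+|p|^2-2p\cdot\mathcal P$ together with the reflection symmetry of $K_2\Psi_\alpha$ (which annihilates the $\mathcal P$-term, just as in the proof of Lemma~\ref{Lem:Rather_Explicit_Expression}) yields $\langle\Phi_L,\mathbb H\Phi_L\rangle=\langle K_2\Psi_\alpha,\mathbb H_0 K_2\Psi_\alpha\rangle_{\mathcal F}+\|\nabla\psi_L\|^2\|K_2\Psi_\alpha\|^2$. Taking $L$ large so that $\|\nabla\psi_L\|^2\leq\tfrac{\lambda-u}{8}$ gives the desired bound with $c(\lambda)=\tfrac{\lambda-u}{8}$. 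Inserting it into the first display and using $\|K_1\Psi_\alpha\|^2+\|K_2\Psi_\alpha\|^2=1$ produces, for all $s\geq s_0$,
\begin{align*}
  P_\alpha(2s,\lambda+2u)\leq \|K_2\Psi_\alpha\|^2\leq \frac{8C}{\lambda-u}\,P_\alpha(s,\lambda-2u)^{1-z}\,s^{-\frac12}.
\end{align*}

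Finally I would iterate this inequality. Given the target $\lambda>0$ and $\gamma>0$, choose $z>0$ with $\tfrac{1}{2z}>\gamma$, then $k\in\mathbb N$ with $\tfrac{1}{2z}\bigl(1-(1-z)^k\bigr)>\gamma$, a sufficiently small $u>0$ (say $u<\lambda/(8k)$), and set $\mu_j:=\lambda-4(k-j)u$ for $j=0,\dots,k$, so that $\mu_k=\lambda$ and all $\mu_j-3u>0$. Starting from the trivial bound $P_\alpha(\cdot,\mu_0)\leq1$, applying the displayed estimate with running parameter replaced by $\mu_j-2u$ (so that $\lambda-2u=\mu_{j-1}$ and $\lambda+2u=\mu_j$) upgrades a bound $P_\alpha(s,\mu_{j-1})\leq A_{j-1}s^{-\delta_{j-1}}$ valid for $s\geq S_{j-1}$ to $P_\alpha(s,\mu_j)\leq A_j s^{-\delta_j}$ valid for $s\geq S_j:=2\max\{s_0,S_{j-1}\}$, with $\delta_j=(1-z)\delta_{j-1}+\tfrac12$. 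Hence $\delta_k=\tfrac{1}{2z}\bigl(1-(1-z)^k\bigr)>\gamma$, which gives $P_\alpha(s,\lambda)=P_\alpha(s,\mu_k)\leq A_k s^{-\gamma}$ for $s\geq S_k$, while the range $s<S_k$ is absorbed via $P_\alpha\leq1\leq S_k^{\gamma}s^{-\gamma}$. I expect the main obstacle to be the lifting step of the second paragraph: verifying that a suitable lift of $K_2\Psi_\alpha$ genuinely inherits the geometric support condition of Lemma~\ref{Lem:Bad_quantile_support_Bad_Energy}, that the extra kinetic cost $\|\nabla\psi_L\|^2$ of spreading out the electron can be made negligible, and keeping the drifting parameter $\lambda$ (shifting by $4u$ each step) under control over the finitely many iterations.
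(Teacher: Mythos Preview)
Your proposal is correct and follows essentially the same route as the paper: IMS localization via Lemma~\ref{Lem:IMS_Mass_Away_From_Electron}, lifting $K_2\Psi_\alpha$ to $L^2(\mathbb R^3)\otimes\mathcal F$ with a spread-out electron wave function in order to invoke Lemma~\ref{Lem:Bad_quantile_support_Bad_Energy}, and then bootstrapping the resulting self-improving inequality. The only notable difference is that the paper handles the cross term $\tfrac{1}{i}\nabla_x\cdot\mathcal P$ in the lift by Cauchy--Schwarz together with a contradiction argument and the bound $\mathcal P^2\leq 2\mathbb H_0+C$, whereas you kill it exactly via the reflection symmetry of $K_2\Psi_\alpha$; your variant is slightly cleaner but otherwise equivalent.
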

\begin{proof}
   Note that $P_\alpha(s,\lambda)$ is a probability, i.e. $P_\alpha(s,\lambda)\leq 1$. By induction, it is therefore enough to verify that for any $\lambda,u,z>0$ there exist  constants $C_,s_0$ such that for all $s\geq s_0$
    \begin{align}
    \label{Eq:Spatial_Probability_Iterative}
        P_\alpha(\sqrt{3}s,\lambda+3u)\leq C P_\alpha(s,\lambda)^{1-z} s^{-\frac{1}{2}}.
    \end{align}
    Let us define the functions $K_j:=K_{j,s,\lambda+2u,u}$, introduced above Lemma \ref{Lem:IMS_Mass_Away_From_Electron}, and note that 
    \begin{align}
    \label{Eq:Applied_K_IMS_Estimate}
          \sum_{j=1}^2\left\langle K_{j}(\rho_{Y_n})\Psi_\alpha , \mathbb{H}_0 K_{j}(\rho_{Y_n})\Psi_\alpha\right\rangle_{ \mathcal{F}}\leq E_\alpha + C P_\alpha(s,\lambda)^{1-z} s^{-\frac{1}{2}},
    \end{align}
    by Lemma \ref{Lem:IMS_Mass_Away_From_Electron} for a sufficiently large constant $C>0$. Furthermore, we introduce the state
    \begin{align*}
        \widehat{\Psi}_\alpha:=\frac{1}{\|K_{2}(\rho_{Y_n})\Psi_\alpha\|} K_{2}(\rho_{Y_n})\Psi_\alpha.
    \end{align*}
    We observe that $Y_n\notin \Omega_{\sqrt{3}s,\lambda+3u}$ implies that
    \begin{align*}
        \int g(s^{-1}y)\mathrm{d}\rho_{Y_n}(y)\geq \int \chi(|y|>\sqrt{3}s)\mathrm{d}\rho_{Y_n}(y)\geq \lambda+3u,
    \end{align*}
    and hence $K_{2}(\rho_{Y_n})=\sqrt{1-K_{1,s,\lambda+2u,u}(\rho_{Y_n})^2}=0$ for $Y_n\notin \Omega_{\sqrt{3}s,\lambda+3u}$. We deduce that
    \begin{align*}
         P_\alpha(\sqrt{3}s,\lambda+3u)\leq \|K_{2}(\rho_{Y_n})\Psi_\alpha\|^2=1-\|K_{1}(\rho_{Y_n})\Psi_\alpha\|^2.
    \end{align*}
   Consequently, we obtain together with Eq.~(\ref{Eq:Applied_K_IMS_Estimate}) and the operator inequality $\mathbb H\geq E_\alpha$
    \begin{align*}
        & \ \ \  \ \ \  \ \ \  P_\alpha(\sqrt{3}s,\lambda+3u)\left\langle \widehat \Psi_\alpha , \left(\mathbb{H}_0-E_\alpha\right) \widehat \Psi_\alpha\right\rangle_{ \mathcal{F}}\\
        & \leq \|K_{2}(\rho_{Y_n})\Psi_\alpha\|^2\left\langle \widehat \Psi_\alpha , \left(\mathbb{H}_0-E_\alpha\right) \widehat \Psi_\alpha\right\rangle_{ \mathcal{F}} \leq  C P_\alpha(s,\lambda)^{1-z} s^{-\frac{1}{2}}.
    \end{align*}
    In order to conclude the proof of Eq.~(\ref{Eq:Spatial_Probability_Iterative}), we are going to verify that 
    \begin{align}
    \label{Eq:Spatial_proof_A_Contradiction}
        \left\langle \widehat \Psi_\alpha , \left(\mathbb{H}_0-E_\alpha\right) \widehat \Psi_\alpha\right\rangle_{ \mathcal{F}}\geq \frac{\lambda}{8},
    \end{align}
for $\alpha$ large enough by contradiction. In the following we will assume that Eq.~(\ref{Eq:Spatial_proof_A_Contradiction}) is violated, i.e. we assume
\begin{align}
   \label{Eq:Spatial_proof_A_Contradiction_Assumption}
    \left\langle \widehat \Psi_\alpha , \mathbb{H}_0 \widehat \Psi_\alpha\right\rangle_{ \mathcal{F}}< E_\alpha+\frac{\lambda}{8}.
\end{align}
We further introduce the auxiliary states $\widehat \Phi_\alpha\in L^2\! \left(\mathbb{R}^3\right)\otimes \mathcal{F}\subseteq  L^2\! \left(\mathbb{R}^3\times \underset{n\in \mathbb N}{\bigcup}  \mathbb{R}^{3 n}\right)$ as
 \begin{align*}
     \widehat \Phi_\alpha(x;Y_n):=\tau_{\epsilon}(x)\widehat \Psi_\alpha(y_1-x,\dots ,y_n-x),
 \end{align*}
 where $\tau_\epsilon(x)=\epsilon^{\frac{3}{2}}\tau\! \left(\epsilon x\right)$ and $\tau$ is a $[0,1]$-valued smooth function with compact support and $\int \tau^2\mathrm{d}x=1$, see Subsection \ref{Subsec:Conventions_and_Definitions}. Observe that there exists a constant $C>0$ such that
 \begin{align*}
    & \ \ \ \  \left\langle \widehat \Phi_\alpha,  (-\Delta_x) \widehat \Phi_\alpha \right\rangle_{ L^2\! \left(\mathbb{R}^3\right)\otimes \mathcal{F}}= \left\langle \tau_\epsilon\otimes  \widehat \Psi_\alpha,  \left(\frac{1}{i}\nabla_x - \mathcal P\right)^2 \tau_\epsilon\otimes  \widehat \Psi_\alpha \right\rangle_{ L^2\! \left(\mathbb{R}^3\right)\otimes \mathcal{F}}\\
     &  \leq (1+\epsilon)  \left\langle  \widehat \Psi_\alpha, \mathcal P^2   \widehat \Psi_\alpha \right\rangle_{ \mathcal{F}}+\left(1+\epsilon^{-1}\right)\! \left\langle \tau_\epsilon, (-\Delta_x) \tau_\epsilon \right\rangle_{ L^2\! \left(\mathbb{R}^3\right)} \leq \left\langle  \widehat \Psi_\alpha,  \mathcal P^2   \widehat \Psi_\alpha \right\rangle_{ \mathcal{F}}+C'\! \left(\epsilon+\epsilon^2\right),
     \end{align*}
     where we have used the assumption in Eq.~(\ref{Eq:Spatial_proof_A_Contradiction_Assumption}) and the fact that $\mathcal{P}^2\leq 2\mathbb H_0 +C$ for a suitable constant $C>0$, see Eq.~(\ref{Eq:Particle_Number_Energy_bound}). Moreover, we observe that
     \begin{align*}
     &  \ \ \ \  \left\langle \widehat \Phi_\alpha,  \Big\{\mathcal{N}-a^*(v_x)-a(v_x)\Big\} \widehat \Phi_\alpha \right\rangle_{ L^2\! \left(\mathbb{R}^3\right)\otimes \mathcal{F}}=\left\langle \widehat \Psi_\alpha,  \Big\{\mathcal{N}-a^*(v)-a(v)\Big\} \widehat \Psi_\alpha \right\rangle_{ \mathcal{F}}.
 \end{align*}
Therefore we obtain for $\epsilon$ small enough such that $C'\! \left(\epsilon+\epsilon^2\right)\leq \frac{\lambda}{8}$
 \begin{align}
 \label{Eq:Energy_Estimate_for_Asymptotic_Concentration}
      \left\langle \widehat \Phi_\alpha,  \mathbb H \widehat \Phi_\alpha \right\rangle_{ L^2\! \left(\mathbb{R}^3\right)\otimes \mathcal{F}}\leq \left\langle \widehat \Psi_\alpha, \mathbb H_0  \widehat \Psi_\alpha\right\rangle_\mathcal{F}+\frac{\lambda}{8}< E_\alpha+\frac{\lambda}{4}.
 \end{align}
 Note that $ \widehat \Phi_\alpha(x;Y_n)\neq 0$ implies that $\widetilde Y_n:=(y_1-x,\dots ,y_n-x)$ satisfies $K_2(\widetilde Y_n)\neq 0$, and therefore $\widetilde Y_n\notin \Omega^{(n)}_{s,\lambda}$, or equivalently $Y_n\notin \Omega^{(n)}_{s,\lambda}(x)$. Consequently, $ \widehat \Phi$ satisfies the assumptions of Lemma \ref{Lem:Bad_quantile_support_Bad_Energy}, which yields the desired contradiction to Eq.~(\ref{Eq:Energy_Estimate_for_Asymptotic_Concentration})
 \begin{align*}
       \left\langle \widehat \Phi_\alpha,  \mathbb H \widehat \Phi_\alpha \right\rangle_{ L^2\! \left(\mathbb{R}^3\right)\otimes \mathcal{F}}\geq E_\alpha+\frac{\lambda}{4},
 \end{align*}
 and therefore conclude the proof of Eq.~(\ref{Eq:Spatial_proof_A_Contradiction}).
\end{proof}

\begin{proof}[Proof of Theorem \ref{Th:Powers_of_Median}.]
    Since the case $m=0$ is trivial, we are going to assume in the following that $m\geq 1$. Defining $A^n$ as the set of all $Y_n\in \mathbb R^{3n}$ such that $\int \mathrm{d}\rho_{Y_n}\geq d$, we obtain  
    \begin{align}
    \label{Eq:Layer_Cake}
      & \ \ \left\langle \Psi_\alpha,\left|m_q(\chi* \rho_{Y_{n}})\right|^m \chi\! \left(\int \mathrm{d}\rho_{Y_{n-1}}\geq d\right)\Psi_\alpha\right\rangle_{\mathcal{F}}= \sum_{n=0}^\infty \int_{A^n}\left|m_q(\chi* \rho_{Y_{n}})\right|^m|\Psi_\alpha(Y_n)|^2\, \mathrm{d}Y_n\\
      \nonumber
      &  \ \  \ \  \ \  \ \  \ \  \ \  \ \  \ \ =\sum_{n=0}^\infty \int_{A^n}\int_0^{\left|m_q(\chi* \rho_{Y_{n}})\right|^m} \! \! |\Psi_\alpha(Y_n)|^2\, \mathrm{d}t\mathrm{d}Y_n\\
      \nonumber 
      & \ \  \ \  \ \  \ \  =\int_0^{\infty}\sum_{n=0}^\infty \int_{A^n}\chi\! \left(t^{\frac{1}{m}}\leq \left|m_q(\chi* \rho_{Y_{n}})\right|\right)|\Psi_\alpha(Y_n)|^2\, \mathrm{d}Y_n\mathrm{d}t
    \end{align}
   using a layer-cake representation. In the following let $R$ be large enough such that $\chi(y)=0$ in case $|x|>R$ and let us show that for any $Y_n\in A^n$, $|m_q(\chi* \rho_{Y_n})|\geq s$ implies 
    \begin{align}
    \label{Eq:Layer_Cake_Upper_Bound_Tool}
    \int \chi\! \left(|y|\geq \frac{s}{3}-R\right)\mathrm{d}\rho_{Y_n}\geq \left(\frac{1}{2}-q\right)d.
    \end{align}
    In order to prove Eq.~(\ref{Eq:Layer_Cake_Upper_Bound_Tool}), observe that in the case $|m_q(\chi* \rho_{Y_n})|\geq s$, at least one component 
    \begin{align*}
        m_q(\rho_{Y_n})=\left(m_q\left((\rho_{Y_n})_1\right),m_q\left((\rho_{Y_n})_2\right),m_q\left((\rho_{Y_n})_3\right)\right)
    \end{align*}
    satisfies $|m_q\left((\chi * \rho_{Y_n})_j\right)|\geq \frac{s}{3}$. W.l.o.g. assume $m_q\left((\chi* \rho_{Y_n})_j\right)\geq \frac{s}{3}$. Therefore the quantile satisfies $x^{\frac{1}{2}+q}\left((\chi* \rho_{Y_n})_j\right)\geq \frac{s}{3}$, which implies
    \begin{align*}
        & \int \chi\! \left(|y|\geq \frac{s}{3}-R\right)\mathrm{d}\rho_{Y_n} \geq  \int \chi\! \left(|y|\geq \frac{s}{3}\right)\mathrm{d}(\chi* \rho_{Y_n})\geq \int \mathrm{d}(\chi* \rho_{Y_n})-\int\chi\! \left(y< \frac{s}{3}\right)\mathrm{d}(\chi* \rho_{Y_n})\\
         & \ \ \ \geq \int \mathrm{d}(\chi* \rho_{Y_n})-\left(\frac{1}{2}+q\right)\int \mathrm{d}(\chi* \rho_{Y_n})=\left(\frac{1}{2}-q\right)\int \mathrm{d}\rho_{Y_n}\geq \left(\frac{1}{2}-q\right)d.
    \end{align*}
    Having Eq.~(\ref{Eq:Layer_Cake_Upper_Bound_Tool}) at hand for any $Y_n\in A^n$ with $|m_q(\rho_{Y_n})|\geq s$, yields together with the layer-cake representation Eq.~(\ref{Eq:Layer_Cake}), the definition of the probability $P_\alpha(s,\lambda)$ in Eq.~(\ref{Eq:Probability_t_lambda}) and $\lambda_*:=\left(\frac{1}{2}-q\right)d>0$ the upper bound
    \begin{align*}
        \left\langle \Psi_\alpha,\left|m_q(\chi *\rho_{Y_{n}})\right|^m \chi\! \left(\int \mathrm{d}\rho_{Y_{n-1}}\geq d\right) \! \Psi_\alpha\right\rangle_{\mathcal{F}}\leq \int_0^\infty \!  P_\alpha\! \left(\frac{1}{3}t^{\frac{1}{m}}-R,\lambda_*\right)\mathrm{d}t.
    \end{align*}
    Choosing $\gamma:=2m$ in Lemma \ref{Lem:Quantile_Mass_Estimate_by_IMS} gives us the estimate
    \begin{align*}
       P_\alpha\! \left(\frac{1}{3}t^{\frac{1}{m}}-R,\lambda_*\right)\leq Ct^{-2} 
    \end{align*}
for $t\geq t_0$ and suitable constants $C,t_0>0$. Together with the observation that $P_\alpha\! \left(s,\lambda_*\right)\leq 1$ for all $s\in \mathbb R$ we conclude
    \begin{align*}
        \int_0^\infty \!  P_\alpha\! \left(\frac{1}{3}t^{\frac{1}{m}}-R,\lambda_*\right)\mathrm{d}t\leq t_0+C\int_{t_0}^\infty t^{-2}\mathrm{d}t=t_0+\frac{C}{t_0}.
    \end{align*}
\end{proof}

\section{Asymptotic concentration of Probability}
\label{Sec:Asymptotic_concentration_of_Probability}
 In the following we want to control moments of $\mathcal{N}$ in the ground state $\Psi_\alpha$, see Lemma \ref{Lem:Moments_of_particle_number}, and demonstrate that the probability measure $|\Psi_\alpha(Y_n)|^2\mathrm{d}Y_n$ is mostly supported on sets of the form $\Omega_\lambda$ defined in Eq.~(\ref{Eq:Omega_def}), for $\lambda>0$, as $\alpha\rightarrow \infty$ goes to infinity, see Lemma \ref{Lem:Most_Of_Mass}. The asymptotic concentration of $|\Psi_\alpha(Y_n)|^2\mathrm{d}Y_n$, as stated in Lemma \ref{Lem:Most_Of_Mass}, is an important input for Lemma \ref{Lem:Rather_Explicit_Expression}, where we show essential bounds for the residual term $\mathcal{E}_0$, and Lemma \ref{Lem:E_4}, where we show corresponding bounds for the residual term $\mathcal{E}_4$. In preparation for Lemma \ref{Lem:Moments_of_particle_number}, we first establish an upper bound on the particle number $\mathcal{N}$ in terms of the fiber Hamilton operator $\mathbb H_0$ in the subsequent Lemma \ref{Lemma:LY_Fiber}. The proof is based on the strategy developed in \cite{LY}, see also \cite{BS1}.

 \begin{lem}
 \label{Lemma:LY_Fiber}
     There exists a constant $C>0$ such that for $K>0$
     \begin{align}
     \label{Eq:Particle_Number_Energy_bound}
 & \  \ \ \  \mathcal P^2+\mathcal{N}\leq   2 \mathbb H_0+C,\\
      \label{Eq:Particle_Number_Energy_bound_Full_Space}
   & \ \ -\Delta_x+\mathcal{N}\leq 2\mathbb H^K+C,\\
     \label{Eq:Comparision_full_energy_VS_cut-off_Energy}
       &    \mathbb H_0^K\leq \left(1+CK^{-\frac{1}{2}}\right)\mathbb H_0+K^{-\frac{1}{2}}C. 
     \end{align}
     Furthermore, for all $m\in \mathbb N$ there exists a constant $C$ such that
     \begin{align}
          \label{Eq:Particle_Number_a_Energy_bound}
       \pm \left(\mathcal{N}^{2m}a(v)+a(v)^*\mathcal{N}^{2m}\right)\leq C \mathcal{N}^m \left( \mathcal P^2+\mathcal{N} +1\right) \mathcal{N}^m .
     \end{align}
 \end{lem}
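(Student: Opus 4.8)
The plan is to prove the four inequalities in order, since each one uses standard quadratic-form manipulations together with the ultraviolet analysis of \cite{LY}. First I would prove \eqref{Eq:Particle_Number_Energy_bound}. Write $\mathbb H_0=\mathcal P^2+\mathcal N-a^*(v)-a(v)$; the only obstacle is the ultraviolet singularity of $v\notin L^2$. Following \cite{LY}, I would split $v=v^\Lambda+\tfrac1i\nabla\cdot w^\Lambda$ as in Eq.~\eqref{Eq:Decomposition_in_Lemma_of_interaction}, with $\|v^\Lambda\|\lesssim \Lambda^{1/2}$ and $\|w^\Lambda\|\lesssim \Lambda^{-1/2}$ from Lemma \ref{Lem:Semiclassical_objects_properties}. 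The term $a^*(v^\Lambda)+a(v^\Lambda)$ is controlled by $\epsilon\mathcal N+C_\epsilon\|v^\Lambda\|^2$ via the elementary bound $\pm(a^*(f)+a(f))\le \epsilon\mathcal N+\epsilon^{-1}\|f\|^2$ (which follows directly from $a(f)^*a(f)\le \|f\|^2\mathcal N$ in this rescaled setting). The singular piece $\tfrac1i\nabla\cdot w^\Lambda$ is traded against the kinetic energy $\mathcal P^2$: after integrating by parts, $a^*(\tfrac1i\nabla\cdot w^\Lambda)$ produces a commutator with $\mathcal P$, and one bounds $\pm(a^*(\nabla\cdot w^\Lambda)+\mathrm{h.c.})\le \epsilon\mathcal P^2+\epsilon\mathcal N+C_\epsilon\|w^\Lambda\|^2$. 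Choosing $\Lambda$ a large fixed constant and $\epsilon=1/4$ gives $\mathbb H_0\ge \tfrac12(\mathcal P^2+\mathcal N)-C$, i.e.\ \eqref{Eq:Particle_Number_Energy_bound}.

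For \eqref{Eq:Particle_Number_Energy_bound_Full_Space}, the argument is identical with $v^K$ in place of $v$: since $v^K$ is already ultraviolet-regular with $\|v^K\|\lesssim K^{1/2}$, one does not even need the $w$-splitting, and $\mathbb H^K=-\Delta_x+\mathcal N-a^*(v^K_x)-a(v^K_x)\ge \tfrac12(-\Delta_x+\mathcal N)-CK$; but the statement only claims a $K$-independent constant $C$, so I would instead keep the $v=v^\Lambda+\tfrac1i\nabla\cdot w^\Lambda$ decomposition applied to $v^K$ (note $v^K=\chi(|\nabla|\le K)v$ and $v^K-v^\Lambda$ is a high-momentum remainder absorbed into $-\Delta_x$ after integration by parts), exactly as in the fiber case, obtaining a constant uniform in $K$. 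For \eqref{Eq:Comparision_full_energy_VS_cut-off_Energy}, write $\mathbb H_0^K-\mathbb H_0=-a^*(v^K-v)-a(v^K-v)$ and note $v-v^K=\chi(|\nabla|>K)v$; decomposing this remainder again as a high-momentum gradient piece $\tfrac1i\nabla\cdot \widetilde w$ with $\|\widetilde w\|\lesssim K^{-1/2}$ (this is precisely $w^K$ of Eq.~\eqref{Eq:Def_w_Lambda} up to the low-momentum part), one gets $\pm(\mathbb H_0^K-\mathbb H_0)\le CK^{-1/2}(\mathcal P^2+\mathcal N)+CK^{-1/2}$, and then \eqref{Eq:Particle_Number_Energy_bound} converts $\mathcal P^2+\mathcal N\le 2\mathbb H_0+C$ into the claimed relative-boundedness $\mathbb H_0^K\le (1+CK^{-1/2})\mathbb H_0+CK^{-1/2}$.

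For \eqref{Eq:Particle_Number_a_Energy_bound} I would work directly with the operator identity. Using $[\mathcal N,a(f)]=-\alpha^{-2}\ldots$—more precisely $\mathcal N a(f)=a(f)(\mathcal N-\alpha^{-2})$ in this rescaled convention, so $\mathcal N^{2m}a(v)=a(v)(\mathcal N-\alpha^{-2})^{2m}$—one symmetrizes and writes $\mathcal N^{2m}a(v)+a(v)^*\mathcal N^{2m}$ in terms of $\mathcal N^m(\cdots)\mathcal N^m$ modulo lower-order commutator corrections. The core estimate is the Cauchy–Schwarz bound $\pm(a(v^\Lambda)+a^*(v^\Lambda))\le \mathcal N+\|v^\Lambda\|^2$ conjugated by $\mathcal N^m$, and for the singular part the same integration-by-parts trick as above, now sandwiched between $\mathcal N^m$ on both sides, producing $\mathcal N^m(\mathcal P^2+\mathcal N+1)\mathcal N^m$; one must check that the commutators $[\mathcal N^m,a(\nabla\cdot w^\Lambda)]$ and $[\mathcal N^m,\mathcal P]$ (the latter vanishing, since $\mathcal N$ and $\mathcal P$ are both "diagonal" in the sense of commuting — $\mathcal N$ is multiplication, $\mathcal P$ is a first-order differential operator in the $y$ variables, and they do commute) only generate terms of the claimed form. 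The main obstacle throughout is bookkeeping: making sure every application of the splitting $v=v^\Lambda+\tfrac1i\nabla\cdot w^\Lambda$ and every integration by parts is carried out on the dense core of compactly supported smooth symmetric functions where $\mathcal Q$ was originally defined, so that all quadratic-form inequalities are legitimate; the analytic content is entirely contained in the norm bounds of Lemma \ref{Lem:Semiclassical_objects_properties}.
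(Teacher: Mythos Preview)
Your plan is correct and matches the paper's proof essentially line for line: the Lieb--Yamazaki splitting $v=v^\Lambda+\tfrac{1}{i}\nabla\cdot w^\Lambda$, the commutator identity $a^*(\tfrac{1}{i}\nabla\cdot w^\Lambda)=[\mathcal P,a^*(w^\Lambda)]$, Cauchy--Schwarz bounds against $\mathcal N$ and $\mathcal P^2$, and the same reductions for \eqref{Eq:Comparision_full_energy_VS_cut-off_Energy} and \eqref{Eq:Particle_Number_a_Energy_bound} (the paper simply takes $\Lambda=1$ for the latter and sandwiches the same estimates by $\mathcal N^m$). One minor slip: the high-momentum bound reads $\pm\big([\mathcal P,a^*(w^\Lambda)]+\mathrm{h.c.}\big)\le \epsilon\,\mathcal P^2+C\epsilon^{-1}\|w^\Lambda\|^2(\mathcal N+1)$, so the $\mathcal N$-coefficient is of order $\epsilon^{-1}\Lambda^{-1}$ rather than $\epsilon$, and is then made small by choosing $\Lambda$ large---exactly as you already anticipate.
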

 \begin{proof}
     Recall the definition of $v^\Lambda$ and $w^\Lambda$ from Eq.~(\ref{Eq:Def_v_Lambda}) and Eq.~(\ref{Eq:Def_w_Lambda}), and that 
     \begin{align*}
        v=v^\Lambda+\frac{1}{i}\nabla\cdot w^\Lambda,
     \end{align*}
see Lemma \ref{Lem:Semiclassical_objects_properties}. Consequently, we can write the operator $a^*(v)$ as
\begin{align}
\label{Eq:a_Star_Decomposition}
    a^*(v)=a^*(v^\Lambda)+[\mathcal P,a^*(w^\Lambda)].
\end{align}
Regarding the first term on the right side of Eq.~(\ref{Eq:a_Star_Decomposition}) we use the fact that $\|v^\Lambda\|^2\lesssim \Lambda$, see Lemma \ref{Lem:Semiclassical_objects_properties}, in order to estimate for $\epsilon>0$
\begin{align}
\label{Eq:v_Lambda_Phi_Estimate}
   \pm \left(a^*(v^\Lambda)+a(v^\Lambda)\right)\lesssim \epsilon\! \left(\mathcal{N}+\alpha^{-2}\right)+\epsilon^{-1}\Lambda\leq \epsilon\! \left(\mathcal{N}+1\right)+\epsilon^{-1}\Lambda.
\end{align}
Furthermore, we use the fact that $\|w^\Lambda\|^2\lesssim \frac{1}{\Lambda}$, see Lemma \ref{Lem:Semiclassical_objects_properties}, in order to estimate for $\epsilon>0$
\begin{align}
\label{Eq:w_Lambda_Phi_Estimate}
    \pm\left([\mathcal P,a^*(w^\Lambda)]+[\mathcal P,a^*(w^\Lambda)]^*\right)\lesssim \epsilon \mathcal P^2+\epsilon^{-1}\Lambda^{-1}\left(\mathcal{N}+\alpha^{-2}\right)\leq \epsilon \mathcal P^2+\epsilon^{-1}\Lambda^{-1}\left(\mathcal{N}+1\right).
\end{align}
Combining what we have so far yields for a suitable constant $C>0$ the operator inequality
\begin{align*}
    \mathbb H_0 \! = \! \mathcal P^2 \! + \! \mathcal{N} \! - \! a^*(v) \! - \! a(v) \! \geq \!  (1 \! - \! C\epsilon)\mathcal P^2 \! + \! \left(1 \! - \! C\epsilon \! - \! C\epsilon^{-1}\Lambda^{-1}\right)\! \mathcal{N} \! - \! C\epsilon^{-1}\Lambda^{-1} \! - \! C\epsilon \! - \! C\epsilon^{-1}\Lambda.
\end{align*}
Choosing $\epsilon$ small enough such that $C\epsilon\leq \frac{1}{4}$ and $\Lambda$ large enough such that $C\epsilon^{-1}\Lambda^{-1}\leq \frac{1}{4}$, yields for a suitable constant $C>0$
\begin{align*}
    \mathbb H_0\geq \frac{1}{2}\mathcal P^2+\frac{1}{2}\mathcal{N}-C,
\end{align*}
which concludes the proof of Eq.~(\ref{Eq:Particle_Number_Energy_bound}). Eq.~(\ref{Eq:Particle_Number_Energy_bound_Full_Space}) can be verified analogously. Using Eq.~(\ref{Eq:a_Star_Decomposition}) we can furthermore write
\begin{align*}
    \mathbb H_0^K= \mathbb H+ [\mathcal P,a^*(w^K)]+[\mathcal P,a^*(w^K)]^*\leq  \mathbb H+C K^{-\frac{1}{2}}\left(\mathcal P^2+\mathcal{N}+1\right),
\end{align*}
where we have applied Eq.~(\ref{Eq:w_Lambda_Phi_Estimate}). Together with Eq.~(\ref{Eq:Particle_Number_Energy_bound}), this concludes the proof of Eq.~(\ref{Eq:Comparision_full_energy_VS_cut-off_Energy}). Regarding the proof of Eq.~(\ref{Eq:Particle_Number_a_Energy_bound}) we use again Eq.~(\ref{Eq:a_Star_Decomposition}), and proceed similarly as in Eq.~(\ref{Eq:v_Lambda_Phi_Estimate}) and Eq.~(\ref{Eq:w_Lambda_Phi_Estimate}) for the choice $\Lambda:=1$, yielding
\begin{align*}
   \pm \left(\mathcal{N}^{2m}a(v)+a(v)^*\mathcal{N}^{2m}\right)\leq C \mathcal{N}^m \left(\mathcal P^2 +\mathcal{N}+1\right) \mathcal{N}^m.
\end{align*}
 \end{proof}

With Lemma \ref{Lemma:LY_Fiber} at hand, we are in a position to verify the following Lemma \ref{Lem:Moments_of_particle_number}.
\begin{lem}
\label{Lem:Moments_of_particle_number}
   For all $\ell\in \mathbb N$, there exists a constant $C_\ell$, such that 
    \begin{align*}
        \braket{\Psi_\alpha,\mathcal{N}^\ell\Psi_\alpha}_{\mathcal{F}}\leq C_\ell, \ \ \ \ \braket{\Psi_\alpha,\mathcal{N}^\ell \mathcal P^2\Psi_\alpha}_{\mathcal{F}}\leq C_\ell.
    \end{align*}
\end{lem}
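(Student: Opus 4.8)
The plan is to prove both families of bounds simultaneously by induction on the order of the moment, using the ground state equation $\mathbb H_0\Psi_\alpha=E_\alpha\Psi_\alpha$ together with the operator inequalities of Lemma~\ref{Lemma:LY_Fiber}. Since the differential operator $\mathcal P$ commutes with the multiplication operator $\mathcal N$, and odd moments are dominated by even ones via $\mathcal N^{2m+1}\le\frac12(\mathcal N^{2m}+\mathcal N^{2m+2})$, it suffices to bound, uniformly in $\alpha\ge1$, the quantities $a_\ell:=\langle\Psi_\alpha,\mathcal N^\ell\Psi_\alpha\rangle_{\mathcal F}$ for even $\ell$ and $b_m:=\langle\Psi_\alpha,\mathcal N^m\mathcal P^2\mathcal N^m\Psi_\alpha\rangle_{\mathcal F}=\|\mathcal P\mathcal N^m\Psi_\alpha\|^2$. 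I would prove by induction on $m$ the statement $(H_m)$: the numbers $a_{2m}$, $a_{2m+1}$ and $b_m$ are bounded by a constant $C_m$ independent of $\alpha$. The base case $(H_0)$ is immediate from Eq.~(\ref{Eq:Particle_Number_Energy_bound}), which gives $a_1+b_0=\langle\Psi_\alpha,(\mathcal N+\mathcal P^2)\Psi_\alpha\rangle\le2E_\alpha+C$, uniformly bounded because $E_\alpha\to e^{\mathrm{Pek}}$. (The a priori finiteness of all moments of $\mathcal N$ in $\Psi_\alpha$ for a fixed $\alpha$ is not obvious from the outset; one obtains it in the standard way by running the estimates below with $\mathcal N$ replaced by the bounded truncation $\mathcal N\wedge R$ and letting $R\to\infty$, which I will suppress.)

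For the inductive step I would assume $(H_0),\dots,(H_m)$. From $\mathbb H_0\Psi_\alpha=E_\alpha\Psi_\alpha$, the self-adjointness of $\mathcal N$ and $\mathbb H_0$, and the reality of $\Psi_\alpha$, one derives the identity
\[
\langle\mathcal N^{m+1}\Psi_\alpha,\mathbb H_0\,\mathcal N^{m+1}\Psi_\alpha\rangle=E_\alpha\,a_{2m+2}-\tfrac12\big\langle\Psi_\alpha,\big[\mathcal N^{m+1},[\mathcal N^{m+1},\mathbb H_0]\big]\Psi_\alpha\big\rangle.
\]
Using the operator bound $\mathbb H_0\ge\tfrac12(\mathcal P^2+\mathcal N)-C$ from Eq.~(\ref{Eq:Particle_Number_Energy_bound}) to estimate the left-hand side from below by $\tfrac12(b_{m+1}+a_{2m+3})-C\,a_{2m+2}$, this turns into
\[
\tfrac12\,b_{m+1}+\tfrac12\,a_{2m+3}\le(E_\alpha+C)\,a_{2m+2}-\tfrac12\big\langle\Psi_\alpha,\big[\mathcal N^{m+1},[\mathcal N^{m+1},\mathbb H_0]\big]\Psi_\alpha\big\rangle.
\]
The term $(E_\alpha+C)\,a_{2m+2}$ is absorbed: by Hölder's inequality $a_{2m+2}\le a_{2m+3}^{(2m+2)/(2m+3)}\le\varepsilon\,a_{2m+3}+C_\varepsilon$, and since $E_\alpha+C$ is bounded uniformly in $\alpha$ one may fix $\varepsilon$ with $(E_\alpha+C)\varepsilon\le\tfrac14$, leaving only a harmless additive constant.

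The crucial point is the double commutator. Because $\mathcal P^2$ and $\mathcal N$ commute with $\mathcal N$, only the interaction $-a^*(v)-a(v)$ contributes, and from $[\mathcal N,a^*(v)]=\alpha^{-2}a^*(v)$ together with the adjoint relation for $a(v)$ one obtains
\[
\big[\mathcal N^{m+1},[\mathcal N^{m+1},\mathbb H_0]\big]=-a^*(v)\,P_+(\mathcal N)-a(v)\,P_-(\mathcal N),\qquad P_\pm(\mathcal N):=\big((\mathcal N\pm\alpha^{-2})^{m+1}-\mathcal N^{m+1}\big)^2,
\]
where $P_\pm$ are polynomials in $\mathcal N$ of degree at most $2m$ with coefficients of order $O(\alpha^{-4})$ uniformly in $\alpha\ge1$. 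Expanding $P_\pm$ into monomials $\mathcal N^r$ with $r\le2m$ and estimating each $\pm\big(a^*(v)\mathcal N^r+\mathcal N^r a(v)\big)$ exactly as in the proof of Eq.~(\ref{Eq:Particle_Number_a_Energy_bound}) — that is, via the ultraviolet splitting $v=v^1+\tfrac1i\nabla\cdot w^1$ from Eq.~(\ref{Eq:Decomposition_in_Lemma_of_interaction}) and the bounds $\|v^1\|,\|w^1\|\lesssim1$ of Lemma~\ref{Lem:Semiclassical_objects_properties} — yields the operator inequality $\pm\big(a^*(v)\mathcal N^r+\mathcal N^r a(v)\big)\le C\,\mathcal N^{m}(\mathcal P^2+\mathcal N+1)\mathcal N^{m}$ for $r\le2m$. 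Taking the expectation in $\Psi_\alpha$, where by reality the $a^*(v)$- and $a(v)$-contributions coincide, bounds the double commutator in absolute value by $C\,(b_m+a_{2m+1}+a_{2m})\le C_m$, using the induction hypothesis $(H_m)$.

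Inserting this together with the absorption step into the displayed inequality gives $\tfrac12\,b_{m+1}+\tfrac14\,a_{2m+3}\le C_{m+1}$, hence $a_{2m+3}\le4C_{m+1}$, $b_{m+1}\le2C_{m+1}$, and $a_{2m+2}\le(4C_{m+1})^{(2m+2)/(2m+3)}$ by Hölder, which is $(H_{m+1})$. The bounds for arbitrary $\ell$ then follow by interpolating between consecutive moments, while $\langle\Psi_\alpha,\mathcal N^\ell\mathcal P^2\Psi_\alpha\rangle=\langle\mathcal P\Psi_\alpha,\mathcal N^\ell\mathcal P\Psi_\alpha\rangle$ is controlled by $b_{\lfloor\ell/2\rfloor}$ and $b_{\lceil\ell/2\rceil}$. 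I expect the only genuinely delicate point to be the bookkeeping in the double-commutator formula: one must check that $\deg P_\pm\le2m$ rather than $2m+2$, so that the double commutator involves only moments of order $\le m$, which are already under control — this is exactly what the one-unit shift $[\mathcal N,a^*(v)]=\alpha^{-2}a^*(v)$ and the shape of Eq.~(\ref{Eq:Particle_Number_a_Energy_bound}) guarantee.
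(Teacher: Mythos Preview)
Your proposal is correct and follows essentially the same strategy as the paper: both proofs use the ground state equation to write $\langle\mathcal N^{m+1}\Psi_\alpha,\mathbb H_0\,\mathcal N^{m+1}\Psi_\alpha\rangle$ in terms of the double commutator $[\mathcal N^{m+1},[\mathcal N^{m+1},\mathbb H_0]]$, bound the latter via the Lieb--Yamazaki type estimate Eq.~(\ref{Eq:Particle_Number_a_Energy_bound}), invoke Eq.~(\ref{Eq:Particle_Number_Energy_bound}) to extract $\mathcal P^2+\mathcal N$, and close by induction. The only cosmetic differences are that the paper packages everything into the single quantity $\Gamma_{\ell,\alpha}=\langle\Psi_\alpha,\mathcal N^\ell(\mathcal P^2+\mathcal N)\mathcal N^\ell\Psi_\alpha\rangle$ and absorbs the lower moment via $C\mathcal N^{2\ell}\le\tfrac12\mathcal N^{2\ell+1}+\tilde C$, whereas you track $a_\ell$ and $b_m$ separately and absorb via H\"older; also, the paper records only the leading term $-\tfrac{\ell^2}{\alpha^4}(\mathcal N^{2(\ell-1)}a(v)+a(v)^*\mathcal N^{2(\ell-1)})$ of the double commutator, while you keep the full polynomial $P_\pm$ of degree $2m$ --- your more careful bookkeeping is arguably cleaner here.
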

\begin{proof}
    Let us first define for $\ell\in \mathbb{N}$ the quantities
    \begin{align*}
        \Gamma_{\ell,\alpha}:=\left\langle \Psi_\alpha, \mathcal{N}^\ell\left(\mathcal P^2+\mathcal{N}\right)\mathcal{N}^\ell\Psi_\alpha\right\rangle_{\mathcal{F}}.
    \end{align*}
    Based on the operator inequality $\mathbb{H}_0-E_\alpha\geq \frac{1}{2}\mathcal P^2+\frac{1}{2}\mathcal{N}-C$, see Lemma \ref{Lemma:LY_Fiber} together with the fact that $-c\leq E_\alpha\leq 0$ for a suitable constant $c>0$, we have
    \begin{align}
    \nonumber
        \Gamma_{\ell,\alpha} & \lesssim \left\langle \Psi_\alpha,\mathcal{N}^\ell\left(\mathbb{H}_0 \! - \! E_\alpha\right)\mathcal{N}^\ell\Psi_\alpha\right\rangle_{\mathcal{F}} \! + \! C\left\langle \Psi_\alpha,\mathcal{N}^{2\ell}\Psi_\alpha\right\rangle_{\mathcal{F}}\\
            \nonumber
        &  = \frac{1}{2}\left\langle \Psi_\alpha,\left[\mathcal{N}^\ell,\left[\mathbb{H}_0 \! - \! E_\alpha,\mathcal{N}^\ell\right]\right]\Psi_\alpha\right\rangle_{\mathcal{F}} \! + \!C\left\langle \Psi_\alpha,\mathcal{N}^{2\ell}\Psi_\alpha\right\rangle_{\mathcal{F}} \\
        \label{Eq:Gamma_Bound}
      &  = -\frac{\ell^2}{2\alpha^4}\left\langle \Psi_\alpha,\left(\mathcal{N}^{2(\ell-1)}a(v)+a(v)^*\mathcal{N}^{2(\ell-1)}\right)\Psi_\alpha\right\rangle_{\mathcal{F}} \! + \!C\left\langle \Psi_\alpha,\mathcal{N}^{2\ell}\Psi_\alpha\right\rangle_{\mathcal{F}},
    \end{align}
    where we have used $\left(\mathbb{H}_0-E_\alpha\right)\Psi=0$ and the fact that
    \begin{align*}
       \left[\mathcal{N}^\ell,\left[\mathbb{H}_0 \! - \! E_\alpha,\mathcal{N}^\ell\right]\right]= \left[\mathcal{N}^\ell,\left[a(v)^*+a(v),\mathcal{N}^\ell\right]\right]=-\frac{\ell^2}{\alpha^4}\left(\mathcal{N}^{2(\ell-1)}a(v)+a(v)^*\mathcal{N}^{2(\ell-1)}\right).
    \end{align*}
    Combining Eq.~(\ref{Eq:Particle_Number_a_Energy_bound}) and Eq.~(\ref{Eq:Gamma_Bound}) therefore yields for a suitable constant $C>0$
    \begin{align*}
        \Gamma_{\ell,\alpha}\leq C\left\langle \Psi_\alpha,\mathcal{N}^{\ell-1}\left(\mathcal P^2+\mathcal{N}\right)\mathcal{N}^{\ell-1}\Psi_\alpha\right\rangle_{\mathcal{F}}+C\left\langle \Psi_\alpha,\mathcal{N}^{2(\ell-1)}\Psi_\alpha\right\rangle_{\mathcal{F}}+C\left\langle \Psi_\alpha,\mathcal{N}^{2\ell}\Psi_\alpha\right\rangle_{\mathcal{F}}.
    \end{align*}
    Clearly we have $C\mathcal{N}^{2(\ell-1)}+C\mathcal{N}^{2\ell}\leq \frac{1}{2}\mathcal{N}^{2\ell+1}+\widetilde C$ for a large enough constant $\widetilde C$, hence
    \begin{align*}
       \Gamma_{\ell,\alpha}\leq C\Gamma_{\ell-1,\alpha}+\frac{1}{2}\Gamma_{\ell,\alpha}+\widetilde C,
    \end{align*}
    or equivalently $ \Gamma_{\ell,\alpha}\leq 2C\Gamma_{\ell-1,\alpha}+2\widetilde C$. This concludes the proof by induction, since we have, using the operator inequality  $\mathbb{H}_0-E_\alpha\geq \frac{1}{2}\mathcal P^2+\frac{1}{2}\mathcal{N}-C$, that
    \begin{align*}
        \Gamma_{0,\alpha}=\left\langle \Psi_\alpha,\left(\mathcal P^2+\mathcal{N}\right)\Psi_\alpha\right\rangle_{\mathcal{F}}\leq 2\left\langle \Psi_\alpha, \left(\mathbb{H}-E_\alpha\right)\Psi_\alpha\right\rangle_{\mathcal{F}}+2C=2C.
    \end{align*}
\end{proof}

In the following we want to verify that the probability measure $|\Psi_\alpha|^2\mathrm{d}Y_n$ concentrates on sets of the form $\Omega_{\lambda}$ as $\alpha\rightarrow \infty$, where $\Omega_\lambda$ is defined in Eq.~(\ref{Eq:Omega_def}). For this purpose let us define for $\lambda>0$ the probability $P_\alpha(\lambda)$ of being outside of such a set 
\begin{align}
\label{Eq:Probability_P_lambda}
    P_\alpha(\lambda):= \sum_{n=0}^\infty \underset{\mathbb R^{3n}\setminus \Omega_{\lambda}^{(n)}}{\int}|\Psi_\alpha(Y_n)|^2\, \mathrm{d}Y_n.
\end{align}
In order to find good estimates on $P_\alpha(\lambda)$, we first need the subsequent auxiliary Lemma \ref{Lem:Localization_IMS} and Lemma \ref{Lem:Bad_Support_High_Energy}. Lemma \ref{Lem:Localization_IMS} provides control on the localization error with respect to the localization functions 
\begin{align*}
   F_{1,\lambda,u}(\rho) : & =\chi_{u^2}\! \left(\left(\int \mathrm{d}\rho-\|\varphi^\mathrm{Pek}\|^2\right)^2 + \left(\underset{|x-y|>R_*}{\int \int}\mathrm{d}\rho\mathrm{d}\rho-\delta_*\right)^2 \leq \lambda^2\right), \\
     F_{2,\lambda,u}(\rho) : & = \sqrt{1-F_{1,\lambda,u}(\rho)^2},
\end{align*}
as well as the localization functions
\begin{align*}
      G_{1,D}(\rho) : & =\chi_1\! \left(\int \mathrm{d}\rho\leq D\right), \\
     G_{2,D}(\rho) : & = \sqrt{1-G_{1,D}(\rho)^2}.
\end{align*}
\begin{lem}
    \label{Lem:Localization_IMS}
    For $\lambda> 0$ and $u>0$ there exists a constant $C>0$, such that 
    \begin{align}
    \label{Eq:Energy_Decomposition_in_IMS-Evironment}
        \sum_{j=1}^2\left\langle F_{j,\lambda,u}(\rho_{Y_n})\Psi_\alpha , \mathbb{H}_{0} F_{j,\lambda,u}(\rho_{Y_n})\Psi_\alpha\right\rangle_{\mathcal{F}}\leq E_\alpha + C\sqrt[4]{P_\alpha\! \left(\frac{\lambda-u}{2}\right)}\alpha^{-4}\leq E_\alpha + C\alpha^{-4}.
    \end{align}
    Furthermore, there exists a $C>0$, such that for any state $\Psi\in L^2\! \left(\mathbb{R}^3\right)\otimes \mathcal{F}$ and $K\geq 0$
    \begin{align}
     \label{Eq:Energy_K_Decomposition_in_IMS-Evironment}
         \sum_{j=1}^2\left\langle G_{j,D}(\rho_{Y_n})\Psi , \mathbb{H}^K G_{j,D}(\rho_{Y_n})\Psi\right\rangle_{L^2\! \left(\mathbb{R}^3\right)\otimes \mathcal{F}}\leq \left\langle \Psi , \mathbb{H}^K \Psi\right\rangle_{L^2\! \left(\mathbb{R}^3\right)\otimes \mathcal{F}} + CK^{\frac{1}{2}}\alpha^{-4}.
    \end{align}
\end{lem}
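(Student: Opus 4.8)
The plan is to prove both identities by the IMS (Ismagilov--Morgan--Simon) localization formula, exploiting the fact that the localization functions $F_{j,\lambda,u}$ and $G_{j,D}$ are built out of \emph{translation invariant} functionals of the configuration. Recall that for a self-adjoint operator $H$ and a smooth partition $F_1^2+F_2^2=1$ one has $\sum_j F_j H F_j=H+\tfrac12\sum_j[F_j,[H,F_j]]$. For Eq.~(\ref{Eq:Energy_Decomposition_in_IMS-Evironment}) I would apply this with $H=\mathbb H_0$ and $F_j=F_{j,\lambda,u}(\rho_{Y_n})$. Since $F_{j,\lambda,u}(\rho_{Y_n})$ depends on $Y_n$ only through $\int\mathrm d\rho_{Y_n}=n/\alpha^2$ and $\underset{|x-y|>R_*}{\int\int}\mathrm d\rho_{Y_n}\mathrm d\rho_{Y_n}$, both invariant under a simultaneous shift $y_k\mapsto y_k+z$ of all particles, one has $\sum_k\nabla_{y_k}F_{j,\lambda,u}(\rho_{Y_n})=0$, i.e.\ $[\mathcal P,F_{j,\lambda,u}(\rho_{Y_n})]=0$; together with $[\mathcal N,F_{j,\lambda,u}(\rho_{Y_n})]=0$ this annihilates the double commutator coming from the $\mathcal P^2$ and $\mathcal N$ parts of $\mathbb H_0$. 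Hence, exactly as in the computation leading to Eq.~(\ref{Eq:Lage_s_K_*_introduction}), the only localization error is an interaction term $-2\mathfrak{Re}\,\langle\Psi_\alpha,F_*(Y_n)L(v)\Psi_\alpha\rangle$ with $F_*(Y_n)=-\tfrac12\sum_j[F_{j,\lambda,u}(\rho_{Y_n})-F_{j,\lambda,u}(\rho_{Y_{n-1}})]^2$, while $\langle\Psi_\alpha,\mathbb H_0\Psi_\alpha\rangle=E_\alpha$ produces the main term.

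Next I would bound this interaction error quantitatively. The function $F_*$ is supported on configurations for which $F_{1,\lambda,u}$ lies in the smooth transition layer of $\chi_{u^2}$ for $\rho_{Y_n}$ or for $\rho_{Y_{n-1}}$; slackening the constants, this forces $Y_n$ or $Y_{n-1}$ into $\mathbb R^{3n}\setminus\Omega^{(n)}_{(\lambda-u)/2}$, with $\Omega_\lambda$ as in Eq.~(\ref{Eq:Omega_def}). On that set $n\lesssim\alpha^2$, and since passing from $\rho_{Y_{n-1}}$ to $\rho_{Y_n}$ changes $\int\mathrm d\rho$ by $\alpha^{-2}$ and the double integral by $2\alpha^{-4}\sum_{i<n}\mathds 1_{|y_i-y_n|>R_*}=O(\alpha^{-2})$, one gets $|F_*|\lesssim\alpha^{-4}\mathds 1_{\mathrm{bad}}$, where $\mathds 1_{\mathrm{bad}}$ is the indicator of $\mathbb R^{3n}\setminus\Omega^{(n)}_{(\lambda-u)/2}$ and the implicit constant depends on the fixed parameters. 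To absorb the ultraviolet singularity of $v$ I split $v=v^\Lambda+\tfrac1i\nabla\!\cdot w^\Lambda$ as in Eq.~(\ref{Eq:Decomposition_in_Lemma_of_interaction}). For the $v^\Lambda$-part, Cauchy--Schwarz gives $|\langle\Psi_\alpha,F_*L(v^\Lambda)\Psi_\alpha\rangle|\le\||F_*|^{1/2}\Psi_\alpha\|\,\||F_*|^{1/2}L(v^\Lambda)\Psi_\alpha\|$ with $\||F_*|^{1/2}\Psi_\alpha\|^2\lesssim\alpha^{-4}P_\alpha(\tfrac{\lambda-u}{2})$ (recall $P_\alpha$ from Eq.~(\ref{Eq:Probability_P_lambda})) and $\||F_*|^{1/2}L(v^\Lambda)\Psi_\alpha\|^2\lesssim\alpha^{-4}\|v^\Lambda\|^2\lesssim\alpha^{-4}\Lambda$, up to a Hölder factor controlled by $\langle\Psi_\alpha,\mathcal N^{\ell}\Psi_\alpha\rangle\lesssim1$ from Lemma~\ref{Lem:Moments_of_particle_number}. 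For the $w^\Lambda$-part I would use $[\mathcal P,F_{j,\lambda,u}(\rho_{Y_n})]=0$ once more together with Eq.~(\ref{Eq:a_Star_Decomposition}) to rewrite the corresponding localization error as a commutator $[\mathcal P,F_*L(w^\Lambda)]+\mathrm{h.c.}$, whose expectation in $\Psi_\alpha$ is bounded by the $\mathcal N$- and $\mathcal P$-moments of $\Psi_\alpha$ (finite by Lemma~\ref{Lem:Moments_of_particle_number}) times $\|w^\Lambda\|\lesssim\Lambda^{-1/2}$, hence $\lesssim\alpha^{-4}\Lambda^{-1/2}$; crucially, no surface-measure term $\nabla_{y_k}\mathds 1_{|y_i-y_k|>R_*}$ is ever produced. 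Summing the two parts and optimizing over $\Lambda\ge1$ (legitimate since $P_\alpha\le1$) yields an error bound $C\sqrt[4]{P_\alpha(\tfrac{\lambda-u}{2})}\,\alpha^{-4}$, and then $P_\alpha\le1$ gives the stated $C\alpha^{-4}$.

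For Eq.~(\ref{Eq:Energy_K_Decomposition_in_IMS-Evironment}) the situation is cleaner: $G_{j,D}(\rho_{Y_n})$ depends on $Y_n$ only through the particle number $n/\alpha^2$, so it commutes with $-\Delta_x$ and with $\mathcal N$, and the IMS formula for $\mathbb H^K$ leaves only the interaction error $-2\mathfrak{Re}\,\langle\Psi,G_*(Y_n)L(v^K_x)\Psi\rangle$ with $G_*(Y_n)=-\tfrac12\sum_j[G_{j,D}(\tfrac{n}{\alpha^2})-G_{j,D}(\tfrac{n-1}{\alpha^2})]^2$. Since $\chi_1$ is globally Lipschitz we get $|G_*|\lesssim\alpha^{-4}$, and $G_*\neq0$ forces $n\le(D+1)\alpha^2$, hence $\tfrac{\sqrt n}{\alpha}\lesssim1$ on the support of $G_*$; Cauchy--Schwarz, the bound $\|v^K\|\lesssim K^{1/2}$ from Lemma~\ref{Lem:Semiclassical_objects_properties}, and $\|\Psi\|=1$ then give $|\langle\Psi,G_*L(v^K_x)\Psi\rangle|\lesssim\alpha^{-4}K^{1/2}$, which is the claim. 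No ultraviolet splitting and no a priori control on $\Psi$ beyond $\|\Psi\|=1$ is needed here, because $v^K$ is already regularized and the particle number is bounded on the support of $G_*$.

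The main obstacle to watch out for is the non-smoothness of $Y_n\mapsto\underset{|x-y|>R_*}{\int\int}\mathrm d\rho_{Y_n}\mathrm d\rho_{Y_n}$, whose single-particle gradients are surface measures on the spheres $\{|y_i-y_k|=R_*\}$, so that $[\mathcal P,F_{j,\lambda,u}(\rho_{Y_n})]^2$ would be genuinely ill-defined; the entire argument hinges on never differentiating $F_{j,\lambda,u}(\rho_{Y_n})$ in a single $y_k$, which is precisely what translation invariance delivers (it makes $[\mathcal P,F_{j,\lambda,u}(\rho_{Y_n})]$ vanish and lets the $w^\Lambda$-error be written as a commutator). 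Everything else --- the $\alpha^{-4}$ smallness of the squared increments, the Hölder interpolation against $\mathds 1_{\mathrm{bad}}$ via the $\mathcal N$-moment bounds, and the choice of $\Lambda$ --- is routine once Lemma~\ref{Lem:Moments_of_particle_number} is in hand.
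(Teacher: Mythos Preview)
Your proposal is correct and follows essentially the same route as the paper: IMS localization, exploitation of the translation invariance of $F_{j,\lambda,u}(\rho_{Y_n})$ to annihilate the commutators with $\mathcal P$ and $\mathcal N$, reduction to the interaction error $\langle\Psi_\alpha,F_*(Y_n)L(v)\Psi_\alpha\rangle$ with the squared-increment function $F_*$, and the ultraviolet splitting $v=v^\Lambda+\tfrac{1}{i}\nabla\!\cdot w^\Lambda$. The treatment of Eq.~(\ref{Eq:Energy_K_Decomposition_in_IMS-Evironment}) is identical to the paper's.

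The only noteworthy difference is how the factor $P_\alpha^{1/4}$ is obtained in Eq.~(\ref{Eq:Energy_Decomposition_in_IMS-Evironment}). You extract $P_\alpha^{1/2}$ only from the $v^\Lambda$-term via $\||F_*|^{1/2}\Psi_\alpha\|^2\lesssim\alpha^{-4}P_\alpha$, get a bare $\alpha^{-4}\Lambda^{-1/2}$ for the $w^\Lambda$-term, and then optimize in $\Lambda$. The paper instead fixes $\Lambda=1$ and, in \emph{each} of the three terms $\|\Psi_\alpha\|\cdot\|F_*L(v^1)\Psi_\alpha\|$, $\|\mathcal P\Psi_\alpha\|\cdot\|F_*L(w^1)\Psi_\alpha\|$, $\|L(w^1)^*F_*\Psi_\alpha\|\cdot\|\mathcal P\Psi_\alpha\|$, applies the Cauchy--Schwarz bound $\langle\Psi_\alpha,(\mathcal N+1)\mathds 1_{\mathrm{bad}}\Psi_\alpha\rangle\le\|(\mathcal N+1)\Psi_\alpha\|\,\sqrt{P_\alpha}$ inside the squared factor carrying $F_*$, so that every term already carries $P_\alpha^{1/4}$ and no optimization is needed. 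Both variants yield the same bound; the paper's is slightly shorter because it avoids the additional parameter.
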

\begin{proof}
Using $\sum_{j=1}^2F_{j,\lambda,u}^2=1$ and $[F_{j,\lambda,u},\mathcal P]=[F_{j,\lambda,u},\mathcal{N}]=0$, we observe that
\begin{align*}
           \Big\| \mathcal P \Psi_\alpha\Big\|^2 & = \sum_{j=1}^2  \Big\| \mathcal P F_{j,\lambda,u}(\rho_{Y_n})\Psi_\alpha\Big\|^2, \\ 
     \left\langle    \Psi_\alpha, \mathcal{N}   \Psi_\alpha \right\rangle_{\mathcal{F}} & = \sum_{j=1}^2 \left\langle   F_{j,\lambda,u}(\rho_{Y_n})\Psi_\alpha, \mathcal{N}    F_{j,\lambda,u}(\rho_{Y_n}) \Psi_\alpha \right\rangle_{\mathcal{F}}.
\end{align*}
In order to verify Eq.~(\ref{Eq:Energy_Decomposition_in_IMS-Evironment}), it is therefore enough to verify
    \begin{align}
      \label{Eq:a_Decomposition_in_IMS-Evironment}
       \left|\sum_{j=1}^2\left\langle F_{j,\lambda,u}(\rho_{Y_n})\Psi_\alpha, a(v)^* F_{j,\lambda,u}(\rho_{Y_n})\Psi_\alpha \right\rangle_{\mathcal{F}}-\left\langle \Psi_\alpha,a(v)^* \Psi_\alpha \right\rangle_{\mathcal{F}}\right|\leq C\sqrt[4]{P_\alpha\! \left(\frac{\lambda-u}{2}\right)}\alpha^{-4} \leq C\alpha^{-4}.
    \end{align}
   The proof of Eq.~(\ref{Eq:a_Decomposition_in_IMS-Evironment}) is a consequence of the IMS identity for operators of the form $H(\rho_{Y_n})$, see \cite[Lemma 3.3]{BS1} and \cite[Theorem A.1]{LSol}, respectively \cite[Proposition 6.1]{LNSS}, for the particle number operator $\mathcal{N}=\int \mathrm{d}\rho_{Y_n}$ specifically, which we will carry out in detail in the following. We compute
    \begin{align*}
     & \ \ \ \  \ \ \ \  \sum_{j=1}^2 \left\langle \Psi_\alpha,F_{j,\lambda,u}(\rho_{Y_n})a(v)^* F_{j,\lambda,u}(\rho_{Y_n})\Psi_\alpha\right\rangle_{\mathcal{F}}-\left\langle \Psi_\alpha,a(v)^*\Psi_\alpha\right\rangle_{\mathcal{F}}\\
      & =\left\langle \Psi_\alpha,\left(\sum_{j=1}^2 F_{j,\lambda,u}(\rho_{Y_{n}})F_{j,\lambda,u}(\rho_{Y_{n-1}})-1\right)L(v)\Psi_\alpha\right\rangle_{L^2\! \left(\underset{n\in \mathbb N}{\bigcup}  \mathbb{R}^{3 n}\right)}=\left\langle \Psi_\alpha,F_*(Y_n)L(v)\Psi_\alpha\right\rangle_{L^2\! \left(\underset{n\in \mathbb N}{\bigcup}  \mathbb{R}^{3 n}\right)},
    \end{align*}
    with $F_*:\underset{n\in \mathbb N}{\bigcup}  \mathbb{R}^{3 n}\longrightarrow \mathbb R$ defined as
    \begin{align*}
        F_*(Y_n):=\sum_{j=1}^2 F_{j,\lambda,u}(\rho_{Y_{n}})F_{j,\lambda,u}(\rho_{Y_{n-1}})-1=-\frac{1}{2}\sum_{j=1}^2 \left[F_{j,\lambda,u}(\rho_{Y_{n}})-F_{j,\lambda,u}(\rho_{Y_{n-1}})\right]^2.
    \end{align*}
Using the decomposition of $a(v)^*$ for $\Lambda:=1$ in Eq.~(\ref{Eq:a_Star_Decomposition}) we can estimate
\begin{align}
\label{Eq:IMS_Error_Estimate}
  & \  \ \  \ \  \ \  \ \  \ \left\langle \Psi_\alpha,F_*(Y_n)L(v)\Psi_\alpha\right\rangle_{L^2\! \left(\underset{n\in \mathbb N}{\bigcup}  \mathbb{R}^{3 n}\right)}=\left\langle \Psi_\alpha,F_*(Y_n)L(v^1)\Psi_\alpha\right\rangle_{L^2\! \left(\underset{n\in \mathbb N}{\bigcup}  \mathbb{R}^{3 n}\right)}\\
  \nonumber
   &   \ \  \ +\left\langle \Psi_\alpha,F_*(Y_n)\mathcal P L(w^1)\Psi_\alpha\right\rangle_{L^2\! \left(\underset{n\in \mathbb N}{\bigcup}  \mathbb{R}^{3 n}\right)}-\left\langle \Psi_\alpha,F_*(Y_n) L(w^1)\mathcal P\Psi_\alpha\right\rangle_{L^2\! \left(\underset{n\in \mathbb N}{\bigcup}  \mathbb{R}^{3 n}\right)}\\
   \nonumber
   & \leq \left\|\Psi_\alpha\right\| \left\|F_*(Y_n)L(v^1)\Psi_\alpha\right\| +\left\|\mathcal P\Psi_\alpha\right\| \left\|F_*(Y_n)L(w^1)\Psi_\alpha\right\| +\left\|L(w^1)^*F_*(Y_n)\Psi_\alpha\right\| \left\|\mathcal P \Psi_\alpha\right\| ,
\end{align}
 where we have used that $\mathcal P$ and $F_*$ commute. Observe that $F_*(Y_n)\neq 0$ implies $Y_n\notin \Omega_{\frac{\lambda-u}{\sqrt{2}}}$ or $Y_{n-1}\notin \Omega_{\frac{\lambda-u}{\sqrt{2}}}$, and given that $\alpha$ is large enough such that $(1+(\|\varphi^\mathrm{Pek}\|^2+\lambda))\alpha^{-2}\leq u$ we obtain in both cases $Y_{n-1}\notin \Omega_{\frac{\lambda-u}{2}}$. Using $\|h\|\lesssim 1$ for $h\in \{v^1,w^1\}$ and $\|F_*\|_\infty\lesssim \alpha^{-4}$, see Eq.~(\ref{Eq:F_estimate_easy}) in Section \ref{Sec:Analysis_of_the_Error_Terms}, yields for $h\in \{v^1,w^1\}$
 \begin{align*}
     & \ \ \ \ \ \  \left\|F_*(Y_n)L(u)\Psi_\alpha\right\|^2=\sum_{n=1}^\infty \frac{n}{\alpha^2}\int_{\mathbb R^{3n}}F_*(Y_n)^2|h(y_n)|^2 |\Psi_\alpha(Y_{n-1})|^2\mathrm{d}Y_n\\
     & \lesssim \alpha^{-8}\sum_{n=1}^\infty \frac{n}{\alpha^2}\int_{\mathbb R^{3n}\setminus \Omega_{\frac{\lambda-u}{2}}^{(n)}}|h(y_n)|^2 |\Psi_\alpha(Y_{n-1})|^2\mathrm{d}Y_n= \alpha^{-8}\|h\|^2\braket{\Psi_\alpha,(\mathcal{N}+1)\mathds{1}_{\mathbb R^{3n}\setminus \Omega_{\frac{\lambda-u}{2}}^{(n)}}(Y_n)\Psi_\alpha}\\
     & \lesssim  \alpha^{-8}\left\|(\mathcal{N}+1)\Psi_\alpha\right\| \left\|\mathds{1}_{\mathbb R^{3n}\setminus \Omega_{\frac{\lambda-u}{2}}^{(n)}}(Y_n)\Psi_\alpha\right\|= \alpha^{-8}\left\|(\mathcal{N}+1)\Psi_\alpha\right\|\sqrt{P_\alpha\! \left(\frac{\lambda-u}{2}\right)}.
 \end{align*}
 In a similar fashion we have $ \left\|L(u)^*F_*(Y_n)\Psi_\alpha\right\|^2\lesssim \alpha^{-8}\left\|\mathcal{N}\Psi_\alpha\right\|\sqrt{P_\alpha\! \left(\frac{\lambda-u}{2}\right)}$. Together with the fact that $\|\mathcal P\Psi\|\lesssim 1$ and $\|\mathcal{N}\Psi\|\lesssim 1$, see Lemma \ref{Lem:Moments_of_particle_number}, we obtain by Eq.~(\ref{Eq:IMS_Error_Estimate})
 \begin{align*}
     \left|\left\langle \Psi_\alpha,F_*(Y_n)L(v)\Psi_\alpha\right\rangle_{L^2\! \left(\underset{n\in \mathbb N}{\bigcup}  \mathbb{R}^{3 n}\right)}\right|\lesssim \sqrt{\alpha^{-8}\sqrt{P_\alpha\! \left(\frac{\lambda-u}{2}\right)}} = \sqrt[4]{P_\alpha\! \left(\frac{\lambda-u}{2}\right)}\alpha^{-4}.
 \end{align*}
 Since $P_\alpha\! \left(\frac{\lambda-u}{2}\right)$ is a probability, i.e. $P_\alpha\! \left(\frac{\lambda-u}{2}\right)\leq 1$, this concludes the proof of Eq.~(\ref{Eq:a_Decomposition_in_IMS-Evironment}), and therefore the proof of Eq.~(\ref{Eq:Energy_Decomposition_in_IMS-Evironment}). Regarding Eq.~(\ref{Eq:Energy_K_Decomposition_in_IMS-Evironment}), we proceed in a similar fashion as we did for Eq.~(\ref{Eq:Energy_Decomposition_in_IMS-Evironment}) and estimate
 \begin{align}
 \nonumber
    & \sum_{j=1}^2\left\langle G_{j,D}(\rho_{Y_n})\Psi , \mathbb{H}^K G_{j,D}(\rho_{Y_n})\Psi\right\rangle_{L^2\! \left(\mathbb{R}^3\right)\otimes \mathcal{F}} - \left\langle \Psi , \mathbb{H}^K \Psi\right\rangle_{L^2\! \left(\mathbb{R}^3\right)\otimes \mathcal{F}}\\
    \label{Eq:G_IMS_Estimate}
    & \ \  \leq \left\langle \Psi , G_*(\rho_{Y_n})L(v^K) \Psi\right\rangle_{L^2\! \left(\mathbb{R}^3\right)\otimes \mathcal{F}}\leq \left\|G_*(\rho_{Y_n}) \Psi\right\| \left\|L(v^K) \mathds{1}_{A}\Psi\right\| ,
 \end{align}
 with $A:=\underset{n\in \mathbb N}{\bigcup} \{Y_n\in \mathbb R^{3n}:\int \mathrm{d}\rho_{Y_n}\leq D+1\}$ and  $G_*:\underset{n\in \mathbb N}{\bigcup}  \mathbb{R}^{3 n}\longrightarrow \mathbb R$ defined as
    \begin{align*}
        G_*(Y_n):=\frac{1}{2}\sum_{j=1}^2 \left[G_{j,D}(\rho_{Y_{n}})-G_{j,D}(\rho_{Y_{n-1}})\right]^2.
    \end{align*}
    Note that we have used the fact that $G_*(Y_n)\neq 0$ implies $Y_{n-1}\in A$ in Eq.~(\ref{Eq:G_IMS_Estimate}). Furthermore, we have $\|G_*\|_\infty\lesssim \alpha^4$, and therefore we can estimate the right hand side of Eq.~(\ref{Eq:G_IMS_Estimate}) by
    \begin{align*}
        \left\|G_*(\rho_{Y_n}) \Psi\right\| \left\|L(v^K) \mathds{1}_{A}\Psi\right\| & \lesssim \alpha^{-4}\left\|L(v^K) \mathds{1}_{A}\Psi\right\|=\alpha^{-4}\|v^K\|\sqrt{\left\langle \mathds{1}_{A}\Psi, (\mathcal{N}+1)\mathds{1}_{A}\Psi\right\rangle}\\
        & \leq \alpha^{-4}\|v^K\|\sqrt{D+2}\lesssim K^{\frac{1}{2}}\alpha^{-4}.
    \end{align*}
\end{proof}

The following Lemma \ref{Lem:Bad_Support_High_Energy} shows that any state $\Psi\in L^2\! \left(\mathbb{R}^3\right)\otimes \mathcal{F}$ supported outside of $\Omega_\lambda$, with $\lambda>0$, cannot have an energy close to the ground state energy $E_\alpha$. While the proof is essentially contained in \cite{BS1}, we will work it out in detail here for the sake of completeness.
\begin{lem}
\label{Lem:Bad_Support_High_Energy}
    Let $\Psi\in L^2\! \left(\mathbb{R}^3\right)\otimes \mathcal{F}$ satisfy that $\Psi(x;Y_n)\neq 0$ implies $Y_n\notin \Omega_\lambda$ and define $K:=\alpha$. Then 
    \begin{align}
    \label{Eq:Bad_Support_High_Energy}
        \left\langle \Psi,\mathbb H^K \Psi\right\rangle_{L^2\! \left(\mathbb{R}^3\right)\otimes \mathcal{F}}>E_\alpha+\alpha^{-\frac{4}{29}}.
    \end{align}
\end{lem}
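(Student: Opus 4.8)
The plan is to argue by contradiction: assume $\Psi$ is supported outside $\Omega_\lambda$ (meaning $\Psi(x;Y_n)\neq 0$ forces $Y_n\notin\Omega_\lambda$) and $\langle\Psi,\mathbb H^K\Psi\rangle_{L^2(\mathbb R^3)\otimes\mathcal F}\le E_\alpha+\alpha^{-\frac{4}{29}}$ with $K=\alpha$. Since $E_\alpha\to e^{\mathrm{Pek}}$ by \eqref{Eq:Introduction_Convergence_Semi_Classical_Energy} and $\alpha^{-\frac{4}{29}}\to 0$, it suffices to prove that every state supported outside $\Omega_\lambda$ has $\mathbb H^K$-energy at least $e^{\mathrm{Pek}}+\epsilon_0$ for a constant $\epsilon_0=\epsilon_0(\lambda)>0$ and all $\alpha$ large; the inequality $\alpha^{-\frac{4}{29}}<\epsilon_0$ then yields the claim. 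Moreover, with $K=\alpha$ one has $\|v^K\|\lesssim\alpha^{\frac12}$, and by the decomposition $v=v^K+\frac1i\nabla\cdot w^K$ with $\|w^K\|\lesssim\alpha^{-\frac12}$, see \eqref{Eq:Decomposition_in_Lemma_of_interaction} and \eqref{Eq:In_Lemma_v_w_Lambda}, the operators $\mathbb H^K$ and $\mathbb H$ agree up to terms bounded by $\alpha^{-\frac12}(\mathcal P^2+\mathcal N+1)$ in the Lee--Low--Pines frame, in the spirit of \eqref{Eq:Comparision_full_energy_VS_cut-off_Energy}; hence the ultraviolet cut-off is harmless here and we may reason with $\mathbb H$ itself modulo $o(1)$ errors.

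The main input is the semiclassical lower bound with a control on deviations, in the form developed in \cite{BS1} on top of \cite{LY} and \cite{LT}. Two ingredients enter. First, a Lieb--Yamazaki / IMS localization: one localizes the electron coordinate on a slowly growing length scale, exactly as already exploited in Lemma \ref{Lem:Bad_quantile_support_Bad_Energy} and in the proof of Lemma \ref{Lem:Localization_IMS}, and simultaneously localizes the particle number to a bounded window using the functions $G_{j,D}$ of Lemma \ref{Lem:Localization_IMS}; both localization errors are $o(1)$. Second, on each resulting piece one compares $\mathbb H$ with the Pekar functional: the quantitative version of $E_\alpha\to e^{\mathrm{Pek}}$ from \cite{BS1} upgrades to a \emph{constrained} lower bound $\langle\Psi',\mathbb H\Psi'\rangle\ge\inf\{\mathcal F^{\mathrm{Pek}}(\varphi):\varphi\in\mathcal C(\Psi')\}-o(1)$, where $\mathcal C(\Psi')$ consists of those profiles $\varphi$ whose mass $\int\mathrm d\rho_\varphi$ and far two-point mass $\int\!\!\int_{|x-y|>R_*}\mathrm d\rho_\varphi(x)\,\mathrm d\rho_\varphi(y)$ lie in the ranges imposed by the support of $\Psi'$ through the correspondence $\rho_{Y_n}\leftrightarrow\rho_\varphi$.

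It remains to show that the constrained Pekar problem strictly exceeds $e^{\mathrm{Pek}}$. If $Y_n\notin\Omega_\lambda$ then either $\bigl|\int\mathrm d\rho_{Y_n}-\|\varphi^{\mathrm{Pek}}\|^2\bigr|\ge\lambda$, or $\bigl|\int\!\!\int_{|x-y|>R_*}\mathrm d\rho_{Y_n}(x)\,\mathrm d\rho_{Y_n}(y)-\delta_*\bigr|\ge\lambda$. In the first case one uses that $t\mapsto\min\{\mathcal F^{\mathrm{Pek}}(\varphi):\|\varphi\|^2=t\}$ has a strict minimum at $t=\|\varphi^{\mathrm{Pek}}\|^2$, so its infimum over $\{t:|t-\|\varphi^{\mathrm{Pek}}\|^2|\ge\lambda\}$ equals $e^{\mathrm{Pek}}+\epsilon_2$ for some $\epsilon_2>0$. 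The second case is the genuine no-fragmentation statement: a profile whose far two-point mass differs from $\delta_*$ by at least $\lambda$ cannot be $L^2$-close to any translate $\varphi^{\mathrm{Pek}}_z$, so by the uniqueness of the Pekar minimizer modulo translations, see \cite{Li}, together with a concentration-compactness argument controlling the possible escape of mass to infinity, one obtains $\inf\bigl\{\mathcal F^{\mathrm{Pek}}(\varphi):\bigl|\int\!\!\int_{|x-y|>R_*}|\varphi(x)|^2|\varphi(y)|^2\,\mathrm dx\,\mathrm dy-\delta_*\bigr|\ge\lambda\bigr\}=e^{\mathrm{Pek}}+\epsilon_1$ with $\epsilon_1>0$ --- here the condition $\delta_*<\tfrac12\|\varphi^{\mathrm{Pek}}\|^4$ imposed in Subsection \ref{Subsec:Conventions_and_Definitions} ensures that a genuinely fragmented profile does land in this constrained set rather than being excluded on mass grounds. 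Taking $\epsilon_0:=\min\{\epsilon_1,\epsilon_2\}$ and absorbing the $o(1)$ errors finishes the proof.

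The main obstacle is the second case: making rigorous, at the level of the empirical measure $\rho_{Y_n}$ and with all constants tracked, that a near-ground-state cannot carry a phonon configuration whose far two-point mass deviates from its Pekar value, i.e. reproducing the quantitative binding and condensation estimate of \cite{BS1}. This is where the concentration-compactness and Lieb--Yamazaki partition arguments are needed; the electron and particle-number localizations, the passage from $\mathbb H^K$ to $\mathbb H$, and the elementary mass-deviation case are all routine given the tools assembled above.
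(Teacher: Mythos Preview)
Your contradiction setup and the particle-number localization via the $G_{j,D}$ from Lemma~\ref{Lem:Localization_IMS} are the right first moves, and the paper does exactly this to produce a state $\widetilde\Psi$ with $\langle\widetilde\Psi,\mathbb H^K\widetilde\Psi\rangle\le E_\alpha+2\alpha^{-4/29}$ and $\mathcal N\le D+1$ on its support. But from there your route diverges from the paper's in a way that creates a real gap.

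You try to pass through a \emph{constrained Pekar problem}: a lower bound of the form $\langle\Psi',\mathbb H\Psi'\rangle\ge\inf\{\mathcal F^{\mathrm{Pek}}(\varphi):\varphi\in\mathcal C(\Psi')\}-o(1)$, followed by a concentration--compactness argument showing that the Pekar infimum under a far two-point-mass constraint is strictly above $e^{\mathrm{Pek}}$. Neither of these two steps is available from \cite{BS1} in the form you invoke. What \cite[Theorem~3.2]{BS1} actually provides is not a constrained energy lower bound but a quantum de Finetti statement: for any low-energy state with bounded particle number and any bounded polynomial observable $J(\rho)=\int\!\cdots\!\int j\,\mathrm d\rho^{\otimes m}$, one has
\[
\Bigl|\langle\widetilde\Psi,J(\rho_{Y_n})\widetilde\Psi\rangle-\int_{\mathbb R^3}J(\rho^{\mathrm{Pek}}_x)\,\mathrm d\mu(x)\Bigr|\le C_m\|j\|_\infty\,\alpha^{-\frac{2}{29}}
\]
for some probability measure $\mu$. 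The paper applies this directly to the specific translation-invariant functional
\[
J(\rho)=\Bigl(\textstyle\int\mathrm d\rho-\|\varphi^{\mathrm{Pek}}\|^2\Bigr)^2+\Bigl(\textstyle\iint_{|x-y|>R_*}\mathrm d\rho\,\mathrm d\rho-\delta_*\Bigr)^2,
\]
which by construction satisfies $J(\rho^{\mathrm{Pek}}_x)=0$ for every $x$ and $J(\rho_{Y_n})\ge\lambda^2$ whenever $Y_n\notin\Omega_\lambda$. This yields $\lambda^2\le\langle\widetilde\Psi,J(\rho_{Y_n})\widetilde\Psi\rangle\le C\alpha^{-2/29}$, a contradiction for large $\alpha$, with no Pekar-level variational analysis and no concentration--compactness needed. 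Your detour through the constrained Pekar functional is not only harder (you yourself flag the far two-point-mass case as the main obstacle) but also unnecessary: the whole point of choosing $\Omega_\lambda$ via these two translation-invariant moments is precisely so that membership in $\Omega_\lambda$ can be tested by a polynomial observable to which \cite[Theorem~3.2]{BS1} applies directly.
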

\begin{proof}
    We are going to verify Eq.~(\ref{Eq:Bad_Support_High_Energy}) by contradiction. Assume that Eq.~(\ref{Eq:Bad_Support_High_Energy}) is violated
    \begin{align*}
        \left\langle \Psi,\mathbb H^K \Psi\right\rangle_{L^2\! \left(\mathbb{R}^3\right)\otimes \mathcal{F}}\leq E_\alpha+\alpha^{-\frac{4}{29}}.
    \end{align*}
Then we obtain by Eq.~(\ref{Eq:Energy_K_Decomposition_in_IMS-Evironment}) for $\alpha$ large enough
    \begin{align}
    \label{Eq:Applied_G_IMS}
         \sum_{j=1}^2\left\langle G_{j,D}(\rho_{Y_n})\Psi , \mathbb{H}^K G_{j,D}(\rho_{Y_n})\Psi\right\rangle_{L^2\! \left(\mathbb{R}^3\right)\otimes \mathcal{F}}\leq E_\alpha+\alpha^{-\frac{4}{29}} + CK^{\frac{1}{2}}\alpha^{-4}\leq E_\alpha+2\alpha^{-\frac{4}{29}},
    \end{align}
    with the functions $G_{j,D}$ defined above Lemma \ref{Lem:Localization_IMS}. By Eq.~(\ref{Eq:Particle_Number_Energy_bound_Full_Space}) we immediately obtain for a suitable constant $C>0$
    \begin{align}
    \nonumber
      &  \left\langle G_{2,D}(\rho_{Y_n})\Psi , \mathbb{H}^K G_{2,D}(\rho_{Y_n})\Psi\right\rangle_{L^2\! \left(\mathbb{R}^3\right)\otimes \mathcal{F}}\geq \frac{1}{2} \left\langle G_{2,D}(\rho_{Y_n})\Psi , (\mathcal{N}-C) G_{2,D}(\rho_{Y_n})\Psi\right\rangle_{L^2\! \left(\mathbb{R}^3\right)\otimes \mathcal{F}}\\
          \label{Eq:Applied_MAss_Lower_Bound}
       & \ \ \ \ \geq \frac{1}{2}(D-C)\left\|G_{2,D}(\rho_{Y_n})\right\|^2=\frac{1}{2}(D-C)\left(1-\left\|G_{1,D}(\rho_{Y_n})\right\|^2\right),
    \end{align}
    where we have used the fact that $G_{2,D}(\rho_{Y_n})\neq 0$ implies that $\int \mathrm{d}\rho_{Y_n}\geq D$, i.e. we have the operator inequality $\mathcal{N}\geq D$ on the support of $G_{2,D}$. Let us define the state
    \begin{align*}
        \widetilde \Psi:=\frac{1}{\left\|G_{1,D}(\rho_{Y_n})\right\|}G_{1,D}(\rho_{Y_n}).
    \end{align*}
    Choosing $D$ large enough such that $\frac{1}{2}(D-C)\geq E_\alpha+2\alpha^{-\frac{4}{29}}$, we obtain by Eq.~(\ref{Eq:Applied_G_IMS}) and Eq.~(\ref{Eq:Applied_MAss_Lower_Bound}) the estimate
    \begin{align*}
        \left\langle \widetilde \Psi , \mathbb{H}^K \widetilde \Psi\right\rangle_{L^2\! \left(\mathbb{R}^3\right)\otimes \mathcal{F}}\leq E_\alpha+2\alpha^{-\frac{4}{29}}\leq e^\mathrm{Pek}+2\alpha^{-\frac{4}{29}}.
    \end{align*}
   Furthermore, we have $\mathcal{N}\leq D+1$ on the support of $\widetilde \Psi$, and therefore $\widetilde \Psi$ satisfies all assumptions of \cite[Theorem 3.2]{BS1}, which tells us that for all $m\in \mathbb N$ there exists a constant $C_m>0$ and a (Borel) probability measure $\mu$ on $\mathbb R^3$ such that 
   \begin{align}
   \label{Eq:Quantum_De_Finetti}
        \left|\left\langle \widetilde \Psi , J(\rho_{Y_n}) \widetilde \Psi\right\rangle_{L^2\! \left(\mathbb{R}^3\right)\otimes \mathcal{F}}-\int_{\mathbb R^3}J(\rho^\mathrm{Pek}_x)\mathrm{d}\mu(x)\right|\leq C_m \|j\|_{\infty} \alpha^{-\frac{2}{29}},
   \end{align}
  for any $J$ of the form $J(\rho)=\int \dots \int j(z_1,\dots ,z_m)\mathrm{d}\rho(z_1)\dots \mathrm{d}\rho(z_n)$ with bounded and measurable $j:\mathbb R^{3m}\longrightarrow \mathbb R$, where $\mathrm{d}\rho^\mathrm{Pek}_x(z):=\left|\varphi^\mathrm{Pek}(z-x)\right|^2\mathrm{d}z$. Consider the concrete choice
  \begin{align*}
      J(\rho):=\left(\int \mathrm{d}\rho-\int \mathrm{d}\rho^\mathrm{Pek}\right)^2+\left(\underset{|x-y|\leq R_*}{\int \int}\mathrm{d}\rho\mathrm{d}\rho-\delta_*\right)^2,
  \end{align*}
  which is a finite sum of admissible functions $J'$ in the sense described below Eq.~(\ref{Eq:Quantum_De_Finetti}). Furthermore, we have $J(\rho^\mathrm{Pek}_x)=J(\rho^\mathrm{Pek}_0)=0$ for all $x\in \mathbb R^3$, and therefore we obtain 
  \begin{align*}
      \left\langle \widetilde \Psi , J(\rho_{Y_n}) \widetilde \Psi\right\rangle_{L^2\! \left(\mathbb{R}^3\right)\otimes \mathcal{F}}\leq C\alpha^{-\frac{2}{29}}
  \end{align*}
  by Eq.~(\ref{Eq:Quantum_De_Finetti}). However, it follows immediately from the definition of $\Omega_\lambda$ that $J(\rho_{Y_n})\geq \lambda^2$ for $Y_n\notin \Omega_\lambda$. Together with the assumption that $\Psi(x;Y_n)\neq 0$ implies $Y_n\notin \Omega_\lambda$, this yields for $\alpha$ large enough such that $C\alpha^{-\frac{2}{29}}<\lambda^2$ the desired contradiction 
  \begin{align*}
      \lambda^2\leq \left\langle \widetilde \Psi , J(\rho_{Y_n}) \widetilde \Psi\right\rangle_{L^2\! \left(\mathbb{R}^3\right)\otimes \mathcal{F}}\leq C\alpha^{-\frac{2}{29}}<\lambda^2.
  \end{align*}
\end{proof}

The following Lemma \ref{Lem:Most_Of_Mass} shows that the measure $|\Psi_\alpha|^2\mathrm{d}Y_n$ is supported on the set $\Omega_{\lambda}$, up to a probability of the size $\alpha^{-5}$. While the estimate by $\alpha^{-5}$ is sufficient for our purpose, it is possible to improve this to $\alpha^{-m}$ for any $m\in \mathbb N$ with some additional effort.
\begin{lem}
\label{Lem:Most_Of_Mass}
Let $\lambda>0$. Then there exists a constant $C>0$ such that
    \begin{align*}
        P_\alpha(\lambda)\leq C \alpha^{-5}.
    \end{align*}
\end{lem}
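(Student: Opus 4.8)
\textbf{Proof strategy of Lemma~\ref{Lem:Most_Of_Mass}.} The plan is a short self-improving bootstrap for $P_\alpha(\lambda)$, run by pitting the IMS localization estimate of Lemma~\ref{Lem:Localization_IMS} against the energy lower bound of Lemma~\ref{Lem:Bad_Support_High_Energy}. Fix $\lambda>0$ and set $u:=\lambda/2$, $\lambda':=\sqrt{\lambda^2-u^2}$, so that $\lambda'^2+u^2=\lambda^2$ and $\lambda'^2-u^2=\lambda^2/2$. From the definition of the cut-off functions $F_{1,\lambda',u},F_{2,\lambda',u}$ preceding Lemma~\ref{Lem:Localization_IMS} one reads off two support facts about $J(\rho_{Y_n}):=\big(\int\mathrm{d}\rho_{Y_n}-\|\varphi^\mathrm{Pek}\|^2\big)^2+\big(\underset{|x-y|>R_*}{\int \int}\mathrm{d}\rho_{Y_n}\mathrm{d}\rho_{Y_n}-\delta_*\big)^2$: first, $Y_n\notin\Omega_\lambda$ forces $J(\rho_{Y_n})>\lambda^2=\lambda'^2+u^2$, hence $F_{2,\lambda',u}(\rho_{Y_n})=1$; second, $F_{2,\lambda',u}(\rho_{Y_n})\ne0$ forces $J(\rho_{Y_n})>\lambda^2/2$, hence one of the two summands of $J$ exceeds $\lambda^2/4$ and so $Y_n\notin\Omega_{\lambda/2}$. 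The first fact gives $P_\alpha(\lambda)\le\|\Phi_2\|^2$ with $\Phi_2:=F_{2,\lambda',u}(\rho_{Y_n})\Psi_\alpha$; the second, together with the invariance of the two functionals defining $\Omega_\mu$ under a joint shift $Y_n\mapsto Y_n-x$, will let us invoke Lemma~\ref{Lem:Bad_Support_High_Energy}.

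Assuming $\Phi_2\ne0$, Lemma~\ref{Lem:Localization_IMS} with parameters $(\lambda',u)$, combined with $\mathbb H_0\ge E_\alpha$ on $F_{1,\lambda',u}(\rho_{Y_n})\Psi_\alpha$ and $F_{1,\lambda',u}^2+F_{2,\lambda',u}^2=1$, yields $\langle\Phi_2,\mathbb H_0\Phi_2\rangle_\mathcal{F}\le E_\alpha\|\Phi_2\|^2+C\sqrt[4]{P_\alpha((\lambda'-u)/2)}\,\alpha^{-4}$. Fix $K:=\alpha$. By Eq.~(\ref{Eq:Comparision_full_energy_VS_cut-off_Energy}), $|E_\alpha|\lesssim1$ and the moment bounds of Lemma~\ref{Lem:Moments_of_particle_number}, the same estimate holds with $\mathbb H_0$ replaced by $\mathbb H_0^K$ up to an extra error $C\alpha^{-1/2}(\|\Phi_2\|^2+\alpha^{-4})$. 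Now lift $\Phi_2$ to the full space exactly as in the proof of Lemma~\ref{Lem:Quantile_Mass_Estimate_by_IMS}: with $\tau_\epsilon(x):=\epsilon^{3/2}\tau(\epsilon x)$ put $\widehat{\Phi}:=\mathcal{T}(\tau_\epsilon\otimes\Phi_2)$, so $\|\widehat{\Phi}\|=\|\Phi_2\|$, and, since the cross term of $(\tfrac1i\nabla_x-\mathcal{P})^2$ vanishes on a product state with a real function $\tau$, one gets the identity $\langle\widehat{\Phi},\mathbb H^K\widehat{\Phi}\rangle=\epsilon^2\|\nabla\tau\|^2\|\Phi_2\|^2+\langle\Phi_2,\mathbb H_0^K\Phi_2\rangle_\mathcal{F}$. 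Taking $\epsilon:=\alpha^{-3/29}$ makes both $\epsilon^2\|\nabla\tau\|^2$ and $\alpha^{-1/2}$ of order $o(\alpha^{-4/29})$, so that $\langle\widehat{\Phi},\mathbb H^K\widehat{\Phi}\rangle\le(E_\alpha+o(\alpha^{-4/29}))\|\Phi_2\|^2+C\sqrt[4]{P_\alpha((\lambda'-u)/2)}\,\alpha^{-4}$. On the other hand $\widehat{\Phi}(x;Y_n)\ne0$ implies $\Phi_2(y_1-x,\dots,y_n-x)\ne0$, hence by the second support fact $(y_1-x,\dots,y_n-x)\notin\Omega_{\lambda/2}$, hence $Y_n\notin\Omega_{\lambda/2}$ by the shift invariance, so Lemma~\ref{Lem:Bad_Support_High_Energy} (with $K=\alpha$ and its $\lambda$ taken to be $\lambda/2$) gives $\langle\widehat{\Phi},\mathbb H^K\widehat{\Phi}\rangle>(E_\alpha+\alpha^{-4/29})\|\Phi_2\|^2$. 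Subtracting the two bounds cancels $E_\alpha\|\Phi_2\|^2$ and, for $\alpha$ large, leaves $(\alpha^{-4/29}-o(\alpha^{-4/29}))\|\Phi_2\|^2<C\sqrt[4]{P_\alpha((\lambda'-u)/2)}\,\alpha^{-4}$; writing $c:=(\sqrt3-1)/4\in(0,1)$ so that $(\lambda'-u)/2=c\lambda$, this is the self-improving inequality
\[
 P_\alpha(\lambda)\ \le\ \|\Phi_2\|^2\ \le\ C\,\sqrt[4]{P_\alpha(c\lambda)}\;\alpha^{-4+\frac{4}{29}}\,.
\]

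It remains to iterate. Put $b:=4-\tfrac{4}{29}$ and apply the displayed inequality $N$ times along the chain $\lambda,c\lambda,c^2\lambda,\dots$, each round producing its own finite constant; starting from the trivial bound $P_\alpha(c^N\lambda)\le1$ one obtains $P_\alpha(\lambda)\le C_N\,\alpha^{-b(1+4^{-1}+\dots+4^{-(N-1)})}=C_N\,\alpha^{-\frac43 b(1-4^{-N})}$. Since $\tfrac43 b=\tfrac{448}{87}>5$, already $N=3$ makes the exponent strictly smaller than $-5$, and hence $P_\alpha(\lambda)\lesssim\alpha^{-5}$ with an $\alpha$-independent constant because only finitely many rounds are used.

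The crux is the self-consistency of the scheme: one round only improves a bound $P_\alpha\lesssim\alpha^{-\kappa}$ to $P_\alpha\lesssim\alpha^{-\kappa/4-b}$, so one must check that the associated fixed exponent $\tfrac43 b$ genuinely exceeds $5$, that the shrinking chain $c^k\lambda$ and the parameter $\lambda/2$ fed into Lemma~\ref{Lem:Bad_Support_High_Energy} stay positive through the (finitely many) rounds, and --- this is what forces the choices $K=\alpha$ and $\epsilon=\alpha^{-3/29}$ --- that the cutoff-comparison error $\alpha^{-1/2}$ of Eq.~(\ref{Eq:Comparision_full_energy_VS_cut-off_Energy}) and the localization error $\epsilon^2\|\nabla\tau\|^2$ of the lift are negligible against the spectral gap $\alpha^{-4/29}$ supplied by Lemma~\ref{Lem:Bad_Support_High_Energy}. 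Everything else just reuses the IMS split, the fiber-versus-cutoff comparison, and the $\tau_\epsilon$-lift already carried out in Lemma~\ref{Lem:Localization_IMS}, Lemma~\ref{Lemma:LY_Fiber} and the proof of Lemma~\ref{Lem:Quantile_Mass_Estimate_by_IMS}.
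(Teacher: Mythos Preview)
Your proof is correct and follows essentially the same bootstrap as the paper: IMS localization from Lemma~\ref{Lem:Localization_IMS}, the energy lower bound of Lemma~\ref{Lem:Bad_Support_High_Energy} after the $\tau_\epsilon$-lift, and iteration of the resulting self-improving inequality to reach exponent $\tfrac{4}{3}\big(4-\tfrac{4}{29}\big)=\tfrac{448}{87}>5$. The only noteworthy (and harmless) deviation is that you use the exact identity $\langle\widehat\Phi,\mathbb H^K\widehat\Phi\rangle=\epsilon^2\|\nabla\tau\|^2\|\Phi_2\|^2+\langle\Phi_2,\mathbb H_0^K\Phi_2\rangle$ coming from the vanishing of the cross term for real $\tau$, whereas the paper bounds that cross term by Cauchy--Schwarz and then controls $\|\mathcal P\widehat\Psi_\alpha\|^2$ via the contradiction hypothesis; your route is slightly cleaner but otherwise the arguments coincide.
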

\begin{proof}
Trivially, we have $P_\alpha(\lambda)\leq 1$ for all $\lambda\geq 0$, and therefore we are done by iteration, once we can verify that for all $\lambda> 0$ and $u>0$ there exists a constant $C$ such that
 \begin{align}
 \label{Eq:Probability_Iterative_Scheme}
     P_\alpha(2(\lambda+u))\leq C \sqrt[4]{P_\alpha(\lambda)}\alpha^{-4\left(1-\frac{1}{29}\right)} .
 \end{align}
 To be more precise, Eq.~(\ref{Eq:Probability_Iterative_Scheme}) implies for all $r<\frac{1}{1-\frac{1}{4}}4\left(1-\frac{1}{29}\right)$, such as $r:=5$, and $\lambda>0$ that there exists a constant $C_{r,\lambda}$ such that
 \begin{align*}
    P_\alpha(\lambda)\leq C_{r,\lambda} \alpha^{-r} .
 \end{align*}
  Let us define the functions $F_j:=F_{j,2\lambda+u,u}$, introduced above Lemma \ref{Lem:Localization_IMS}. By Lemma \ref{Lem:Localization_IMS} we obtain for a suitable constant $C>0$
\begin{align*}
    \sum_{j=1}^2\left\langle F_j(\rho_{Y_n})\Psi_\alpha , \mathbb{H}_{0} F_j(\rho_{Y_n})\Psi_\alpha\right\rangle_{\mathcal{F}}\leq  \left\langle \Psi_\alpha , \mathbb{H}_{0} \Psi_\alpha\right\rangle_{\mathcal{F}}+ C\sqrt[4]{P_\alpha(\lambda)}\alpha^{-4}=E_\alpha + C\sqrt[4]{P_\alpha(\lambda)}\alpha^{-4}.
\end{align*}
In combination with the operator inequality $\mathbb H_0\geq E_\alpha$, and the fact that 
\begin{align*}
   P_\alpha(2(\lambda+u))\leq \|F_2(\rho_{Y_n})\Psi_\alpha\|^2=1-\|F_1(\rho_{Y_n})\Psi_\alpha\|^2, 
\end{align*}
we obtain for the state 
\begin{align*}
  \widehat{\Psi}_\alpha:=\frac{1}{\|F_2(\rho_{Y_n})\Psi_\alpha\|}F_2(\rho_{Y_n})\Psi_\alpha  
\end{align*}
the estimate
\begin{align}
\label{Eq:Probability_Estimate_by_IMS_Asymptotic}
 P_\alpha(2(\lambda+u))\left\langle \widehat{\Psi}_\alpha,(\mathbb H_0-E_\alpha)\widehat{\Psi}_\alpha\right\rangle_{\mathcal{F}}\leq \|F_2(\rho_{Y_n})\Psi_\alpha\|^2 \left\langle \widehat{\Psi}_\alpha,(\mathbb H_0-E_\alpha)\widehat{\Psi}_\alpha\right\rangle_{\mathcal{F}}\leq C\sqrt[4]{P_\alpha(\lambda)}\alpha^{-4}.
\end{align}
In the following we want to show by contradiction that 
\begin{align}
\label{Eq:Contradiction_Argument_Asymptotic_Mass}
    \left\langle \widehat{\Psi}_\alpha,\mathbb H_0\widehat{\Psi}_\alpha\right\rangle_{\mathcal{F}}\geq E_\alpha+\frac{1}{3}\alpha^{-\frac{4}{29}},
\end{align}
i.e. we assume
\begin{align}
    \label{Eq:Asymptotic_proof_A_Contradiction_Assumption}
      \left\langle \widehat{\Psi}_\alpha,\mathbb H_0\widehat{\Psi}_\alpha\right\rangle_{\mathcal{F}}< E_\alpha+\frac{1}{3}\alpha^{-\frac{4}{29}}.
\end{align}
By Eq.~(\ref{Eq:Comparision_full_energy_VS_cut-off_Energy}) we therefore obtain for $K:=\alpha$ and $\alpha$ large enough
\begin{align*}
   \left\langle \widehat{\Psi}_\alpha,\mathbb H^K_0\widehat{\Psi}_\alpha\right\rangle_{\mathcal{F}}\leq E_\alpha+\frac{2}{3}\alpha^{-\frac{4}{29}}.
\end{align*}
Let us furthermore define the auxiliary state $\widehat \Phi_\alpha\in L^2\! \left(\mathbb{R}^3\right)\otimes \mathcal{F}\subseteq  L^2\! \left(\mathbb{R}^3\times \underset{n\in \mathbb N}{\bigcup}  \mathbb{R}^{3 n}\right)$ as
 \begin{align*}
     \widehat \Phi_\alpha(x;Y_n):=\tau_\epsilon(x)\widehat \Psi_\alpha(y_1-x,\dots ,y_n-x),
 \end{align*}
 where $\tau_\epsilon(x):=\epsilon^{\frac{3}{2}}\tau\! \left(\epsilon x\right)$ and $\tau$ is a $[0,1]$-valued smooth function with compact support and $\int \tau^2\mathrm{d}x=1$. Note that there exists a constant $C'>0$ such that for $\epsilon>0$
 \begin{align*}
    & \ \ \ \  \left\langle \widehat \Phi_\alpha,  (-\Delta_x) \widehat \Phi_\alpha \right\rangle_{ L^2\! \left(\mathbb{R}^3\right)\otimes \mathcal{F}}= \left\langle \tau_\epsilon\otimes \widehat \Psi_\alpha,  \left(\frac{1}{i}\nabla_x -\mathcal P\right)^2 \tau_\epsilon\otimes \widehat \Psi_\alpha \right\rangle_{ L^2\! \left(\mathbb{R}^3\right)\otimes \mathcal{F}}\\
     &  \leq (1+\epsilon)  \left\langle \widehat \Psi_\alpha, \mathcal P^2  \widehat \Psi_\alpha \right\rangle_{ \mathcal{F}}+\left(1+\epsilon^{-1}\right)\! \left\langle \tau_\epsilon, (-\Delta_x) \tau_\epsilon \right\rangle_{ L^2\! \left(\mathbb{R}^3\right)} \leq \left\langle \widehat \Psi_\alpha,  \mathcal P^2  \widehat \Psi_\alpha \right\rangle_{ \mathcal{F}}+C'\! \left(\epsilon+\epsilon^2\right),
 \end{align*}
   where we have used the assumption in Eq.~(\ref{Eq:Asymptotic_proof_A_Contradiction_Assumption}) and the fact that $\mathcal{P}^2\leq 2\mathbb H_0 +C$ for a suitable constant $C>0$, see Eq.~(\ref{Eq:Particle_Number_Energy_bound}). Moreover, we observe that
     \begin{align*}
     &  \ \ \ \  \left\langle \widehat \Phi_\alpha,  \Big\{\mathcal{N}-a^*(v^\Lambda_x)-a(v^\Lambda_x)\Big\} \widehat \Phi_\alpha \right\rangle_{ L^2\! \left(\mathbb{R}^3\right)\otimes \mathcal{F}}=\left\langle \widehat \Psi_\alpha,  \Big\{\mathcal{N}-a^*(v^\Lambda)-a(v^\Lambda)\Big\} \widehat \Psi_\alpha \right\rangle_{ \mathcal{F}}.
 \end{align*}
Therefore we obtain for $\epsilon:=\alpha^{-1}$ and $\alpha$ large enough
 \begin{align}
 \label{Eq:Energy_Estimate_for_Asymptotic_Concentration}
      \left\langle \widehat \Phi_\alpha,  \mathbb H^K \widehat \Phi_\alpha \right\rangle_{ L^2\! \left(\mathbb{R}^3\right)\otimes \mathcal{F}}\leq \left\langle \widehat \Psi_\alpha, \mathbb H^K_0  \widehat \Psi_\alpha\right\rangle_\mathcal{F}+C'\alpha^{-1}+C'\alpha^{-2}\leq E_\alpha+\alpha^{-\frac{4}{29}}.
 \end{align}
 It follows however from Lemma \ref{Lem:Bad_Support_High_Energy}, that any state satisfying Eq.~(\ref{Eq:Energy_Estimate_for_Asymptotic_Concentration}) cannot be supported outside of $\Omega_{\lambda}$ for $\lambda>0$ and $\alpha$ large enough. Since the state $ \widehat \Phi_\alpha$ is supported on the set $\left(\underset{n\in \mathbb N}{\bigcup}  \mathbb{R}^{3 n}\right)\setminus \Omega_{\sqrt{2}\lambda}$, this is the desired contradiction to Eq.~(\ref{Eq:Asymptotic_proof_A_Contradiction_Assumption}). Combining Eq.~(\ref{Eq:Probability_Estimate_by_IMS_Asymptotic}) and Eq.~(\ref{Eq:Contradiction_Argument_Asymptotic_Mass}) concludes the proof of Eq.~(\ref{Eq:Probability_Iterative_Scheme}).
 \end{proof}

\section{Analysis of the Error Terms $\mathcal{E}$}
\label{Sec:Analysis_of_the_Error_Terms}
It is the goal of this Section, to show that the residual terms $\mathcal{E}_1,\dots, \mathcal{E}_4$ appearing in the analysis of the quantum energy
\begin{align*}
    \left\langle \Psi_{\alpha,p},\mathbb H_p  \Psi_{\alpha,p}  \right\rangle_\mathcal{F}
\end{align*}
in Subsection \ref{Sec:Isolating_the_essential_Contribution} are indeed small compared to $\alpha^{-4}|p|^2$, where we rely heavily on the results of the previous Sections \ref{Sec:Spatial_concentration_of_Probability} and \ref{Sec:Asymptotic_concentration_of_Probability}. As a first step let us verify the following auxiliary Lemma \ref{Lema:F_eta_estimate}.

\begin{lem}
\label{Lema:F_eta_estimate}
    Let $f:\mathbb R^3\longrightarrow \mathbb R^3$ be as in the definition of $\mathcal{B}$ in Eq.~(\ref{Eq:Boost_Operator}). Then there exist constants $C,d>0$, such that for all $Y_n\in \mathbb{R}^{3n}$ and $n\in \mathbb{N}$ 
        \begin{align}
      \label{Eq:F_estimate_easy}
      &   \left|F(\rho_{Y_{n}})-F(\rho_{Y_{n-1}})\right|\leq \frac{C}{\alpha^2}, \ \ \ \ \  \left|\sqrt{1-F(\rho_{Y_{n}})^2}-\sqrt{1-F(\rho_{Y_{n-1}})^2}\right|\leq \frac{C}{\alpha^2},
    \end{align}  
 and in the case that $y_i\neq y_j$ for all $i,j\in \{1,\dots ,n\}$ 
   \begin{align}
         \nonumber 
      &  \ \ \ \ \ \ \   \ \ \ \ \ \ \  \ \ \  \left|G(\rho_{Y_{n}})-G(\rho_{Y_{n-1}})\right|\\
            \label{Eq:G_estimate_easy}
      & \leq \frac{C\left(\|f\|_\infty + \|\nabla f\|_\infty\right)}{\alpha^2}\left(1+\min\{|m_q(\chi* \rho_{Y_{n-1}})|,|m_q(\chi* \rho_{Y_{n}})|\}\right)\chi\! \left(\int \mathrm{d}\rho_{Y_{n-1}}\geq d\right),
\end{align}
and
\begin{align}
\nonumber
      & \ \ \ \ \ \ \   \ \ \ \ \  \left|G_i(\rho_{Y_{n}})  \mathcal{B}_j(\rho_{Y_{n}}) \! - \!  G_i(\rho_{Y_{n-1}})  \mathcal{B}_j(\rho_{Y_{n-1}})\right|\\
             \label{Eq:G_B_estimate_easy}
      &   \leq \frac{C\left(\|f\|_\infty \! + \!  \|\nabla f\|_\infty\right)}{\alpha^2}  \! \left(1+\min\{|m_q(\chi* \rho_{Y_{n-1}})|,|m_q(\chi* \rho_{Y_{n}})|\}\right)^2\chi\! \left(\int \mathrm{d}\rho_{Y_{n-1}}\geq d\right).
      \end{align}
    Furthermore, $0\leq \varphi_{\eta}(Y_n)\leq \frac{C}{\eta^4 \alpha^4}$ for $\eta\leq q$ and $Y_n$ satisfying $G(\rho_{Y_n})\neq G(\rho_{Y_{n-1}})$ or $F(\rho_{Y_n})\neq 0$, and $y_i\neq y_j$.
\end{lem}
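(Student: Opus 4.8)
The plan is to prove Lemma \ref{Lema:F_eta_estimate} by systematically exploiting two facts: first, that passing from $\rho_{Y_{n-1}}$ to $\rho_{Y_n}$ changes the empirical measure only by $\alpha^{-2}\delta_{y_n}$, so that all relevant quantities are Lipschitz with constant $O(\alpha^{-2})$ in total variation; second, that on the support of $F$ (equivalently $\Omega_{\sigma+\kappa}$, by Lemma \ref{Lem:Precise_Support_F}) the total mass $\int \mathrm{d}\rho_{Y_n}$ is bounded below, so the regularized median $m_q$ and the functional $\mathcal B$ behave regularly. The cut-off $\chi\!\left(\int \mathrm{d}\rho_{Y_{n-1}}\geq d\right)$ is harmless because $G(\rho)=F(\rho)\mathcal B(\rho)$ is supported where $\int\mathrm{d}\rho\geq \|\varphi^\mathrm{Pek}\|^2-(\sigma+\kappa)>0$, so we may fix $d$ to be any positive number strictly below that threshold and the factor is then automatically $1$ wherever $G$ or $F$ is nonzero.

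First I would prove the two estimates in Eq.~(\ref{Eq:F_estimate_easy}). Recall $F(\rho)=\chi_{\kappa^2}(\Phi(\rho)\leq \sigma^2)$ where $\Phi(\rho):=\left(\int\mathrm{d}\rho-\|\varphi^\mathrm{Pek}\|^2\right)^2+\left(\iint_{|x-y|>R_*}\mathrm{d}\rho\,\mathrm{d}\rho-\delta_*\right)^2$; since $\chi_{\kappa^2}$ is smooth with bounded derivative, it suffices to bound $|\Phi(\rho_{Y_n})-\Phi(\rho_{Y_{n-1}})|$. Both summands of $\Phi$ are smooth functions of $\rho$ whose increments under adding $\alpha^{-2}\delta_{y_n}$ are $O(\alpha^{-2})$ — for the single integral this is immediate, for the double integral one gets $\alpha^{-2}$ times a bounded quantity (the mass is $\leq D$ on $\Omega_{\sigma+\kappa}$) plus an $\alpha^{-4}$ term — so $|\Phi(\rho_{Y_n})-\Phi(\rho_{Y_{n-1}})|\leq C\alpha^{-2}$ and hence $|F(\rho_{Y_n})-F(\rho_{Y_{n-1}})|\leq C\alpha^{-2}$. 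The second bound follows the same way since $t\mapsto\sqrt{1-\chi_{\kappa^2}(t\leq\sigma^2)^2}$ is also smooth with bounded derivative (using $\kappa<\sigma$, so $a+\kappa<b-\kappa$ holds and the square root is smooth, cf.\ the discussion below Eq.~(\ref{Eq:General_Cut-Off_Functions})). Note these bounds require no lower mass bound, which is why they hold for all $Y_n$.

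Next, for Eq.~(\ref{Eq:G_estimate_easy}) I would write $G(\rho_{Y_n})-G(\rho_{Y_{n-1}})=F(\rho_{Y_n})[\mathcal B(\rho_{Y_n})-\mathcal B(\rho_{Y_{n-1}})]+[F(\rho_{Y_n})-F(\rho_{Y_{n-1}})]\mathcal B(\rho_{Y_{n-1}})$. On the support of $G$ we may assume $\int\mathrm{d}\rho_{Y_{n-1}}\geq d$, so Lemma \ref{Lem:Comparison_of_quantiles} / the linear bound Eq.~(\ref{Eq:Comparison_Median_linear}) in Lemma \ref{Lem:Eq:Total_Variation_Upper_Bound} gives $|m_q(\chi*\rho_{Y_n})-m_q(\chi*\rho_{Y_{n-1}})|\leq C\alpha^{-2}$ and, from the explicit form of $\mathcal B$ in Eq.~(\ref{Eq:Boost_Operator}) together with the mean value theorem applied to $f$ and the $\alpha^{-2}$-smallness of the median shift, $|\mathcal B(\rho_{Y_n})-\mathcal B(\rho_{Y_{n-1}})|\leq C\alpha^{-2}(\|f\|_\infty+\|\nabla f\|_\infty)$; the extra factor $1+\min\{|m_q|,|m_q'|\}$ absorbs the $|\mathcal B(\rho_{Y_{n-1}})|$ term, since $|\mathcal B(\rho)|\leq |m_q(\chi*\rho)|+\|f\|_\infty\int\mathrm{d}\rho\leq C(1+|m_q(\chi*\rho)|)$. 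Here $y_i\neq y_j$ is used so the empirical measure has $n$ distinct atoms and the median increment can be controlled coordinatewise. Eq.~(\ref{Eq:G_B_estimate_easy}) is the same computation applied to the product $G_i\mathcal B_j$, where now one telescopes $G_i(\rho_{Y_n})\mathcal B_j(\rho_{Y_n})-G_i(\rho_{Y_{n-1}})\mathcal B_j(\rho_{Y_{n-1}})$ into an increment of $G_i$ times $\mathcal B_j(\rho_{Y_{n-1}})$ plus $G_i(\rho_{Y_n})$ times an increment of $\mathcal B_j$, each bounded as above, producing the squared factor $(1+\min\{|m_q|,|m_q'|\})^2$ because one picks up one power of $|\mathcal B|\leq C(1+|m_q|)$ from each of the two telescoped pieces.

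Finally, for the bound on $\varphi_\eta(Y_n)$ from Eq.~(\ref{Eq:Def_F_with_subscript}), nonnegativity follows by writing, for each fixed $x$, the integrand as $[\tau_\eta(m_\eta(\rho_{Y_{n-1}})+x)-\tau_\eta(m_\eta(\rho_{Y_n})+x)]\tau_\eta(m_\eta(\rho_{Y_{n-1}})+x)$ and using that $\varphi_\eta(Y_n)=\frac12\int[\tau_\eta(m_\eta(\rho_{Y_{n-1}})+x)-\tau_\eta(m_\eta(\rho_{Y_n})+x)]^2\mathrm{d}x\geq 0$ by the standard identity $\langle a-b,a\rangle=\frac12\|a\|^2-\frac12\|b\|^2+\frac12\|a-b\|^2$ combined with $\|\tau_\eta(m_\eta(\rho_{Y_{n-1}})+\cdot)\|^2=\|\tau_\eta(m_\eta(\rho_{Y_n})+\cdot)\|^2=1$. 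For the upper bound, $\varphi_\eta(Y_n)=\frac12\int|\tau_\eta(m_\eta(\rho_{Y_{n-1}})+x)-\tau_\eta(m_\eta(\rho_{Y_n})+x)|^2\mathrm{d}x\leq \frac12\|\nabla\tau_\eta\|^2_{L^2}\,|m_\eta(\rho_{Y_n})-m_\eta(\rho_{Y_{n-1}})|^2$ by the fundamental theorem of calculus; since $\|\nabla\tau_\eta\|_{L^2}^2=\eta^{-2}\|\nabla\tau\|_{L^2}^2$, and since on the support (where $G(\rho_{Y_n})\neq G(\rho_{Y_{n-1}})$ or $F(\rho_{Y_n})\neq 0$) we have $\int\mathrm{d}\rho_{Y_{n-1}}\geq d$ so that $|m_\eta(\rho_{Y_n})-m_\eta(\rho_{Y_{n-1}})|\leq C\alpha^{-2}$ by Eq.~(\ref{Eq:Comparison_Median_linear}) provided $\eta\leq q$, we obtain $\varphi_\eta(Y_n)\leq C\eta^{-2}\alpha^{-4}$. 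The stated bound $C\eta^{-4}\alpha^{-4}$ is then immediate since $\eta\leq 1$; I would keep the cruder $\eta^{-4}$ form to match the downstream use in Eq.~(\ref{Eq:Def_E_3}). The main obstacle is bookkeeping rather than conceptual: one must be careful that the lower mass bound $d$ is genuinely available on every relevant support set (which is where Lemma \ref{Lem:Precise_Support_F} and the choice $\sigma+\kappa<\|\varphi^\mathrm{Pek}\|^2$ from Subsection \ref{Subsec:Conventions_and_Definitions} enter), and that the hypothesis $\eta\leq q$ is exactly what makes the regularized-median Lipschitz estimate Eq.~(\ref{Eq:Comparison_Median_linear}) applicable with $\tilde q=\eta$.
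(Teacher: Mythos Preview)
Your overall strategy matches the paper's: reduce everything to the $\alpha^{-2}$-smallness of $\rho_{Y_n}-\rho_{Y_{n-1}}$, use Lipschitz properties of $F$, $m_q$, $\mathcal B$ on the support $\Omega_{\sigma+\kappa}$, and telescope. The decomposition of $G(\rho_{Y_n})-G(\rho_{Y_{n-1}})$ and the identity $\varphi_\eta(Y_n)=\tfrac12\int|\tau_\eta(m_\eta(\rho_{Y_{n-1}})+x)-\tau_\eta(m_\eta(\rho_{Y_n})+x)|^2\,\mathrm{d}x$ are exactly what the paper does.

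There is, however, a genuine gap in your treatment of the $\varphi_\eta$ bound. The quantity entering $\varphi_\eta$ is $m_\eta(\rho_{Y_n})$, the regularized median of the \emph{bare} empirical measure, not $m_q(\chi*\rho_{Y_n})$. Eq.~(\ref{Eq:Comparison_Median_linear}) is derived for the convoluted median $m_q(\chi*\rho)$ (its proof uses the quantile Lipschitz estimate of Lemma~\ref{Lem:Comparison_of_quantiles}, which in turn needs a density lower bound near the quantile that is simply unavailable for an atomic measure without mollification). Moreover, even for the convoluted median the constant in (\ref{Eq:Comparison_Median_linear}) carries a factor $q^{-2}$, so you cannot extract an $\eta$-independent bound from it with $q=\eta$. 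The paper instead invokes \cite[Lemma~3.10]{BS1}, which gives directly for the bare empirical measure
\[
\bigl|m_{q_*}(\rho_{Y_n})-m_{q_*}(\rho_{Y_{n-1}})\bigr|\le \frac{C}{\alpha^2 q_*}
\]
under the mass/clustering conditions (this is Eq.~(\ref{Eq:Sharp_Median_Bound}) in the paper's proof). With $q_*=\eta$ this yields $|m_\eta(\rho_{Y_n})-m_\eta(\rho_{Y_{n-1}})|\le C\eta^{-1}\alpha^{-2}$, and hence $\varphi_\eta\le \tfrac12\|\nabla\tau_\eta\|_{L^2}^2\cdot C^2\eta^{-2}\alpha^{-4}=C\eta^{-4}\alpha^{-4}$. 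In other words, the $\eta^{-4}$ is not slack you keep ``to match downstream use''; it is the actual bound. Your intermediate claim $\varphi_\eta\le C\eta^{-2}\alpha^{-4}$ is not supported by the reference you cite.

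Two smaller points of bookkeeping: for Eq.~(\ref{Eq:F_estimate_easy}) you should make explicit (as the paper does) that $F(\rho_{Y_n})\ne F(\rho_{Y_{n-1}})$ forces one of the two measures into $\Omega_{\sigma+\kappa}$, which is what gives the \emph{upper} mass bound needed to control the increments of both squares in $\Phi$; and for (\ref{Eq:G_estimate_easy}) you should note that the triple $(\rho_{Y_{n-1}},\rho_{Y_{n-1}},\rho_{Y_n})$ is admissible before invoking Lemma~\ref{Lem:Eq:Total_Variation_Upper_Bound}. These are exposition issues rather than errors.
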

\begin{proof}
    Let $\epsilon$ be small enough such that $\|\varphi^\mathrm{Pek}\|^2+\sigma+\kappa+\epsilon<\frac{(\|\varphi^\mathrm{Pek}\|^2-\sigma-\kappa-\epsilon)^2}{2}$ and let $\alpha$ be large enough such that $(1+2(\|\varphi^\mathrm{Pek}\|^2+\sigma+\kappa))\alpha^{-2}\leq \epsilon$. Note that $F(\rho_{Y_{n}})-F(\rho_{Y_{n-1}})\neq 0$ implies that $\rho_{Y_{n-1}}$ or $\rho_{Y_{n}}$ is an element of $\Omega_{\sigma+\kappa}$. In both cases we obtain 
    \begin{align}
    \label{Eq:Reasonable_Mass_Conditions}
       d:=\|\varphi^\mathrm{Pek}\|^2-\sigma-\kappa-\epsilon \leq \int \mathrm{d}\rho_{Y_{n-1}}\leq \|\varphi^\mathrm{Pek}\|^2+\sigma+\kappa+\epsilon,\ \ \ \underset{|x-y|>R}{\int \int}\mathrm{d}\nu\mathrm{d}\nu\leq \delta_*+\sigma+\kappa+\epsilon.
    \end{align}
    Since the function $\chi_{\kappa^2}(\cdot \leq \delta^2)$, as well as $\sqrt{1-\chi_{\kappa^2}(\cdot \leq \delta^2)^2}$, in the definition of $F$ in Eq.~(\ref{Eq:Def_F_cut_off_functional}) is smooth, Eq.~(\ref{Eq:F_estimate_easy}) follows immediately from the observation that
    \begin{align*}
      &  \ \ \ \ \  \ \ \ \ \ \int\mathrm{d}\rho_{Y_n}-\int\mathrm{d}\rho_{Y_{n-1}}=\frac{1}{\alpha^2}, \\
       &  0\leq \underset{|x-y|>R}{\int \int}\mathrm{d}\rho_{Y_{n}}\mathrm{d}\rho_{Y_{n}}-\underset{|x-y|>R}{\int \int}\mathrm{d}\rho_{Y_{n-1}}\mathrm{d}\rho_{Y_{n-1}}=\frac{2}{\alpha^2}\int_{|x-y_n|}\mathrm{d}\rho_{Y_{n-1}}\leq \frac{2(\|\varphi^\mathrm{Pek}\|^2+\sigma+\kappa+\epsilon)}{\alpha^2}. 
    \end{align*}
    Regarding the proof of Eq.~(\ref{Eq:G_estimate_easy}), we first note that Eq.~(\ref{Eq:Reasonable_Mass_Conditions}) holds as well in case 
    \begin{align*}
    G(\rho_{Y_{n}})-G(\rho_{Y_{n-1}})\neq 0,    
    \end{align*}
    and therefore we have for all such $Y_n$ that 
    \begin{align}
    \label{Eq:Sharp_Median_Bound}
        \left|m_{q_*}\! \left(\rho_{Y_{n}}\right)-m_{q_*}\! \left(\rho_{Y_{n-1}}\right)\right|\leq \frac{C}{\alpha^2 q_*}
    \end{align}
   by \cite[Lemma 3.10]{BS1} for a suitable constant $C>0$ and $q_*\leq q$, and furthermore we have by \cite[Lemma 3.9]{BS1} that the quantiles in the definition of $H^{\chi}_{q,\rho_{Y_{n-1}}}$ in Eq.~(\ref{Eq:H_Function}) satisfy 
    \begin{align}
        \label{Eq:Quantile_dif_est.}
        \left|x^{+}_{j,q}(\rho_{Y_{n-1}})-x^{-}_{j,q}(\rho_{Y_{n-1}})\right|\leq 2R_*.
    \end{align}
Especially we see that $(\rho_{Y_{n-1}},\rho_{Y_{n-1}},\rho_{Y_{n}})$ is an admissible triple in the sense of Lemma \ref{Lem:Eq:Total_Variation_Upper_Bound}, and therefore we obtain for a suitable constant $C>0$
    \begin{align}
        \label{Eq:M_q_diff_Formula}
        \left|m_q\! \left(\chi* \rho_{Y_{n}}\right) \! - \! m_q\! \left(\chi* \rho_{Y_{n-1}}\right)\right|\leq \|g * (\rho_{Y_{n}} \! - \! \rho_{Y_{n-1}})\|_{\mathrm{TV}}^2  \! +  \! \left|\int H^{\chi}_{q,\rho_{Y_{n-1}}}\mathrm{d}(\rho_{Y_{n}} \! - \! \rho_{Y_{n-1}})\right|\leq \frac{C}{\alpha^2},
    \end{align}
    where we have used $\|g * (\rho_{Y_{n}}-\rho_{Y_{n-1}})\|_{\mathrm{TV}}\leq \|\rho_{Y_{n}}-\rho_{Y_{n-1}}\|_{\mathrm{TV}}=\frac{1}{\alpha^2}$ and $\|H^{\chi}_{q,\rho_{Y_{n-1}}}\|_\infty \lesssim 1$, which is a consequence of Eq.~(\ref{Eq:Quantile_dif_est.}). We compute
    \begin{align}
    \label{Eq:G_diff_Formula}
     & \     G(\rho_{Y_{n}})-G(\rho_{Y_{n-1}})=\left[F(\rho_{Y_{n}})-F(\rho_{Y_{n-1}})\right]\! \mathcal{B}(\rho_{Y_n})+F(\rho_{Y_{n-1}})\! \left[m_q\! \left(\chi* \rho_{Y_{n}}\right)-m_q\! \left(\chi* \rho_{Y_{n-1}}\right)\right]\\
     \nonumber 
     & \ \  \ \  \ \  \ \  \ \  \ \  \ \  \ \  +   \alpha^{-2}F(\rho_{Y_{n-1}})f\! \left(y_n \! - \! m_q\! \left(\chi* \rho_{Y_{n}}\right)\right) \\
     & \ \  \ \  \ \  \ \  \ \  \ \  \ \  \ \   + \! F(\rho_{Y_{n-1}})\int \left[f(y \! - \! m_q\! \left(\chi* \rho_{Y_{n}}\right)) \! - \! f(y \! - \! m_q\! \left(\chi* \rho_{Y_{n-1}}\right))\right]\mathrm{d}\rho_{Y_{n-1}}.
    \end{align}
Using Eq.~(\ref{Eq:F_estimate_easy}) and Eq.~(\ref{Eq:M_q_diff_Formula}), as well as
\begin{align*}
 & \ \ \ \left| \mathcal{B}(\rho_{Y_n})\right| \leq \left|m_q\! \left(\chi* \rho_{Y_{n}}\right)\right|+\|f\|_\infty \|\rho_{Y_{n}}\|_{\mathrm{TV}}  ,\\
 & \|\rho_{Y_{n}}\|_{\mathrm{TV}}=\|\rho_{Y_{n-1}}\|_{\mathrm{TV}}+\alpha^{-2}\leq \|\varphi^\mathrm{Pek}\|^2+\sigma+\kappa+\epsilon
\end{align*}
 and
\begin{align*}
&   \int \left[f(y \! - \! m_q\! \left(\chi* \rho_{Y_{n}}\right)) \! - \! f(y \! - \! m_q\! \left(\chi* \rho_{Y_{n-1}}\right))\right]\mathrm{d}\rho_{Y_{n-1}}\\
  & \ \ \leq \|\nabla f\|_\infty \|\rho_{Y_{n-1}}\|_{\mathrm{TV}}\left|m_q\! \left(\chi* \rho_{Y_{n}}\right) \! - \! m_q\! \left(\chi* \rho_{Y_{n-1}}\right)\right|,  
\end{align*}
concludes the proof of Eq.~(\ref{Eq:G_estimate_easy}). Eq.~(\ref{Eq:G_B_estimate_easy}) can be verified analogously. Regarding the estimate on $\varphi_\eta$, let us use the fact that $\int_{\mathbb R^3}\tau_\eta(y)\mathrm{d}y=1$ in order to compute
\begin{align*}
    \varphi_\eta(Y_n) & =\frac{1}{2} \int_{\mathbb{R}^3}\left[\tau_{\eta}\! \left(m_\eta(\rho_{Y_{n-1}})+x\right)-\tau_{\eta}\! \left(m_\eta(\rho_{Y_{n}})+x\right)\right]^2\mathrm{d}x\\
    & \leq \frac{\|\nabla \tau_\eta\|^2}{2}\left|m_{\eta}\! \left(\rho_{Y_{n}}\right)-m_{\eta}\! \left(\rho_{Y_{n-1}}\right)\right|^2\leq C\eta^{-4}\alpha^{-4},
\end{align*}
for a suitable $C>0$, where we have used Eq.~(\ref{Eq:Sharp_Median_Bound}) and the fact that $\|\nabla \tau_\eta\|=\eta^{-1}\|\nabla \tau\|$.
\end{proof}

With Lemma \ref{Lema:F_eta_estimate} at hand we are going to verify that the expressions $\mathcal{E}_1,\dots ,\mathcal{E}_4$ can be considered as being small in magnitude. It will be the content of the following Lemma \ref{Lem:E_1} to establish estimates on 
\begin{align*}
    \mathcal{E}_1 =\frac{1}{i} \Big\langle \Psi_{\alpha},\xi_p(Y_n)L(\nabla\cdot w^\Lambda)\Psi_{\alpha}\Big\rangle_{L^2\! \left(\underset{n\in \mathbb N}{\bigcup}\mathbb R^{3n}\right)},
\end{align*}
introduced in Eq.~(\ref{Eq:Def_E_1}). 

\begin{lem}
\label{Lem:E_1}
   Let $f:\mathbb R^3\longrightarrow \mathbb R^3$ be as in the definition of $\mathcal{B}$ in Eq.~(\ref{Eq:Boost_Operator}). Then there exists a constant $C>0$ such that for $\Lambda>0$
    \begin{align*}
       |\mathcal{E}_1|\leq C\left(1+\|f\|_\infty \! + \!  \|\nabla f\|_\infty\right)^2\frac{ |p|^2}{\sqrt{\Lambda} \alpha^4}.
    \end{align*}
\end{lem}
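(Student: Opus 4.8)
The plan is to combine the Taylor expansion of $\xi_p$ with the increment bound from Lemma~\ref{Lema:F_eta_estimate} and dispose of the ultraviolet singularity by a commutator identity. Writing $\xi_p(Y_n)=e^{it}-1-it$ with $t:=p\cdot[G(\rho_{Y_{n-1}})-G(\rho_{Y_n})]\in\mathbb R$, the elementary inequalities $|e^{it}-1-it|\le\tfrac12 t^2$ and $|e^{it}-1|\le|t|$ together with Eq.~(\ref{Eq:G_estimate_easy}) give, with $M:=1+\|f\|_\infty+\|\nabla f\|_\infty$ and $\mu_n:=\min\{|m_q(\chi*\rho_{Y_{n-1}})|,|m_q(\chi*\rho_{Y_{n}})|\}$,
\begin{align*}
 |\xi_p(Y_n)|\le\frac{CM^2|p|^2}{\alpha^4}(1+\mu_n)^2\,\chi\!\Big(\int\mathrm{d}\rho_{Y_{n-1}}\ge d\Big),\qquad |e^{it}-1|\le\frac{CM|p|}{\alpha^2}(1+\mu_n)\,\chi\!\Big(\int\mathrm{d}\rho_{Y_{n-1}}\ge d\Big).
\end{align*}
The obstruction to estimating $\mathcal{E}_1=\tfrac1i\langle\Psi_\alpha,\xi_p(Y_n)L(\nabla\cdot w^\Lambda)\Psi_\alpha\rangle$ directly is that $\nabla\cdot w^\Lambda\notin L^2(\mathbb R^3)$, while $\|w^\Lambda\|\le C\Lambda^{-1/2}$ by Eq.~(\ref{Eq:In_Lemma_v_w_Lambda}); so I would transfer the divergence by the commutator identity used in Eq.~(\ref{Eq:a_Star_Decomposition}), namely $[\mathcal P_k,L(g)]=\tfrac1i L(\partial_k g)$ on $L^2\big(\underset{n\in\mathbb N}{\bigcup}\mathbb R^{3n}\big)$, which gives $L(\nabla\cdot w^\Lambda)=i\sum_{k=1}^3[\mathcal P_k,L(w^\Lambda_k)]$ and hence
\begin{align*}
 \mathcal{E}_1=\sum_{k=1}^3\Big\langle\Psi_\alpha,\xi_p(Y_n)\,\mathcal P_k L(w^\Lambda_k)\Psi_\alpha\Big\rangle-\sum_{k=1}^3\Big\langle\Psi_\alpha,\xi_p(Y_n)\,L(w^\Lambda_k)\,\mathcal P_k\Psi_\alpha\Big\rangle.
\end{align*}

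In the first sum I would move $\mathcal P_k$ to the left factor and use the Leibniz rule $\mathcal P_k(\overline{\xi_p}\,\Psi_\alpha)=(\mathcal P_k\overline{\xi_p})\Psi_\alpha+\overline{\xi_p}\,\mathcal P_k\Psi_\alpha$; together with the second sum this exhibits $\mathcal{E}_1$ as a finite sum of terms of the three shapes $\langle\Psi_\alpha,\xi_p L(w^\Lambda_k)\mathcal P_k\Psi_\alpha\rangle$, $\langle\mathcal P_k\Psi_\alpha,\xi_p L(w^\Lambda_k)\Psi_\alpha\rangle$ and $\langle(\mathcal P_k\overline{\xi_p})\Psi_\alpha,L(w^\Lambda_k)\Psi_\alpha\rangle$. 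The reason for routing the divergence through the commutator is that the single-particle derivative it carries is absorbed into the \emph{total} momentum $\mathcal P_k$, whose moments in the ground state are controlled by Lemma~\ref{Lem:Moments_of_particle_number} (via $\langle\Psi_\alpha,\mathcal P^2\Psi_\alpha\rangle\lesssim1$ and $\langle\Psi_\alpha,\mathcal N\mathcal P^2\Psi_\alpha\rangle\lesssim1$), whereas a plain integration by parts in $y_n$ would produce the per-phonon kinetic energy $\sum_j|\tfrac1i\nabla_{y_j}\Psi_\alpha|^2$, which the Fr\"ohlich Hamiltonian does not bound.

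For the third shape I would use a strong translation covariance of $G$: the functional $F$ of Eq.~(\ref{Eq:Def_F_cut_off_functional}) depends only on the translation invariant quantities $\int\mathrm{d}\rho$ and $\iint_{|x-y|>R_*}\mathrm{d}\rho\,\mathrm{d}\rho$, and $\mathcal B$ of Eq.~(\ref{Eq:Boost_Operator}) satisfies $\mathcal B(\rho_z)=\mathcal B(\rho)+z$, so that $G(\rho_z)=G(\rho)+F(\rho)z$. Hence the generator $\sum_{j}\partial_{y_{j,k}}$ of a simultaneous translation of all phonon coordinates, applied to $G(\rho_{Y_m})$, equals $F(\rho_{Y_m})$ times the $k$-th unit vector, so applied to $\xi_p(Y_n)$ it only feels the difference $F(\rho_{Y_{n-1}})-F(\rho_{Y_n})$; with Eq.~(\ref{Eq:F_estimate_easy}) this yields
\begin{align*}
 \big|\mathcal P_k\overline{\xi_p}(Y_n)\big|\le|p|\,\big|G(\rho_{Y_n})-G(\rho_{Y_{n-1}})\big|\,|p_k|\,\big|F(\rho_{Y_{n-1}})-F(\rho_{Y_n})\big|\le\frac{CM|p|^2}{\alpha^4}(1+\mu_n)\,\chi\!\Big(\int\mathrm{d}\rho_{Y_{n-1}}\ge d\Big).
\end{align*}

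Finally I would estimate all three shapes by Cauchy--Schwarz, using: $\|w^\Lambda_k\|\le\|w^\Lambda\|\le C\Lambda^{-1/2}$; the identity $\|L(w^\Lambda_k)\Phi\|^2=\|w^\Lambda_k\|^2\langle\Phi,(\mathcal N+\alpha^{-2})\Phi\rangle$ (obtained by integrating out $y_n$ and relabelling the particle index), applied with $\Phi=\Psi_\alpha$ and with $\Phi=\mathcal P_k\Psi_\alpha$, the latter giving $\|L(w^\Lambda_k)\mathcal P_k\Psi_\alpha\|^2\le\|w^\Lambda\|^2\langle\Psi_\alpha,(\mathcal N+1)\mathcal P^2\Psi_\alpha\rangle\lesssim\Lambda^{-1}$; $\|\mathcal P_k\Psi_\alpha\|^2\le\langle\Psi_\alpha,\mathcal P^2\Psi_\alpha\rangle\lesssim1$; and, to absorb the weights $(1+\mu_n)^2$ and $(1+\mu_n)^4$ that come from $|\mathcal P_k\overline{\xi_p}|$ and $|\xi_p|^2$, the bounds $\mu_n\le|m_q(\chi*\rho_{Y_m})|$ for whichever of $m\in\{n-1,n\}$ is convenient, together with the monotonicity $\chi(\int\mathrm{d}\rho_{Y_{n-1}}\ge d)\le\chi(\int\mathrm{d}\rho_{Y_n}\ge d)$ — which in each term lets the median moments be taken in the state $\Psi_\alpha$ itself, where $\langle\Psi_\alpha,|m_q(\chi*\rho_{Y_n})|^m\chi(\int\mathrm{d}\rho_{Y_n}\ge d)\Psi_\alpha\rangle\lesssim1$ by Theorem~\ref{Th:Powers_of_Median} — and products of the form $(\mathcal N+1)(1+|m_q|^{2m})\chi(\int\mathrm{d}\rho\ge d)$ estimated via $AB\le\tfrac12(A^2+B^2)$ and the $\mathcal N$-moments of Lemma~\ref{Lem:Moments_of_particle_number}. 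Multiplying the factors each term is $\le CM^2\Lambda^{-1/2}\alpha^{-4}|p|^2$, giving the claim. The main difficulty is exactly this bookkeeping: the ultraviolet regularisation forces a derivative into the estimate, and it must be kept away from any factor that also carries the unbounded weight $1+\mu_n$ — which the commutator identity achieves for $\Psi_\alpha$, and the covariance $G(\rho_z)=G(\rho)+F(\rho)z$ achieves for $\xi_p$.
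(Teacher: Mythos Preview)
Your proposal is correct and follows essentially the same route as the paper: both rewrite $\mathcal{E}_1$ via the commutator identity $L(\nabla\!\cdot w^\Lambda)=i[\mathcal P,L(w^\Lambda)]$, split into the same three pieces, compute $[\xi_p,\mathcal P]$ from the covariance $G(\rho_z)=G(\rho)+F(\rho)z$ to extract the factor $F(\rho_{Y_{n-1}})-F(\rho_{Y_n})$, and close with Cauchy--Schwarz using Theorem~\ref{Th:Powers_of_Median} and Lemma~\ref{Lem:Moments_of_particle_number}. The only inessential differences are that for the term $\langle\Psi_\alpha,\xi_pL(w^\Lambda)\mathcal P\Psi_\alpha\rangle$ the paper passes to the adjoint $L(w^\Lambda)^*$ rather than grouping $L(w^\Lambda)$ with $\mathcal P\Psi_\alpha$, and that the paper bounds the particle number via the explicit support constraint $n\le D_\alpha$ rather than via your $AB\le\tfrac12(A^2+B^2)$ with $\mathcal N$-moments.
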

\begin{proof}
   We start with the simple identity
    \begin{align}
    \nonumber
      \mathcal{E}_1 & =\frac{1}{i} \Big\langle \Psi_{\alpha},\xi_p(Y_n)L(\nabla\cdot w^\Lambda)\Psi_{\alpha}\Big\rangle_{L^2\! \left(\underset{n\in \mathbb N}{\bigcup}\mathbb R^{3n}\right)}=\Big\langle \Psi_{\alpha},\xi_p(Y_n)\big[\mathcal{P},L(w^\Lambda)\big]\Psi_{\alpha}\Big\rangle_{L^2\! \left(\underset{n\in \mathbb N}{\bigcup}\mathbb R^{3n}\right)}\\
       \label{Eq:Split_T_2}
       &=\left\langle \mathcal P\Psi_\alpha, \xi_p L(w^\Lambda)\Psi_\alpha\right\rangle_{L^2\! \left(\underset{n\in \mathbb N}{\bigcup}  \mathbb{R}^{3 n}\right)}-\left\langle \Psi_\alpha, \xi_p  L(w^\Lambda) \mathcal P\Psi_\alpha\right\rangle_{L^2\! \left(\underset{n\in \mathbb N}{\bigcup}  \mathbb{R}^{3 n}\right)}\\
       \nonumber
       & \ \ \ \ \ \ \ \  \ \ \ \ \ +\left\langle \Psi_\alpha, [\xi_p, \mathcal P] L(w^\Lambda)\Psi_\alpha\right\rangle_{L^2\! \left(\underset{n\in \mathbb N}{\bigcup}  \mathbb{R}^{3 n}\right)}.
    \end{align}
    Regarding the first term on the right hand side of Eq.~(\ref{Eq:Split_T_2}) we use that $\|\mathcal P \Psi_\alpha\|\lesssim 1$, see Lemma \ref{Lem:Moments_of_particle_number}, and estimate
    \begin{align*}
       & \left|\left\langle \mathcal P\Psi_\alpha, \xi_p L(w^\Lambda)\Psi_\alpha\right\rangle_{L^2\! \left(\underset{n\in \mathbb N}{\bigcup}  \mathbb{R}^{3 n}\right)}\right|^2 \leq \|\mathcal P \Psi_\alpha\|^2 \, \|\xi_p L(w^\Lambda)\Psi_\alpha\|^2\lesssim \|\xi_p L(w^\Lambda)\Psi_\alpha\|^2\\
       & \ \ =\sum_{n=1}^{D_\alpha}\frac{n}{\alpha^2}\int_{\mathbb{R}^{3n}}\left|w^\Lambda(y_{n})\xi_p(Y_{n})\Psi_\alpha(Y_{n-1})\right|^2 \mathrm{d}Y_{n},
    \end{align*}
    where we only consider $n\leq D_\alpha:=(\|\varphi^\mathrm{Pek}\|^2+\sigma+\kappa)\alpha^2 +1$, since
    \begin{align*}
      \mathrm{supp}\! \left(\xi_p\right)\cap \mathbb{R}^{3n}\subseteq   \Omega^{(n)}_{\sigma+\kappa}\cup\Omega^{(n-1)}_{\sigma+\kappa}\times \mathbb R^3 =\emptyset  
    \end{align*}
 in case $n>D_\alpha$. By Eq.~(\ref{Eq:G_estimate_easy}) we furthermore have 
    \begin{align}
    \nonumber
       & \ \ \  \  \left|\xi_p(Y_{n})\right| \! = \! \Big|p \! \cdot \! \left[G(\rho_{Y_{n}})-G(\rho_{Y_{n-1}})\right]\Big|^2 \! \left|\int_0^1 \!  \int_0^t \!  e^{is p\cdot \left[G(\rho_{Y_{n}})-G(\rho_{Y_{n-1}})\right]}\mathrm{d}s \mathrm{d}t\right| \\
         \label{Eq:xi_Control}
       &\lesssim  \left(\|f\|_\infty \! + \!  \|\nabla f\|_\infty\right)^2 \frac{|p|^2}{\alpha^4} \Big(1 \! + \! \min\{|m_q(\chi*\rho_{Y_{n-1}})|,|m_q(\chi*\rho_{Y_{n}})|\}\Big)^2 \chi\! \left(\int \mathrm{d}\rho_{Y_{n-1}}\geq d\right) ,
    \end{align}  
    for any $Y_{n}$ satisfying $y_i\neq y_j$. Consequently
    \begin{align*}
      & \ \ \ \ \ \ \ \ \ \ \ \ \   \frac{1}{\left(\|f\|_\infty \! + \!  \|\nabla f\|_\infty\right)^2}\left|\left\langle \mathcal P\Psi_\alpha, \xi_p L(w^\Lambda)\Psi_\alpha\right\rangle_{L^2\! \left(\underset{n\in \mathbb N}{\bigcup}  \mathbb{R}^{3 n}\right)}\right| \\
      & \lesssim \left(\sum_{n=1}^\infty \int_{\mathbb{R}^{3n}}|w^\Lambda(y_{n})|^2 \Big(1+|m_q(\chi*\rho_{Y_{n-1}})|\Big)^4 \chi\! \left(\int \mathrm{d}\rho_{Y_{n-1}}\geq d\right)\frac{|p|^4}{\alpha^8}\left|\Psi_\alpha(Y_{n-1})\right|^2 \mathrm{d}Y_{n}\right)^{\frac{1}{2}}\\
      & \leq \frac{|p|^2}{\alpha^4}\|w^\Lambda\| \left\langle \Psi_\alpha, \left(1+|m_q(\chi*\rho_{Y_{n}})|\right)^4\chi\! \left(\int \mathrm{d}\rho_{Y_{n}}\geq d\right)\Psi_\alpha \right\rangle_{L^2\! \left(\underset{n\in \mathbb N}{\bigcup}  \mathbb{R}^{3 n}\right)}^{\frac{1}{2}}\lesssim \frac{|p|^2}{\sqrt{\Lambda}\alpha^4 },
    \end{align*}
    where we have used that $\|w^\Lambda\|\lesssim \frac{1}{\sqrt{\Lambda}}$, see Lemma \ref{Lem:Semiclassical_objects_properties}. In order to estimate the second term on the right hand side of Eq.~(\ref{Eq:Split_T_2}) we use again $\|\mathcal P \Psi_\alpha\|\lesssim 1$ and compute
     \begin{align*}
        & \ \ \ \ \  \left|\left\langle \Psi_\alpha, \xi_p  L(w^\Lambda) \mathcal P\Psi_\alpha\right\rangle_{L^2\! \left(\underset{n\in \mathbb N}{\bigcup}  \mathbb{R}^{3 n}\right)}\right|^2 \lesssim \|L(w^\Lambda)^* \xi_p^* \Psi_\alpha\|^2\\
         & =\sum_{n=1}^{D_\alpha} \frac{n}{\alpha}\int_{\mathbb{R}^{3(n-1)}}\left|\int_{\mathbb{R}^3}w^\Lambda(y_{n})\xi^*_p(Y_{n})\Psi_\alpha(Y_{n})\mathrm{d}y_{n}\right|^2\mathrm{d}Y_{n-1} \\
         & \lesssim \|w^\Lambda\|^2 \int_{\mathbb{R}^{3(n-1)}}\int_{\mathbb{R}^3}\left|\xi^*_p(Y_{n})\Psi_\alpha(Y_{n})\right|^2\mathrm{d}y_{n}\mathrm{d}Y_{n-1}\\
         & \lesssim \left(\|f\|_\infty \! + \!  \|\nabla f\|_\infty\right)^2 \frac{|p|^4}{\Lambda \alpha^8}\left\langle \Psi_\alpha, \left(1+|m_q(\chi*\rho_{Y_{n}})|\right)^4\chi\! \left(\int \mathrm{d}\rho_{Y_{n}}\geq d\right)\Psi_\alpha \right\rangle_{L^2\! \left(\underset{n\in \mathbb N}{\bigcup}  \mathbb{R}^{3 n}\right)}\\
         & \lesssim \left(\|f\|_\infty \! + \!  \|\nabla f\|_\infty\right)^2 \frac{|p|^4}{\Lambda \alpha^8},
     \end{align*}
     where we have used Eq.~(\ref{Eq:xi_Control}) and Theorem \ref{Th:Powers_of_Median} again in the last estimate. Regarding the third term on the right hand side of Eq.~(\ref{Eq:Split_T_2}), we proceed in a similar fashion as we did for the first term and estimate
     \begin{align*}
         \left|\left\langle \Psi_\alpha, [\xi_p, \mathcal P] L(w^\Lambda)\Psi_\alpha\right\rangle_{L^2\! \left(\underset{n\in \mathbb N}{\bigcup}  \mathbb{R}^{3 n}\right)}\right|\lesssim \|w^\Lambda\| \left\|[\xi_p, \mathcal P]^* \Psi_\alpha\right\|\lesssim \frac{1}{\sqrt{\Lambda}}\left\|[\xi_p, \mathcal P]^*\Psi_\alpha\right\|.
     \end{align*}
     An explicit computation reveals that $[\xi_p, \mathcal P]$ is a multiplication operator given by
     \begin{align*}
      & \ \   \ \   \ \   \ \   [\xi_p, \mathcal P](Y_n)=i\sum_{j=1}^n\partial_{y_j}\xi_p(Y_n)\\
      &=-ip\cdot \left[G(\rho_{Y_{n-1}})-G(\rho_{Y_{n}})\right] \int_0^1 \!  e^{it p\cdot \left[G(\rho_{Y_{n-1}})-G(\rho_{Y_{n}})\right]}\mathrm{d}t \left(F(\rho_{Y_{n-1}})-F(\rho_{Y_{n}})\right)p.
     \end{align*}
     Using Eq.~(\ref{Eq:xi_Control}) and $\left|F(\rho_{Y_{n-1}})-F(\rho_{Y_{n}})\right|\lesssim \frac{1}{\alpha^2}$, see Eq.~(\ref{Eq:F_estimate_easy}), and Theorem \ref{Th:Powers_of_Median} yields
     \begin{align*}
         \left\|[\xi_p, \mathcal P]^* \Psi_\alpha\right\| & \lesssim \left(\|f\|_\infty \! + \!  \|\nabla f\|_\infty\right)\frac{|p|^2}{\alpha^4} \left\|\left(1+|m_q(\chi*\rho_{Y_{n}})|\right)\chi\! \left(\int \mathrm{d}\rho_{Y_{n}}\geq d\right)\Psi_\alpha\right\|\\
         & \lesssim \left(\|f\|_\infty \! + \!  \|\nabla f\|_\infty\right)\frac{|p|^2}{\alpha^4}.
     \end{align*}
\end{proof}

In the subsequent Lemma \ref{Lem:E_2,3} we will analyse the residual terms
\begin{align*}
    \mathcal{E}_2 & = \Big\langle \!  \Psi_{\alpha},\left(\xi_p(Y_n)+\frac{1}{2}\left[p \! \cdot \! G(\rho_{Y_{n}})-p \! \cdot \! G(\rho_{Y_{n-1}})\right]^2\right)L(v^\Lambda)\Psi_{\alpha} \! \Big\rangle_{L^2\! \left(\underset{n\in \mathbb N}{\bigcup}\mathbb R^{3n}\right)} ,\\
     \mathcal{E}_3 &  =-\frac{1}{2}\Big\langle \!  \Psi_{\alpha}, \! \varphi_\eta(Y_n)\left[p \! \cdot \! G(\rho_{Y_{n}}) \! - \! p \! \cdot \! G(\rho_{Y_{n-1}})\right]^2  \! L(v^\Lambda)\Psi_{\alpha} \! \Big\rangle_{L^2\! \left(\underset{n\in \mathbb N}{\bigcup}\mathbb R^{3n}\right)}   
\end{align*}
introduced in Eq.~(\ref{Eq:Def_E_2}) and Eq.~(\ref{Eq:Def_E_3}) respectively.
\begin{lem}
\label{Lem:E_2,3}
     Let $f:\mathbb R^3\longrightarrow \mathbb R^3$ be as in the definition of $\mathcal{B}$ in Eq.~(\ref{Eq:Boost_Operator}) and $\eta=\alpha^{-\beta}$ as in Eq.~(\ref{Eq:Definition_eta_from_beta}). Then there exists a constant $C>0$ such that for $\Lambda>0$ and $\eta\leq q$ 
    \begin{align*}
       |\mathcal{E}_2|\leq \frac{C\left(\|f\|_\infty \! + \!  \|\nabla f\|_\infty\right)^3 \sqrt{\Lambda} |p|^3}{ \alpha^6}, \ \ \ \ \   |\mathcal{E}_3|\leq \frac{C\left(\|f\|_\infty \! + \!  \|\nabla f\|_\infty\right)^2  \sqrt{\Lambda} |p|^2}{\eta^4 \alpha^8}.
    \end{align*}
\end{lem}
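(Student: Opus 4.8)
\textbf{Proof strategy for Lemma \ref{Lem:E_2,3}.} The plan is to treat $\mathcal{E}_2$ and $\mathcal{E}_3$ by the same scheme. Each is written as an absolutely convergent series over particle number, of the form $\sum_{n}\frac{\sqrt n}{\alpha}\int \mathcal{W}_p(Y_n)\,\overline{\Psi_\alpha(Y_n)}\,\Psi_\alpha(Y_{n-1})\,v^\Lambda(y_n)\,\mathrm dY_n$ for a scalar weight $\mathcal{W}_p$. First I would bound $\mathcal{W}_p$ pointwise with the help of Lemma \ref{Lema:F_eta_estimate}, then close the estimate by Cauchy--Schwarz, producing a factor $\|v^\Lambda\|\lesssim\Lambda^{1/2}$ (Lemma \ref{Lem:Semiclassical_objects_properties}) together with a bounded expectation controlled by Lemma \ref{Lem:Moments_of_particle_number} and Theorem \ref{Th:Powers_of_Median}. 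For $\mathcal{E}_2$ the weight is the third-order Taylor remainder: with $\theta_n:=p\cdot[G(\rho_{Y_{n-1}})-G(\rho_{Y_n})]$ it equals $e^{i\theta_n}-1-i\theta_n+\tfrac12\theta_n^2$, whose modulus is at most $\tfrac16|\theta_n|^3\le\tfrac16|p|^3\,|G(\rho_{Y_n})-G(\rho_{Y_{n-1}})|^3$; inserting Eq.~(\ref{Eq:G_estimate_easy}) and using $\min\{\,\cdot\,,\,\cdot\,\}\le|m_q(\chi*\rho_{Y_{n-1}})|$ yields the majorant $C(\|f\|_\infty+\|\nabla f\|_\infty)^3|p|^3\alpha^{-6}\,(1+|m_q(\chi*\rho_{Y_{n-1}})|)^3\,\chi(\int\mathrm d\rho_{Y_{n-1}}\ge d)$. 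For $\mathcal{E}_3$ I would bound $[p\cdot(G(\rho_{Y_n})-G(\rho_{Y_{n-1}}))]^2\le|p|^2|G(\rho_{Y_n})-G(\rho_{Y_{n-1}})|^2$ again by Eq.~(\ref{Eq:G_estimate_easy}), and combine it with the estimate $0\le\varphi_\eta(Y_n)\le C\eta^{-4}\alpha^{-4}$ from the last assertion of Lemma \ref{Lema:F_eta_estimate}; this is legitimate since a nonvanishing integrand forces $G(\rho_{Y_n})\neq G(\rho_{Y_{n-1}})$, which is one of the alternatives in the support hypothesis of that bound. The resulting majorant for $\mathcal{E}_3$ is $C(\|f\|_\infty+\|\nabla f\|_\infty)^2|p|^2\eta^{-4}\alpha^{-8}\,(1+|m_q(\chi*\rho_{Y_{n-1}})|)^2\,\chi(\int\mathrm d\rho_{Y_{n-1}}\ge d)$.

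In both cases the majorant is of the form $\mathrm{const}\cdot W(Y_{n-1})$ with $W\ge0$ a function of $Y_{n-1}$ only, so it remains to estimate $\sum_n\frac{\sqrt n}{\alpha}\int W(Y_{n-1})\,|\Psi_\alpha(Y_n)|\,|\Psi_\alpha(Y_{n-1})|\,|v^\Lambda(y_n)|\,\mathrm dY_n$. I would apply Cauchy--Schwarz in $(n,Y_n)$, splitting off $|\Psi_\alpha(Y_n)|$ (contributing $\|\Psi_\alpha\|=1$) from $\tfrac{\sqrt n}{\alpha}W(Y_{n-1})\,|\Psi_\alpha(Y_{n-1})|\,|v^\Lambda(y_n)|$; integrating out $y_n$ converts $|v^\Lambda(y_n)|^2$ into $\|v^\Lambda\|^2$, and relabelling $n-1\mapsto n$ leaves $\|v^\Lambda\|\,\langle\Psi_\alpha,(\mathcal N+\alpha^{-2})W(Y_n)^2\Psi_\alpha\rangle^{1/2}$. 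A further Cauchy--Schwarz in this inner product, using $(\mathcal N+\alpha^{-2})W^2\le(\mathcal N+1)W^2$, separates $\langle\Psi_\alpha,(\mathcal N+1)^2\Psi_\alpha\rangle^{1/2}\lesssim1$ (Lemma \ref{Lem:Moments_of_particle_number}) from $\langle\Psi_\alpha,W^4\Psi_\alpha\rangle^{1/2}\lesssim1$; expanding the relevant power of $(1+|m_q(\chi*\rho_{Y_n})|)$ on $\{\int\mathrm d\rho_{Y_n}\ge d\}$ into monomials, the latter is $\lesssim1$ by Theorem \ref{Th:Powers_of_Median}. Hence the whole sum is $\lesssim\|v^\Lambda\|\lesssim\Lambda^{1/2}$, and multiplying by the prefactors of the two majorants gives exactly the claimed bounds. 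The hypothesis $\eta\le q$ enters only to make the $\varphi_\eta$-estimate of Lemma \ref{Lema:F_eta_estimate} applicable.

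The computation is largely bookkeeping; the one structurally essential point is that the weight $(1+|m_q(\chi*\rho_{Y_n})|)$ is \emph{not} uniformly bounded, so it cannot be extracted in $L^\infty$ and must instead be carried inside an $L^2$-expectation and absorbed by Theorem \ref{Th:Powers_of_Median} --- which itself relies on the spatial-concentration estimates of Section \ref{Sec:Spatial_concentration_of_Probability}. Beyond that, the only points requiring a word of care are the almost-everywhere validity (with respect to Lebesgue measure on $\mathbb R^{3n}$) of the $y_i\neq y_j$ hypotheses in Lemma \ref{Lema:F_eta_estimate}, and verifying that the support condition in the $\varphi_\eta$-bound is indeed met wherever the $\mathcal{E}_3$-integrand is nonzero.
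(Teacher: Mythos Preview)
Your proposal is correct and follows essentially the same approach as the paper: pointwise control of the weight via Lemma~\ref{Lema:F_eta_estimate} (third-order Taylor remainder for $\mathcal{E}_2$, the $\varphi_\eta$-bound for $\mathcal{E}_3$), followed by Cauchy--Schwarz producing the factor $\|v^\Lambda\|\lesssim\Lambda^{1/2}$ and a moment controlled by Theorem~\ref{Th:Powers_of_Median}. The only cosmetic difference is that the paper uses the support observation $\mathrm{supp}(\xi_p+\tfrac12[\cdot]^2)\cap\mathbb R^{3n}\subseteq\Omega^{(n)}_{\sigma+\kappa}\cup(\Omega^{(n-1)}_{\sigma+\kappa}\times\mathbb R^3)$, which is empty for $n>(\|\varphi^{\mathrm{Pek}}\|^2+\sigma+\kappa)\alpha^2$, to bound $n/\alpha^2$ by a constant directly and thereby avoid your second Cauchy--Schwarz and the appeal to Lemma~\ref{Lem:Moments_of_particle_number}; it also places the median weight on the $\Psi_\alpha(Y_n)$ side rather than $\Psi_\alpha(Y_{n-1})$, using the other argument of the $\min$ in Eq.~(\ref{Eq:G_estimate_easy}). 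Neither variation changes the substance of the argument.
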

\begin{proof}
    Regarding the term $\mathcal{E}_2$, note that we have
    \begin{align*}
        \mathrm{supp}\! \left(\xi_p(Y_n)+\frac{1}{2}\left[p \! \cdot \! G(\rho_{Y_{n}})-p \! \cdot \! G(\rho_{Y_{n-1}})\right]^2\right)\cap \mathbb R^{3n}\subseteq  \Omega^{(n)}_{\sigma+\kappa}\cup\Omega^{(n-1)}_{\sigma+\kappa}\times \mathbb R^3 =:A^n ,
    \end{align*}
    and therefore
    \begin{align}
    \label{Eq:T_3_Factorization}
        |\mathcal{E}_2|\leq 2\|L(v^\Lambda)\mathds{1}_{A}\Psi_\alpha\|\, \| \! \left(\xi_p+\frac{1}{2}\left[p \! \cdot \! G(\rho_{Y_{n}})-p \! \cdot \! G(\rho_{Y_{n-1}})\right]^2\right) \! \Psi_\alpha\|.
    \end{align}
    Using the fact that $A^n=\emptyset$ in case $n>D_\alpha:=(\|\varphi^\mathrm{Pek}\|^2+\sigma+\kappa)\alpha^2 +1$, we can easily estimate the first factor on the right hand side of Eq.~(\ref{Eq:T_3_Factorization}) by
    \begin{align*}
        \|L(v^\Lambda)\mathds{1}_{A}\Psi_\alpha\|^2 \! \leq \!   \sum_{n=1}^{D_\alpha} \!  \! \frac{n}{\alpha^2}\int_{\mathbb R^{3n}} \! \! \!   \! \left|v^\Lambda(y_{n})\Psi_\alpha(Y_{n-1})\right|^2 \mathrm{d}Y_{n} \! \lesssim  \|v^\Lambda\|^2 \! \lesssim  \! \Lambda,
    \end{align*}
    where we have used that $\|v^\Lambda\|^2\lesssim \Lambda$, see Lemma \ref{Lem:Semiclassical_objects_properties}. In order to estimate the second factor on the right hand side of Eq.~(\ref{Eq:T_3_Factorization}), note that we have by Eq.~(\ref{Eq:G_estimate_easy})
\begin{align*}
  & \ \ \ \ \   \ \    \left|\xi_p(Y_n)+\frac{1}{2}\left[p \! \cdot \! G(\rho_{Y_{n}})-p \! \cdot \! G(\rho_{Y_{n-1}})\right]^2\right|\\
    & =\big|p \! \cdot \! \left[G(\rho_{Y_{n}})-G(\rho_{Y_{n-1}})\right]\big|^3 \left|\int_0^1 \int_0^t\int_0^s e^{irp\cdot \left[G(\rho_{Y_{n}})-G(\rho_{Y_{n-1}})\right]}\mathrm{d}r\mathrm{d}s\mathrm{d}t\right|\\
    & \leq |p|^3 \left|G(\rho_{Y_{n}})-G(\rho_{Y_{n-1}})\right|^3 \lesssim \left(\|f\|_\infty \! + \!  \|\nabla f\|_\infty\right)^3\Big(1+|m_q(\chi*\rho_{Y_{n}})|\Big)^3\frac{|p|^3}{\alpha^6}\chi\! \left(\int \mathrm{d}\rho_{Y_{n}}\geq d\right)
\end{align*}
    for all $Y_n$ with $y_i\neq y_j$. Consequently, we obtain using Theorem \ref{Th:Powers_of_Median}
    \begin{align*}
      & \ \ \ \ \ \ \ \    |\mathcal{E}_2|\lesssim \sqrt{\Lambda} \| \! \left(\xi_p+\frac{1}{2}\left[p \! \cdot \! G(\rho_{Y_{n}})-p \! \cdot \! G(\rho_{Y_{n-1}})\right]^2\right) \! \Psi_\alpha\|\\
      & \lesssim \left(\|f\|_\infty \! + \!  \|\nabla f\|_\infty\right)^3\frac{\sqrt{\Lambda}|p|^3}{\alpha^6}\left\|\Big(1+|m_q(\chi*\rho_{Y_{n}})|\Big)^3\Psi_\alpha\right\| \lesssim \left(\|f\|_\infty \! + \!  \|\nabla f\|_\infty\right)^3\frac{\sqrt{\Lambda}|p|^3}{\alpha^6}.
    \end{align*}
Utilizing the fact that $\|\mathds{1}_{\widetilde A} \varphi_{\eta}\|_\infty\lesssim \frac{1}{\eta^4 \alpha^4}$, where $\widetilde A$ is the set of all $Y_n$ such that $G(\rho_{Y_n})\neq G(\rho_{Y_{n-1}})$, see Lemma \ref{Lema:F_eta_estimate}, we proceed similarly by 
\begin{align*}
    |\mathcal{E}_3|\lesssim \|v^\Lambda\| \|\mathds{1}_{\widetilde A} \varphi_{\eta}\|_\infty \|\left[p \! \cdot \! G(\rho_{Y_{n}})-p \! \cdot \! G(\rho_{Y_{n-1}})\right]^2\Psi_\alpha\|\lesssim \left(\|f\|_\infty \! + \!  \|\nabla f\|_\infty\right)^2 \sqrt{\Lambda}\frac{1}{\eta^4 \alpha^4}\frac{|p|^2}{\alpha^4}.
\end{align*}
\end{proof}

Finally, we are going to derive estimates on the residual term
\begin{align*}
& \ \ \ \  \ \ \ \  \ \ \ \  \mathcal{E}_4=-\frac{1}{2} \! \int_{\mathbb R^3}  \! \! \Big\langle \!  \tau_\eta(m_{\eta}(\rho_{Y_{n}}) \! + \! x)\Psi_{\alpha}, \! \left(F(\rho_{Y_{n}}) \! - \! F(\rho_{Y_{n-1}})\right)\\
\nonumber
 &\times \left(p \! \cdot \! G(\rho_{Y_{n}})p \! \cdot \! \mathcal{B}(\rho_{Y_{n}}) \! - \! p \! \cdot \! G(\rho_{Y_{n-1}})p \! \cdot \! \mathcal{B}(\rho_{Y_{n-1}})\right)  \! L(v^\Lambda) \tau_\eta(m_{\eta}(\rho_{Y_n}) \! + \! x) \, \Psi_{\alpha} \! \Big\rangle_{L^2\! \left(\underset{n\in \mathbb N}{\bigcup}\mathbb R^{3n}\right)}\! \mathrm{d}x
\end{align*}
introduced in Eq.~(\ref{Eq:Def_E_4}).

\begin{lem}
\label{Lem:E_4}
   Let $f:\mathbb R^3\longrightarrow \mathbb R^3$ be as in the definition of $\mathcal{B}$ in Eq.~(\ref{Eq:Boost_Operator}). Then there exists a constant $C>0$ such that
    \begin{align*}
      |\mathcal{E}_4|\leq \frac{C \left(\|f\|_\infty \! + \! \|\nabla f\|_\infty\right)\sqrt{\Lambda} |p|^2}{ \alpha^{6}}.
    \end{align*}
\end{lem}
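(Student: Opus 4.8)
The plan is to bound $|\mathcal{E}_4|$ by reducing it, after integrating out the auxiliary variable $x$, to a Fock-space inner product of the form $\langle|\Psi_\alpha|,L(v^\Lambda)\Theta\rangle$ and then estimating the latter by Cauchy--Schwarz. The required three powers of $\alpha^{-2}$ will come from three different places: the factor $F(\rho_{Y_n})-F(\rho_{Y_{n-1}})$ is $O(\alpha^{-2})$ by Eq.~(\ref{Eq:F_estimate_easy}); the factor $\Delta_n:=p\cdot G(\rho_{Y_n})\,p\cdot\mathcal{B}(\rho_{Y_n})-p\cdot G(\rho_{Y_{n-1}})\,p\cdot\mathcal{B}(\rho_{Y_{n-1}})$ is $O(\alpha^{-2}|p|^2)$ by Eq.~(\ref{Eq:G_B_estimate_easy}) (using $G=F\mathcal{B}$); and a third $\alpha^{-2}$ is gained from the observation that $F(\rho_{Y_n})-F(\rho_{Y_{n-1}})$ is supported in the complement of $\Omega_\lambda$ for a fixed $\lambda>0$, on which $\Psi_\alpha$ carries mass $P_\alpha(\lambda)\lesssim\alpha^{-5}$ by Lemma~\ref{Lem:Most_Of_Mass}.

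First I would rewrite $\mathcal{E}_4$, starting from Eq.~(\ref{Eq:Def_E_4}), explicitly as a sum over particle number: using the commutation relation $L(v^\Lambda)\tau_{\eta}(m_\eta(\rho_{Y_n})+x)=\tau_{\eta}(m_\eta(\rho_{Y_{n-1}})+x)L(v^\Lambda)$ and the definition of $L(v^\Lambda)$,
\begin{align*}
\mathcal{E}_4=-\frac{1}{2}\int_{\mathbb{R}^3}\sum_{n=1}^\infty\frac{\sqrt{n}}{\alpha}\int_{\mathbb{R}^{3n}}&\overline{\tau_{\eta}(m_\eta(\rho_{Y_n})+x)\Psi_\alpha(Y_n)}\,\big(F(\rho_{Y_n})-F(\rho_{Y_{n-1}})\big)\\
&\times\Delta_n\,\tau_{\eta}(m_\eta(\rho_{Y_{n-1}})+x)\Psi_\alpha(Y_{n-1})\,v^\Lambda(y_n)\,\mathrm{d}Y_n\,\mathrm{d}x.
\end{align*}
Taking absolute values, the only $x$-dependence is carried by the two factors $\tau_{\eta}(\cdot+x)$, and since $\int_{\mathbb{R}^3}\tau_{\eta}(a+x)^2\,\mathrm{d}x=1$, Cauchy--Schwarz in $x$ gives $\int_{\mathbb{R}^3}|\tau_{\eta}(m_\eta(\rho_{Y_n})+x)|\,|\tau_{\eta}(m_\eta(\rho_{Y_{n-1}})+x)|\,\mathrm{d}x\le1$; carrying out the $x$-integral therefore yields $|\mathcal{E}_4|\le\frac{1}{2}\sum_n\frac{\sqrt{n}}{\alpha}\int|\Psi_\alpha(Y_n)|\,|F(\rho_{Y_n})-F(\rho_{Y_{n-1}})|\,|\Delta_n|\,|\Psi_\alpha(Y_{n-1})|\,|v^\Lambda(y_n)|\,\mathrm{d}Y_n$.

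Next I would insert the pointwise estimates of Lemma~\ref{Lema:F_eta_estimate}. Since $G=F\mathcal{B}$, expanding $p\cdot G\,p\cdot\mathcal{B}=\sum_{i,j}p_ip_j\,G_i\mathcal{B}_j$ and using Eq.~(\ref{Eq:G_B_estimate_easy}) (valid for a.e.\ $Y_n$, where the $y_i$ are pairwise distinct) gives $|\Delta_n|\le C(\|f\|_\infty+\|\nabla f\|_\infty)\,|p|^2\,\alpha^{-2}\,(1+|m_q(\chi*\rho_{Y_{n-1}})|)^2\,\chi(\int\mathrm{d}\rho_{Y_{n-1}}\ge d)$, while Eq.~(\ref{Eq:F_estimate_easy}) gives $|F(\rho_{Y_n})-F(\rho_{Y_{n-1}})|\le C\alpha^{-2}$. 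Moreover, $F(\rho_{Y_n})\neq F(\rho_{Y_{n-1}})$ forces $Y_{n-1}\notin\Omega_\lambda$ with $\lambda:=\frac{1}{2}\frac{\sigma-\kappa}{\sqrt{2}}>0$ for $\alpha$ large: if $F$ changes, then at least one of $\rho_{Y_n},\rho_{Y_{n-1}}$ lies outside $\{F=1\}\supseteq\Omega_{\frac{\sigma-\kappa}{\sqrt{2}}}$ by Lemma~\ref{Lem:Precise_Support_F}, and passing between $Y_n$ and $Y_{n-1}$ changes $\int\mathrm{d}\rho$ by $\alpha^{-2}$ and $\int\!\!\int_{|x-y|>R_*}\mathrm{d}\rho\,\mathrm{d}\rho$ by at most $2D\alpha^{-2}$, $D$ being an upper bound for the particle number on $\mathrm{supp}(F)$, so in either case $Y_{n-1}\notin\Omega_\lambda$. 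After these substitutions $|\mathcal{E}_4|\le C(\|f\|_\infty+\|\nabla f\|_\infty)\,|p|^2\,\alpha^{-4}\,\langle|\Psi_\alpha|,L(|v^\Lambda|)\Theta\rangle$, where $\Theta$ is the function $Y_m\mapsto(1+|m_q(\chi*\rho_{Y_m})|)^2\,\chi(\int\mathrm{d}\rho_{Y_m}\ge d)\,\mathds{1}_{\mathbb{R}^{3m}\setminus\Omega_\lambda^{(m)}}(Y_m)\,|\Psi_\alpha(Y_m)|$.

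Finally, Cauchy--Schwarz on $L^2\!\big(\bigcup_m\mathbb{R}^{3m}\big)$ gives $\langle|\Psi_\alpha|,L(|v^\Lambda|)\Theta\rangle\le\|L(v^\Lambda)\Theta\|=\|v^\Lambda\|\,\langle\Theta,(\mathcal{N}+\alpha^{-2})\Theta\rangle^{1/2}$, with $\|v^\Lambda\|\lesssim\Lambda^{1/2}$ by Lemma~\ref{Lem:Semiclassical_objects_properties}. The remaining task is to show $\langle\Theta,(\mathcal{N}+\alpha^{-2})\Theta\rangle=\langle\Psi_\alpha,(\mathcal{N}+\alpha^{-2})(1+|m_q(\chi*\rho_{Y_n})|)^4\,\chi(\int\mathrm{d}\rho_{Y_n}\ge d)\,\mathds{1}_{\mathbb{R}^{3n}\setminus\Omega_\lambda^{(n)}}(Y_n)\,\Psi_\alpha\rangle\lesssim\alpha^{-4}$, which I would do by Hölder's inequality for the probability measure $|\Psi_\alpha(Y_n)|^2\mathrm{d}Y_n$ with exponents $(10,10,\tfrac{5}{4})$: the factor $\mathds{1}_{\mathbb{R}^{3n}\setminus\Omega_\lambda^{(n)}}$ contributes $P_\alpha(\lambda)^{4/5}\lesssim(\alpha^{-5})^{4/5}=\alpha^{-4}$ by Lemma~\ref{Lem:Most_Of_Mass}, the tenth moment of $\mathcal{N}+\alpha^{-2}$ is $\lesssim1$ by Lemma~\ref{Lem:Moments_of_particle_number}, and the fortieth moment of $(1+|m_q(\chi*\rho_{Y_n})|)\,\chi(\int\mathrm{d}\rho_{Y_n}\ge d)$ is $\lesssim1$ by Theorem~\ref{Th:Powers_of_Median} together with a binomial expansion (and $\chi^k\le\chi$). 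Hence $\|L(v^\Lambda)\Theta\|\lesssim\Lambda^{1/2}\alpha^{-2}$, and combining everything $|\mathcal{E}_4|\lesssim(\|f\|_\infty+\|\nabla f\|_\infty)\,\Lambda^{1/2}|p|^2\,\alpha^{-6}$, as claimed. The main obstacle is the support argument for $F(\rho_{Y_n})-F(\rho_{Y_{n-1}})$: it is precisely the simultaneous use of its $\alpha^{-2}$ smallness \emph{and} of the localization of its support in the complement of $\Omega_\lambda$, where $\Psi_\alpha$ is heavily suppressed, that upgrades the naive bound $\alpha^{-4}$ to the required $\alpha^{-6}$.
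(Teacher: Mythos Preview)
Your proof is correct and follows essentially the same route as the paper: integrate out $x$ (the paper packages this into the factor $c(Y_n)=\frac{1}{2}\int\tau_\eta(m_\eta(\rho_{Y_n})+x)\tau_\eta(m_\eta(\rho_{Y_{n-1}})+x)\,\mathrm{d}x\le\frac{1}{2}$), insert Eq.~(\ref{Eq:F_estimate_easy}) and Eq.~(\ref{Eq:G_B_estimate_easy}), exploit that the support of $F(\rho_{Y_n})-F(\rho_{Y_{n-1}})$ sits in the complement of some $\Omega_\lambda$, and finish with Lemma~\ref{Lem:Most_Of_Mass} and Theorem~\ref{Th:Powers_of_Median}. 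The only organizational differences are that the paper bounds $\sqrt{n}/\alpha\lesssim 1$ directly via the cutoff $n\le D_\alpha$ (forced by $\mathrm{supp}(F-F)$) rather than keeping it inside $L(v^\Lambda)$, and then splits the factors by Cauchy--Schwarz, putting $(F(\rho_{Y_n})-F(\rho_{Y_{n-1}}))$ onto $\Psi_\alpha(Y_n)$ and the median weight onto $\Psi_\alpha(Y_{n-1})$; this avoids your H\"older step and the appeal to the $\mathcal{N}$-moments in Lemma~\ref{Lem:Moments_of_particle_number}.
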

\begin{proof}
Let us introduce the function $c(Y_n)$ as
\begin{align*}
     c(Y_n): =\frac{1}{2}\int_{\mathbb R^3}\tau_{\eta}\! \left(m_\eta(\rho_{Y_{n}})+x\right)\tau_{\eta}\! \left(m_\eta(\rho_{Y_{n-1}})+x\right)\mathrm{d}x,
\end{align*}
which allows us to write
\begin{align*}
& \ \ \  \ \ \ \ \ \  \mathcal{E}_4 =-\Big\langle \Psi_\alpha, c(Y_{n}) \! \left(F(\rho_{Y_{n}}) \! - \! F(\rho_{Y_{n-1}})\right) \\
    &  \times  \left(p \! \cdot \! G(\rho_{Y_{n}})p \! \cdot \! \mathcal{B}(\rho_{Y_{n}}) \! - \! p \! \cdot \! G(\rho_{Y_{n-1}})p \! \cdot \! \mathcal{B}(\rho_{Y_{n-1}})\right) \! L(v^\Lambda)  \Psi_\alpha\Big\rangle_{L^2\! \left(\underset{n\in \mathbb N}{\bigcup}  \mathbb{R}^{3 n}\right)}  \! .
\end{align*}
    Note that $|c(Y_{n})|\leq \frac{1}{2}$ for all $Y_n\in \mathbb R^{3n}$, $n\in \mathbb{N}$, and by Eq.~(\ref{Eq:G_B_estimate_easy}) we obtain for $y_i\neq y_j$
    \begin{align*}
       & \ \ \ \ \ \ \left|\left(p \! \cdot \! G(\rho_{Y_{n}})p \! \cdot \! \mathcal{B}(\rho_{Y_{n}}) \! - \! p \! \cdot \! G(\rho_{Y_{n-1}})p \! \cdot \! \mathcal{B}(\rho_{Y_{n-1}})\right)\right|\\
       &\leq \frac{C\left(\|f\|_\infty \! + \!  \|\nabla f\|_\infty\right)}{\alpha^2} |p|^2 \! \left(1 +|m_q(\chi*\rho_{Y_{n-1}})|\right)^2\chi\! \left(\int \mathrm{d}\rho_{Y_{n-1}}\geq d\right)=:\frac{C |p|^2}{\alpha^2}\zeta(Y_{n-1}).
    \end{align*}
 Consequently,
    \begin{align}
    \nonumber
      |\mathcal{E}_4| \!   & \lesssim  \! \frac{|p|^2}{\alpha^2}   \! \! \sum_{n=1}^{D_\alpha}   \!  \! \frac{\sqrt{n}}{\alpha}   \!  \! \underset{\mathbb{R}^{3n}}{\int}    \!    \left|F(\rho_{Y_{n}}) \! - \! F(\rho_{Y_{n-1}})\right|  \zeta(Y_{n-1})  |v^\Lambda(y_{n})| \left|\overline{\Psi_\alpha(Y_{n-1})}\Psi_\alpha(Y_{n})\right|\mathrm{d}Y_{n}\\
        \nonumber
        & \lesssim \!\frac{|p|^2}{\alpha^2} \! \sum_{n=1}^{D_\alpha}      \! \underset{\mathbb{R}^{3n}}{\int}  \!  \!  \!     \left|F(\rho_{Y_{n}}) \! - \! F(\rho_{Y_{n-1}})\right|  \zeta(Y_{n-1})  |v^\Lambda(y_{n})|\left|\overline{\Psi_\alpha(Y_{n-1})}\Psi_\alpha(Y_{n})\right|\mathrm{d}Y_{n}\\
        \label{Eq:T_5_Bound}
        & \leq \frac{|p|^2}{\alpha^2}  \|v^\Lambda\| \, \|  \zeta(Y_n)   \Psi_\alpha\|\, \|  \! \left(F(\rho_{Y_{n}}) \! - \! F(\rho_{Y_{n-1}})\right) \! \Psi_\alpha\|,
    \end{align}
      where we only consider $n\leq D_\alpha:=(\|\varphi^\mathrm{Pek}\|^2+\sigma+\kappa)\alpha^2 +1$, since
    \begin{align*}
      \mathrm{supp}\! \left(F(\rho_{Y_{n}}) \! - \! F(\rho_{Y_{n-1}})\right)\cap \mathbb{R}^{3n} & \subseteq \Omega^{(n)}_{\sigma+\kappa}\cup\Omega^{(n-1)}_{\sigma+\kappa}\times \mathbb R^3 =\emptyset  
    \end{align*}
 in case $n>D_\alpha$. Making use of Theorem \ref{Th:Powers_of_Median} we furthermore obtain   
 \begin{align*}
   \|  \zeta(Y_n)   \Psi_\alpha\|^2  & =\left(\|f\|_\infty \! + \!  \|\nabla f\|_\infty\right)^2\sum_{n=0}^\infty  \! \int_{\mathbb{R}^{3n}}  \left(1 \! + \! |m_q(\chi*\rho_{Y_{n}})|\right)^4 \chi\! \left(\int \mathrm{d}\rho_{Y_{n-1}}\geq d\right)|\Psi_\alpha(Y_n)|^2\mathrm{d}Y_n\\
      & \leq \left(\|f\|_\infty \! + \!  \|\nabla f\|_\infty\right)^2\sum_{n=0}^\infty  \! \int_{\mathbb{R}^{3n}}  \left(1 \! + \! |m_q(\chi*\rho_{Y_{n}})|\right)^4 \chi\! \left(\int \mathrm{d}\rho_{Y_{n}}\geq d\right)|\Psi_\alpha(Y_n)|^2\mathrm{d}Y_n\\
      &  \lesssim \left(\|f\|_\infty \! + \!  \|\nabla f\|_\infty\right)^2.
 \end{align*}
 Finally, we note that $\left|F(\rho_{Y_n})-F(\rho_{Y_{n-1}})\right|\lesssim \frac{1}{\alpha^2}$ by Eq.~(\ref{Eq:F_estimate_easy}) and $F(\rho_{Y_n})-F(\rho_{Y_{n-1}})\neq 0$ implies $Y_n\notin \Omega_{\frac{\sigma-\kappa}{\sqrt{2}}}$ or $Y_{n-1}\notin \Omega_{\frac{\sigma-\kappa}{\sqrt{2}}}$. In both cases we have $Y_{n}\notin \Omega_{\frac{\sigma-\kappa}{\sqrt{2}}-\alpha^{-2}}\subseteq \Omega_{\frac{\sigma-\kappa}{\sqrt{2}}-\epsilon}$ for $\epsilon>0$ and $\alpha$ large enough and therefore  
 \begin{align*}
    \|  \! \left(F(\rho_{Y_{n}}) \! - \! F(\rho_{Y_{n-1}})\right) \! \Psi_\alpha\|^2\lesssim \alpha^{-4}\sum_{n=0}^\infty \underset{\mathbb{R}^{3n}\setminus \Omega_{\frac{\sigma-\kappa}{\sqrt{2}}-\epsilon}^{(n)}} {\int} \! \!  \!  |\Psi_\alpha(Y_n)|^2\mathrm{d}Y_{n} \! \lesssim  \! \alpha^{-8},
 \end{align*}
 for $\epsilon$ small enough, see Lemma \ref{Lem:Most_Of_Mass}. This concludes the proof by Eq.~(\ref{Eq:T_5_Bound}), together with the fact that $\|v^\Lambda\|\lesssim \sqrt{\Lambda}$, see Lemma \ref{Lem:Semiclassical_objects_properties}. 
\end{proof}

\section{Bose-Einstein Condensation}
\label{Sec:Bose-Einstein_Condensation}

In the following we want to establish that the state $\Phi_\alpha$, defined in Eq.~(\ref{Eq:Def_Phi_State})
\begin{align*}
     \Phi_\alpha(x;Y_n)=\mu_\alpha^{-1}\tau_{\eta}\! \left(m_\eta (\rho_{Y_{n}})\right)F(\rho_{Y_{n}})\Psi_\alpha(y_1-x_1,\dots ,y_n-x),
\end{align*}
satisfies Bose-Einstein condensation, and as a consequence we are going to show that the typical empirical measures $\rho_{Y_n}$ of the phonon positions $Y_n=(y_1,\dots ,y_n)$ are close to the Pekar measure $\rho^\mathrm{Pek}=\rho_{\varphi^\mathrm{Pek}}$ with respect to the mollified total variation. As a first step, we are going to show in the subsequent Lemma \ref{Lem:Energy_Estimates_tilde} that the state $\Phi_\alpha$ defined in Eq.~(\ref{Eq:Def_Phi_State}) is a low energy state of the cut-off Hamiltonian $\mathbb H^K$, or equivalently that $\overline{\Psi}_\alpha=\mathcal{T}^* \Phi_\alpha$, defined in Eq.~(\ref{Eq:Localized_State_LLP})
\begin{align*}
    \overline{\Psi}_\alpha(x;Y_n)=\mu_\alpha^{-1}\tau_{\eta}\! \left(m_\eta (\rho_{Y_{n}})+x\right)F(\rho_{Y_{n}})\Psi_\alpha(Y_n),
\end{align*}
is a low energy state of $\mathbb{H}^{K}_{\frac{1}{i}\nabla_x}$.

\begin{lem}
\label{Lem:Energy_Estimates_tilde}
  Let $K:=\alpha^{s}$ with $\frac{8}{29}< s \leq 1$. Then we have
    \begin{align}
    \label{Eq:Energy_Localized_State_Pre}
       \left\langle   \Phi_\alpha,\mathbb{H}^{K} \Phi_\alpha \right\rangle_{L^2\! \left(\mathbb{R}^3\times \underset{n\in \mathbb N}{\bigcup}  \mathbb{R}^{3 n}\right)}\leq E_\alpha+\alpha^{-\frac{4}{29}}.
    \end{align}
\end{lem}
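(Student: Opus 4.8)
The plan is to estimate $\left\langle \Phi_\alpha,\mathbb{H}^{K}\Phi_\alpha\right\rangle$ by comparing it term-by-term with the fiber energy $\langle \Psi_\alpha,\mathbb H_0 \Psi_\alpha\rangle = E_\alpha$, treating the additional localizations coming from the factors $\tau_\eta(m_\eta(\rho_{Y_n})+x)$ and $F(\rho_{Y_n})$, and the ultraviolet cutoff $K$, as controlled perturbations. First I would use the unitary $\mathcal{T}$ from Eq.~(\ref{Eq:Essentially_LLP}) to pass to $\overline\Psi_\alpha=\mathcal{T}^*\Phi_\alpha$ and the operator $\mathbb H^K_{\frac{1}{i}\nabla_x}=\left(\frac1i\nabla_x-\mathcal P\right)^2+\mathcal N-a^*(v^K)-a(v^K)$, since in this picture the localization functions depend only on $\rho_{Y_n}$ (not on the shifted arguments). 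The normalization constant $\mu_\alpha$ satisfies $\mu_\alpha^2=\sum_n\int_{\mathbb R^{3n}} F(\rho_{Y_n})^2|\Psi_\alpha(Y_n)|^2\,\mathrm dY_n\geq 1-P_\alpha\!\left(\tfrac{\sigma-\kappa}{\sqrt2}\right)\geq 1-C\alpha^{-5}$ by Lemma~\ref{Lem:Precise_Support_F} and Lemma~\ref{Lem:Most_Of_Mass}, so $\mu_\alpha^{-2}=1+O(\alpha^{-5})$ and the normalization contributes only a negligible multiplicative error.

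The $\mathcal N$ term is exactly unchanged up to the $\mu_\alpha$ factor, since $\mathcal N$, $F(\rho_{Y_n})$ and $\tau_\eta(m_\eta(\rho_{Y_n})+x)$ are all multiplication operators and $\int\tau_\eta(m_\eta(\rho_{Y_n})+x)^2\,\mathrm dx=1$; thus $\langle\overline\Psi_\alpha,\mathcal N\overline\Psi_\alpha\rangle=\mu_\alpha^{-2}\langle \Psi_\alpha,F^2\mathcal N\Psi_\alpha\rangle\leq \mu_\alpha^{-2}\langle\Psi_\alpha,\mathcal N\Psi_\alpha\rangle$. For the interaction $a^*(v^K)+a(v^K)$, I would write $v^K=v-(v-v^K)$ and $v-v^K=\frac1i\nabla\cdot w^K$ (the decomposition of Lemma~\ref{Lem:Semiclassical_objects_properties} at scale $K$), bounding the $w^K$ contribution using $\|w^K\|\lesssim K^{-1/2}$ together with $\|\mathcal P\Psi_\alpha\|\lesssim1$ and $\langle\Psi_\alpha,\mathcal N\Psi_\alpha\rangle\lesssim1$ from Lemma~\ref{Lem:Moments_of_particle_number}; the IMS-type commutator errors from moving $\tau_\eta(m_\eta(\rho_{Y_n})+x)$ and $F(\rho_{Y_n})$ past $L(v)$ are handled exactly as in Lemma~\ref{Lema:F_eta_estimate} (giving $\|F(\rho_{Y_n})-F(\rho_{Y_{n-1}})\|_\infty\lesssim\alpha^{-2}$, $|m_\eta(\rho_{Y_n})-m_\eta(\rho_{Y_{n-1}})|\lesssim \alpha^{-2}\eta^{-1}$ on the relevant support) and these are supported on $\underset{n}{\bigcup}\mathbb R^{3n}\setminus\Omega_{\frac{\sigma-\kappa}{\sqrt2}-\epsilon}$, hence weighted by $\sqrt{P_\alpha}\lesssim\alpha^{-5/2}$ from Lemma~\ref{Lem:Most_Of_Mass}, which is far smaller than $\alpha^{-4/29}$. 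The kinetic term $\left(\frac1i\nabla_x-\mathcal P\right)^2$ requires the IMS localization formula in the phonon coordinates: $\left(\frac1i\nabla_x-\mathcal P\right)^2$ acting on a product $\tau_\eta(m_\eta(\rho_{Y_n})+x)F(\rho_{Y_n})\Psi_\alpha(Y_n)$ produces, after using $\int\tau_\eta(\cdot+x)^2\,\mathrm dx=1$, the term $\mu_\alpha^{-2}\langle\Psi_\alpha,F^2\mathcal P^2\Psi_\alpha\rangle$ plus a localization error of order $\|\nabla\tau_\eta\|_\infty^2\cdot(\text{median increment})^2$-type contributions times the small probability of being off $\Omega$; since $\|\nabla\tau_\eta\|\sim\eta^{-1}=\alpha^\beta$ with $\beta<3/4$, these are at worst $\alpha^{2\beta}\cdot\alpha^{-5}$ or similar, still $o(\alpha^{-4/29})$. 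The cross term $\langle \Psi_\alpha, F^2\mathcal P^2\Psi_\alpha\rangle\le \langle\Psi_\alpha,\mathcal P^2\Psi_\alpha\rangle$ plus again a small off-$\Omega$ correction.

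Assembling, one gets $\left\langle\overline\Psi_\alpha,\mathbb H^K_{\frac{1}{i}\nabla_x}\overline\Psi_\alpha\right\rangle\leq\mu_\alpha^{-2}\left(\langle\Psi_\alpha,\mathcal P^2\Psi_\alpha\rangle+\langle\Psi_\alpha,\mathcal N\Psi_\alpha\rangle-2\mathfrak{Re}\langle\Psi_\alpha,a^*(v)\Psi_\alpha\rangle\right)+(\text{errors})=\mu_\alpha^{-2}E_\alpha+CK^{-1/2}+C\alpha^{2\beta-5}+\ldots$. Using $-c\le E_\alpha\le 0$ (so $\mu_\alpha^{-2}E_\alpha\le E_\alpha+C\alpha^{-5}$) and $K=\alpha^s$ with $s>8/29$, all error terms are dominated by $\alpha^{-s/2}=o(\alpha^{-4/29})$, which yields Eq.~(\ref{Eq:Energy_Localized_State_Pre}) for $\alpha$ large. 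The main obstacle I anticipate is bookkeeping the $\mathcal P^2$ localization error cleanly: one must show that conjugating $\mathcal P^2$ by $\tau_\eta(m_\eta(\rho_{Y_n})+x)$ and $F(\rho_{Y_n})$ produces only double-commutator terms of the form $-\frac12\sum\big[\mathcal P,\,\text{localizer}\big]^2$ (as in Lemma~\ref{Lem:IMS_Mass_Away_From_Electron} and Lemma~\ref{Lem:Localization_IMS}), that $\big[\mathcal P,\tau_\eta(m_\eta(\rho_{Y_n})+x)\big]$ involves $\nabla\tau_\eta$ composed with the gradient of $m_\eta$ (whose increments are $O(\alpha^{-2}\eta^{-1})$ on the support where $F\ne0$), and that the combination of the $\eta^{-1}$ growth with the smallness $P_\alpha\lesssim\alpha^{-5}$ still beats $\alpha^{-4/29}$ — this forces the constraint $\beta<3/4$ and the lower bound $s>8/29$ on the cutoff, and getting the exponents to line up is where care is needed; everything else is a routine repetition of the estimates already proved in Sections~\ref{Sec:Spatial_concentration_of_Probability}, \ref{Sec:Asymptotic_concentration_of_Probability} and \ref{Sec:Analysis_of_the_Error_Terms}.
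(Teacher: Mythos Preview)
Your overall structure---pass to $\overline\Psi_\alpha$, handle the $F$-localization via IMS as in Lemma~\ref{Lem:Localization_IMS}, compare $\mathbb H_0$ with $\mathbb H_0^K$ via Eq.~(\ref{Eq:Comparision_full_energy_VS_cut-off_Energy}) (this is precisely where the constraint $s>8/29$ enters, since it gives an error $K^{-1/2}=\alpha^{-s/2}$), and then control the $\tau_\eta$-errors---matches the paper. However, your treatment of the kinetic term contains a genuine gap.

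You propose to bound the localization error of $(\tfrac1i\nabla_x-\mathcal P)^2$ coming from $\tau_\eta$ via an IMS argument, claiming that $[\mathcal P,\tau_\eta(m_\eta(\rho_{Y_n})+x)]$ involves $\nabla\tau_\eta$ times ``the gradient of $m_\eta$ whose increments are $O(\alpha^{-2}\eta^{-1})$'' and that this error is supported off $\Omega$. Both claims are wrong. The commutator $[\mathcal P,\tau_\eta(m_\eta(\rho_{Y_n})+x)]$ equals $\tfrac1i\nabla\tau_\eta(m_\eta(\rho_{Y_n})+x)\cdot\sum_j\partial_{y_j}m_\eta(\rho_{Y_n})$, and by translation covariance of the regularized median one has $\sum_j\partial_{y_j}m_\eta(\rho_{Y_n})=\mathrm{Id}$ on $\{\rho_{Y_n}\neq 0\}$---this is \emph{not} the small quantity $m_\eta(\rho_{Y_n})-m_\eta(\rho_{Y_{n-1}})=O(\alpha^{-2}\eta^{-1})$ that arises in the interaction term, and it is supported everywhere, not just off $\Omega$. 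Hence the bare commutator contributes an error $\int|\nabla\tau_\eta(m_\eta+x)|^2\,\mathrm dx=\eta^{-2}\|\nabla\tau\|^2=\alpha^{2\beta}$, which diverges and cannot be absorbed into $\alpha^{-4/29}$.

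The idea you are missing is that this divergence is \emph{exactly} cancelled by $\tfrac1i\nabla_x$: since $\nabla_x\tau_\eta(m_\eta(\rho_{Y_n})+x)=\nabla\tau_\eta(m_\eta(\rho_{Y_n})+x)$ as well, the full commutator $\big[\tfrac1i\nabla_x-\mathcal P,\,\tau_\eta(m_\eta(\rho_{Y_n})+x)\big]$ vanishes identically. This gives the clean identity
\[
\Big\|\Big(\tfrac1i\nabla_x-\mathcal P\Big)\overline\Psi_\alpha\Big\|^2
=\big\|\tau_\eta(m_\eta(\rho_{Y_n})+x)\,\mathcal P\widetilde\Psi_\alpha\big\|^2
=\|\mathcal P\widetilde\Psi_\alpha\|^2,
\]
with \emph{no} $\tau_\eta$-error at all in the kinetic term. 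After this, only the interaction picks up a $\tau_\eta$-error, namely the $\varphi_\eta$ term, which by Lemma~\ref{Lema:F_eta_estimate} is bounded by $C\eta^{-4}\alpha^{-4}\|v^K\|\lesssim \alpha^{4\beta-4+s/2}$ (not weighted by $P_\alpha$, contrary to what you wrote), and this is $o(\alpha^{-4/29})$ under $\beta<3/4$, $s\le 1$.
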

\begin{proof}
As a first step we note that $\overline{\Psi}_\alpha=\mathcal{T}^* \Phi_\alpha$ and $\mathcal{T}^*\mathbb{H}^{K}\mathcal{T}=\mathbb{H}^{K}_{\frac{1}{i}\nabla}$, and therefore
\begin{align}
\label{Eq:LLP_in_BEC_Section}
    \left\langle   \Phi_\alpha,\mathbb{H}^{K} \Phi_\alpha \right\rangle_{L^2\! \left(\mathbb{R}^3\times \underset{n\in \mathbb N}{\bigcup}  \mathbb{R}^{3 n}\right)}=\left\langle   \overline{\Psi}_\alpha,\mathbb{H}^{K}_{\frac{1}{i}\nabla_x} \overline{\Psi}_\alpha \right\rangle_{L^2\! \left(\mathbb{R}^3\times \underset{n\in \mathbb N}{\bigcup}  \mathbb{R}^{3 n}\right)}.
\end{align}
    Using the localization functions $F_1:=F$ and $F_2:=\sqrt{1-F^2}$, we obtain by Lemma \ref{Lem:Localization_IMS} for a suitable constant $C>0$
\begin{align*}
    \sum_{j=1}^2\left\langle F_j(\rho_{Y_n})\Psi_\alpha , \mathbb{H}_{0} F_j(\rho_{Y_n})\Psi_\alpha\right\rangle\leq  E_\alpha + C\alpha^{-4}.
\end{align*}
Since $\mathbb{H}_{0}\geq E_\alpha$ as an operator inequality, we obtain for the state
\begin{align*}
    \widetilde \Psi_\alpha:=\mu_\alpha^{-1} F_1(\rho_{Y_n})\Psi_\alpha,    
\end{align*}
with $\mu_\alpha=\|F_1(\rho_{Y_n})\Psi_\alpha\|=\sqrt{1-\| F_2(\rho_{Y_n})\Psi_\alpha\|^2}$ introduced below Eq.~(\ref{Eq:Localized_State_LLP}), the estimate
\begin{align*}
    \left\langle \widetilde \Psi_\alpha , \mathbb{H}_{0} \widetilde \Psi_\alpha\right\rangle\leq \mu_\alpha^{-2}\left(E_\alpha+C\alpha^{-4}-E_\alpha \| F_2(\rho_{Y_n})\Psi_\alpha\|^2\right)=E_\alpha+\frac{C\alpha^{-4}}{\mu_\alpha^2}\leq E_\alpha+2C\alpha^{-4},
\end{align*}
where we have used that $\mu_\alpha^2=1-\left\langle \Psi_\alpha,\left(1-F(\rho_{Y_n})^2\right)\Psi_\alpha\right\rangle\geq \frac{1}{2}$ by Lemma \ref{Lem:Most_Of_Mass}. Together with Eq.~(\ref{Eq:Comparision_full_energy_VS_cut-off_Energy}) we obtain for a suitable constant $C>0$ and $\alpha$ large enough
\begin{align}
\label{Eq:widetilde_Psi_Estimate}
    \left\langle \widetilde \Psi_\alpha , \mathbb{H}^K_{0} \widetilde \Psi_\alpha\right\rangle\leq  E_\alpha+\frac{1}{2}\alpha^{-\frac{4}{29}}.
\end{align}
In order to compute the term on the right hand side of Eq.~(\ref{Eq:LLP_in_BEC_Section}), let us first write
\begin{align}
\nonumber
    &  \left\langle   \overline{\Psi}_\alpha, \mathbb{H}^K_{\frac{1}{i}\nabla_x}    \overline{\Psi}_\alpha \right\rangle_{L^2\! \left(\mathbb{R}^3\times \underset{n\in \mathbb N}{\bigcup}  \mathbb{R}^{3 n}\right)} = \Big\| \Big(\mathcal P-\frac{1}{i}\nabla_x\Big)  \overline{\Psi}_\alpha\Big\|^2 +    \left\langle  \overline{\Psi}_\alpha, \mathcal{N}    \overline{\Psi}_\alpha \right\rangle_{L^2\! \left(\mathbb{R}^3\times \underset{n\in \mathbb N}{\bigcup}  \mathbb{R}^{3 n}\right)}\\
    \label{Eq:Writing_Out_LLP_Hamiltonian}
    & \ \ \ \ \ \ \ \  \ \ \ \ \ \ \ \  \ \ \ \ \ \ \ \ -2\mathfrak{Re}   \left\langle  \overline{\Psi}_\alpha, L(v^{K})    \overline{\Psi}_\alpha \right\rangle_{L^2\! \left(\mathbb{R}^3\times \underset{n\in \mathbb N}{\bigcup}  \mathbb{R}^{3 n}\right)}.
\end{align}
We note that $\overline{\Psi}_\alpha(x;Y_n)=\tau_\eta(m_\eta(\rho_{Y_n})+x)\widetilde \Psi_\alpha(Y_n)$, and that the operator $\tau_\eta(m_\eta(\rho_{Y_n})+x)$ commutes with $\mathcal P-\frac{1}{i}\nabla$, and therefore
\begin{align}
\nonumber
    & \Big\| \Big(\mathcal P-\frac{1}{i}\nabla_x\Big)\overline{\Psi}_\alpha\Big\|^2=\Big\| \tau_{\eta}\! \left(m_\eta(\rho_{Y_{n}})+x\right)\Big(\mathcal P-\frac{1}{i}\nabla_x\Big)\widetilde{\Psi}_\alpha \Big\|^2=\Big\| \tau_{\eta}\! \left(m_\eta(\rho_{Y_{n}})+x\right)\mathcal P\widetilde{\Psi}_\alpha \Big\|^2\\
    \nonumber
     &  \ \ = \sum_{n=0}^\infty \int_{\mathbb{R}^{3n}} \left(\int_{\mathbb{R}^3} \tau_{\eta}\! \left(m_\eta(\rho_{Y_{n}})+x\right)^2\mathrm{d}x\right) |\mathcal P\widetilde{\Psi}_\alpha(Y_n)|^2\mathrm{d}Y_{n}\\
     \label{Eq:P_f_Identity}
     &  \ \ =\sum_{n=0}^\infty \int_{\mathbb{R}^{3n}} |\mathcal P\widetilde{\Psi}_\alpha(Y_n)|^2\mathrm{d}Y_{n}= \Big\| \mathcal P\widetilde{\Psi}_\alpha \Big\|^2.
\end{align}
In a similar fashion we have
\begin{align}
\label{Eq:mathcal_N_Identity}
    \left\langle   \overline \Psi_\alpha, \mathcal{N}   \overline \Psi_\alpha \right\rangle_{L^2\! \left(\mathbb{R}^3\times \underset{n\in \mathbb N}{\bigcup}  \mathbb{R}^{3 n}\right)}=\left\langle    \widetilde \Psi_\alpha, \mathcal{N}     \widetilde \Psi_\alpha \right\rangle_{\mathcal{F}}.
\end{align}
Regarding $\left\langle   \overline \Psi_\alpha, L(v^{K})     \overline \Psi_\alpha \right\rangle_{L^2\! \left(\mathbb{R}^3\times \underset{n\in \mathbb N}{\bigcup}  \mathbb{R}^{3 n}\right)}$, recall the definition of $\varphi_\eta$ in Eq.~(\ref{Eq:Def_F_with_subscript})
\begin{align*}
    \varphi_\eta(Y_n) =\int_{\mathbb{R}^3}\left[\tau_{\eta}\! \left(m_\eta(\rho_{Y_{n-1}})+x\right)-\tau_{\eta}\! \left(m_\eta(\rho_{Y_{n}})+x\right)\right]\tau_{\eta}\! \left(m_\eta(\rho_{Y_{n-1}})+x\right)\mathrm{d}x,
\end{align*}
and use the fact that $0\leq \varphi_\eta(Y_n)\leq \frac{C}{\eta^4}\alpha^{-4}$ for a suitable constant $C>0$ and $Y_n$ in the support of $\widetilde \Psi_\alpha$, see Lemma \ref{Lema:F_eta_estimate}, in order to obtain
\begin{align}
\label{Eq:L_Identity}
 & \ \ \ \  \ \ \ \  \left|\left\langle   \overline \Psi_\alpha, L(v^{K})     \overline \Psi_\alpha \right\rangle_{L^2\! \left(\mathbb{R}^3\times \underset{n\in \mathbb N}{\bigcup}  \mathbb{R}^{3 n}\right)} -\left\langle   \widetilde \Psi_\alpha, L(v^{K})     \widetilde \Psi_\alpha \right\rangle_{L^2\! \left(\underset{n\in \mathbb N}{\bigcup}  \mathbb{R}^{3 n}\right)}\right|\\
 \nonumber
 &  \ \ \ \ \ \ \ \ \ \ \ \  =\left|\left\langle   \widetilde \Psi_\alpha, \varphi_\eta(Y_n)L(v^{K})     \widetilde \Psi_\alpha \right\rangle_{L^2\! \left(\underset{n\in \mathbb N}{\bigcup}  \mathbb{R}^{3 n}\right)}\right|\\
 \nonumber
 & \leq \frac{C}{\eta^4}\alpha^{-4} \! \! \! \sum_{n=1}^{(\|\varphi^\mathrm{Pek}\|^2+\sigma+\kappa)\alpha^2} \! \! \frac{\sqrt{n}}{\alpha}\int_{\mathbb R^{3n}}\left|\Psi_\alpha(Y_{n})\right| \left|\Psi_\alpha(Y_{n-1})\right| \left|v^K(y_n)\right| \mathrm{d}Y_n\lesssim \eta^{-4}\alpha^{-4}\|v^K\|\lesssim \eta^{-4}\alpha^{-4} \sqrt{K}.
\end{align}
By our choice of $\eta$ in Eq.~(\ref{Eq:Definition_eta_from_beta}) and the assumption $K\leq \alpha$, we obtain 
\begin{align*}
    \eta^{-4}\alpha^{-4} \sqrt{K}\leq \frac{1}{4}\alpha^{-\frac{4}{29}}
\end{align*}
for a suitable constant $C>0$ and $\alpha$ large enough. Combining Eq.~(\ref{Eq:Writing_Out_LLP_Hamiltonian}), Eq.~(\ref{Eq:P_f_Identity}), Eq.~(\ref{Eq:mathcal_N_Identity}) and Eq.~(\ref{Eq:L_Identity}) yields
\begin{align*}
   \left\langle   \overline \Psi_\alpha,\mathbb{H}^{K}_{\frac{1}{i}\nabla_x}  \overline\Psi_\alpha \right\rangle_{ \!  \!  \! L^2\! \left(\mathbb{R}^3\times \underset{n\in \mathbb N}{\bigcup}  \mathbb{R}^{3 n}\right)} \!  \!  & \leq \!  \Big\| \mathcal P\widetilde{\Psi}_\alpha \Big\|^2 \!   \! \! +   \! \left\langle    \widetilde \Psi_\alpha, \mathcal{N}     \widetilde \Psi_\alpha \right\rangle_{\mathcal{F}}  \!  \! \! - \! 2\mathfrak{Re}\left\langle   \widetilde \Psi_\alpha, L(v^{K})     \widetilde \Psi_\alpha \right\rangle_{ \!  \!  \! L^2\! \left(\underset{n\in \mathbb N}{\bigcup}  \mathbb{R}^{3 n}\right)} \!  \! + \!  \frac{1}{2}\alpha^{-\frac{4}{29}}\\
    & =  \left\langle \widetilde \Psi_\alpha , \mathbb{H}^K_{0} \widetilde \Psi_\alpha\right\rangle_{\mathcal{F}}+ \frac{1}{2}\alpha^{-\frac{4}{29}}.
\end{align*}
This concludes the proof by Eq.~(\ref{Eq:widetilde_Psi_Estimate}).
\end{proof}

The following result, concerning the Weyl operator $W_{\varphi^\mathrm{Pek}}$ with respect to the state $\varphi^\mathrm{Pek}$, see Eq.~(\ref{Eq:Def_Weyl_Operator}), is a direct consequence of Lemma \ref{Lem:Energy_Estimates_tilde} and \cite[Lemma 3.11]{BS1}, and establishes that $\Phi_\alpha$ satisfies Bose-Einstein condensation.

\begin{lem}
\label{Lem:BEC_Imported}
   Let $\Phi_\alpha$ be defined in Eq.~(\ref{Eq:Def_Phi_State}). Then there exists a constant $C>0$, such that 
    \begin{align*}
         \left\langle \Phi_\alpha, W^{-1}_{\varphi^{\mathrm{Pek}}}\mathcal{N}W_{\varphi^{\mathrm{Pek}}} \, \Phi_\alpha\right\rangle_{L^2(\mathbb R^3)\otimes \mathcal{F}}\leq C\alpha^{-\frac{2}{29}} .
    \end{align*}
\end{lem}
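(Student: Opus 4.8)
The plan is to deduce the statement directly from the energy estimate of Lemma~\ref{Lem:Energy_Estimates_tilde} combined with the condensation result \cite[Lemma 3.11]{BS1}, which asserts that a normalized state on $L^2(\mathbb R^3)\otimes\mathcal F$ whose energy with respect to the cut-off Hamiltonian $\mathbb H^{K}$ lies within $\alpha^{-4/29}$ of $e^{\mathrm{Pek}}$, whose particle number is bounded on its support, and whose regularized median is localized near the origin, satisfies $\langle\,\cdot\,,W^{-1}_{\varphi^{\mathrm{Pek}}}\mathcal N W_{\varphi^{\mathrm{Pek}}}\,\cdot\,\rangle\lesssim\alpha^{-2/29}$. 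The whole task is therefore to verify that $\Phi_\alpha$ meets these hypotheses. First, I would fix $K:=\alpha$ (any $K=\alpha^{s}$ with $\tfrac{8}{29}<s\le 1$ will do), so that Lemma~\ref{Lem:Energy_Estimates_tilde} gives $\langle\Phi_\alpha,\mathbb H^{K}\Phi_\alpha\rangle_{L^2(\mathbb R^3)\otimes\mathcal F}\le E_\alpha+\alpha^{-4/29}\le e^{\mathrm{Pek}}+\alpha^{-4/29}$, using $E_\alpha\le e^{\mathrm{Pek}}$. Since $\mathcal T$ is unitary and $\mu_\alpha$ is chosen so that $\|\overline\Psi_\alpha\|=1$, the state $\Phi_\alpha=\mathcal T\overline\Psi_\alpha$ is normalized.

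Second, I would read off the remaining structural properties of $\Phi_\alpha$ from its definition in Eq.~(\ref{Eq:Def_Phi_State}). The factor $F(\rho_{Y_n})$ forces $\Phi_\alpha(x;Y_n)\neq 0$ only for $Y_n\in\mathrm{supp}(F)\subseteq\Omega_{\sigma+\kappa}$ by Lemma~\ref{Lem:Precise_Support_F}; in particular $\int\mathrm d\rho_{Y_n}\le\|\varphi^{\mathrm{Pek}}\|^2+\sigma+\kappa=:D$, so that $\mathcal N\le D$ as an operator inequality on the support of $\Phi_\alpha$, which supplies the uniform particle-number bound demanded by \cite[Lemma 3.11]{BS1}. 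The factor $\tau_\eta(m_\eta(\rho_{Y_n}))$ vanishes unless $|m_\eta(\rho_{Y_n})|\le\eta/3$, and since $\eta=\alpha^{-\beta}\to 0$ this is precisely the statement that the regularized median of $\Phi_\alpha$ is concentrated at the origin on the scale needed to upgrade the condensation estimate from the $x$-integrated (superposition-of-translates) version to the single-product version centred at $\varphi^{\mathrm{Pek}}$. Plugging normalization, the energy bound, the particle-number bound, and the median localization into \cite[Lemma 3.11]{BS1} then yields the claimed estimate $\langle\Phi_\alpha,W^{-1}_{\varphi^{\mathrm{Pek}}}\mathcal N W_{\varphi^{\mathrm{Pek}}}\Phi_\alpha\rangle\le C\alpha^{-2/29}$.

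The only point requiring genuine care is matching parameters with \cite{BS1}: one must check that the ultraviolet scale $K=\alpha$ and the localization scale $\eta=\alpha^{-\beta}$ with $\tfrac{2}{29}<\beta<\tfrac34$ fall inside the admissible ranges of that reference, and that the a priori inputs already available here—$\|\mathcal N\Psi_\alpha\|\lesssim 1$ from Lemma~\ref{Lem:Moments_of_particle_number} and $\mu_\alpha^2\ge\tfrac12$ from Lemma~\ref{Lem:Most_Of_Mass}—are exactly what the hypotheses of \cite[Lemma 3.11]{BS1} require. This is quantitative but routine bookkeeping; I do not expect any essential obstacle beyond it, since all the analytic work has been done in Lemma~\ref{Lem:Energy_Estimates_tilde} and in \cite{BS1}.
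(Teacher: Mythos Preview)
Your proposal is correct and follows essentially the same route as the paper: fix $K=\alpha$, invoke Lemma~\ref{Lem:Energy_Estimates_tilde} together with $E_\alpha\le e^{\mathrm{Pek}}$ for the energy bound, read off the particle-number bound from $\mathrm{supp}(F)\subseteq\Omega_{\sigma+\kappa}$ and the median localization from the factor $\tau_\eta$, and then apply \cite[Lemma~3.11]{BS1}. The paper's proof is slightly terser (it does not spell out normalization or the auxiliary inputs from Lemmas~\ref{Lem:Moments_of_particle_number} and~\ref{Lem:Most_Of_Mass}), but the argument is the same.
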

\begin{proof}
      By Lemma \ref{Lem:Energy_Estimates_tilde}, together with the trivial observation that $E_\alpha\leq e^\mathrm{Pek}$, we have for $K:=\alpha$
    \begin{align*}
          \left\langle  \Phi_\alpha, \mathbb{H}^K  \Phi_\alpha \right\rangle_{L^2\! \left(\mathbb{R}^3\times \underset{n\in \mathbb N}{\bigcup}  \mathbb{R}^{3 n}\right)}\leq E_\alpha+\alpha^{-\frac{4}{29}}\leq e^\mathrm{Pek}+\alpha^{-\frac{4}{29}},
    \end{align*}
    and furthermore note that $\mathrm{supp}(\Phi_\alpha)\subseteq \mathbb R^3\times \Big(\mathrm{supp}\big(\tau_\eta(m_\eta(\rho_{Y_n}))\big)\cap \mathrm{supp}\big(F\big)\Big)$, which especially implies that for all $Y_n\in \mathrm{supp}(\Phi_\alpha)$
    \begin{align*}
        \int \mathrm{d}\rho_{Y_n}\leq \|\varphi^\mathrm{Pek}\|^2+\sigma+\kappa, \ \ \ \left|m_\eta(\rho_{Y_n})\right|\leq \eta.
        \end{align*}
    Therefore, $\Phi_\alpha$ satisfies the assumptions of \cite[Lemma 3.11]{BS1} with corresponding constants $q:=\epsilon:=\alpha^{-\beta}\leq \alpha^{-\frac{2}{29}}$, see the definition of $\eta$ in Eq.~(\ref{Eq:Definition_eta_from_beta}), which concludes the proof.
\end{proof}

With Lemma \ref{Lem:BEC_Imported} at hand, we are in a position to verify that a typical empirical measure $\rho_{Y_n}$ is expected to be close to the Pekar measure $\rho^\mathrm{Pek}$ with respect to the mollified total variation.

\begin{lem}
\label{Lem:Estimate_Convoluted_Total_Variation}
    Let $\rho^{\mathrm{Pek}}$ be the Pekar measure $\mathrm{d}\rho^{\mathrm{Pek}}:=|\varphi^\mathrm{Pek}|^2\mathrm{d}x$ and let $\Phi_\alpha$ be defined in Eq.~(\ref{Eq:Def_Phi_State}). Then there exists an $\epsilon>0$ such that for $T\geq 1$
    \begin{align*}
        \sum_{n=0}^\infty \int_{\mathbb R^{3n}}\|g_T \! * \! (\rho_{Y_n}-\rho^{\mathrm{Pek}})\|_{\mathrm{TV}} \! \left(\int_{\mathbb R^3} |\Phi_\alpha(x;Y_n)|^2 \mathrm{d}x\right)\! \mathrm{d}Y_n\leq \alpha^{-\epsilon}T^{\frac{3}{2}}.
    \end{align*}
\end{lem}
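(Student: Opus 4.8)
The plan is to deduce the mollified total-variation estimate from the Bose--Einstein condensation bound in Lemma~\ref{Lem:BEC_Imported} by first passing from the empirical measure $\rho_{Y_n}$ to the semi-classical density $\rho^\mathrm{Pek}$ via a coherent-state comparison, and then controlling $\|g_T*(\rho_{Y_n}-\rho^\mathrm{Pek})\|_{\mathrm{TV}}$ by a finite sum of creation/annihilation expectations. The point is that $g_T*\nu$ has a bounded density, so its total variation is an $L^1$-norm, and one can estimate $\|g_T*(\rho_{Y_n}-\rho^\mathrm{Pek})\|_{\mathrm{TV}}$ by a dual pairing $\sup_{\|h\|_\infty\le 1}\int h\,\mathrm{d}(g_T*\rho_{Y_n}) - \int h\,\mathrm{d}(g_T*\rho^\mathrm{Pek})$. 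Writing $\int h\,\mathrm{d}(g_T*\rho_{Y_n}) = \int (g_T*h)\,\mathrm{d}\rho_{Y_n}$ and recalling Eq.~(\ref{Eq:Multiplication_Operator_in_terms_of_a}), the quantity $\int (g_T*h)\,\mathrm{d}\rho_{Y_n}$ is the multiplication operator associated to the one-body operator with symbol $g_T*h$.

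First I would reduce to a pointwise (in $Y_n$) estimate: using a partition of $\mathbb{R}^3$ into unit cubes $\{Q_k\}$ and the fact that $g_T$ is supported in a ball of radius $\lesssim T^{-1}\le 1$, one can bound $\|g_T*(\rho_{Y_n}-\rho^\mathrm{Pek})\|_{\mathrm{TV}} \lesssim T^{3/2}\sum_k \big|\rho_{Y_n}(\widetilde Q_k) - \rho^\mathrm{Pek}(\widetilde Q_k)\big|^{1/2}\,(\text{something})$, or more efficiently one estimates it by a weighted $\ell^2$-sum; the cleanest route is Cauchy--Schwarz over the cubes, giving $\|g_T*(\rho_{Y_n}-\rho^\mathrm{Pek})\|_{\mathrm{TV}}^2 \lesssim T^3 \sum_k \langle \text{local number deviation}\rangle^2$, after which one takes the $\Phi_\alpha$-expectation and compares with the global number fluctuation controlled by Lemma~\ref{Lem:BEC_Imported}. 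Concretely, after conjugating by the Weyl operator $W_{\varphi^\mathrm{Pek}}$, the deviation $\int (g_T*h)\,\mathrm{d}\rho_{Y_n} - \int (g_T*h)\,\mathrm{d}\rho^\mathrm{Pek}$ becomes a term linear plus quadratic in the shifted annihilation/creation operators $\widetilde a(f):=a(f)-\langle f,\varphi^\mathrm{Pek}\rangle$, and each such term is bounded in operator norm by $\|g_T*h\|_{\text{op}}$ times $W^{-1}_{\varphi^\mathrm{Pek}}(\mathcal N+\alpha^{-2})W_{\varphi^\mathrm{Pek}}$ to suitable powers; since $\|g_T*h\|_\infty\le\|h\|_\infty$ and $\|\nabla(g_T*h)\|\lesssim T$ the relevant operator norms scale like $T^{3/2}$ at worst. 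Taking the $\Phi_\alpha$-expectation and invoking Lemma~\ref{Lem:BEC_Imported} (which gives $\langle\Phi_\alpha, W^{-1}_{\varphi^\mathrm{Pek}}\mathcal N W_{\varphi^\mathrm{Pek}}\Phi_\alpha\rangle\le C\alpha^{-2/29}$), together with the support bound $\mathcal N\lesssim 1$ on $\mathrm{supp}(\Phi_\alpha)$ to handle higher moments, yields the claimed bound with $\epsilon$ slightly smaller than $1/29$.

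A slightly different organization, which avoids the cube decomposition, is to test directly: for fixed $\|h\|_\infty\le 1$ the operator $\int(g_T*h)\,\mathrm{d}\rho_{Y_n}$ is a multiplication operator of the form $\mathcal M(\rho_{Y_n})$, whose deviation from its Pekar value I would split as
\begin{align*}
    \mathcal M(\rho_{Y_n}) - \int(g_T*h)\,\mathrm{d}\rho^\mathrm{Pek} = \widetilde a^*(g_T*h)\widetilde a(\mathds{1}) + \text{(conjugate)} + \widetilde a^*(\cdot)\widetilde a(\cdot)\text{-type terms},
\end{align*}
using the standard second-quantization identity after the Weyl shift, and then bound the supremum over $h$ by Cauchy--Schwarz, landing on $\langle\Phi_\alpha, (W^{-1}_{\varphi^\mathrm{Pek}}\mathcal N W_{\varphi^\mathrm{Pek}})\Phi_\alpha\rangle^{1/2}$ times a $T^{3/2}$ factor coming from $\sum_j\|g_T * u_j\|^2$-type sums truncated to the relevant momentum scale $\lesssim T$. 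The $T^{3/2}$ is the dimensional count $\big(\text{volume in momentum space of radius }T\big)^{1/2} = T^{3/2}$.

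The main obstacle I expect is the bookkeeping of the $T$-dependence: one must verify that mollifying by $g_T$ really does only cost a factor $T^{3/2}$ and not more, and that the quadratic-in-$\widetilde a$ remainder terms are genuinely subleading (they contribute $\mathcal N^2$-type expectations, controlled by combining $\langle\Phi_\alpha, W^{-1}_{\varphi^\mathrm{Pek}}\mathcal N W_{\varphi^\mathrm{Pek}}\Phi_\alpha\rangle\le C\alpha^{-2/29}$ with the deterministic bound $\mathcal N\lesssim 1$ on $\mathrm{supp}(\Phi_\alpha)$, giving $\langle\mathcal N^2\rangle\lesssim \alpha^{-2/29}$ as well). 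A secondary technical point is that the total variation of $g_T*\nu$ for a signed measure $\nu$ is exactly $\int|(g_T*\nu)(y)|\,\mathrm{d}y$, and to turn this $L^1$-norm into a testable quantity one needs the dual formulation with $h$ ranging over bounded measurable functions, after which the passage $\int h\,\mathrm{d}(g_T*\nu) = \int(g_T*h)\,\mathrm{d}\nu$ is just Fubini; but one has to make sure the supremum over $h$ commutes appropriately with the $Y_n$-integration, which is handled by taking $h$ depending measurably on $Y_n$ or, more safely, by the cube-decomposition route where the supremum is replaced by a fixed finite sum of signs.
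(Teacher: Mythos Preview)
Your overall direction—reduce to the BEC estimate of Lemma~\ref{Lem:BEC_Imported} after a Weyl shift, and track the $T$-dependence through the mollifier—is the same as the paper's. But both execution routes you sketch have the gaps you yourself flag, and the paper resolves them by a device you do not mention: it never takes a supremum over test functions and never discretizes into cubes. Instead it writes the total variation directly as the $L^1$-integral
\[
\|g_T*(\rho_{Y_n}-\widetilde\rho_R)\|_{\mathrm{TV}}=\int_{\mathbb R^3}\Bigl|\textstyle\int g_{T,x}\,\mathrm{d}\rho_{Y_n}-\int g_{T,x}\,\mathrm{d}\widetilde\rho_R\Bigr|\,\mathrm{d}x,\qquad g_{T,x}(y):=g_T(y-x),
\]
after first truncating $\varphi^{\mathrm{Pek}}$ to $\varphi_R:=\chi(|\cdot|\le R)\varphi^{\mathrm{Pek}}$ and setting $\mathrm{d}\widetilde\rho_R:=|\varphi_R|^2\,\mathrm{d}x$. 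For each fixed $x$ the integrand is a scalar multiplication operator in $Y_n$; the paper bounds its \emph{square} by Cauchy--Schwarz after the Weyl shift (giving three commuting nonnegative pieces $A_1,A_2,A_3$), and then—because everything is a multiplication operator and hence commutes—applies operator monotonicity of $t\mapsto\sqrt t$ to pass back to a pointwise bound on the absolute value. Only then does one integrate over $x$, using the exact identities $\int_{\mathbb R^3}\langle u_m,g_{T,x}u_n\rangle\,\mathrm{d}x=\delta_{n,m}$ and $\int_{\mathbb R^3}\|g_{T,x}\varphi_R\|^2\,\mathrm{d}x=T^3\|g\|^2\|\varphi_R\|^2$. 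The $R$-truncation is essential: it makes the remaining volume integral $\int\chi(|x|\le R+r)\,\mathrm{d}x\sim R^3$ finite (the role your infinite cube sum would have had to play), and its error is controlled by $\|\varphi^{\mathrm{Pek}}-\varphi_R\|\lesssim R^{-1/2}$ from Lemma~\ref{Lem:Semiclassical_objects_properties}. One then optimizes over an auxiliary Cauchy--Schwarz parameter $\lambda$ and over $R$; the $T^{3/2}$ arises as $\sqrt{T^3}$ from balancing $\lambda T^3$ against $\lambda^{-1}R^3\langle W^{-1}_{\varphi^{\mathrm{Pek}}}\mathcal N W_{\varphi^{\mathrm{Pek}}}\rangle$, not from a momentum-space volume count.

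Concretely, your cube route fails as written because the step $\|\cdot\|_{\mathrm{TV}}^2\lesssim T^3\sum_k(\cdots)^2$ is Cauchy--Schwarz in the wrong direction for an infinite index set, and without a spatial truncation like the paper's $R$-cutoff the sum over $k$ diverges. Your dual-testing route cannot be completed without handling the $Y_n$-dependent optimizer $h$, and the fix you suggest (take $h$ measurable in $Y_n$) just brings you back to the $L^1$-integral the paper starts from. The missing ingredients are therefore (i) the $R$-truncation of $\varphi^{\mathrm{Pek}}$ to produce a finite volume factor, and (ii) the square-then-square-root trick for commuting multiplication operators, which is what allows the $x$-integration to be carried out \emph{after} taking the absolute value pointwise in $Y_n$.
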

\begin{proof}
    Let us introduce for $R>0$ the notation
    \begin{align*}
         \varphi_R: & =\chi(|\cdot| \leq R)\varphi^\mathrm{Pek},\\
         \mathrm{d}\widetilde \rho_R: & =\left|\varphi^\mathrm{Pek}_R\right|^2\mathrm{d}x,\\
         g_{T,x}(y): & =g_T(y-x),
    \end{align*}
and express the total variation between the convoluted measures as
    \begin{align}
    \label{Eq:Expression_Metric_Empirical_Measure}
        \|g_T \! * \! (\rho_{Y_n}-\widetilde \rho_R)\|_{\mathrm{TV}}=\int_{\mathbb R^3} \left|\int g_{T,x}\mathrm{d}\rho_{Y_n}-\int g_{T,x}\mathrm{d}\widetilde \rho_R\right|\mathrm{d}x.
    \end{align}
    Using an orthonormal basis $\{u_n:n\in \mathbb N\}$ and the notation 
    \begin{align*}
        G_{n,m}(x): &  =\braket{u_m,g_{T,x} u_n},\\
         \varphi_{R,n}: & =\braket{u_n, \varphi_R},\\
         b_n: & =a(u_n)-\varphi_{R,n},
    \end{align*}
   we can express by Eq.~(\ref{Eq:Multiplication_Operator_in_terms_of_a}) for fixed $x\in \mathbb R^3$ the multiplication operator by $Y_n\mapsto \int g_{T,x}\mathrm{d}\rho_{Y_n}$, acting on the Hilbert space $L^2\! \left(\mathbb{R}^3\times \underset{n\in \mathbb N}{\bigcup}  \mathbb{R}^{3 n}\right)$, as
    \begin{align*}
        & \int g_{T,x}\mathrm{d}\rho_{Y_n}-\int g_{T,x}\mathrm{d}\widetilde \rho_R=\sum_{n,m=0}^\infty G_{n,m}(x) \left(a_n^* a_m-\overline \varphi_{R,n}\varphi_{R,m}\right)\\
        & \ \ \ \ \ \  \ \ \  =\sum_{n,m=0}^\infty G_{n,m}(x) \left(b_n^* b_m+b_n^*\varphi_{R,m}+\overline \varphi_{R,n} b_m\right).
    \end{align*}
   By the Cauchy-Schwarz inequality we therefore obtain
    \begin{align*}
        & \left(\int g_x\mathrm{d}\rho_{Y_n}-\int g_x\mathrm{d}\widetilde \rho_R\right)^2\leq 3 \big(A_1+A_2+A_3\big),
    \end{align*}
    with $ A_1 :  =  \left(\sum_{n,m=0}^\infty G_{n,m}(x) b_n^* b_m\right)^2$ and
    \begin{align*}
        A_2 : & =  \left(\sum_{n,m=0}^\infty G_{n,m}(x) b_n^*\varphi_{R,m}\right)\left(\sum_{n,m=0}^\infty G_{n,m}(x) \overline \varphi_{R,n} b_m\right),\\ 
         A_3 : & =  \left(\sum_{n,m=0}^\infty G_{n,m}(x) \overline \varphi_{R,n} b_m\right)\left(\sum_{n,m=0}^\infty G_{n,m}(x) b_n^*\varphi_{R,m}\right).
    \end{align*}
    Using the canonical commutation relations $[b_m,b_n^*]=\alpha^{-2}\delta_{m,n}$ we obtain
    \begin{align*}
        A_3=A_2+\alpha^{-2}\|g_{T,x}\varphi_R\|^2.
    \end{align*}
    Furthermore, let $G'(x):L^2\! \left(\mathbb R^3\right)\longrightarrow L^2\! \left(\mathbb R^3\right)$ be the rank one operator, defined in coordinates as $G'_{n,m}(x)=\braket{u_n,g_x \varphi_R}\overline{\braket{u_m,g_x \varphi_R}}$. Clearly
    \begin{align*}
        \|G'(x)\|_\mathrm{op}= \|g_{T,x} \varphi_R\|^2,
    \end{align*}
    and therefore
    \begin{align*}
        A_2 = \sum_{n,m=0}^\infty G'_{n,m}(x) b_n^* b_m\leq \|g_{T,x} \varphi_R\|^2 \sum_{n=0}^\infty b_n^* b_n.
    \end{align*}
    In the following let $r>0$ be large enough such that $\mathrm{supp}(g)\subseteq B_r(0)$, and note that $g_x \varphi_R=0$ in case $|x|>R+r$. Summarizing what we have so far yields for $\lambda>0$
    \begin{align*}
       & \left(\int g_{T,x}\mathrm{d}\rho_{Y_n}-\int g_{T,x}\mathrm{d}\widetilde \rho_R\right)^2\leq 3 \left(\sum_{n,m=0}^\infty G_{n,m}(x) b_n^* b_m\right)^2+3\|g_{T,x} \varphi_R\|^2 \left(2\sum_{n=0}^\infty b_n^* b_n + \alpha^{-2}\right) \\
       & \ \ \ \leq  3 \left(\sum_{n,m=0}^\infty G_{n,m}(x) b_n^* b_m\right)^2+3\lambda\|g_{T,x} \varphi_R\|^4 +\chi(|x| \! \leq  \! R \! + \! r)3\lambda^{-1}\left(\sum_{n=0}^\infty b_n^* b_n + \alpha^{-2}\right)^2\\
         & \ \ \ \leq  3\left(\sum_{n,m=0}^\infty G_{n,m}(x) b_n^* b_m+\lambda \|g_{T,x} \varphi_R\|^2+\lambda^{-1} \chi(|x| \! \leq  \! R \! + \! r)\left(\sum_{n=0}^\infty b_n^* b_n+\alpha^{-2}\right)\right)^2,
    \end{align*}
    where we have used in the last estimate that all the components $\sum_{n,m=0}^\infty G_{n,m}(x) b_n^* b_m$, $\|g_{T,x} \varphi_R\|^2$ and $\chi(|x| \! \leq  \! R \! + \! r)\left(\sum_{n=0}^\infty b_n^* b_n+1\right)$ are non-negative and commute. Since the square root $t\mapsto \sqrt{t}$ is operator monotone, we therefore have by Eq.~(\ref{Eq:Expression_Metric_Empirical_Measure})
    \begin{align*}
        & \ \ \ \  \ \  \ \   \ \ \ \  \ \  \ \  \|g_T \! * \! (\rho_{Y_n} \! - \! \widetilde \rho_R)\|_{\mathrm{TV}}\\
        & \leq  \sqrt 3  \! \int_{\mathbb R^3} \! \left(  \sum_{n,m=0}^\infty G_{n,m}(x) b_n^* b_m \! + \! \lambda\|g_{T,x} \varphi_R\|^2 \! + \! \lambda^{-1}\chi(|x| \! \leq  \! R \! + \! r) \! \left(\sum_{n=0}^\infty b_n^* b_n \! + \! \alpha^{-2}\right) \! \right)\mathrm{d}x \\
         & =   \! \sqrt 3  \!  \sum_{n,m=0}^\infty   \! \left(\int_{\mathbb R^3}G_{n,m}(x)\mathrm{d}x\right)  b_n^* b_m   \! +   \! \lambda\left\langle \varphi_R , \int_{\mathbb R^3}g_{T,x}^2 \mathrm{d}x \, \varphi_R \right\rangle   \! +  \! \lambda^{-1}\frac{4\pi (R  \! +  \! r)^3}{3}\left(\sum_{n=0}^\infty b_n^* b_n \! + \! \alpha^{-2}\right)  \!   \! .
    \end{align*}
    Using the fact that $\int_{\mathbb R^3}G_{n,m}(x)\mathrm{d}x=\int_{\mathbb R^3}g_T(y)\mathrm{d}y\delta_{n,m}=\delta_{n,m}$, we obtain for a suitable constant $C>0$ and all $R\geq 1$ the operator inequality
    \begin{align*}
        \|g \! * \! (\rho_{Y_n} \! - \! \widetilde \rho_R)\|_{\mathrm{TV}} \leq C \left(\left(1+\lambda^{-1}R^3\right)\! \! \sum_{n=0}^\infty b_n^* b_n +\lambda T^3 + \alpha^{-2}R^3\right),
    \end{align*}
where we have used $\int_{\mathbb R^3}g_{T,x}^2 \mathrm{d}x=T^3\|g\|^2$. Furthermore, note that for a suitable $C>0$
\begin{align*}
  \|g \! * \! (\rho_{Y_n} \!  -  \!  \rho^\mathrm{Pek})\|_{\mathrm{TV}}  & \leq   \|g \! * \! (\rho_{Y_n}   \!  - \!   \widetilde \rho_R)\|_{\mathrm{TV}}+\|g \! * \! (\rho^\mathrm{Pek}   \!  -  \!  \widetilde \rho_R)\|_{\mathrm{TV}}\leq \|g \! * \! (\rho_{Y_n}  \!   - \!   \widetilde \rho_R)\|_{\mathrm{TV}}+\frac{C}{R} 
  \end{align*}
  and
  \begin{align*}
\sum_{n=0}^\infty b_n^* b_n    & \leq 2 \sum_{n=0}^\infty \left(a(u_n)-\varphi^\mathrm{Pek}_n\right)^* \left(a(u_n)-\varphi^\mathrm{Pek}_n\right)+2\|\varphi^\mathrm{Pek}-\varphi_R\|^2\\
   & \leq 2 \sum_{n=0}^\infty \left(a(u_n)-\varphi^\mathrm{Pek}_n\right)^* \left(a(u_n)-\varphi^\mathrm{Pek}_n\right)+\frac{C}{R},
\end{align*}
see Lemma \ref{Lem:Semiclassical_objects_properties}, and 
\begin{align*}
   \sum_{n=0}^\infty \left(a(u_n)-\varphi^\mathrm{Pek}_n\right)^* \left(a(u_n)-\varphi^\mathrm{Pek}_n\right)=  W^{-1}_{\varphi^{\mathrm{Pek}}}\mathcal{N}W_{\varphi^{\mathrm{Pek}}},
\end{align*}
where $W_{\varphi^{\mathrm{Pek}}}$ is the Weyl operator defined in Eq.~(\ref{Eq:Def_Weyl_Operator}). Consequently
\begin{align*}
   & \ \ \sum_{n=0}^\infty \int_{\mathbb R^{3n}}\|g \! * \! (\rho_{Y_n}-\rho^{\mathrm{Pek}})\|_{\mathrm{TV}} \! \left(\int_{\mathbb R^3} |\Phi_\alpha(x;Y_n)|^2 \mathrm{d}x\right)\! \mathrm{d}Y_n\\
    & \leq 2 C \left(1+\lambda^{-1}R^3\right) \! \! \left\langle \Phi_\alpha, W^{-1}_{\varphi^{\mathrm{Pek}}}\mathcal{N}W_{\varphi^{\mathrm{Pek}}} \, \Phi_\alpha\right\rangle +2C\left(\lambda T^3+\alpha^{-2}R^3+R^{-1}\right).
\end{align*}
By our choice of $\eta$ and Lemma \ref{Lem:BEC_Imported} there exist $\kappa,C>0$ such that
\begin{align*}
     \left\langle \Phi_\alpha, W^{-1}_{\varphi^{\mathrm{Pek}}}\mathcal{N}W_{\varphi^{\mathrm{Pek}}} \, \Phi_\alpha\right\rangle\leq C\alpha^{-\kappa}.
\end{align*}
Optimizing in $\lambda$ and $R$ concludes the proof.
\end{proof}

Finally, let us show in the following Lemma \ref{Lem:Electron_Wave_Function} that the state $\Phi_\alpha$ is close to a tensor state of the form $\psi^\mathrm{Pek}\otimes \Phi_\alpha'$ for some $\Phi_\alpha'\in \mathcal{F}$.

\begin{lem}
    \label{Lem:Electron_Wave_Function}
    Recall the definition of $\psi^\mathrm{Pek}$ and $Q^\mathrm{Pek}$ from Subsection \ref{Subsec:Semi-Classical_Objects}, and let
    \begin{align*}
        Q^\mathrm{Pek}_*:=Q^\mathrm{Pek}\otimes 1_\mathcal{F}.
    \end{align*}
   Furthermore let $\Phi_\alpha$ be the state defined in Eq.~(\ref{Eq:Def_Phi_State}). Then there exist $C,\epsilon>0$ such that
    \begin{align*}
        \|Q^\mathrm{Pek}_*\Phi_\alpha \|\leq C\alpha^{-\epsilon}.
    \end{align*}
\end{lem}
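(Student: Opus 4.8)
The strategy is to deduce the closeness of $\Phi_\alpha$ to a product state with the Pekar electron wave function $\psi^\mathrm{Pek}$ from the energy estimate in Lemma \ref{Lem:Energy_Estimates_tilde} together with the Bose-Einstein condensation result in Lemma \ref{Lem:BEC_Imported}. Recall that for $K:=\alpha$ we have $\left\langle \Phi_\alpha,\mathbb{H}^K\Phi_\alpha\right\rangle\leq E_\alpha+\alpha^{-\frac{4}{29}}\leq e^\mathrm{Pek}+\alpha^{-\frac{4}{29}}$, and that by Lemma \ref{Lem:BEC_Imported} the phonon field is, in expectation, close to $\varphi^\mathrm{Pek}$ in the sense that $\left\langle\Phi_\alpha, W_{\varphi^\mathrm{Pek}}^{-1}\mathcal{N}W_{\varphi^\mathrm{Pek}}\Phi_\alpha\right\rangle\leq C\alpha^{-\frac{2}{29}}$. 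The idea is that, after replacing the phonon field by the coherent state $\Xi_{\varphi^\mathrm{Pek}}$ at the cost of the small condensation error, the electron marginal must sit near the ground state of the one-particle Schrödinger operator $-\Delta_x-2v*\varphi^\mathrm{Pek}$, whose unique ground state is $\psi^\mathrm{Pek}$ with a spectral gap above it.

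First I would write $\Phi_\alpha = P^\mathrm{Pek}_*\Phi_\alpha + Q^\mathrm{Pek}_*\Phi_\alpha$ and lower-bound $\left\langle\Phi_\alpha,\mathbb{H}^K\Phi_\alpha\right\rangle$ by localizing in the phonon number: using the Weyl transformation $W_{\varphi^\mathrm{Pek}}$ one rewrites $\mathbb{H}^K$ (in the Lee-Low-Pines frame, acting on $\overline\Psi_\alpha = \mathcal{T}^*\Phi_\alpha$ if convenient, or directly) so that its expectation equals $\mathcal{E}^\mathrm{Pek}(\psi,\varphi^\mathrm{Pek})$-type terms plus contributions controlled by powers of $W_{\varphi^\mathrm{Pek}}^{-1}\mathcal{N}W_{\varphi^\mathrm{Pek}}$. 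Concretely, expanding $a^*(v^K_x)+a(v^K_x)$ around the coherent state and using $|\psi^\mathrm{Pek}|^2*v=\varphi^\mathrm{Pek}$ from Lemma \ref{Lem:Semiclassical_objects_properties}, the leading term is $\left\langle\Phi_\alpha,\big(-\Delta_x-2v^K*\varphi^\mathrm{Pek}+\|\varphi^\mathrm{Pek}\|^2\big)\Phi_\alpha\right\rangle$, and the cross terms and $\mathcal{N}$-terms are bounded by $C(\alpha^{-\frac{1}{29}}+K^{-1/2})$ by Cauchy-Schwarz in the $b_n=a(u_n)-\varphi^\mathrm{Pek}_n$ variables and Lemma \ref{Lem:BEC_Imported} (the UV cutoff makes $\|v^K\|\lesssim K^{1/2}$, and one picks the Cauchy-Schwarz weight to balance, exactly as in the proof of Theorem \ref{Th:Quadratic_Upper_Bound}). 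Since $\|v^K*\varphi^\mathrm{Pek}-v*\varphi^\mathrm{Pek}\|_\infty\lesssim K^{-1/2}$ (mollification), we conclude
\begin{align*}
    \left\langle\Phi_\alpha,\big(-\Delta_x-2v*\varphi^\mathrm{Pek}\big)\Phi_\alpha\right\rangle+\|\varphi^\mathrm{Pek}\|^2\leq e^\mathrm{Pek}+C\alpha^{-\epsilon}
\end{align*}
for a suitable $\epsilon>0$.

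Next, since $e^\mathrm{Pek}=\|\varphi^\mathrm{Pek}\|^2+\inf\sigma(-\Delta-2v*\varphi^\mathrm{Pek})$ and $\psi^\mathrm{Pek}$ is the unique ground state with a spectral gap $g>0$ (the operator $-\Delta-2v*\varphi^\mathrm{Pek}$ has a nondegenerate lowest eigenvalue, see \cite{Li}), the operator inequality $-\Delta-2v*\varphi^\mathrm{Pek}\geq \big(\inf\sigma+g\big)Q^\mathrm{Pek}+\inf\sigma\, P^\mathrm{Pek}$ holds on $L^2(\mathbb{R}^3)$, hence $-\Delta_x-2v*\varphi^\mathrm{Pek}\geq \big(\inf\sigma+g\big)Q^\mathrm{Pek}_*+\inf\sigma\, P^\mathrm{Pek}_*$ on $L^2(\mathbb{R}^3)\otimes\mathcal{F}$. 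Taking the expectation in $\Phi_\alpha$ and combining with the displayed energy bound gives
\begin{align*}
    \big(\inf\sigma+g\big)\|Q^\mathrm{Pek}_*\Phi_\alpha\|^2+\inf\sigma\,\big(1-\|Q^\mathrm{Pek}_*\Phi_\alpha\|^2\big)\leq \inf\sigma+C\alpha^{-\epsilon},
\end{align*}
i.e. $g\|Q^\mathrm{Pek}_*\Phi_\alpha\|^2\leq C\alpha^{-\epsilon}$, which yields $\|Q^\mathrm{Pek}_*\Phi_\alpha\|\leq C'\alpha^{-\epsilon/2}$ as claimed.

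\textbf{Main obstacle.} The delicate point is the first step: rigorously reducing the three-term expression $\left\langle\Phi_\alpha,\mathbb{H}^K\Phi_\alpha\right\rangle$ to the electron Schrödinger energy with an error that is a genuine power of $\alpha$. One must handle the ultraviolet cutoff carefully (it is why $K=\alpha$ rather than $\infty$), control the interaction cross term $\mathfrak{Re}\left\langle\Phi_\alpha, a^*(v^K_x)Q^{\varphi^\mathrm{Pek}}\Phi_\alpha\right\rangle$ where $Q^{\varphi^\mathrm{Pek}}$ projects off the coherent-state direction — this is where Lemma \ref{Lem:BEC_Imported} enters, via $\|(\mathcal{N}^{\varphi^\mathrm{Pek}})^{1/2}\Phi_\alpha\|\lesssim\alpha^{-1/29}$ — and absorb the remaining $W_{\varphi^\mathrm{Pek}}^{-1}\mathcal{N}W_{\varphi^\mathrm{Pek}}$ contribution, which is $\geq 0$ and thus harmless in a lower bound. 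This is essentially the same manipulation already carried out in the proof of Theorem \ref{Th:Quadratic_Upper_Bound} for $\mathcal{G}$, so it should go through; one just has to verify the exponent bookkeeping so that all errors are $O(\alpha^{-\epsilon})$ for some fixed $\epsilon>0$ given the constraints $\frac{8}{29}<s\leq 1$ on $K=\alpha^s$.
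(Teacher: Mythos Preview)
Your overall strategy is exactly that of the paper: combine the energy bound from Lemma~\ref{Lem:Energy_Estimates_tilde} with the Bose--Einstein condensation from Lemma~\ref{Lem:BEC_Imported} to reduce to the one-body Schr\"odinger operator $h=-\Delta_x-2v*\varphi^{\mathrm{Pek}}$, and then use its spectral gap above $\psi^{\mathrm{Pek}}$. Your final step (the spectral gap argument) is carried out in the paper in the same way.

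There is, however, a genuine technical gap in your reduction step. You claim the cross term
\[
\mathfrak{Re}\sum_j \langle v^K_x,u_j\rangle\,\langle\Phi_\alpha,b_j\Phi_\alpha\rangle
\]
can be bounded by $C(\alpha^{-1/29}+K^{-1/2})$ via a balanced Cauchy--Schwarz ``exactly as in the proof of Theorem~\ref{Th:Quadratic_Upper_Bound}''. But a direct Cauchy--Schwarz gives at best $\lambda\|v^K\|^2+\lambda^{-1}\langle\Phi_\alpha,\widetilde{\mathcal N}\Phi_\alpha\rangle$, and optimising in $\lambda$ yields $C\,K^{1/2}\alpha^{-1/29}$, which for $K=\alpha$ is $\alpha^{1/2-1/29}$ and diverges. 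The analogy with Theorem~\ref{Th:Quadratic_Upper_Bound} is misleading: there the ultraviolet parameter $\Lambda$ is chosen to be a \emph{small} power of $\alpha$, whereas here $K=\alpha$ is large.

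The paper circumvents this by the Lieb--Yamazaki decomposition (Lemma~\ref{Lem:Semiclassical_objects_properties}): one writes $v^K_x=v^1_x+\big[\tfrac{1}{i}\nabla_x,\widetilde w^K_x\big]$ with $\|v^1\|,\|\widetilde w^K\|\lesssim 1$ uniformly in $K$, and estimates the two pieces separately, borrowing a fraction $\lambda(-\Delta_x)$ of the kinetic energy for the commutator piece. This makes the bound $K$-independent and, after optimising in $\lambda$, yields $C\alpha^{-\kappa/2}$. You correctly identified the UV cutoff as the main obstacle, but the fix is not the Cauchy--Schwarz balancing from Theorem~\ref{Th:Quadratic_Upper_Bound}; it is this commutator trick together with the a priori bound $\langle\Phi_\alpha,-\Delta_x\Phi_\alpha\rangle\lesssim 1$ (which itself follows from Eq.~\eqref{Eq:Particle_Number_Energy_bound_Full_Space} and the energy estimate).
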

\begin{proof}
    Combining Eq.~(\ref{Eq:Energy_Localized_State_Pre}), for $K:=\alpha$, and the trivial observation that $E_\alpha\leq e^\mathrm{Pek}$ yields 
    \begin{align}
    \label{Eq:Comparison_With_Pekar_Energy}
        & \left\langle  \Phi_\alpha,-\Delta_x  \Phi_\alpha\right\rangle_{L^2(\mathbb R^3)\otimes \mathcal{F}}+ \left\langle  \Phi_\alpha,\mathcal{N}  \Phi_\alpha\right\rangle_{L^2(\mathbb R^3)\otimes \mathcal{F}}- 2\mathfrak{Re}\left\langle  \Phi_\alpha,a(v^K_{x})  \Phi_\alpha\right\rangle_{L^2(\mathbb R^3)\otimes \mathcal{F}}\\
        \nonumber
        & \ \ \ \ \ \ \ \ \ \ \  =\left\langle \Phi_\alpha, \mathbb H^K \Phi_\alpha \right\rangle_{L^2(\mathbb R^3)\otimes \mathcal{F}}\leq e^\mathrm{Pek}+\alpha^{-\frac{4}{29}}.
    \end{align}
    In order to analyse the term $\left\langle  \Phi_\alpha,a(v^K_{x})  \Phi_\alpha\right\rangle$, let us define the operators
    \begin{align*}
        b_j: & =a(u_n)-\left\langle u_n ,\varphi^{\mathrm{Pek}}\right \rangle,\\
        \widetilde{\mathcal{N}}: & =\sum_{j=0}^\infty b_j^* b_j=W^{-1}_{\varphi^{\mathrm{Pek}}}\mathcal{N}W_{\varphi^{\mathrm{Pek}}},
    \end{align*}
    where $\{u_n:n\in \mathbb N\}$ is an orthonormal basis of $L^2\! \left(\mathbb R^3\right)$ and $W_{\varphi^{\mathrm{Pek}}}$ is the Weyl transformation introduced in Eq.~(\ref{Eq:Def_Weyl_Operator}). With this at hand we can write
    \begin{align}
    \label{Eq:Interaction_Split_For_Electron}
        \left\langle  \Phi_\alpha,a(v^K_{x})  \Phi_\alpha\right\rangle_{L^2(\mathbb R^3)\otimes \mathcal{F}} \! = \!  \left\langle  \Phi_\alpha,\left\langle v^K_{x},\varphi^\mathrm{Pek}
        \right\rangle \Phi_\alpha \right\rangle_{L^2(\mathbb R^3)\otimes \mathcal{F}} \! + \! \sum_{j=0}^\infty \left\langle v^K_{x},u_j
        \right\rangle  \left\langle  \Phi_\alpha,b_j  \Phi_\alpha\right\rangle_{L^2(\mathbb R^3)\otimes \mathcal{F}}.
    \end{align}
    Since $\varphi^\mathrm{Pek}=\left|\psi^\mathrm{Pek}\right|^2*v$ and $\psi^\mathrm{Pek}\in H^2\! \left(\mathbb R^3\right)$, see \cite{Li,MS}, we have $\left\||k|^2 \widehat{\left|\psi^\mathrm{Pek}\right|^2}(k)\right\|_\infty<\infty$ and
    \begin{align*}
        \sup_{x\in \mathbb R^3}\left|\left\langle v_{x},\varphi^\mathrm{Pek}
        \right\rangle \! - \! \left\langle v^K_{x},\varphi^\mathrm{Pek}
        \right\rangle\right| \! \lesssim \!  \int_{|k|>K}\left|\widehat{\left|\psi^\mathrm{Pek}\right|^2}(k)\right| |\widehat{v}(k)|^2\mathrm{d}k \! \lesssim \! \int_{|k|>K}|k|^{-4}\mathrm{d}k \! \lesssim \!  \frac{1}{K} \! = \! \alpha^{-1}.
    \end{align*}
    Defining the multiplication operator $V(x):=-2\mathfrak{Re}\left\langle v_{x},\varphi^\mathrm{Pek}
        \right\rangle$ acting on $L^2 \! \left(\mathbb R^3\right)$, there consequently exists a constant $C>0$ such that
    \begin{align}
        \label{Eq:Potential_Electron_Cutoff}
        -2\mathfrak{Re}\left\langle  \Phi_\alpha,\left\langle v^K_{x},\varphi^\mathrm{Pek}
        \right\rangle \Phi_\alpha \right\rangle_{L^2(\mathbb R^3)\otimes \mathcal{F}}\geq \left\langle  \Phi_\alpha,V\otimes 1_\mathcal{F}\,  \Phi_\alpha \right\rangle-\frac{C}{\alpha}.
    \end{align}
    Regarding the second term in Eq.~(\ref{Eq:Interaction_Split_For_Electron}), let us define for $i\in\{1,2,3\}$ the elements 
    \begin{align*}
        \widetilde w^{K}_{x}:=\frac{\chi(1\leq |\nabla|\leq K)}{i\Delta}\nabla v_x
    \end{align*}
and compute in analogy to Lemma \ref{Lem:Semiclassical_objects_properties}
    \begin{align*}
        v^K_{x}=v^1_{x}+\left[\frac{1}{i}\nabla_x,\widetilde w^{K}_{x}\right],
    \end{align*}
  see also \cite{LY}. Following the approach in \cite[Section 2]{BS1} we have for $\lambda>0$ the estimates
    \begin{align*}
      \mathfrak{Re}\sum_{j=0}^\infty \left\langle v_{x}^1,u_j
        \right\rangle  b_j  & \leq \lambda \|v^1\|^2+\lambda^{-1}\widetilde{\mathcal{N}}, \\
      \mathfrak{Re}\sum_{j=0}^\infty \left[\frac{1}{i}\nabla_x,\left\langle \widetilde w^{K}_{x},u_j
        \right\rangle  b_j\right]   & \leq -\lambda \Delta_x+\lambda^{-1}\|\widetilde w^{K}\|^2\left(\widetilde{\mathcal{N}}+\alpha^{-2}\right).
    \end{align*}
Using that $\|v^1\|\lesssim 1$ and $\|\widetilde w^{K}\|^2\leq \| w^{K}\|^2\lesssim 1$, see Lemma \ref{Lem:Semiclassical_objects_properties}, as well as $\left\langle \Phi_\alpha, \widetilde{\mathcal{N}}\Phi_\alpha\right\rangle\leq C\alpha^{-\kappa}$, see Lemma \ref{Lem:BEC_Imported}, and 
\begin{align*}
     \left\langle \Phi_\alpha, \! -\Delta_x\Phi_\alpha\right\rangle_{ \! L^2(\mathbb R^3)\otimes \mathcal{F}} \! \leq \! \left\langle \Phi_\alpha, \! (\mathbb H^K \! + \! C)\Phi_\alpha\right\rangle_{ \! L^2(\mathbb R^3)\otimes \mathcal{F}} \! \leq \! e^\mathrm{Pek} \! + \! C \! + \! \alpha^{-\frac{4}{29}} \! ,
\end{align*}
see Eq.~(\ref{Eq:Particle_Number_Energy_bound_Full_Space}) and Eq.~(\ref{Eq:Comparison_With_Pekar_Energy}), we obtain for suitable $C,\kappa>0$
\begin{align}
\nonumber
  &  \ \ \ \ \ \ \ -\mathfrak{Re}\sum_{j=0}^\infty \left\langle v^K_{x},u_j \right\rangle  \left\langle  \Phi_\alpha,b_j  \Phi_\alpha\right\rangle_{L^2(\mathbb R^3)\otimes \mathcal{F}}\\
  \nonumber
  &\geq \lambda \|v^1\|^2+\lambda^{-1}\alpha^{-2} +\lambda^{-1}\left(1+\|w^{K}\|^2\right)\left\langle \Phi_\alpha,\widetilde{\mathcal{N}} \Phi_\alpha \right\rangle_{L^2(\mathbb R^3)\otimes\mathcal{F}}+\lambda  \left\langle \Phi_\alpha, \! -\Delta_x\Phi_\alpha\right\rangle_{ \! L^2(\mathbb R^3)\otimes \mathcal{F}}\\
  \label{Eq:Electron_Projector_modified_number}
        &  \ \ \ \ \ \ \ \ \ \geq -C\lambda-C\lambda^{-1}\alpha^{-\kappa}=-2C\alpha^{-\frac{\kappa}{2}},
\end{align}
where we have optimized in $\lambda$ in the last identity. In a similar fashion we obtain
\begin{align}
\label{Eq:Particle_Number_Plugg_in_varphi}
    \left\langle  \Phi_\alpha,\mathcal{N}  \Phi_\alpha\right\rangle_{L^2(\mathbb R^3)\otimes \mathcal{F}} \geq  \|\varphi^\mathrm{Pek}\|-C\alpha^{-\frac{\kappa}{2}}.
\end{align}
Combining Eq.~(\ref{Eq:Comparison_With_Pekar_Energy}), Eq.~(\ref{Eq:Interaction_Split_For_Electron}), Eq.~(\ref{Eq:Potential_Electron_Cutoff}), Eq.~(\ref{Eq:Electron_Projector_modified_number}) and Eq.~(\ref{Eq:Particle_Number_Plugg_in_varphi}), and defining the operator 
\begin{align*}
  h:=-\Delta_x+V  
\end{align*}
acting on $L^2\! \left(\mathbb R^3\right)$, yields for suitable constants $C,\kappa>0$
\begin{align}
\label{Eq:Electron_h_comparison}
    \left\langle \Phi_\alpha, h\otimes 1_\mathcal{F}\, \Phi_\alpha\right\rangle_{L^2(\mathbb R^3)\otimes \mathcal{F}}\leq e^\mathrm{Pek}-\|\varphi^\mathrm{Pek}\|^2+C\alpha^{-\kappa}.
\end{align}
Note that by the variational definition of $e^\mathrm{Pek}$ in Eq.~(\ref{Eq:Variational_Definition}), it is clear that $e^\mathrm{Pek}-\|\varphi^\mathrm{Pek}\|^2<0$ is the ground state energy of $h$ with the unique ground state $\psi^\mathrm{Pek}$, and since the essential spectrum of $h$ is given by $[0,\infty)$, there exists a $\delta>0$ such that 
\begin{align*}
  \sigma(h)\cap (e^\mathrm{Pek}-\|\varphi^\mathrm{Pek}\|^2-\delta,e^\mathrm{Pek}-\|\varphi^\mathrm{Pek}\|^2+\delta)=\{e^\mathrm{Pek}-\|\varphi^\mathrm{Pek}\|^2\},  
\end{align*}
or equivalently
\begin{align*}
h\geq e^\mathrm{Pek}-\|\varphi^\mathrm{Pek}\|^2+\delta Q^\mathrm{Pek}.
\end{align*}
Together with Eq.~(\ref{Eq:Electron_h_comparison}) we obtain 
\begin{align*}
    \|Q^\mathrm{Pek}_*\Phi_\alpha\|^2=\left\langle \Phi_\alpha, Q^\mathrm{Pek}_*\, \Phi_\alpha\right\rangle_{L^2(\mathbb R^3)\otimes \mathcal{F}}\leq \frac{C}{\delta}\alpha^{-\kappa}.
\end{align*}
\end{proof}

\appendix

\section{Properties of the Semi-Classical minimizers}
\label{Appendix:Properties_of_the_Semi-Classical_minimizers}
In the following we are going to prove the properties of the semi-classical minimizers $(\psi^\mathrm{Pek},\varphi^\mathrm{Pek})$ and the truncated interaction terms $v^\Lambda$ and $w^\Lambda$ stated in Lemma \ref{Lem:Semiclassical_objects_properties}. For the readers convenience, we are going to display the claimed identities and estimates again
   \begin{align}
   \label{Eq:non-named_COPY}
    &     \ \ \ \   \ \ |\psi^\mathrm{Pek}|^2*v=\varphi^\mathrm{Pek},\\
       \label{Eq:Decomposition_in_Lemma_of_interaction_COPY}
 &   \ \    \ \  \ \  v=v^\Lambda+\frac{1}{i}\nabla\cdot w^\Lambda,\\
   \label{Eq:In_Lemma_v_w_Lambda_COPY}
      &  \ \  \left\|v^\Lambda\right\|\leq C\Lambda^{\frac{1}{2}}, \ \ \ \ \left\|w^\Lambda\right\|\leq C\Lambda^{-\frac{1}{2}}, \\
      \label{Eq:High_Momentum_varphi_COPY}
     & \ \   \ \   \left\|\chi\!\left(|\nabla|>\Lambda\right)\varphi^\mathrm{Pek}\right\|\leq C\Lambda^{-\frac{1}{2}},\\
     \label{Eq:High_position_varphi_COPY}
      &    \ \    \ \ \left\|\chi\! \left(|y|>R\right)\varphi^\mathrm{Pek}\right\|\leq C R^{-\frac{1}{2}},\\
      \label{Eq:C_b_COPY}
      &   \left\|\frac{\partial_{y_j} \varphi^\mathrm{Pek}}{\varphi^\mathrm{Pek}}\right\|_\infty+\left\|\partial_{y_k}\frac{\partial_{y_j} \varphi^\mathrm{Pek}}{\varphi^\mathrm{Pek}}\right\|_\infty+\left\| \partial_{y_\ell} \partial_{y_k}\frac{\partial_{y_j} \varphi^\mathrm{Pek}}{\varphi^\mathrm{Pek}}\right\|_\infty<\infty.
   \end{align}
   
\begin{proof}[Proof of Lemma \ref{Lem:Semiclassical_objects_properties}.]
       Eq.~(\ref{Eq:Decomposition_in_Lemma_of_interaction_COPY}) follows immediately from the operator identity 
    \begin{align*}
        1=\chi(|\nabla|\leq \Lambda)+\chi(|\nabla|> \Lambda)=\chi(|\nabla|\leq \Lambda)+\left(\frac{1}{i}\nabla\cdot \frac{i\nabla}{\Delta}\right)\chi(|\nabla|> \Lambda),
    \end{align*}
    where we have used $\Delta=\nabla\cdot \nabla$. Furthermore, note that $\varphi^\mathrm{Pek}$ minimizes the energy functional
    \begin{align*}
        \varphi\mapsto \mathcal{E}^\mathrm{Pek}(\psi^\mathrm{Pek},\varphi)= \left\|\nabla \psi^\mathrm{Pek}\right\|^2-\left\||\psi^\mathrm{Pek}|^2*v\right\|^2+\left\|\varphi-|\psi^\mathrm{Pek}|^2*v\right\|^2,
    \end{align*}
    which obtains its unique minimum at $\varphi=|\psi^\mathrm{Pek}|^2*v$, hence $\varphi^\mathrm{Pek}=|\psi^\mathrm{Pek}|^2*v$. Regarding the proof of Eq.~(\ref{Eq:In_Lemma_v_w_Lambda_COPY}) note that the Fourier transformation of $v^\Lambda$ and $w^\Lambda$ are given by
    \begin{align*}
        \widehat{v}^\Lambda(k)=(2\pi)^{-\frac{1}{2}}\chi(|k|\leq \Lambda)|k|^{-1}, \ \ \ \widehat{w}^\Lambda(k)=(2\pi)^{-\frac{1}{2}}\chi(|k|>\Lambda)|k|^{-3}k.
    \end{align*}
    Consequently,
    \begin{align*}
        \left\|v^\Lambda\right\|^2 & =(2\pi)^{-1}\int_{\mathbb R^3}\chi(|k|\leq \Lambda)|k|^{-2}\mathrm{d}k=2\Lambda,\\
         \left\|w^\Lambda\right\|^2 & =(2\pi)^{-1}\int_{\mathbb R^3}\chi(|k|> \Lambda)|k|^{-4}\mathrm{d}k=2\Lambda^{-1}.
    \end{align*}
  When it comes to Eq.~(\ref{Eq:High_Momentum_varphi_COPY}), we compute
    \begin{align*}
       & \ \left\|\chi\!\left(|\nabla|>\Lambda\right)\varphi^\mathrm{Pek}\right\|=\left\|\chi\!\left(|\nabla|>\Lambda\right)|\psi^\mathrm{Pek}|^2*v\right\|=\left\||\psi^\mathrm{Pek}|^2*\left(\nabla\cdot w^\Lambda\right)\right\|\\
       & = \left\|\left(\nabla |\psi^\mathrm{Pek}|^2\right)*w^\Lambda\right\|\leq \|\nabla |\psi^\mathrm{Pek}|^2\|_{L^1\left(\mathbb R^3\right)}\|w^\Lambda\|\leq \|\psi^\mathrm{Pek}\|\|\nabla \psi^\mathrm{Pek}\|\|w^\Lambda\|.
    \end{align*}
    Since $\psi^\mathrm{Pek}\in H^1\left(\mathbb R^3\right)$ and $\|w^\Lambda\|\leq C\Lambda^{-\frac{1}{2}}$ by Eq.~(\ref{Eq:In_Lemma_v_w_Lambda_COPY}), this concludes the proof of Eq.~(\ref{Eq:High_Momentum_varphi_COPY}). Using Eq.~(\ref{Eq:non-named_COPY}) we clearly have
    \begin{align}
    \label{Eq:Pre_Lower_Bound}
        \varphi^\mathrm{Pek}(x)=\pi^{-\frac{3}{2}}\int_{\mathbb R^3}\frac{\psi^\mathrm{Pek}(y)^2}{|x-y|^2}\mathrm{d}y\geq \pi^{-\frac{3}{2}}\int_{\mathbb R^3}\chi(|y|\leq 1)\psi^\mathrm{Pek}(y)^2\mathrm{d}y \frac{1}{(|x|+1)^2}.
    \end{align}
    By \cite{Li} we know that $\varphi^\mathrm{Pek},\psi^\mathrm{Pek}>0$, hence Eq.~(\ref{Eq:Pre_Lower_Bound}) implies that for a suitable $c>0$ 
    \begin{align*}
         \varphi^\mathrm{Pek}(y)\geq \frac{c}{(1+|y|)^2}.
    \end{align*}
    Consequently, both Eq.~(\ref{Eq:High_position_varphi_COPY}) and Eq.~(\ref{Eq:C_b_COPY}) follows once we have established 
    \begin{align*}
        \varphi^\mathrm{Pek}(y)+|\partial_{y_j} \varphi^\mathrm{Pek}(y)|+|\partial_{y_k}\partial_{y_j} \varphi^\mathrm{Pek}(y)|+|\partial_{y_\ell}\partial_{y_k}\partial_{y_j} \varphi^\mathrm{Pek}(y)|\leq \frac{C}{(1+|y|)^2}
    \end{align*}
    for a suitable $C>0$. Again by Eq.~(\ref{Eq:non-named_COPY}), it is enough to verify that the functions
    \begin{align}
    \label{Eq:Exponetial_Decay_All_Functions}
        \psi^\mathrm{Pek}(y), \partial_{y_j} \psi^\mathrm{Pek}(y),\partial_{y_k}\partial_{y_j} \psi^\mathrm{Pek}(y),\partial_{y_\ell}\partial_{y_k}\partial_{y_j}\psi^\mathrm{Pek}(y) 
    \end{align}
are bounded from above by $Ce^{-\gamma |y|}$ for suitable constants $\gamma,C>0$. In the case of $\psi^\mathrm{Pek}$ this has been verified in \cite{MS}, and furthermore we know by \cite{Li} that $\psi^\mathrm{Pek}\in C^\infty(\mathbb R^3)$. Since the function $\psi^\mathrm{Pek}$ is radial, i.e. 
\begin{align}
    \label{Eq:Radial_Function}
    \psi^\mathrm{Pek}(y)=\psi^\mathrm{Pek}_*(|y|),
\end{align}
the exponential decay of the functions in Eq.~(\ref{Eq:Exponetial_Decay_All_Functions}) follows from the exponential decay of
\begin{align*}
    \psi^\mathrm{Pek},\nabla  \psi^\mathrm{Pek},\Delta  \psi^\mathrm{Pek},\Delta \left(\nabla  \psi^\mathrm{Pek}\right).
\end{align*}
In order to verify the exponential decay of $\Delta  \psi^\mathrm{Pek}$, note that $\psi^\mathrm{Pek}$ as the ground state of the operator $-\Delta_x-2v*\mathfrak{Re}\varphi^\mathrm{Pek}$ satisfies the equation
\begin{align}
\label{Eq:In_Appendix_Delta}
\Delta  \psi^\mathrm{Pek}=2\left(v*\left(v*|\psi^\mathrm{Pek}|^2\right)+\|\varphi^\mathrm{Pek}\|^2-e^\mathrm{Pek}\right)\psi^\mathrm{Pek}.
\end{align}
Since we already know that $\psi^\mathrm{Pek}$ decays exponentially, it follows that $v*(v*|\psi^\mathrm{Pek}|^2)$ is bounded and therefore the right hand side of Eq.~(\ref{Eq:In_Appendix_Delta}) decays exponentially. Equivalently, we obtain that the function $\psi^\mathrm{Pek}_*$ in Eq.~(\ref{Eq:Radial_Function}) satisfies for suitable $C,\gamma>0$
\begin{align*}
    \left|\frac{1}{r^2}\partial_r\left(r^2 \partial_r \psi^\mathrm{Pek}_*\right)\right|\leq C e^{-\gamma r}.
\end{align*}
Since $ \psi^\mathrm{Pek}_*$ decays exponentially as well, we obtain by interpolation that 
\begin{align*}
  |\nabla \psi^\mathrm{Pek}(y)|=|\partial_r \psi^\mathrm{Pek}_*(|y|)|  
\end{align*}
satisfies an exponential decay. Taking the gradient of Eq.~(\ref{Eq:In_Appendix_Delta}) furthermore yields
\begin{align*}
   \Delta  \nabla \psi^\mathrm{Pek}=2\left(v*\left(v*|\psi^\mathrm{Pek}|^2\right)+\|\varphi^\mathrm{Pek}\|^2-e^\mathrm{Pek}\right)\nabla \psi^\mathrm{Pek}+4\left( v*\left(v*\left(\psi^\mathrm{Pek}\nabla \psi^\mathrm{Pek}\right)\right)\right)\psi^\mathrm{Pek}.
\end{align*}
Since $\psi^\mathrm{Pek}\nabla \psi^\mathrm{Pek}$ decays exponentially it is clear that $v*\left(v*\left(\psi^\mathrm{Pek}\nabla \psi^\mathrm{Pek}\right)\right)$ is bounded. Furthermore, we have already seen that $v*\left(v*|\psi^\mathrm{Pek}|^2\right)$ is bounded and that $\nabla \psi^\mathrm{Pek}$ as well as $\psi^\mathrm{Pek}$ decay exponentially, which concludes the proof.
\end{proof}

\begin{center}
\textsc{Acknowledgments}
\end{center}
Funding from the ERC Advanced Grant ERC-AdG CLaQS, grant agreement n. 834782, is gratefully acknowledged. Furthermore, the author would like to thank David Mitrouskas and Robert Seiringer for fruitful discussions.

\end{document}